\documentclass{article}
\usepackage{a4wide}
\usepackage{amsmath}
\usepackage{amssymb}
\usepackage{amsthm}
\usepackage{marginnote}

\newtheorem{thm}{Theorem}
\newtheorem{prop}{Proposition}

\newtheorem{lemma}{Lemma}

\newtheorem*{prop*}{Proposition}

\newcommand{\bbr}{{\mathbb R}}
\newcommand{\bbs}{{\mathbb S}}
\newcommand{\bbn}{{\mathbb N}}
\newcommand{\p}{\hat{p}}
\newcommand{\q}{\hat{q}}




\def\be{\begin{equation}}
\def\ee{\end{equation}}

\newcommand{\ts}{\tilde{s}}
\newcommand{\tg}{\tilde{g}}

\newcommand{\tnabla}{\tilde{\nabla}}%
\newcommand{\tp}{\tilde{p}}
\newcommand{\tq}{\tilde{q}}
\newcommand{\h}{\tilde{h}}%
\newcommand{\tM}{\widetilde{M}}

\newcommand{\tC}{\widetilde{C}}
\newcommand{\tG}{\widetilde{G}}
\newcommand{\tomega}{\tilde{\omega}}
\newcommand{\tT}{\widetilde{T}}

\newcommand{\bea}{\begin{eqnarray}}

\newcommand{\eea}{\end{eqnarray}}
\newcommand{\bean}{\begin{eqnarray*}}

\newcommand{\eean}{\end{eqnarray*}}

\newcounter{mnotecount}[section]

\renewcommand{\themnotecount}{\thesection.\arabic{mnotecount}}

\newcommand{\mnote}[1]
{\protect{\stepcounter{mnotecount}}$^{\mbox{\footnotesize
$
\bullet$\themnotecount}}$ \marginpar{
\raggedright\tiny\em
$\!\!\!\!\!\!\,\bullet$\themnotecount: #1} }

\usepackage{authblk}

\begin{document}
\title{Well-posedness of anisotropic and homogeneous solutions to the Einstein-Boltzmann system with a conformal gauge singularity}

\author[1]{Ho Lee\footnote{holee@khu.ac.kr}}
\author[2]{Ernesto Nungesser\footnote{em.nungesser@upm.es}}
\author[3]{John Stalker\footnote{stalker@maths.tcd.ie}}
\author[4]{Paul Tod\footnote{tod@maths.ox.ac.uk}}

\affil[1]{Department of Mathematics and Research Institute for Basic Science, College of Sciences, Kyung Hee University, Seoul 02447, Republic of Korea}
\affil[2]{M2ASAI, Universidad Polit{\'e}cnica de Madrid, ETSI Navales, Avda. de la Memoria
4, 28040 Madrid, Spain}
\affil[3]{School of Mathematics, Trinity College, Dublin 2, Ireland and Hamilton Mathematics Institute, Dublin 2, Ireland}
\affil[4]{Mathematical Institute, University of Oxford, Oxford OX2 6GG}

\maketitle

\begin{abstract}
We consider the Einstein-Boltzmann system for massless particles in the Bianchi I space-time with scattering cross-sections in a certain range of soft potentials. We assume that the space-time has an initial conformal gauge singularity and show that the initial value problem is well posed with data given at the singularity. This is understood by considering conformally rescaled equations. The Einstein equations become a system of singular ordinary differential equations, for which we establish an existence theorem which requires several differentiability and eigenvalue conditions on the coefficient functions together with the Fuchsian conditions. The Boltzmann equation is regularized by a suitable choice of time coordinate, but still has singularities in momentum variables. This is resolved by considering singular weights, and the existence is obtained by exploiting singular moment estimates. 
\end{abstract}


%
\section{Introduction}

Since the celebrated theorems of Hawking and Penrose, the importance of singularities and their study in general relativity and in particular in cosmology is well-known. These theorems give little information about the structure of space-time singularities, which is expected to be very complicated in general.  However, Penrose \cite{WCH1} has presented arguments that the singularities relevant to the Big Bang in cosmology are much simpler than they are allowed mathematically to be. His `Weyl Curvature Hypothesis' is that at any initial singularity the Weyl curvature is finite or zero, even while the Ricci curvature is singular. Singularities with this character have variously been called `isotropic' \cite{AT99a,GW} or `conformally compactifiable' but following \cite{LT} it seems more natural now to call them `conformal gauge singularities'. By this we mean that while the space-time with the physical metric is singular, there is another choice of metric in the conformal class of the physical metric -- equivalently, a conformal rescaling of the physical metric by a conformal factor -- for which the curvature is nonsingular. From this point of view, the space-time singularity, which is certainly present in the physical universe, is attributable wholly to the vanishing or singularity of the conformal factor. From the standpoint of conformal geometry there is no singularity, while the physical metric as a particular (but preferred) representative of the conformal class none-the-less is singular. In this sense the singularity, while certainly present in the physical metric, may be thought of as due to the choice of conformal gauge, whence the name.  For recent progress in the study of the formation and properties of general cosmological singularities we refer to \cite{ABIO, FRS, Ring1,Ring2}. 

There has been other work on extending a space-time through a singularity -- see e.g.\ \cite{GL, Sb} -- but the emphasis there is on the possibility of $C^0$ extensions of the physical space-time rather than on smooth extensions of the conformal space-time. There has also been other work on giving data at the big-bang -- see e.g.\ \cite{Ring2} and other works by this author -- but again the emphasis is different, with no assumption of smoothness of the conformal metric. One might also mention the pre-big-bang cosmology of \cite{Ven}, which is very much in the context of string theory rather than conventional general relativity.

In this article we study the well-posedness of solutions to the Einstein-Boltzmann system with data given at a conformal gauge singularity. There have been several interesting recent results concerning massless solutions to the Einstein-Vlasov system \cite{BFJST,GT,HL,JTV}. The massless Vlasov case is not only interesting as a limiting case, but of interest by itself mathematically and physically when modelling massless particles like photons or neutrinos, but also in the context of Burnett’s conjecture \cite{Burnett} that under appropriate assumptions the limit of highly oscillatory solutions to the Einstein vacuum equations is a solution of the Einstein-massless Vlasov system, see \cite{GT,HL} for recent progress. For an introduction and background to the field of general relativistic kinetic theory we refer to \cite{AGS,Andreasson, Ehlers, Ring, SZ, Stewart}. 

Concerning the massless Einstein-Boltzmann system, however, there are almost no results whatsoever. There are heuristic indications \cite{T03} that there should exist a well-posed initial value problem with data at the singularity for this system, at least for some choices of cross-section, as there are for the Einstein-Vlasov system \cite{AT99,A3} and the Einstein-Euler system \cite{AT99a}. The matter is taken to be massless because any positive mass would be negligible compared to kinetic energy near the initial singularity. In \cite{T03,T07} power series solutions in the time-variable were considered and no difference between massive and massless Vlasov near the singularity was found.  We allow for a non-negative cosmological constant to make contact with Penrose's Conformal Cyclic Cosmology \cite{CCC}, albeit the constant does not play an important role in our mathematical analysis.

In a previous paper \cite{LNT1} three of the present authors considered the homogeneous and isotropic case, so that the metric takes the Friedman-Lema\^itre-Roberston-Walker (hereafter FLRW) form. In that case the FLRW scale factor is fixed by the assumption of massless particles and choice of cosmological constant, and it remains only to solve the massless Boltzmann equation in the fixed FLRW background. This is still nontrivial as there are two sources of singularities in the collision integral for the Boltzmann equation, one from the space-time volume form going to zero at the initial singularity, and one intrinsic to the massless case where there are poles at small momenta. These can both be resolved by a choice of suitably restricted, but still physically plausible cross-sections, and by a redefinition of the time-coordinate.

The next problem, treated in this paper, is to consider the Einstein-Boltzmann system with the space-time restricted to a spatially homogeneous but anisotropic cosmology. Now we need to solve the Einstein equations, which becomes a nonlinear system of ODEs, coupled with the Boltzmann equation, which is an integro-differential equation and even in the homogeneous case has partial derivatives with respect to both time and momenta, and with data for both given at an initial `big bang' singularity.

Our main result is the well posedness of solutions to the coupled Einstein-Boltzmann system with Bianchi I symmetry having an initial conformal gauge singularity with data given at the singularity. We consider again the family of cross-sections which allowed the proof of existence in the isotropic case \cite{LNT1}. The aim of this article is twofold. First, we generalize the paper \cite{AT99} from the Vlasov to the Boltzmann case. For the Vlasov equation classical solutions can easily be obtained by considering characteristic equations. The singularities caused by the massless particles can also be removed by simply assuming that the distribution function is compactly supported in $ \bbr^3 \setminus \{ 0 \} $. However, for the Boltzmann equation classical solutions are only available for rather restricted cases, and furthermore they are not suitable for the massless case. Instead, we consider $ L^1 $ solutions. There are many well established $ L^1 $ theories for the Boltzmann equation, and we can make use of them up to a certain point, since they are all available only for the massive case. We need to extend the theory to the massless case in order that the singularities at $ p = 0 $ can be controlled. The idea for this is to use singular weights. Then, we exploit `singular' moment estimates, which will lead to the existence result. It is interesting to note that the singularities at $ p = 0 $ are controlled by singular weights. Second, we generalize the paper \cite{LNT1} from the FLRW to the Bianchi I case. It was first shown in \cite{LNT1} that the singular moment estimates can be used to obtain the existence in the massless case, but the isotropy assumption on $ f $ was crucial, and this only applies to the FLRW cosmology. In order to extend to the Bianchi I case we introduce another type of singular weight in $ L^\infty $. This is basically the inverse of the J{\" u}ttner distribution, but multiplied by a singular factor, which should be introduced to work together properly with the former type of singular weight in $ L^1 $. For more details we refer to Section \ref{sNotations}.

Once one is able to deal with the Boltzmann equation the main argument is similar to \cite{AT99} making suitable time coordinate changes in order to have at most first order singularities. In \cite{AT99} a theorem established by Rendall and Schmidt \cite{RS} is used. Here we use a similar argument, which requires an improved version of the Rendall-Schmidt theorem. This is our Theorem \ref{ODEthm} which we prove in an appendix. This theorem is stronger in three ways.

Firstly, the theorem of Rendall and Schmidt applies to equations which are singular, but linear. Ours applies to certain non-linear equations, including those which we derive below from the Einstein equations. The equations Anguige and Tod derive in \cite{AT99}, and to which they need to apply Rendall and Schmidt's result, are also non-linear. In their case they are able to avoid this difficulty by a trick which unfortunately does not work in our case. The generalisation to non-linear equations is, as one would expect, non-trivial. Like most existence and uniqueness results for ordinary differential equations it involves converting the system of differential equations to an equivalent integral equation but for our theorem this integral equation involves multiple integrals.

Secondly, the dependence of the size and domain of definition of the solution on the various functions appearing in the equation is made fully explicit in our theorem. We need this in order to make our iterative argument work. No new ideas are needed in order to make the dependence explicit, just careful bookkeeping.

Thirdly, we prove considerably more regularity than Rendall and Schmidt. While their solutions are merely bounded as the time coordinate tends to zero ours are continuously differentiable. When applied to the physical problem under consideration this allows us to find two additional terms in the asymptotic expansion of the metric beyond the leading term. In order to get this improvement we need stronger assumptions on the eigenvalues of the coefficient matrix, but fortunately those assumptions are satisfied in our case. With the original assumptions of Rendall and Schmidt one can only prove continuity as time tends to zero and can only get one term in the asymptotics beyond the leading one.

We would like to point out that to the best of our knowledge there are no previous results whatsoever concerning massless Einstein-Boltzmann solutions which go beyond the special relativistic case except \cite{LNT1}. Neither in the seminal papers concerning the local existence of solutions of Daniel Bancel and Yvonne Choquet-Bruhat  \cite{Bancel,BC}, nor in the global existence results of Norbert Noutchegueme, David Dongo and Etienne Takou \cite{NDT,NT} and others, is the massless case considered. In some cases the extension might not be too difficult, but if one considers reasonable scattering cross-sections the extension is not trivial. For instance some recent results concerning global existence and asymptotic behaviour by two of the present authors \cite{LN,LN1} where massive particles were considered do not cover the whole family of cross-sections considered in this article. 

The paper is structured as follows. In Section \ref{EB}, we derive the main equations and state the main results. We first consider the Einstein-Boltzmann system for massless particles. The family of scattering cross-sections \eqref{cross} which will be considered in this paper will be introduced. In Section \ref{sCR}, we assume that the space-time has an initial conformal gauge singularity, and investigate the transformation properties of the Einstein-Boltzmann system under conformal rescaling. In Section \ref{sSHC}, we assume the spatial homogeneity to derive the equations with the standard conformal time coordinate $\tau$, and in Section \ref{sMeqns} we 
change the time coordinate to $s$ to derive the main equations, which we will call the rescaled Einstein-Boltzmann system in this paper. In Section \ref{mainresults}, we state the main results. We first collect in Section \ref{sNotations} the notations that will be used throughout the paper. The main results of the paper will be given in Section \ref{maintheorems}. The main theorem of the paper is Theorem \ref{phystheorem}. The theorem shows that the massless Einstein-Boltzmann system with an initial conformal gauge singularity has a unique local solution with data at the singularity. We will consider the rescaled Einstein-Boltzmann system to prove Theorem \ref{phystheorem}. We notice that the initial data given in Theorem \ref{phystheorem} is a single distribution function $ f_0 $. It will be shown in Theorem \ref{fuchsian} that the initial data $ a_0 $, $ b_0 $, $ K_0 $ and $ \hat{ Z }_0 $ for the rescaled Einstein equations are uniquely determined by $ f_0 $. With these data we prove in Theorem \ref{conformalprop} that the rescaled Einstein-Boltzmann system has a unique local solution. The proof of Theorem \ref{conformalprop} will be given separately in Sections \ref{sEinstein}, \ref{sBoltzmann} and \ref{sMain1}. In Section \ref{sEinstein}, we consider the Einstein equations for a given distribution function, and the existence is obtained in Proposition \ref{propeinstein}. The proof of Proposition \ref{propeinstein} will be given in Section \ref{sproofE}, where Theorem \ref{ODEthm} from the appendix will be applied extensively. In Section \ref{sBoltzmann}, we consider the Boltzmann equation for a given metric, and the existence is obtained in Proposition \ref{prop B}. The existence will be given in $ L^1_{ - 1 } $, but it will not be enough to be applied to the coupled Einstein-Boltzmann case. In Lemma \ref{lem dfdp}, we improve the existence in $ L^1_{ - 2 - \delta / 2 } $ for some small $ \delta > 0 $, and this will lead to the continuous dependence on the metric, in Section \ref{ContiBoltzmann}, which is necessary for an iteration scheme to work. In Section \ref{sMain1}, we combine Propositions \ref{propeinstein}, \ref{prop B} and \ref{prop C} to complete the proof of Theorem \ref{conformalprop}. Finally, we prove Theorem \ref{phystheorem} in Section \ref{sMain2}.

\section{The Einstein-Boltzmann system with massless particles} \label{EB}
Let us consider the Einstein-Boltzmann system in a Lorentz manifold $(\tM,\tg)$ for massless particles where we use signature $(-+++)$. In massless kinetic theory, the matter content of space-time is described by a distribution function $ f $ on the cotangent bundle, supported on the null-cone bundle. Call this bundle $N$, with fibre $N_x$ the null-cone at $x\in\tM$ and local coordinates $ ( x^\alpha , p_i ) $. Here and throughout the paper, Greek indices run from $ 0 $ to $ 3 $, while Latin indices run from $ 1 $ to $ 3 $. Introduce $p=(p_1,p_2,p_3)$ for the spatial part of momentum $\tp_\alpha=(\tp_0,p)$, where $\tp^{0}$ is determined by the null condition
\begin{align}
\tg^{\alpha\beta}\tp_{\alpha}\tp_{\beta}=0,\label{masslesst}
\end{align}
and the requirement that~$\tp^{\alpha}$~be future directed. The (null) geodesic spray~$\mathcal{L}_{\tg}$~is the vector field on the cotangent bundle tangent to $N$ and defined, in local coordinates, by
\begin{align}
\mathcal{L}_{\tg}=\tg^{\alpha\beta}\tp_{\alpha}\frac{\partial}{\partial
x^{\beta}}-\frac{1}{2}\tp_{\alpha}\tp_{\beta}\frac{\partial \tg^{\alpha\beta}}{\partial
x^{\gamma}}\frac{\partial}{\partial \tp_{\gamma}}.\label{Liouville}
\end{align}
The cotangent space to~$\tM$ at a point $x$ is a flat Lorentz manifold, and on
the fibres of the submanifold $N$ there exists an invariant volume measure $\tomega_{\tp}$ given by
\begin{align}
\tomega_{\tp}=\frac{1}{\tp^{0}\sqrt{- \det \tg}}\, d^{3}p\label{f1},
\end{align}
where $ \det \tg $ is the determinant of the $ 4 \times 4 $ matrix $ \tg_{ \alpha \beta } $ and $d^3p=dp_1dp_2dp_3$.

The distribution of particles and momenta is described by a non-negative scalar function $ f = f ( x^\alpha , p_i ) $ on $ N $, which satisfies the Boltzmann equation
\begin{equation}
\mathcal{L}_{\tg}(f)=\tC(f,f),\label{V1}
\end{equation}
where the collision term $\tC(f,f)$ will be given below. The stress-energy-momentum tensor due to these particles is given by
\begin{equation}
\tT_{\alpha\beta}(x)=\int_{N_x}f\tp_{\alpha}\tp_{\beta}\,\tomega_{\tp},
\label{T1}
\end{equation}
and if the Boltzmann equation is satisfied then
\begin{equation}
\tilde{\nabla}^{\alpha}\tT_{\alpha\beta}=0,\label{T2}
\end{equation}
where $\tilde{\nabla}$ is the metric covariant derivative for $\tg$. The coupled Einstein-Boltzmann equations, for the metric $\tg_{\alpha\beta}$ and the particle distribution function $f$, are therefore
\begin{align}\label{bolt1}
\tG_{\alpha\beta}+\Lambda \tg_{\alpha\beta} &=8\pi\int_{N_x}f\tp_{\alpha}\tp_{\beta}\,\tomega_{\tp},\\
\mathcal{L}_{\tg}(f)&=\tC(f,f),\label{bolt2}
\end{align}
taking $G=c=1$.

Turning to $\tC(f,f)$, we consider only binary collisions, so a pair of incoming particles with null momenta $\tp_\alpha$ and $\tq_\alpha$ collide to produce a pair of outgoing particles with null momenta $\tp'_\alpha$ and $\tq'_\alpha$, where the momentum is conserved so that
\begin{align}\label{conservationtilde}
\tp_\alpha+\tq_\alpha=\tp'_\alpha+\tq'_\alpha.
\end{align}
Here, the momenta $\tp_\alpha$ and $\tq_\alpha$ are called the pre-collision momenta, and $\tp'_\alpha$ and $\tq'_\alpha$ are called the post-collision momenta. There are various possible parametrisations of $\tp'_\alpha$ and $\tq'_\alpha$ in terms of $\tp_\alpha$, $\tq_\alpha$ and a 3-vector $\omega_i$, but we shall retain the parametrisation used in \cite{LNT1}. Explicit representations will be given in \eqref{p'^0}--\eqref{q'}. Let $ d \omega $ denote the area-form on the unit 2-sphere of $ \omega_i $. Then, the collision term in the Boltzmann equation is
\begin{align}\label{collisionterm}
\tC(f,f)=\int_{\mathbb{R}^3}\int_{\mathbb{S}^2}\h\sqrt{\ts}\tilde{\sigma}(\h,\Theta)(f(\tp')f(\tq')-f(p)f(q)) \, d\omega\, \tomega_{\tq},
\end{align}
where $\h$ and $\ts$ are the relative momentum and the square of the energy in the center of momentum system, respectively, defined by 
\begin{align}
\h=\sqrt{\tg^{\alpha\beta}(\tp_\alpha-\tq_\alpha)(\tp_\beta-\tq_\beta)},\qquad \ts=-\tg^{\alpha\beta}(\tp_\alpha+\tq_\alpha)(\tp_\beta+\tq_\beta),\label{momentumscalt}
\end{align}
and $\tilde{\sigma}$ is the scattering cross-section, and the dependence of $ f $ on $ x^\alpha $ is to be understood. In this paper we will consider massless particles, so that the assumption \eqref{masslesst} implies
\begin{align}\label{htilde0mass}
\h^2=\ts=-2\tg^{\alpha\beta}\tp_\alpha \tq_\beta. 
\end{align}
Moreover, we will consider the scattering cross-sections in a certain range of soft potentials. The following family of cross-sections from \cite{LNT1} (setting $C_1=1$) will be considered:
\begin{align}\label{cross}
\tilde{\sigma}(\h,\Theta)= \h^{- \gamma } \qquad ( 1 < \gamma < 2 ) , 
\end{align}
which does not depend on the scattering angle $ \Theta $. Now, the collision term for massless particles with the cross-sections in \eqref{cross} reduces to
\begin{align}\label{collisiontermt}
\tC(f,f)=\int_{\mathbb{R}^3}\int_{\mathbb{S}^2}\h^{ 2 - \gamma } (f(\tp')f(\tq')-f(p)f(q)) \, d\omega\, \tomega_{\tq}.
\end{align}
For more details about the relativistic collision operator we refer to \cite{CK, LNT1, SY}. We also refer to \cite{CIP} for basic information about the Boltzmann equation.

In this paper, we are interested in the Einstein-Boltzmann system \eqref{bolt1}--\eqref{bolt2}, \eqref{collisiontermt} for massless particles with the cross-sections in \eqref{cross}. Now, we will assume that the space-time has an initial conformal gauge singularity, and will show that the Einstein-Boltzmann system has a solution with initial data given at the singularity. This will be studied by considering conformally rescaled equations and applying the assumption of spatial homogeneity. The conformal transformation will be investigated in Section \ref{sCR}, and the assumption of spatial homogeneity will be introduced in Section \ref{sSHC}. In Section \ref{sMeqns}, we derive our main equations. The main results will be given in Section \ref{mainresults}.

\subsection{Conformally rescaling the Einstein-Boltzmann system}\label{sCR}
In this part, we investigate the transformation properties of the Einstein-Boltzmann system under conformal rescaling. Suppose that the physical space-time is $(\tM,\tg)$, while the rescaled, unphysical space-time is $(M,g)$ with
\begin{align}
\tg_{\alpha\beta}=\Omega^2g_{\alpha\beta}.\label{g1}
\end{align}
We will assume that the space-time has an initial conformal gauge singularity \cite{LT}, so that the function $\Omega$ vanishes on a smooth, space-like hypersurface $\Sigma$ in the unphysical space-time $M$. Our aim is to formulate the equations in $ M $ with data at $ \Sigma $.

Let us first consider the Boltzmann equation \eqref{bolt2} with \eqref{collisiontermt}. As part of the definition of a conformal gauge singularity, we shall assume that the distribution function $ f $ extends to a smooth function on $T^*\Sigma$. We also assume that $ f $ does not change under rescaling. Now, we choose
\begin{align}\label{pscaling}
\tilde{p}_\alpha =p_\alpha,
\end{align}
so that the canonical one-form is unchanged
\begin{align}
\theta = \tilde{p}_\alpha dx^\alpha = p_\alpha dx^\alpha.
\end{align}
Then, the null condition \eqref{masslesst} is preserved
\begin{align}
g^{\alpha\beta}p_\alpha p_\beta=\Omega^2\tg^{\alpha\beta}\tp_\alpha \tp_\beta=0,\label{massless}
\end{align}
while the geodesic spray transforms as
\begin{align}
\mathcal{L}_{\tg}f=\Omega^{-2}\mathcal{L}_gf.
\end{align}
For the volume-form on $N$ we obtain
\begin{align}
\omega_p = \frac{1}{p^{0}\sqrt{- \det g}}\, d^{3}p = \Omega^2\tomega_{\tp},
\end{align}
where $p^0=g^{00}p_0$ rather than $\tg^{00}p_0$, and $ \det g $ denotes the determinant of the $ 4 \times 4 $ matrix $ g_{ \alpha \beta } $. This implies that we can define a rescaled energy-momentum tensor as
\begin{align}
T_{\alpha\beta} = \Omega^2 \tT_{\alpha\beta} = \int_{N_x}fp_{\alpha}p_{\beta}\,\omega_p,
\end{align}
which will still be divergence-free, since in fact
\begin{align}
\tg^{\alpha\beta}\tnabla_\alpha\tT_{\beta\gamma}=\Omega^{-4}g^{\alpha\beta}\nabla_\alpha T_{\beta\gamma}=0,
\end{align}
where $\nabla$ is the metric covariant derivative for $g$. For the relative momentum $ \h $, we have
\begin{align}
h = \sqrt{ g^{ \alpha \beta } ( p_\alpha - q_\alpha ) ( p_\beta - q_\beta ) } = \Omega \h,
\end{align}
so that the Boltzmann equation \eqref{bolt2} with \eqref{collisiontermt} transforms as
\begin{align}\label{V2}
\mathcal{L}_{g}f=C(f,f):=\Omega^{ \gamma -2}\int_{\mathbb{R}^3}\int_{\mathbb{S}^2} h^{2- \gamma }(f(p')f(q')-f(p)f(q)) \, d\omega\, \omega_{q},
\end{align}
taking this to be the definition of $C(f,f)$. Note that the rescaled collision term $ C ( f , f ) $ has the singular factor $ \Omega^{\gamma - 2 } $, so that we need to redefine the time-coordinate as in \cite{LNT1}.

\subsubsection{Spatial homogeneity}\label{sSHC}
In this paper we will assume that the space-time and the distribution function are both spatially homogeneous. One introduces a basis of left-invariant one-forms $\sigma^i$ satisfying
\begin{align}
d\sigma^i=\frac{1}{2}C^i_{\;jk}\sigma^j\wedge\sigma^k, \label{m2}
\end{align}
where $ { C^i }_{ j k } $ are the structure constants of the space-time (see \cite{hom} for more details). The metric, in the physical space-time $ (\tM,\tg) $, is given in this basis by
\begin{equation}
\tilde{g}_{\alpha\beta}dx^\alpha dx^\beta = - dt^2 + \tilde{a}_{ij}(t)\sigma^{i}\sigma^{j}, \label{m1}
\end{equation}
where $t$ is proper-time on the congruence orthogonal to the surfaces of homogeneity (which is necessarily geodesic) and $\tilde{a}_{ij}$ is the physical metric on the surfaces of homogeneity. 
We define the rate of change, which is a constant multiple of the second fundamental form
\begin{align}\label{physicalk}
\tilde{k}_{ij} =  \frac{d\tilde{a}_{ij}}{dt} .
\end{align}
Now, we introduce a new time-coordinate $\tau$ by 
\begin{align}\label{ttau}
t=\frac12 \tau^2,
\end{align}
and choose the conformal factor
\begin{align}\label{Omegachoice}
\Omega=\tau.
\end{align}
Then, the rescaled metric $g_{\alpha\beta}$ is given by
\begin{align}
g_{\alpha\beta}dx^\alpha dx^\beta&=\Omega^{-2}\tg_{\alpha\beta}dx^\alpha dx^\beta= -d\tau^2+ a_{ij}(\tau)\sigma^i\sigma^j,
\label{m3}
\end{align}
where $ a_{ i j } $ is defined by
\begin{align}\label{aa}
a_{ij}=\tau^{-2}\tilde{a}_{ij}.
\end{align} 
Here, we assume that $ a_{ i j } $ is regular at the initial surface $\Sigma$ where $\tau=0$, hence the metric extends to $ M $. We shall systematically use $b^{ij}$ for the matrix inverse to $a_{ij}$, so that indices $i,j,\ldots$ are raised with $b^{ij}$ and lowered with $a_{ij}$.

Let us write the momentum in the basis of invariant one-forms as
\begin{align}
p_0d\tau+p_i\sigma^i.
\end{align}
The null condition \eqref{massless} implies
\begin{align}
p^0 = \sqrt{ b^{ij} p_i p_j } , \label{p3}
\end{align}
where we used \eqref{m3}. For the Boltzmann equation, we first obtain the geodesic equation, which is
\begin{align}
\frac{dp_i}{d\lambda}=C^k_{\;\;ij}b^{jm}p_kp_m,
\end{align}
where $ \lambda $ is proper time for $ g $. We abuse notation slightly by writing $ f ( p ) $ both for $ f ( t , p ) $ and $ f ( \tau , p ) $ to obtain
\begin{align}
\frac{df}{d\lambda}=p^0\frac{\partial f}{\partial\tau}+\frac{\partial f}{\partial p_i}C^k_{\;\;ij}p_kp_mb^{jm}.
\end{align}
Hence, the Boltzmann equation \eqref{V2} reduces to
\be
\frac{\partial f}{\partial\tau}+\frac{1}{p^0}C^k_{\;\;ij}p_kp_mb^{jm}\frac{\partial f}{\partial p_i}=\frac{1}{p^0}C(f,f),
\label{V3}
\ee
with the right hand side given by 
\be\label{V4} 
\frac{1}{p^0}C(f,f)= \tau^{\gamma -2} \int_{\bbr^3}\int_{\bbs^2}\frac{h^{2- \gamma }}{p^0}(f(p')f(q')-f(p)f(q)) \, d\omega \,\omega_{q},
\ee
where we used \eqref{Omegachoice}. Note that we still have the singular factor $ \tau^{ \gamma - 2 } $. As mentioned earlier, we need to redefine the time-coordinate, and this will be done in Section \ref{sMeqns}.

The Einstein equations are the same as for the Einstein-Vlasov case. The following first-order form of the Einstein evolution part of the Einstein-Vlasov system based on that in \cite{AT99, T07} but with the convention \eqref{ttau} (rather than $t=\tau^2$) may be obtained: with $k_{ij}$ proportional to the second-fundamental form, we have
\begin{align}
\frac{d}{d\tau}a_{ij}&= k_{ij},\label{E1}\\
\frac{d}{d\tau}b^{ij}&= -b^{im}b^{jn}k_{nm},\label{E2}\\
\frac{d}{d\tau}k_{ij}&= -2R_{ij}+\frac{1}{\tau}(2Z_{ij}-2k_{ij}-b^{mn}k_{mn}a_{ij})+k_{im}k_{jn}b^{mn}\nonumber\\
&\quad -\frac{1}{2}(b^{mn}k_{mn})k_{ij}+2\Lambda\tau^2a_{ij},\label{E3}
\end{align}
where $R_{ij}$ is the spatial Ricci tensor, so that
\begin{align}\label{ricci}
2R_{ij}&=-C^k_{\;\;ck}(C^r_{\;\;tj}a_{ir}+C^r_{\;\;ti}a_{jr})b^{ct}-C^c_{\;\;ki}(C^k_{\;\;cj}+C^m_{\;\;tj}a_{cm}b^{kt})\nonumber\\
&\quad -\frac12 C^m_{\;\;ks}C^r_{\;\;ct}a_{jm}a_{ir}b^{kt}b^{sc},
\end{align}
and $ Z_{ i j } $ is defined by
\begin{align}
Z_{ij}=\frac{1}{\tau}\left(\frac{8\pi}{\sqrt{ \det a}}\int fp_ip_j\frac{d^3p}{p^0} -a_{ij}\right).\label{z1}
\end{align}
The new variable $Z_{ij}$ was introduced in \cite{AT99} so that the pole in $\tau$ in (\ref{E3}) is no worse than first-order, but we therefore need an evolution equation for $Z_{ij}$. We can retain (\ref{E1})--(\ref{E3}) for the Einstein evolution part of the Einstein-Boltzmann system, but the evolution equation for $Z_{ij}$ must be changed from what it is in \cite{AT99}: the collision term will be involved through the equation \eqref{V3}. On the other hand, we need to redefine the time-coordinate in order to regularise the singularity at $ \tau = 0 $ in \eqref{V4}. Therefore, we need to recast \eqref{E1}--\eqref{E3} in terms of the new time-coordinate rather than $\tau$, and in doing this it will be convenient to define new $k_{ij}$ and $Z_{ij}$. The evolution equation for $ Z_{ i j } $ will be given in terms of the new variables. This will be done in Section \ref{sMeqns} below.

\subsubsection{The rescaled Einstein-Boltzmann system}\label{sMeqns}
Now, we introduce a new time coordinate to derive the main equations. Let us first consider the Boltzmann equation \eqref{V3}--\eqref{V4}. We regularise the singularity at $\tau=0$ in (\ref{V4}) by changing the time coordinate. Following \cite{LNT1} we introduce $s$ for $\tau$ via $ds=\tau^{ \gamma -2}d\tau$ so that
\begin{align}\label{stau}
( \gamma -1)s=\tau^{ \gamma -1}.
\end{align}
Then, the Boltzmann equation \eqref{V3}--\eqref{V4} transforms as
\begin{align}
\frac{\partial f}{\partial s} + ( \gamma -1)^{(2- \gamma )/( \gamma -1)} s^{(2- \gamma )/( \gamma -1)}\frac{1}{p^0}C^k_{\;\;ij}p_kp_mb^{jm}\frac{\partial f}{\partial p_i}\nonumber\\
=\int_{\bbr^3}\int_{\bbs^2}\frac{h^{2- \gamma }}{p^0}(f(p')f(q')-f(p)f(q))\, d\omega\, \omega_{q}.
\end{align}
For simplicity of notation we write
\begin{align}\label{cgamma}
c_\gamma = \frac{1}{ \gamma -1},
\end{align}
so that $(2- \gamma )/( \gamma -1) = c_\gamma -1$ and $( \gamma -1)^{(2- \gamma )/( \gamma -1)} = c_\gamma^{ 1- c_\gamma }$. We observe that $1< c_\gamma <\infty$, since $ 1 < \gamma < 2 $. This leads to the Boltzmann equation written as
\begin{align}
\frac{ \partial f }{ \partial s } + B ( f , b^{ i j } ; p_i , s ) = Q ( f , f ) , \label{V5}
\end{align}
where $ B $ and $ Q $ are given by
\begin{align}\label{V6}
B(f,b^{ij};p_i,s)&= c_\gamma^{1- c_\gamma } s^{ c_\gamma -1}\frac{1}{p^0}C^k_{\;\;ij}p_kp_mb^{jm}\frac{\partial f}{\partial p_i},\\
Q(f,f)&= \frac{1}{\sqrt{ \det a}}\int_{\bbr^3}\int_{\bbs^2}\frac{h^{2- \gamma }}{p^0q^0}(f(p')f(q')-f(p)f(q))\, d\omega\, d^3q .\label{C}
\end{align}
We note that the Boltzmann equation is still singular at $ p = 0 $, since $ p^0 $ is given by \eqref{p3}. This will be resolved by considering singular weights. See Section \ref{sNotations} for more details.

Next, we consider the Einstein equations. We need to derive an additional evolution equation for $ Z_{ i j } $ and rewrite it in terms of the time coordinate $ s $, as well as the equations \eqref{E1}--\eqref{E3}. It will be convenient to introduce new $ k_{ i j } $ and $ Z_{ i j } $ defined by
\begin{align}
K_{ij}=\tau^{2- \gamma }k_{ij},\label{K}\\
\hat{Z}_{ij}=\tau^{2- \gamma }Z_{ij}.\label{F9}
\end{align}
Then, we obtain the following evolution equations for $ a_{ i j } $, $ b^{ i j } $, $ K_{ i j } $ and $ \hat{ Z }_{ i j } $:
\begin{align}
\frac{d}{ds}a_{ij} & =K_{ij},\label{F1}\\
\frac{d}{ds}b^{ij} & =-b^{im}b^{jn}K_{nm},\label{F2}\\
\frac{d}{ds}K_{ij}&= \frac{1}{( \gamma -1)s}\left(- \gamma K_{ij}-b^{mn}K_{mn} a_{ij}+2\hat{Z}_{ij}\right)\nonumber\\
&\quad +K_{im}b^{mn}K_{nj}-\frac12 b^{mn}K_{mn} K_{ij}-2\tau^{4-2 \gamma }R_{ij}+2\Lambda\tau^{6-2 \gamma }a_{ij},\label{F3}\\
\frac{d}{ds}\hat{Z}_{ij}&=-\frac{1}{s}\hat{Z}_{ij}+\frac{1}{( \gamma -1)s}\left(-K_{ij}+\frac{4\pi}{\sqrt{ \det a}}\int fp_ip_j(K^{mn}p_mp_n-(p^0)^2b^{mn}K_{mn})\frac{d^3p}{(p^0)^3}\right)\nonumber\\
&\quad -\frac{1}{( \gamma -1)s}\left(\frac{8\pi c_\gamma^{1- c_\gamma } s^{ c_\gamma -1} }{\sqrt{ \det a}}\int p_ip_jC^k_{\;\;mn}b^{nr}p_kp_r\frac{\partial f}{\partial p_m}\frac{d^3p}{(p^0)^2}\right)\nonumber\\
&\quad +\frac{1}{( \gamma -1)s}\left(\frac{8\pi }{ \det a}\iiint \frac{h^{2- \gamma }}{p^0 q^0}f(p) f(q)\left(\frac{p'_ip'_j}{p'^0} - \frac{p_ip_j}{p^0}\right)\, d\omega\, d^3q\, d^3p \right),\label{F4}
\end{align}
where $ \hat{ Z }_{ i j } $ is given by
\begin{align}
\hat{Z}_{ij}=\frac{1}{( \gamma -1)s}\left(\frac{8\pi}{\sqrt{ \det a}}\int fp_ip_j\frac{d^3p}{p^0} -a_{ij}\right). \label{F9'}
\end{align}
We note that the above evolution equations are singular at $s=0$ but the singularity is only first-order so the system has a chance to be Fuchsian, as the corresponding system for the Einstein-Vlasov equations \cite{AT99} was. Hence, we need to identify the \emph{Fuchsian conditions}, which will be given in Section \ref{ssfc}. Moreover, there should be the Einstein constraints together with the evolution equations, and this will be given in Section \ref{constraints}.

To summarize, we have obtained the coupled system of rescaled equations \eqref{V5}, \eqref{F1}--\eqref{F4}. The equations are derived by assuming the spatial homogeneity \eqref{m1}, rescaling the metric \eqref{m3} with the choice \eqref{ttau}--\eqref{Omegachoice} of the conformal factor, and changing the time coordinate with \eqref{stau}. The unknowns are $ a_{ i j } $, which is the unphysical metric, $ b^{ i j } $ the inverse of $ a_{ i j } $, $ K_{ i j } $ defined by \eqref{E1} and \eqref{K}, $ { \hat Z }_{ i j } $ by \eqref{z1} and \eqref{F9}, and the distribution function $ f $. We note that the physical metric $ { \tilde a }_{ i j } $ and the second fundamental form $ { \tilde k }_{ i j } $ can be recovered by \eqref{physicalk} and \eqref{aa} using the time coordinates \eqref{ttau} and \eqref{stau}. In the rest of the paper, the Einstein or the Boltzmann equations will refer to the rescaled Einstein or the rescaled Boltzmann equations unless otherwise specified.

\subsection{Main results} \label{mainresults}
We first collect the notations that will be used in the rest of the paper.

\subsubsection{Notations} \label{sNotations}
The unknowns for the Einstein-Boltzmann system are $ 3 \times 3 $ real symmetric matrices and a distribution function. We denote by $S_3(\bbr)$ the space of $ 3 \times 3 $ symmetric matrices equipped with the following norm:
\begin{align}\label{matrixnorm}
\| A \| = \max_{i,j = 1,2,3} | A_{ij}|. 
\end{align}
The above norm will also be used for matrices with other types of indices, for instance
\begin{align}
\| b \| = \max_{i,j = 1 , 2 , 3 } | b^{ij} | , \qquad \| e \| = \max_{ i , j = 1 , 2 , 3 } | { e_j }^i | , \qquad \| e^{ - 1 } \| = \max_{ i , j = 1 , 2 , 3 } | { e^i }_j | , 
\end{align}
which can be found in \eqref{ortho_e}--\eqref{einverse}. This will also be used for tensors, or any other quantities, with an arbitrary number of indices, such as $ \Psi $ in \eqref{Psi}, $ F $ in \eqref{x}, etc.

For the distribution function we use weighted $ L^p $-spaces. Let $ L^1_r ( \bbr^3 ) $ and $ L^\infty_\eta ( \bbr^3 ) $ denote the spaces of functions equipped with the following norms:
\begin{align}
& \| f \|_{L^1_r} = \int_{\bbr^3} | f(p) | ( p^0)^r \, d^3p,\qquad p^0 = \sqrt{b^{ij} p_i p_j}, \\
& \| f \|_{ L^\infty_\eta} = \sup_{ p \in \bbr^3 } | w_\eta f ( p ) | , \qquad w_\eta = p^0 \exp ( s^{ - 1 }_\eta p^0 ) , \qquad s_\eta = ( s + \eta^2 )^\eta , \qquad \eta > 0 .
\end{align}
We shall consider negative values for $ r $ in order to control the singularities at $ p = 0 $, which appear in the case of massless particles. The weight $ w_\eta $ is basically the inverse of the J{\" u}ttner distribution, but it should be multiplied by an additional $ p^0 $ in order to control the singularities. This was first introduced in \cite{Lee21}, where the massless Boltzmann equation was studied in the FLRW case, and the factor $ s_\eta^{ - 1 } $ was not taken into account. In the Bianchi case, we need the factor $ s_\eta^{ - 1 } $ in order that the weight works properly. This will be crucially used in the estimate \eqref{dw ds}.

We notice that the above norms depend on the metric $b^{ij}$, which complicates the iteration procedure for the coupled Einstein-Boltzmann equations. To avoid this complication we introduce the following norm: 
\begin{align}
\| f \|_{ \langle r \rangle } & = \int_{ \bbr^3 } | f ( p ) | \langle p \rangle^r \, d^3 p , \qquad \langle p \rangle = \sqrt{ \delta^{ i j } p_i p_j } ,
\end{align}
which does not depend on $ b^{ i j } $. The space of functions with the norm $ \| \cdot \|_{ \langle r \rangle } $ will be denoted by $ L^1_{ \langle r \rangle } ( \bbr^3 ) $.

\subsubsection{Main theorems}\label{maintheorems}
Now, we state the main results of this paper. The main theorem is Theorem \ref{phystheorem}.

\begin{thm}\label{phystheorem}
Let $ f_0 \geq 0 $ be a smooth function with compact support in $ \bbr^3 \setminus \{ 0 \} $. Suppose that $ f_0 $ is not identically zero and satisfies the constraint \eqref{C3}. Then, there exists a unique Bianchi I solution $ \tilde{a}_{ i j } , \tilde{k}_{ i j } \in C^1 ( ( 0 , T ] ; S_3 ( \bbr ) ) $ and $ 0 \leq f \in C^1 ( ( 0 , T ] ; L^1 ( \bbr^3 ) ) $ to the massless (unrescaled) Einstein-Boltzmann system with an initial conformal gauge singularity for the scattering cross-sections in \eqref{cross} such that $ f $ converges to $ f_0 $ in $ L^1 $ as $ t \to 0^+ $. Furthermore, the solutions have the following asymptotics as $t \rightarrow 0^+$:
\begin{align}
\label{1} \tilde a _ { i j } & =  \mathcal{A} _ {  i j } t + \mathcal{B}_{ij} t  ^ { \frac { \gamma + 1 } 2 } + \mathcal{C}_{ij} t  ^ \gamma + o ( t ^ \gamma ),\\
\label{2}  \tilde{k}_{ij} & =  \mathcal{A} _ {  i j } +  \frac { \gamma + 1 } 2 \mathcal{B}_{ij} t  ^ { \frac { \gamma - 1 } 2 } + { \gamma} \mathcal{C}_{ij} t  ^ {\gamma-1} + o ( t ^ {\gamma-1} ),
\end{align}
where $\mathcal{A} _ {  i j } $, $\mathcal{B}_{ij} $ and $ \mathcal{C}_{ij}$ are constants which only depend on $f_0$ and $\gamma$ given in \eqref{not}.
\end{thm}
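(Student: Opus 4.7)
The plan is to reduce everything to the rescaled Einstein--Boltzmann system already analyzed in Theorem \ref{conformalprop} and then unwind the two changes of variable $t = \tfrac12 \tau^2$ and $(\gamma-1)s = \tau^{\gamma-1}$, together with the conformal rescaling $\tilde a_{ij}=\tau^2 a_{ij}$. The entire input is the single distribution function $f_0$; so I would first invoke Theorem \ref{fuchsian} to manufacture, uniquely from $f_0$, the matching initial data $a_0, b_0, K_0, \hat Z_0$ for the rescaled Einstein system at $s=0$. (Here the hypotheses on $f_0$ -- smooth, compactly supported away from $p=0$, not identically zero, satisfying the constraint \eqref{C3} -- are exactly what Theorem \ref{fuchsian} needs to produce a nondegenerate positive-definite $a_0$, and to make the constraints propagate.)

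Next, feed $(a_0,b_0,K_0,\hat Z_0,f_0)$ into Theorem \ref{conformalprop} to obtain a unique local solution $(a_{ij},b^{ij},K_{ij},\hat Z_{ij},f)$ on some interval $s\in[0,s_0]$ to the rescaled system \eqref{V5}, \eqref{F1}--\eqref{F4}, with $f\in C^1([0,s_0];L^1(\bbr^3))$ and the matrix components $C^1$ up to $s=0$. Converting back, set
\[
\tau(s)=\bigl((\gamma-1)s\bigr)^{1/(\gamma-1)}, \qquad t(s)=\tfrac12\tau(s)^2, \qquad \tilde a_{ij}(t)=\tau^2 a_{ij}(s),
\]
and define $\tilde k_{ij}=\tfrac12\, d\tilde a_{ij}/dt$ as required by \eqref{physicalk}. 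The map $s\mapsto t$ is a smooth, strictly increasing diffeomorphism of $(0,s_0]$ onto some $(0,T]$, so the physical solution $\tilde a_{ij},\tilde k_{ij}\in C^1((0,T];S_3(\bbr))$ follows from the rescaled $C^1$ regularity by chain rule. Similarly $f$, viewed as a function of $t$ via $s=s(t)$, inherits $C^1((0,T];L^1)$-regularity, and $f\to f_0$ in $L^1$ as $t\to 0^+$ because $s\to 0$ and by continuity of $f$ in $s$ at the singularity. Uniqueness in the physical picture is immediate from uniqueness in the rescaled one, since the transformation is invertible.

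The asymptotics \eqref{1}--\eqref{2} are then obtained by Taylor-expanding $a_{ij}(s)$ to second order at $s=0$. From \eqref{F1} we have $da_{ij}/ds=K_{ij}$ with $K_{ij}(0)=K_0$, and the Fuchsian structure of \eqref{F3}, together with the improved regularity guaranteed by Theorem \ref{ODEthm} (the appendix's upgrade of Rendall--Schmidt), implies that $K_{ij}(s)=K_0+K_{ij}'(0)\,s+o(s)$ with $K_{ij}'(0)$ computable from \eqref{F3} at $s=0$. Integrating,
\[
a_{ij}(s)=a_0+K_0\,s+\tfrac12 K_{ij}'(0)\,s^2+o(s^2).
\]
Multiplying by $\tau^2=2t$ and substituting $s=c_\gamma(2t)^{(\gamma-1)/2}$ converts the three terms $1,s,s^2$ into $t, t^{(\gamma+1)/2}, t^\gamma$, producing \eqref{1} with explicit constants $\mathcal{A}_{ij},\mathcal{B}_{ij},\mathcal{C}_{ij}$ depending only on $f_0$ (through $a_0,K_0,\hat Z_0$) and $\gamma$. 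Expansion \eqref{2} follows by differentiating \eqref{1} in $t$, the three differentiated terms scaling as $1, t^{(\gamma-1)/2}, t^{\gamma-1}$ with prefactors $\tfrac12, \tfrac{\gamma+1}4, \tfrac{\gamma}2$.

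The principal obstacle is not the conversion of variables, which is essentially bookkeeping once Theorem \ref{conformalprop} is in hand, but ensuring that the asymptotic expansion is strong enough to recover the $t^\gamma$ term: one needs \emph{two} post-leading terms, i.e., $a_{ij}$ developed to order $s^2$ with a genuine $o(s^2)$ remainder. This is precisely the improvement over Rendall--Schmidt advertised in the introduction and carried out in the appendix as Theorem \ref{ODEthm}, and it is what makes the stronger eigenvalue hypotheses on the Fuchsian coefficient matrix essential; without them only \eqref{1}--\eqref{2} truncated at the $\mathcal{B}_{ij}$-term would be accessible. The compatibility of the $C^1$-regularity in $s$ with the singular change $t\mapsto s$ must also be checked carefully, since $ds/dt$ blows up at $t=0$; but because each occurrence of $ds/dt$ is paired with a factor of $s$ coming from the $s^k$ in the expansion, all the quantities in \eqref{1}--\eqref{2} remain well-defined with the stated leading powers of $t$.
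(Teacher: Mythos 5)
Your proposal is correct and follows essentially the same route as the paper: Theorem \ref{fuchsian} to generate the data, Theorem \ref{conformalprop} for the rescaled solution, inversion of the changes of variable \eqref{ttau}, \eqref{stau}, \eqref{aa}, and a second-order Taylor expansion of $a_{ij}$ in $s$ (justified by the $C^1$ regularity of $K_{ij}$ up to $s=0$ from the improved Fuchsian theorem) substituted into $\tilde a_{ij}=2t\,a_{ij}$ with $s=c_\gamma(2t)^{(\gamma-1)/2}$. Your identification of the $C^1$-at-$s=0$ regularity (hence eigenvalues $>1$) as the essential ingredient for reaching the $t^\gamma$ term matches the paper's argument exactly.
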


Theorem \ref{phystheorem} shows that the Cauchy problem for the massless (unrescaled) Einstein-Boltzmann system with an initial conformal gauge singularity is well-posed in the Bianchi I case. The theorem will be proved by considering the rescaled Einstein-Boltzmann system \eqref{V5}, \eqref{F1}--\eqref{F4}, which are valid for arbitrary Bianchi types, but we will restrict to the Bianchi I case to obtain the desired result. In the rest of the paper, we will only consider the Bianchi I case, i.e.~$ { C^k }_{ i j } = 0 $, so that the quantity $ B $ given by \eqref{V6} will not be considered for the Boltzmann equation, and the evolution equation \eqref{F4} for $ \hat{ Z }_{ i j } $ will also be simplified. Theorem \ref{phystheorem} will be proved by applying Theorems \ref{fuchsian} and \ref{conformalprop}, which will be combined in Section \ref{sMain2} to complete the proof of Theorem \ref{phystheorem}.

\medskip

The initial data given in Theorem \ref{phystheorem} is a single distribution function $ f_0 $. To study the rescaled equations \eqref{F1}--\eqref{F4}, we also need the initial values of $ a_{ i j } $, $ b^{ i j } $, $ K_{ i j } $ and $ \hat{Z}_{ i j } $. Theorem \ref{fuchsian} shows that they are uniquely determined by $ f_0 $.

\begin{thm}\label{fuchsian}
Let $ f_0 \geq 0 $ be a smooth function with compact support in $ \bbr^3 \setminus \{ 0 \} $. Suppose that $ f_0 $ is not identically zero. Then, there exist unique $ 3 \times 3 $ symmetric matrices $ a_0 $, $ b_0 $, $ K_0 $ and $ { \hat Z }_0 $ satisfying the Fuchsian conditions \eqref{FF1}--\eqref{FF3}.
\end{thm}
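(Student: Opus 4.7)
The theorem asserts that the Fuchsian algebraic conditions \eqref{FF1}--\eqref{FF3} at $s=0$ uniquely determine the matrix data $(a_0, b_0, K_0, \hat{Z}_0)$ from $f_0$. The precise form of these conditions is dictated by the singular structure of \eqref{F1}--\eqref{F4}: from the definition \eqref{F9'} of $\hat{Z}_{ij}$, boundedness as $s\to 0$ forces the nonlinear constraint
\be
a_{0,ij} = \frac{32\pi}{\sqrt{\det a_0}}\int_{\bbr^3} f_0(p)\, p_ip_j \frac{d^3p}{\sqrt{b_0^{kl}p_kp_l}}, \qquad b_0 = a_0^{-1},
\ee
whereas vanishing of the $1/s$ coefficient in \eqref{F3} at $s=0$ and the $s=0$ equilibrium condition for \eqref{F4} give two further constraints, both linear in $(K_0, \hat{Z}_0)$ once $(a_0, b_0)$ is fixed.

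My plan splits into two stages. The first and harder stage solves the nonlinear equation above for $a_0$ on the cone $\mathcal{P} \subset S_3(\bbr)$ of positive-definite symmetric matrices. I set
\be
\Phi(a)_{ij} := \frac{32\pi}{\sqrt{\det a}}\int f_0(p)\,p_ip_j \frac{d^3p}{\sqrt{(a^{-1})^{kl}p_kp_l}}
\ee
and seek a fixed point $\Phi(a_0) = a_0$. The scaling identity $\Phi(\lambda a) = \lambda^{-1}\Phi(a)$ reduces this to finding $\hat a \in \mathcal{P}_1 := \{\det \hat a = 1\}$ and $c > 0$ with $\Phi(\hat a) = c\hat a$, after which $a_0 = c^{1/2}\hat a$. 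For existence I would apply Brouwer's theorem to the normalized map $\hat\Phi(\hat a) := \Phi(\hat a)/[\det \Phi(\hat a)]^{1/3}$ on a suitable compact, convex sublevel set of $\mathcal{P}_1$ in the symmetric-space coordinates on $SL_3(\bbr)/SO(3)$; continuity of $\hat\Phi$ is immediate, and the compact support of $f_0$ in $\bbr^3 \setminus \{0\}$ together with its non-triviality provide the needed compactness and non-degeneracy. For uniqueness I would show that any two fixed points must coincide by a geodesic-convexity argument on the affine-invariant Riemannian structure of $\mathcal{P}$, exploiting strict convexity of $b \mapsto -\log\det b$ along geodesics and the controlled behaviour of $\int f_0\sqrt{b^{kl}p_kp_l}\,d^3p$ along them. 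Positive definiteness of $a_0$ is then automatic, since $\int f_0(p)p_ip_j(\cdots)d^3p$ defines a positive-definite matrix whenever $f_0 \geq 0$ is non-trivial.

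The second stage is essentially linear algebra. Given $(a_0, b_0)$, the Fuchsian condition from \eqref{F3} directly expresses $\hat{Z}_0$ as a linear function of $K_0$; substituting into the Fuchsian condition from \eqref{F4} yields an equation of the form $\mathcal{L}_{a_0}[K_0] = \mathcal{S}(f_0, a_0)$ on $S_3(\bbr)$, where $\mathcal{L}_{a_0}$ is a linear operator of identity-plus-perturbation type with the perturbation built from fourth-order $f_0$-moments, and $\mathcal{S}$ is the source coming from the $s=0$ value of the collision integral in \eqref{F4}. Invertibility of $\mathcal{L}_{a_0}$ I would establish by showing its symmetric part is strictly positive on $S_3(\bbr)$ using positivity of $f_0$; this pins down $K_0$ uniquely, and back-substitution gives $\hat{Z}_0$. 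The principal obstacle throughout is uniqueness of $a_0$: existence follows from a standard topological argument, but ruling out additional solutions requires the careful analysis of the symmetric-space geometry of $\mathcal{P}$ just described, and that is the heart of the proof.
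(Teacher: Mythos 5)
Your two-stage decomposition is the right one, and your second stage is essentially the paper's argument: the paper likewise uses \eqref{FF2bis} to express $\hat Z_0$ in terms of $K_0$ and reduces \eqref{FF3bis} to the single linear equation $\bigl[(\gamma^2-\gamma+2)I+3\gamma\pi-2\chi\bigr]K_0=2(\gamma-1)H$, proving invertibility by a spectral analysis of the moment operators $\pi$ and $\chi$ (they commute, $\pi$ is a rank-one projection, $\chi$ is positive with $\tr\chi=3/2$ and eigenvalue $1/2$ on the range of $\pi$, hence all eigenvalues of $\chi$ lie in $(0,1)$); your ``positive symmetric part'' formulation is an equivalent packaging of the same facts. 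The genuine divergence is in the first stage: the paper does not prove existence and uniqueness of $a_0$ at all but cites Theorem 4.1 of \cite{AT99}, whereas your Brouwer-plus-geodesic-convexity scheme is a self-contained substitute in the same variational spirit. What your route buys is independence from the reference; what it costs is that the compactness and convexity inputs must be built by hand.

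Two soft spots. First, in the uniqueness argument for $a_0$: along geodesics of the affine-invariant metric on the positive cone, $t\mapsto\log\det b(t)$ is \emph{affine}, not strictly convex, so $-\log\det b$ contributes nothing to strictness; the strict geodesic convexity must come entirely from the moment functional $b\mapsto\int f_0\sqrt{b^{kl}p_kp_l}\,d^3p$ (which is strictly geodesically convex on the unimodular slice precisely because the support of $f_0$ has nonempty interior away from the origin). Moreover the Brouwer step requires an invariant compact convex set, i.e.\ a coercivity estimate ruling out escape to degenerate metrics; direct minimization of the coercive moment functional on $\{\det b=1\}$ delivers existence and uniqueness simultaneously and more cleanly. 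Second, in stage two, positivity of $f_0$ alone does not bound $\chi$ from above; the needed bound $\chi<I$ on $\ker\pi$ uses the normalization $\Psi_{ij}=\tfrac12 a_{0\,ij}$ from \eqref{FF1}, which fixes $\tr\chi=3/2$ and forces the eigenvalue $1/2$ on the range of $\pi$. Without that input the perturbation $-2\chi$ could in principle overwhelm $(\gamma^2-\gamma+2)I$ when $\gamma$ is close to $1$.
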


Theorem \ref{fuchsian} shows that the initial values $ a_0 $, $ b_0 $, $ K_0 $ and $ { \hat Z }_0 $ for the rescaled equations \eqref{F1}--\eqref{F4} are uniquely determined by $ f_0 $ through the Fuchsian conditions \eqref{FF1}--\eqref{FF3}. The Fuchsian conditions are necessary conditions for singular ODEs to have solutions differentiable at $ s = 0 $, which we will assume in Theorem \ref{conformalprop} to obtain the existence of solutions $ a_{ i j } $, $ b^{ i j } $, $ K_{ i j } $, $ \hat{ Z }_{ i j } $ and $ f $ to the rescaled Einstein-Boltzmann system \eqref{V5}, \eqref{F1}--\eqref{F4}. The proof of Theorem \ref{fuchsian} will be given separately in Sections \ref{ssfc} and \ref{eigen}.

\medskip

Now, we are given a unique set of initial data $ a_0 $, $ b_0 $, $ K_0 $, $ \hat{ Z}_0 $ and $ f_0 $ for the rescaled Einstein-Boltzmann system \eqref{V5}, \eqref{F1}--\eqref{F4} satisfying the Fuchsian conditions \eqref{FF1}--\eqref{FF3}. Theorem \ref{conformalprop} shows that the rescaled Einstein-Boltzmann system \eqref{V5}, \eqref{F1}--\eqref{F4} has a local solution corresponding to the initial data.

\begin{thm}\label{conformalprop}
Let $ a_0 , b_0 , K_0 , { \hat Z }_0 \in S_3 ( \bbr ) $ and $ 0 \leq f_0 \in L^1 ( \bbr^3 ) $ be initial data of the rescaled Einstein-Boltzmann system \eqref{V5}, \eqref{F1}--\eqref{F4} with Bianchi I symmetry, satisfying the Fuchsian conditions \eqref{FF1}--\eqref{FF3} and the constraint \eqref{C3}. Suppose that
\begin{align}
f_0 \in L^1_1( \bbr^3 ) \cap L^1_{ - 2 - \delta / 2 } ( \bbr^3 ) \cap L^\infty_\eta ( \bbr^3) , \qquad \frac{ \partial f_0 }{ \partial p } \in L^1_1 ( \bbr^3 ) \cap L^1_{ - 1 - \delta / 2 } ( \bbr^3 ) ,
\end{align}
where $ \delta $ and $ \eta $ are positive real numbers satisfying
\begin{align}
\gamma + \delta < 2 , \qquad \eta< \frac{ 1 }{ 2 \max ( \| a_0 \| , \| b_0 \| , \| K_0 \| , \| \hat{Z}_0 \| )^2 } .
\end{align}
Then, there exists a time interval $ [ 0 , T ] $ on which the rescaled Einstein-Boltzmann system has a unique solution $ a_{ i j } , b^{ i j } , K_{ i j } , { \hat Z }_{ i j } \in C^1 ( [ 0 , T ] ; S_3 ( \bbr ) ) $ and $ 0 \leq f \in C^1 ( [ 0 , T ] ; L^1 ( \bbr^3 ) ) $. Moreover, the distribution function $ f $ satisfies
\begin{align}
\sup_{ 0 \leq s \leq T } \| f ( s ) \|_{ L^1_{ - 1 } } + \sup_{ 0 \leq s \leq T } \| f ( s ) \|_{ L^1_1 } + \sup_{ 0 \leq s \leq T } \| f ( s ) \|_{ L^\infty_\eta } \leq C .
\end{align}
\end{thm}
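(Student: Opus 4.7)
The plan is a Picard-style iteration that decouples the Einstein and Boltzmann blocks, treating Proposition \ref{propeinstein} (existence for the singular Einstein ODE system \eqref{F1}--\eqref{F4} given a distribution function), Proposition \ref{prop B} together with Lemma \ref{lem dfdp} (existence for the Boltzmann equation \eqref{V5} in the weighted spaces $L^1_{-1}$ and $L^1_{-2-\delta/2}$ given a metric), and Proposition \ref{prop C} (continuous dependence of $f$ on the metric) as black-box ingredients. Because we are in the Bianchi I case, the transport term $B(f,b^{ij};p_i,s)$ of \eqref{V6} vanishes and the Boltzmann equation reduces to $\partial_s f = Q(f,f)$, with $b^{ij}$ entering only through the kernel $Q$ and the definition $p^0 = \sqrt{b^{ij}p_ip_j}$.

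First I would fix a small $T>0$ to be determined and work in the complete metric space $X_T$ of tuples
\[
(a,b,K,\hat Z,f) \in C^0([0,T]; S_3(\bbr))^4 \times C^0([0,T]; L^1_{\langle 1\rangle}\cap L^1_{\langle -2-\delta/2\rangle} \cap L^\infty_\eta)
\]
confined to a closed ball around the constant initial datum $(a_0,b_0,K_0,\hat Z_0,f_0)$, with $a$ and $b$ uniformly positive-definite and the weighted bounds on $f$ propagated from the hypotheses on $f_0$; the metric-free norms $\|\cdot\|_{\langle r\rangle}$ are used in place of $\|\cdot\|_{L^1_r}$ precisely so that the iteration space does not depend on the varying $b^{ij}$, with the two families being uniformly equivalent on the ball. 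Given an iterate, the Boltzmann step produces $f^{n+1}$ via Proposition \ref{prop B} and Lemma \ref{lem dfdp} with uniform-in-$n$ weighted bounds; feeding $f^{n+1}$ into \eqref{F1}--\eqref{F4} and applying Proposition \ref{propeinstein} yields the next metric tuple, $C^1$ on $[0,T]$ with bounds depending only on the a priori ball of $X_T$. Since both propositions supply existence times controlled solely by these uniform bounds, one $T$ works throughout the iteration.

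For convergence I would estimate differences of consecutive iterates in a weaker norm, say on $C^0([0,T]; S_3(\bbr)^4 \times L^1_{\langle 0\rangle})$. Proposition \ref{prop C} bounds $\|f^{n+1}-f^n\|$ by the difference of the metric iterates, while the fully explicit dependence encoded in Proposition \ref{propeinstein}, inherited from Theorem \ref{ODEthm} of the appendix and its careful bookkeeping, bounds the difference of the next metric iterates by $\|f^{n+1}-f^n\|$. Composing the two and shrinking $T$ once more absorbs the multiplicative constants and produces a contraction with factor strictly less than one. Its unique fixed point is the desired solution; the $C^1$ regularity up to $s=0$ is exactly the strengthened regularity conclusion of Proposition \ref{propeinstein} (the whole reason the sharpened appendix theorem is needed), and the final estimate on $\sup_s \|f(s)\|_{L^1_{-1}} + \|f(s)\|_{L^1_1} + \|f(s)\|_{L^\infty_\eta}$ is just the uniform bound that the iterates are constructed to obey.

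The main obstacle is closing everything uniformly down to the singular time $s=0$. Equations \eqref{F3}--\eqref{F4} carry $1/s$ coefficients and source terms involving momentum integrals of $f$ against both $p_ip_j/p^0$ and $p_ip_j/(p^0)^2$, so one needs simultaneous control of $f$ with positive \emph{and} negative powers of the metric-dependent $p^0$; this is why the iteration requires both $L^1_{\langle 1\rangle}$ and $L^1_{\langle -2-\delta/2\rangle}$ bounds, and why the additional $L^\infty_\eta$ bound with the $s_\eta^{-1}$ factor --- whose role is precisely the delicate Bianchi estimate \eqref{dw ds} --- is needed to absorb the low-momentum weight where the $L^1_{\langle -2-\delta/2\rangle}$ norm alone would diverge. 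The constraint $\gamma+\delta<2$ is exactly what makes the gain $h^{2-\gamma}$ in the collision kernel dominate the $1/(p^0 q^0)$ factors after the weight is applied, and the bound $\eta < (2\max(\|a_0\|,\|b_0\|,\|K_0\|,\|\hat Z_0\|)^2)^{-1}$ guarantees that the exponential weight remains integrable against the source terms throughout the iteration. Preservation of the Fuchsian conditions \eqref{FF1}--\eqref{FF3} across iterates, required to reapply Proposition \ref{propeinstein} at each stage, is handled because the initial data are fixed by Theorem \ref{fuchsian} and the constraint \eqref{C3} is propagated by the Einstein evolution, so the iterates never leave the Fuchsian class.
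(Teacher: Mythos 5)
Your proposal is correct and follows essentially the same route as the paper: an alternating iteration between the Einstein block (Proposition \ref{propeinstein}, with the explicit dependence on $N$ and $H$ from Theorem \ref{ODEthm}) and the Boltzmann block (Propositions \ref{prop B}, \ref{prop C} and Lemma \ref{lem dfdp}), with a uniform existence time maintained by the a priori ball and convergence obtained by chaining the two continuous-dependence estimates to get a contraction factor after shrinking $T$. The only difference is presentational — you phrase it as a fixed point of a map on a product space, whereas the paper runs the sequence directly and shows geometric convergence — but the estimates involved are identical.
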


We note that Theorem \ref{conformalprop} is proved for $ f_0 \in L^1 ( \bbr^3 ) $ with several different weights. We assume in Theorem \ref{phystheorem} that $ f_0 $ is compactly supported in $ \bbr^3 \setminus \{ 0 \} $, but it is understood in Theorem \ref{conformalprop} as $ f_0 \in L^1_r ( \bbr^3 ) $ with $ r $ negative. For the proof of Theorem \ref{conformalprop}, we study the Einstein and the Boltzmann equations separately in Sections \ref{sEinstein} and \ref{sBoltzmann}. In Section \ref{sEinstein}, we study the Einstein equations \eqref{F1}--\eqref{F4} for a given $ f $ to obtain the existence in Proposition \ref{propeinstein}. In Section \ref{sBoltzmann}, we study the Boltzmann equation. The existence is obtained in Proposition \ref{prop B}, and the continuous dependence on the metric is obtained in Proposition \ref{prop C}. Finally in Section \ref{sMain1}, we combine Propositions \ref{propeinstein}, \ref{prop B} and \ref{prop C} to complete the proof of Theorem \ref{conformalprop}.

\subsection{Rescaled collision operator}
Let us consider the collision operator \eqref{C} in more detail. Recall that the equation \eqref{V5} with \eqref{V6} and \eqref{C} is the rescaled Boltzmann equation in the unphysical space-time $ ( M , g ) $, obtained from the physical space-time $ ( \tM , \tg ) $ through the rescaling \eqref{g1}. By the choice \eqref{pscaling} of the coordinates on $ N $, the momentum conservation \eqref{conservationtilde} reduces to
\begin{align}
p_\alpha + q_\alpha = p'_\alpha + q'_\alpha. \label{conservation}
\end{align}
Here, $ p_\alpha $, $ q_\alpha $, $ p'_\alpha $ and $ q'_\alpha $ are null momenta in $ ( M , g ) $, where $ g $ is given by \eqref{m3} so that we have
\begin{align}
p^0 = \sqrt{ b^{ i j } p_i p_j } , \qquad q^0 = \sqrt{ b^{ i j } q_i q_j } ,
\end{align}
and similar formulae for $ p'^0 $ and $ q'^0 $, where $ b^{ i j } $ is the matrix inverse of $ a_{ i j } $. In order to derive the parametrization of post-collision momenta, we consider an orthonormal frame $ e_i $ with $ e^i $ satisfying
\begin{align}
{ e_i }^k { e_j }^l a_{ k l } = \delta_{ i j }, \qquad { e^i }_k { e^j }_l b^{ k l } = \delta^{ i j }, \label{aeb}
\end{align}
which can be explicitly given, as in \cite{LLN}, by
\begin{align}
( { e_j }^i ) & = 
\begin{pmatrix}
e_1 & e_2 & e_3
\end{pmatrix} =
\begin{pmatrix}
\frac{b^{11}}{\sqrt{b^{11}}}  & 0 & 0 \\ 
\frac{b^{12}}{\sqrt{b^{11}}} & \frac{b^{11} b^{22} - (b^{12})^2}{\sqrt{b^{11}(b^{11} b^{22} - (b^{12})^2)}} & 0 \\
\frac{b^{13}}{\sqrt{b^{11}}} & \frac{b^{11} b^{23} - b^{12}b^{13}}{\sqrt{b^{11}(b^{11} b^{22} - (b^{12})^2)}} & \frac{(\det b)}{\sqrt{(b^{11} b^{22} - (b^{12})^2)(\det b)}}
\end{pmatrix}, \label{ortho_e} \\
( { e^i }_j ) & = 
\begin{pmatrix}
e^1 \\
e^2 \\
e^3
\end{pmatrix} =
\begin{pmatrix}
\frac{1}{\sqrt{b^{11}}}  & 0 & 0 \\ 
\frac{-b^{12}}{\sqrt{b^{11}(b^{11} b^{22} - (b^{12})^2)}} & \frac{b^{11}}{\sqrt{b^{11}(b^{11} b^{22} - (b^{12})^2)}} & 0 \\
\frac{b^{12}b^{23} - b^{13} b^{22}}{\sqrt{(b^{11} b^{22} - (b^{12})^2)(\det b)}} & \frac{-b^{11} b^{23} + b^{12}b^{13}}{\sqrt{(b^{11} b^{22} - (b^{12})^2)(\det b)}} & \frac{b^{11} b^{22} - (b^{12})^2}{\sqrt{(b^{11} b^{22} - (b^{12})^2)(\det b)}}
\end{pmatrix}. \label{ortho_theta}
\end{align}
For simplicity of notations we will denote the $ 3 \times 3 $ matrices \eqref{ortho_e} and \eqref{ortho_theta} by
\begin{align}
e = ( { e_j }^i )_{ i , j = 1 , 2 , 3 } , \qquad e^{ - 1 } = ( { e^i }_j )_{ i , j = 1 , 2 , 3 } , \label{einverse}
\end{align}
respectively, as $ { e_j }^i $ and $ { e^i }_j $ are matrix inverses to each other.

Now, we write the collision operator \eqref{C} with respect to the orthonormal frame to obtain the same representation as in the Minkowski case. To be precise, let us write
\begin{align}
p = p_i \sigma^i = \hat{ p }_j e^j , \qquad p_i = \hat{ p }_j { e^j }_i , \label{phat}
\end{align}
to obtain
\begin{align}
h = \sqrt{2p^0q^0 - 2\delta^{ij}\hat{p}_i \hat{q}_j },\qquad p^0 = \sqrt{\delta^{ij } \hat{p}_i \hat{p}_j },\qquad q^0 = \sqrt{\delta^{ij } \hat{q}_i \hat{q}_j }. \label{hp^0q^0hat}
\end{align}
In particular, the representations of post-collision momenta are given by
\begin{align}
p'^0 & = \frac{1}{2}( n^0 + \hat{n}_i \omega_j \delta^{ i j } ), \label{phat'^0} \\
q'^0 & = \frac{1}{2}( n^0 - \hat{n}_i \omega_j \delta^{ i j } ), \label{qhat'^0}
\end{align}
and
\begin{align}
\hat{p}'_j & = \frac{1}{2}\left( \hat{n}_j + h\omega_j + \frac{ \hat{n}_k \omega_l \delta^{ k l } \hat{ n }_j }{ n^0 + h } \right), \label{phat'} \\
\hat{q}'_j & = \frac{1}{2}\left( \hat{n}_j - h\omega_j - \frac{ \hat{n}_k \omega_l \delta^{ k l } \hat{ n }_j }{ n^0 + h } \right), \label{qhat'}
\end{align}
where we write $ n^0 = p^0 + q^0 $ and $ \hat{ n }_i = \hat{ p }_i + \hat{ q }_i $ for simplicity, and $ \omega_j $ is a unit vector in $ \bbs^2 $. These are the same as (19) of \cite{LNT1}, and they can also be derived from the representations in \cite{SY} by assuming massless particles. In order to write the post-collision momenta in terms of the variables $ p_i $, $ q_i $ and $ \omega_i $, we use \eqref{aeb} and \eqref{phat} again to obtain
\begin{align}\label{hp^0q^0}
h = \sqrt{2p^0q^0 - 2b^{ij} p_i q_j},\qquad p^0 = \sqrt{b^{ij} p_i p_j},\qquad q^0 = \sqrt{b^{ij} q_i q_j}.
\end{align}
Moreover, since $ { e^j }_l b^{ k l } = \delta^{ i j } { e_i }^k $, we have
\begin{align}
\hat{ n }_i \omega_j \delta^{ i j } = \hat{ n }_i \omega_j { e^i }_k { e^j }_l b^{ k l } = n_k \omega_j { e^j }_l b^{ k l } = n_k \omega_j \delta^{ i j } { e_i }^k ,
\end{align}
so that the representations \eqref{phat'^0}--\eqref{qhat'} are now written as
\begin{align}
p'^0 & = \frac{1}{2} (n^0 + n_k \omega_j \delta^{ i j } { e_i }^k),\label{p'^0}\\
q'^0 & = \frac{1}{2} (n^0 - n_k \omega_j \delta^{ i j } { e_i }^k),\label{q'^0}
\end{align}
and
\begin{align}
p'_i & = \frac{1}{2}\left(n_i + h \omega_j { e^j }_i + \frac{ n_l \omega_k \delta^{ j k } { e_j }^l n_i }{ n^0 + h}\right),\label{p'}\\
q'_i & = \frac{1}{2}\left(n_i - h \omega_j { e^j }_i - \frac{ n_l \omega_k \delta^{ j k } { e_j }^l n_i }{ n^0 + h}\right). \label{q'}
\end{align}
Here, $n^0$ and $n_i$ denote 
\begin{align}
n^0 = p^0 + q^0,\qquad n_i = p_i + q_i,
\end{align}
and $\omega_j \in \bbs^2 $ is unit in the sense
\begin{align}
\sum_{j=1}^3(\omega_j)^2 = 1. 
\end{align}
We conclude from \eqref{p'^0}--\eqref{q'} that the post-collision momenta are now parametrised by $p_i , q_i \in \bbr^3 $, $ \omega_i \in \bbs^2 $ and $b^{ij}$. Finally, the collision operator \eqref{C} is determined by \eqref{hp^0q^0}, \eqref{p'} and \eqref{q'}.

\subsection{The Fuchsian conditions}\label{ssfc}
If the system consisting of \eqref{F1}--\eqref{F4} with \eqref{F9'} admits solutions differentiable at $s=0$, then we may multiply \eqref{F3}, \eqref{F4} and \eqref{F9'} by $s$ and set $s=0$ to obtain relations between the initial values $a_{0\,ij} $, $ b_0^{\;\;ij} $, $ K_{0\,ij} $, $ \hat{Z}_{0\,ij} $ and $ f_0 $. These are called the Fuchsian conditions and given by
\begin{align}
0 & = \frac{ 8 \pi }{ \sqrt{ \det a_0 } } \int f_0 p_i p_j \frac{ d^3 p }{ p^0 } - a_{0\,ij}, \label{FF1} \\
0 & = - \gamma K_{0\,ij} - (b_0^{\;mn}K_{0\,mn})a_{0\,ij}+2\hat{Z}_{0\,ij}, \label{FF2}\\
0&= - ( \gamma -1)\hat{Z}_{0\,ij}-K_{0\,ij} +\frac{4\pi}{\sqrt{ \det a_0}}\int f_0p_ip_j(K_0^{\;mn}p_mp_n - ( p^0 )^2 b_0^{\;mn}K_{0\,mn} )\frac{d^3p}{ ( p^0 )^3 } \nonumber\\
&\quad + \frac{8\pi }{ \det a_0 } \iiint \frac{h^{2- \gamma }}{p^0 q^0} f_0 (p) f_0 (q)\left(\frac{p'_ip'_j}{p'^0} - \frac{p_ip_j}{p^0}\right)\, d\omega\, d^3q\, d^3p . \label{FF3}
\end{align}
Here, we remark that the quantities $ p^0 $, $ q^0 $, $ h $, $ p'_i $, $ p'_j $ and $ p'^0 $ are evaluated at $ s = 0 $. To be precise, 
\begin{align}
p^0 |_{ s = 0 } = \sqrt{ { b_0 }^{ i j } p_i p_j } , \qquad q^0|_{ s = 0 } = \sqrt{ { b_0 }^{ i j } q_i q_j } ,
\end{align}
so that
\begin{align}
h |_{ s = 0 } = \sqrt{ 2 p^0 |_{ s = 0 } q^0 |_{ s = 0 } - 2 { b_0 }^{ i j } p_i q_j } ,
\end{align}
and $ p'_i $, $ p'_j $ and $ p'^0 $ are evaluated at $ s = 0 $ by using \eqref{p'^0} and \eqref{p'} with $ { b_0 }^{ i j } $.

The Fuchsian conditions \eqref{FF1}--\eqref{FF3} imply that the initial values $ { a_0 }_{ i j } $, $ { b_0 }^{ i j } $, $ { K_0 }_{ i j } $ and $ \hat{Z}_{ 0 \, i j } $ are uniquely determined by $ f_0 $ as we will now show.
Equation \eqref{FF1} determines $ { a_0 }_{ i j } $ uniquely given $ f_0 $ by the argument of Theorem 4.1 of \cite{AT99}. Define $ { b_0 }^{ij}$ to be the inverse of $ a _ { 0 \, i j } $. Then $b^{ij}$ will be the inverse of $a_{ij}$ for all time as a direct computation shows. Define
\begin{align}
\zeta^{k}_i = a_{ij} b^{jk} - \delta_i^k.
\end{align}
Then from \eqref{F1} and \eqref{F2} we have
\begin{align}
\frac{d}{ds} \zeta^{k}_i & = b^{jk} K_{ij} -a_{ij} b^{jm}b^{kn}K_{nm} \nonumber \\
& = b^{nk}K_{in} - a_{ij} b^{jm}b^{nk}K_{mn} \nonumber \\
& = \delta^m_i b^{nk} K_{mn}  - a_{ij} b^{jm}b^{nk}K_{mn} \nonumber \\
& = - \zeta^m_i b^{nk} K_{mn}.
\end{align}
As a consequence if $\zeta$ is zero initially it will remain zero for all time, which means that $b^{ i j } $ is the inverse for $a_{ i j } $ for all time if it is initially. Now given $f_0$, $a _ { 0 \, i j } $ and ${ b_0}^{ij}$ we will see in the end of Section \ref{eigen} that $ K _ { 0 \, i j } $ and $ \hat Z _ { 0 \, i j } $ are
then also uniquely determined.

\subsection{The Einstein constraints}\label{constraints}
Together with the Einstein evolution equations \eqref{E1}--\eqref{E3} there will be the Einstein constraints. For the Hamiltonian constraint in terms of conformal time $\tau$ we have from \cite{T07}:
\be
C:=\frac{4\pi}{\sqrt{ \det a}}\int fp^0 d^3p-\frac{3}{2}-\tau^2 \left( \frac{R}{4}-\frac{1}{16}k^{ij}k_{ij}+\frac{1}{16}k^2 \right) -\frac{1}{2}k\tau+\frac{\Lambda}{2}\tau^4=0,
\label{C1}
\ee
while for the momentum constraint we have:
\be
C_i:=\frac{8\pi}{\sqrt{ \det a}}\int fp_i \, d^3p-\tau^2(b^{mn}k_{nj}C^j_{\;\;mi}+b^{mn}k_{ni}C^j_{\;\;mj})=0.
\label{C2}
\ee
Following \cite{T07} we have for the evolution of the constraints:
\begin{align}
\partial_\tau(\tau^2 \det a \, C)&= C_ib^{ij}C^m_{\;\;jm},\\
\partial_\tau(\sqrt{ \det a} \, C_i)&= 0,
\end{align}
so that the constraints are satisfied at all times if satisfied initially. (This argument can be rephrased in terms of $s$, when it still holds.) To satisfy the constraints initially we need for (\ref{C2}) a condition on the initial distribution function $f_0$:
\be
\int f_0p_i\,d^3p=0.
\label{C3}
\ee
Note that we have from \eqref{C1}
\be
\int f_0(b_0^{\;ij}p_ip_j)^{1/2}\, d^3p=\frac{3\sqrt{ \det a_0}}{8\pi},
\label{C4}
\ee
which is essentially just a normalisation condition and follows from the Fuchsian condition \eqref{FF1}. Hence, the Einstein constraints reduce to the single integral constraint \eqref{C3}.

\section{Estimates for the Einstein equations}\label{sEinstein}
In this part we obtain the existence for the Einstein equations. Assuming that the distribution function $ f $ is given, we will show that the Einstein equations \eqref{F1}--\eqref{F4} admit solutions in a suitable sense. The main result of this section is Proposition \ref{propeinstein}, and its proof is a careful application of Theorem \ref{ODEthm} from the appendix. In order to apply Theorem \ref{ODEthm} we first transform the equations \eqref{F1}--\eqref{F4} into a suitable form, which will be done in Section \ref{sEinsteinODE}. Proposition \ref{propeinstein} will be presented in Section \ref{ExistEinstein}, and its proof will be given in Section \ref{sproofE}.

\subsection{The Einstein equations in a suitable form}\label{sEinsteinODE}
We need to put the equations \eqref{F1}--\eqref{F4} in a suitable form in order to apply Theorem~\ref{ODEthm} from the appendix. The distribution function $f$ will appear in the system, but the Boltzmann equation $\eqref{V5}$ with $\eqref{V6}$ and \eqref{C} will not be part of the system. It will be convenient to introduce the following tensor with an arbitrary number $n \geq 2 $ of indices:
\begin{align}\label{Psi}
\Psi_{i_1 i_2 \cdots i_n}= \frac { 4 \pi } { \sqrt { \det a } } \int \left ( b ^ { k l } p_k p_l \right)^{ -\frac{n-1}{2} } p_{i_1} p_{i_2}\cdots p_{i_n} f  \, d^3 p.
\end{align}
Note that it is symmetric under permutation of any of its indices and
\begin{align}
b^{ij} \Psi_{ijk_1\cdots k_m}  = \Psi_{k_1\cdots k_m}
\end{align}
for any $m\geq 2$. In the same way we define for an arbitrary number $n \geq 2$ of indices
\begin{align}\label{Phi}
\Phi_{i_1 i_2 \cdots i_n} = \frac { 4 \pi } { \sqrt { \det a } } \int \left ( b ^ { k l } p _ k p _ l \right ) ^ { - \frac{n-1}{2}  } p_{i_1} p_{i_2} \cdots p_{i_n}  \frac{ \partial f}{\partial s}  \, d^3p .
\end{align}
We also define the following tensors with $n \geq 4$ lower indices  and one upper index
\begin{align}
\Xi_{i_1 i_2 \cdots i_n}^j & = - \frac{ 8 \pi c_\gamma^{2- c_\gamma }} { \sqrt{ \det a } }  \int  \left( b ^ { k l } p _ k p _ l \right ) ^ { - \frac{n-2}{2} }  p_{i_1} p_{i_2} \cdots p_{i_n}  \frac{ \partial f}{\partial p_j} \, d^3 p, \label{Xi} \\
\Upsilon_{i_1 i_2 \cdots i_n}^j & = - \frac { 8 \pi c_\gamma^{2- c_\gamma }} { \sqrt{ \det a } }  \int  \left( b ^ { k l } p _ k p _ l \right ) ^ { - \frac{n-2}{2} }  p_{i_1} p_{i_2}\cdots p_{i_n}  \frac{ \partial^2 f}{\partial p_j \partial s}  \, d ^ 3 p. \label{xi}
\end{align}
Note that the above tensors \eqref{Psi}--\eqref{xi} do not depend on $ K_{ i j } $ nor $ \hat{ Z }_{ i j } $.

\subsubsection{The evolution equations}
We first write the evolution equations \eqref{F1}--\eqref{F4} as
\begin{align}
&\label{I}\frac{d}{ds}a_{ij}=K_{ij},\\
&\label{II}\frac{d}{ds}b^{ij}=-b^{im}b^{jn}K_{nm},\\
&\label{III}\frac{d}{ds}K_{ij}= \frac{ c_\gamma }{s}(- \gamma K_{ij} - 3 \pi^{mn}_{ij}  K_{mn} +2\hat{Z}_{ij}) + G_{ij},\\
&\label{IV}\frac{d}{ds}\hat{Z}_{ij}=-\frac{1}{s}\hat{Z}_{ij}+\frac{ c_\gamma }{s}(-K_{ij}+\chi^{qr}_{ij} K_{qr}-\Pi^{mn}_{ij} K_{mn} )+\frac1s H_{ij},
\end{align}
where
\begin{align}
&\pi ^ { m n } _ { i j } = \frac 1 3 a _ { i j } b ^ { m n } ,
\qquad \chi^{qr}_{ij}= \Psi_{ijmn}b^{mq}b^{nr}, 
\qquad \Pi_{ij}^{mn}= \Psi_{ij} b^{mn}, \label{pi}
\end{align}
and
\begin{align}
G_{ij} &=K_{im}b^{mn}K_{nj}-\frac12 b^{mn}K_{mn} K_{ij} - 2 c_\gamma^{2-2 c_\gamma } s^{2 c_\gamma -2}R_{ij}+2\Lambda c_\gamma^{2-4 c_\gamma } s^{4 c_\gamma -2} a_{ij},\\
H_{ij} & = \frac{8\pi c_\gamma }{ \det a}\iiint \frac{h^{2- \gamma }}{p^0 q^0}f(p)f(q)\left(\frac{p'_i p'_j}{p'^0} - \frac{p_i p_j}{p^0}\right) d\omega \, d^3q\, d^3p. \label{H}
\end{align}
Here, we assumed the Bianchi I symmetry. Otherwise, the following term should be added in \eqref{H}:
\begin{align}
s^{ c_\gamma -1} C^k_{mn} b^{nr} \Xi_{ijkr}^m .
\end{align}
We can transform the equations \eqref{I}--\eqref{IV} into matrix form as follows:
\begin{align}
\frac{d}{ds}
& \begin{pmatrix}
a_{ij} \\ 
b^{ij}
\end{pmatrix}
=
\begin{pmatrix}
K_{ij}\\
-b^{im}b^{jn}K_{nm}
\end{pmatrix}, \label{ODEx} \\
s \frac{d}{ds} 
& \begin{pmatrix}
K_{ij} \\ 
\hat{Z}_{ij} 
\end{pmatrix}
+ c_\gamma 
\begin{pmatrix}
 \gamma \delta^m_i \delta^n_j + 3 \pi_{ij}^{mn}  & -2\delta^m_i \delta^n_j \\ 
\delta^m_i \delta^n_j - \chi_{ij}^{mn} + \Pi_{ij}^{mn}& c_\gamma^{-1}\delta^m_i \delta^n_j 
\end{pmatrix}
\begin{pmatrix}
K_{mn} \\ 
\hat{Z}_{mn} 
\end{pmatrix}
= s \begin{pmatrix}
G_{ij} \\ 0 
\end{pmatrix}
+ \begin{pmatrix}
0 \\ 
H_{ij} 
\end{pmatrix}. \label{ODE}
\end{align}
Let $ x $, $ y $, $ F $, $ G $ and $ H $ denote the following quantities:
\begin{align} \label{x}
x= \begin{pmatrix}
a_{ij}\\
b^{ij}
\end{pmatrix},
\quad
y=\begin{pmatrix}
K_{ij}\\
\hat{Z}_{ij}
\end{pmatrix},
\quad
F= \begin{pmatrix}
K_{ij}\\
-b^{im}b^{jn} K_{nm}
\end{pmatrix},
\quad
G= \begin{pmatrix}
G_{ij}\\
0
\end{pmatrix},
\quad
H= \begin{pmatrix}
0\\
H_{ij}
\end{pmatrix},
\end{align}
and $ N $ be the coefficient function
\begin{align}\label{N}
N= c_\gamma 
\begin{pmatrix}
\gamma \delta^m_i \delta^n_j + 3 \pi_{ij}^{mn}  & -2\delta^m_i \delta^n_j \\ 
\delta^m_i \delta^n_j - \chi_{ij}^{mn} + \Pi_{ij}^{mn} & c_\gamma^{-1}\delta^m_i \delta^n_j 
\end{pmatrix}.
\end{align}
Now, we observe that $ F $ and $ G $ depend on both $ x $ and $ y $, while $ N $ and $ H $ depend only on $ x $, so that the equations \eqref{ODEx}--\eqref{ODE} are in the form \eqref{IVP} from the appendix.

\subsubsection{Fuchsian and compatibility condition}
Together with the Einstein equations \eqref{F1}--\eqref{F4}, which have been rewritten as \eqref{ODEx}--\eqref{ODE}, the Fuchsian conditions \eqref{FF1}--\eqref{FF3} also need to be rewritten in a similar way. In our new notation the first condition \eqref{FF1} is written as
\begin{align}\label{FF1bis}
  \Psi _ { i j } = \frac 1 2 a _ { i j } ,
\end{align}
which is for $s=0$. In this case we have that
\begin{align}
\Pi_{ij}^{mn} = \frac{1}{2}a_{ij} b^{mn} = \frac{3}{2} \pi_{ij}^{mn},
\end{align}
so that the second and third Fuchsian conditions \eqref{FF2} and \eqref{FF3} can be expressed as
\begin{align}
\gamma K_{ i j } + 3 \pi_{ij}^{ m n } K _ { m n } - 2 \hat Z _ { i j } & = 0,\label{FF2bis}\\
K_{ i j } - \chi^{ m n }_{ i j } K_{ m n } + \frac{3}{2} \pi_{ij}^{mn} K_{mn} + c_\gamma^{-1} \hat{Z}_{ij} & = c_\gamma^{-1} H_{ij},\label{FF3bis}
\end{align}
which hold for $s=0$. We remark that the conditions \eqref{FF2bis} and \eqref{FF3bis} agree with the compatibility condition \eqref{compatibility} from the appendix. With the notations in \eqref{x} and \eqref{N}, the conditions \eqref{FF2bis} and \eqref{FF3bis} are written as
\begin{align}\label{Compatibility}
Qy= c_\gamma^{-1}H,
\end{align}
where $ Q $ is given by
\begin{align}\label{Q}
Q = 
\begin{pmatrix}
\gamma \delta^m_i \delta^n_j + 3 \pi_{ij}^{mn} & - 2 \delta^m_i \delta^n_j \\ 
\delta^m_i \delta^n_j - \chi_{ij}^{mn} + 3\pi_{ij}^{mn} / 2 & c_\gamma^{-1}\delta^m_i \delta^n_j 
\end{pmatrix},
\end{align}
evaluated at $ s = 0 $. Note that $ Q = c_\gamma^{ - 1 } N |_{ s = 0 } $.

\subsection{Existence theorem for the Einstein equations}\label{ExistEinstein}
We now present the existence result for the Einstein equations.

\begin{prop}\label{propeinstein}
Suppose that there exist a time interval $[0,T]$ and positive constants $B_1$ and $B_2$ such that $f$ is defined on $[0,T]$ and satisfy
\begin{align}
\sup_{0 \leq s \leq T} \| f(s) \|_{\langle -1 \rangle} + \sup_{0 \leq s \leq T} \| f(s) \|_{\langle 1 \rangle} & \leq B_1, \\
\sup_{0 \leq s \leq T} \left\| \frac{\partial f}{\partial s} (s) \right\|_{\langle -1 \rangle} + \sup_{0 \leq s \leq T} \left\| \frac{\partial f}{\partial s} (s) \right\|_{\langle 1 \rangle} & \leq B_2.
\end{align}
Then, for any initial data $a_{0}, {b_0}, K_{0}, \hat{Z}_{0} \in S_3(\bbr)$ satisfying the Fuchsian conditions \eqref{FF1bis} and \eqref{FF3bis}, there exists $0 < T_B \leq T$ such that the Einstein equations \eqref{ODEx}--\eqref{ODE} have a unique solution $a_{ij}, b^{ij}, K_{ij}, \hat{Z}_{ij} \in C^1([0, T_B]; S_3(\bbr) )$ satisfying
\begin{align}
\sup_{ [0, T_B] } \max \left( \| a \| , \| b \| , \| K \| , \| \hat{Z} \| \right) \leq 2 \max \left( \| a_0 \| , \| b_0 \| , \| K_0 \| , \| \hat{Z}_0 \| \right), 
\end{align}
where $T_B$ depends on $B_1$ and $B_2$. 
\end{prop}

Proposition \ref{propeinstein} will be proved in Section \ref{sproofE} by using Theorem \ref{ODEthm}. We need to show that the differentiability conditions for $ F $, $ G $, $ N $ and $ H $ are satisfied, which will be given in Lemmas \ref{lem F}, \ref{lem N} and \ref{lem H}, together with the eigenvalue conditions for $ N $, which will be given in Section \ref{eigen}. These lemmas will be combined in Section \ref{sproofprop2} to complete the proof of Proposition \ref{propeinstein}.

\subsection{Proof of the theorem}\label{sproofE}
We will need to verify that certain derivatives of $F$, $G$, $H$ and $N$ involved in \eqref{IVP} are bounded. To compute the derivatives the following expression for the derivatives of symmetric matrices will be frequently used:
\begin{align*}
\frac{\partial a_{ij}}{\partial a_{st}} = \frac{1}{2^{\delta_{st}}} \left(\delta^s_i \delta^t_j + \delta^t_i \delta^s_j \right),
\end{align*}
where there is no summation over the indices on the right hand side. Note also that
\begin{align}\label{derivdet}
\frac{\partial \det a}{\partial a_{ij}} = \frac{2 \det a}{2^{\delta_{ij}}} b^{ij},
\end{align}
where there is also no summation over the indices on the right hand side.

\subsubsection{Differentiability conditions for $ F $ and $ G $}
We first observe that the functions $F$ and $G$ are just polynomials of $x$, $y$ and $s$.  Thus, $\partial F/\partial x$, $\partial F/ \partial y$, $\partial G / \partial x$ and $\partial G / \partial y$ are bounded as long as $x$, $y$ and $s$ are. Recall that 
\begin{align*}
F= \begin{pmatrix}
K_{ij}\\
-b^{im}b^{jn} K_{nm}
\end{pmatrix},
\qquad
x= \begin{pmatrix}
a_{ij}\\
b^{ij}
\end{pmatrix},
\qquad
y=\begin{pmatrix}
K_{ij}\\
\hat{Z}_{ij}
\end{pmatrix},
\end{align*}
and the derivatives $\partial F / \partial x$ and $\partial F / \partial y$ are explicitly given by
\begin{align*}
& \frac{\partial F}{\partial a_{st}} 
= \begin{pmatrix}
 0 \\
 0 
\end{pmatrix}, \\
& \frac{\partial F}{\partial b^{st}} 
= - \frac{1}{2^{\delta_{st}}}
\begin{pmatrix}
0 \\
b^{jn} (\delta^i_s K_{nt}+\delta^i_t K_{ns} )+ b^{im}(\delta^j_s K_{tm}+\delta^j_t K_{ms})
\end{pmatrix}, \allowdisplaybreaks \\
& \frac{\partial F}{\partial K_{st}} 
= \frac{1}{2^{\delta_{st}}}
\begin{pmatrix}
\delta_i^s \delta_j^t+\delta_i^t \delta^s_j  \\
-b^{it} b^{js}-b^{is}b^{jt} 
\end{pmatrix}, \\
& \frac{\partial F}{\partial \hat{Z}_{st}}
= \begin{pmatrix}
 0 \\
 0 
\end{pmatrix}.
\end{align*}
Similarly, we write $G = G_{ij}$ by abuse of notation, where
\begin{align*}
G_{ij} =K_{im}b^{mn}K_{nj}-\frac12 b^{mn}K_{mn} K_{ij} - 2 c_\gamma^{2-2 c_\gamma } s^{2 c_\gamma -2}R_{ij}+2\Lambda c_\gamma^{2-4 c_\gamma } s^{4 c_\gamma -2} a_{ij},
\end{align*}
and the derivatives $\partial G / \partial x$ and $\partial G / \partial y$ are given by
\begin{align*}
& \frac{\partial G}{\partial a_{st}} = \frac{1}{2^{\delta_{st}}}
\left\{ - \frac14 c_\gamma^{2-2 c_\gamma } s^{2 c_\gamma -2} R_A+ 2 \Lambda c_\gamma^{2-4 c_\gamma } s^{4 c_\gamma -2} \left(\delta^s_i\delta^t_j+\delta^s_j \delta^t_i\right) \right\}, \\
& \frac{\partial G}{\partial b^{st}} = \frac{1}{2^{\delta_{st}}}
\left\{  K _ { i s} K _ { j t }+K_{it}K_{js} - K _ { i j } K _ { st}  - \frac14 c_\gamma^{2-2 c_\gamma } s^{2 c_\gamma -2}R_{B} \right\}, \\
& \frac{\partial G}{\partial K_{st}} = \frac{1}{2^{\delta_{st}}}
\left\{ ( \delta^s_i b^{tn}+\delta^t_i b^{sn}) K_{jn} + (\delta_j^s b^{mt}+\delta^t_j b^{ms})K_{im} -\frac12b^{mn} K_{mn} (\delta^s_i\delta^t_j+\delta^s_j \delta^t_i)-K_{ij}b^{st} \right\}, \\
& \frac{\partial G}{\partial \hat{Z}_{st}} = 0.
\end{align*}
Here, $R_A$ and $R_B$ are given by
\begin{align*}
R_A & = {} -4C^k_{\;\;ck}(C^t_{\;\;lj}\delta^s_{i}+C^s_{\;\;lj}\delta^t_i+C^t_{\;\;li}\delta^s_{j}+C^s_{\;\;li}\delta^t_j)b^{cl} -4(C^s_{\;\;ki} C^t_{\;\;lj}+C^t_{\;\;ki} C^s_{\;\;lj})b^{kl} \nonumber \\
& \quad -2 \left\{(C^t_{\;\;kp}\delta^s_{j}+C^s_{\;\;kp}\delta^t_j)C^r_{\;\;cl} a_{ir}+(C^t_{\;\;cl}\delta_i^s+C^s_{\;\;cl}\delta_i^t) C^m_{\;\;kp} a_{jm} \right\}b^{kl}b^{pc},\\
R_B & = {} -4 \left\{C^k_{\;\;sk}(C^r_{\;\;tj}a_{ir}+C^r_{\;\;ti}a_{jr}) + C^k_{\;\;tk} (C^r_{\;\;sj}a_{ir}+C^r_{\;\;si}a_{jr})\right\} -4(C^c_{\;\;si} C^m_{\;\;tj}+C^c_{\;\;ti} C^m_{\;\;sj} )a_{cm} \nonumber \\
& \quad - 2\left\{ (C^m_{\;\;sp}C^r_{\;\;ct}+ C^m_{\;\;tp}C^r_{\;\;cs})b^{pc} + ( C^m_{\;\;ks}C^r_{\;\;tl}+C^m_{\;\;kt}C^r_{\;\;sl}) b^{kl} \right\}a_{jm}a_{ir},
\end{align*}
which are obtained using \eqref{ricci} such that $\partial (8R_{ij})/\partial a_{st} = R_A /2^{\delta_{st}}$ and $\partial (8R_{ij})/\partial b^{st} = R_B /2^{\delta_{st}}$. We obtain the following lemma.

\begin{lemma}\label{lem F}
The functions $F$ and $G$ in \eqref{x} are differentiable with respect to $x$ and $y$ and satisfy for any $T>0$, 
\begin{align}
\| F \| + \left\| \frac{\partial F}{\partial x} \right\| + \left\| \frac{\partial F}{\partial y} \right\| & \leq C( 1 + \| b \| + \| K \|)^3, \\
\sup_{0\leq s \leq T} \| G (s) \| + \sup_{0 \leq s \leq T}\left\| \frac{\partial G}{\partial x} (s) \right\| + \left\| \frac{\partial G}{\partial y} \right\| & \leq C (1 + T)^{4 c_\gamma -2} (1 + \| a \| + \| b \| + \| K \|)^4,
\end{align}
where the constants $C$ are independent of $T$.
\end{lemma}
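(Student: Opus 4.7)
The plan is to prove both inequalities by direct inspection of the explicit component formulas already displayed for $F$, $G$, and their $x,y$-derivatives, exploiting that every entry is a polynomial in the components of $a_{ij}$, $b^{ij}$, $K_{ij}$, $\hat Z_{ij}$ with coefficients that are either constants or elementary functions of $s$. Sub-multiplicativity of the norm \eqref{matrixnorm} (up to a dimension-dependent constant, since sums over the three Latin indices cost at most a factor of $9$) reduces everything to counting degrees.

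For the first estimate I would read off from \eqref{x} that the two blocks of $F$ are $K_{ij}$ and $-b^{im}b^{jn}K_{nm}$, so $\|F\|\leq \|K\|+9\|b\|^2\|K\|\leq C(1+\|b\|+\|K\|)^3$. The explicit expressions given above for $\partial F/\partial a_{st}$, $\partial F/\partial b^{st}$, $\partial F/\partial K_{st}$, $\partial F/\partial\hat Z_{st}$ show that every non-vanishing entry is a sum of a bounded number of monomials of degree $\leq 2$ in $b$ and $K$ combined (two are identically zero, the $b$-derivative is linear in each of $b$ and $K$, the $K$-derivative is either constant or quadratic in $b$). All are therefore again controlled by $C(1+\|b\|+\|K\|)^3$.

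For the second estimate I would examine the four summands of $G_{ij}$: the first two are cubic in $(b,K)$; the third is $8c_\gamma^{2-2c_\gamma}s^{2c_\gamma-2}R_{ij}$, where $R_{ij}$ is, via \eqref{ricci}, a polynomial of total degree at most $4$ in $(a,b)$ with coefficients assembled from the fixed structure constants; and the fourth is linear in $a$. Since $c_\gamma>1$ we have $2c_\gamma-2>0$ and $4c_\gamma-2>0$, so both $s^{2c_\gamma-2}$ and $s^{4c_\gamma-2}$ are bounded by $(1+T)^{4c_\gamma-2}$ uniformly on $[0,T]$ (split into the cases $T<1$ and $T\geq 1$: in the first $s^{2c_\gamma-2}\leq 1\leq (1+T)^{4c_\gamma-2}$, in the second $s^{2c_\gamma-2}\leq T^{2c_\gamma-2}\leq T^{4c_\gamma-2}\leq (1+T)^{4c_\gamma-2}$). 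Combining yields $\|G\|\leq C(1+T)^{4c_\gamma-2}(1+\|a\|+\|b\|+\|K\|)^4$. The displayed formulae for $\partial G/\partial a_{st}$, $\partial G/\partial b^{st}$, $\partial G/\partial K_{st}$, and $\partial G/\partial\hat Z_{st}=0$ exhibit polynomials of degree at most $3$ in $(a,b,K)$ with the same $s$-prefactors, so the same bound applies.

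There is no genuine obstacle: once the explicit derivative formulas above are in hand the lemma is a bookkeeping exercise. The only point that warrants care is the uniform envelope $(1+T)^{4c_\gamma-2}$, chosen so as to dominate both $s^{2c_\gamma-2}$ and $s^{4c_\gamma-2}$ for $T$ both small and large; positivity of these exponents is what prevents any singularity at $s=0^+$ and in fact causes the curvature and cosmological contributions to vanish in that limit, a feature that will matter for the Fuchsian analysis but is not required here.
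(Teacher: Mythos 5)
Your proposal is correct and follows essentially the same route as the paper: the paper's proof of Lemma \ref{lem F} consists precisely of the explicit component formulas for $\partial F/\partial x$, $\partial F/\partial y$, $\partial G/\partial x$, $\partial G/\partial y$ (including $R_A$, $R_B$ derived from \eqref{ricci}) followed by the observation that every entry is a polynomial of the appropriate degree in $a$, $b$, $K$ with $s$-prefactors $s^{2c_\gamma-2}$, $s^{4c_\gamma-2}$ that are bounded on $[0,T]$ since $c_\gamma>1$. Your degree count (cubic for $F$ and its derivatives, quartic for $G$ via the degree-$4$ terms of $R_{ij}$) and the envelope $(1+T)^{4c_\gamma-2}$ are exactly the bookkeeping the paper intends.
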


\begin{proof}
The lemma follows from the above computations together with the formula \eqref{ricci}. 
\end{proof}

\subsubsection{Differentiability condition for $ N $}\label{sDN}
We recall that the matrix $N$ is given by
\begin{align*}
N= c_\gamma 
\begin{pmatrix}
\gamma \delta^m_i \delta^n_j + 3 \pi_{ij}^{mn} & -2\delta^m_i \delta^n_j \\ 
\delta^m_i \delta^n_j - \chi_{ij}^{mn} + \Pi_{ij}^{mn} & c_\gamma^{-1}\delta^m_i \delta^n_j 
\end{pmatrix}.
\end{align*}
We need to check $\partial N / \partial x$, $\partial N / \partial s$, $\partial^2 N / \partial x \partial s$ and $\partial^2 N / \partial x^2$, but may only consider the non-trivial components, which are $\pi_{ij}^{mn}$, $\chi_{ij}^{mn}$ and $\Pi_{ij}^{mn}$. First, we have for $ 3 \pi_{ij}^{mn} = a_{ij} b^{mn}$ the following derivatives:
\begin{align*}
& \frac{\partial (a_{ij} b^{mn})}{\partial a_{st}}
= \frac{1}{2^{\delta_{st}}} (\delta^s_i \delta^t_j + \delta^s_j \delta^t_i) b^{mn}, \\
& \frac{\partial (a_{ij} b^{mn})}{\partial b^{st}}
= \frac{1}{2^{\delta_{st}}} 
a_{ij} (\delta^m_s \delta^n_t+\delta_s^n \delta^m_t ),\\
& \frac{\partial^2 (a_{ij} b^{mn})}{\partial a_{uv} \partial a_{st}}
= 0, \allowdisplaybreaks \\
& \frac{\partial^2 (a_{ij} b^{mn})}{\partial b^{uv} \partial a_{st}}
= \frac{1}{2^{\delta_{st}} 2^{\delta_{uv}} } 
(\delta^s_i \delta^t_j + \delta^s_j \delta^t_i)  (\delta^m_u \delta^n_v+ \delta^m_v \delta^n_u), \\
& \frac{\partial^2 (a_{ij} b^{mn})}{\partial a_{uv} \partial b^{st}}
= \frac{1}{2^{\delta_{st}} 2^{\delta_{uv}} } 
(\delta^m_s \delta^n_t+\delta_s^n \delta^m_t )  (\delta_i^u \delta_j^v + \delta_i^v \delta_j^u),\\
& \frac{\partial^2 (a_{ij} b^{mn})}{\partial b^{uv} \partial b^{st}}
= 0.
\end{align*}
Note that we don't need to consider the $s$-derivatives of $\pi_{ij}^{mn}$. Next, for $\chi_{ij}^{mn}$ we note that
\begin{align*}
\frac{\partial}{\partial a_{st}}\left(\frac{1}{\sqrt{ \det a}}\right) & = -\frac{b^{st}}{2^{\delta_{st}}\sqrt{ \det a}},\\
\frac{\partial}{\partial b^{st}}\left(\frac{1}{(p^0)^r }\right) & = -\frac{ r }{2^{\delta_{st}}}\frac{p_s p_t}{(p^0)^{ r +2}},
\end{align*}
so that we have
\begin{align*}
\frac{\partial (\Psi_{i_1\cdots i_n})}{\partial a_{st}} & = -\frac{b^{st}}{2^{\delta_{st}}} \Psi_{i_1\cdots i_n},\\
\frac{\partial (\Psi_{i_1\cdots i_n})}{\partial b^{st}} & = -\frac{(n-1)}{2^{\delta_{st}}}\Psi_{i_1\cdots i_n s t}.
\end{align*}
Hence, we have for $\chi_{ij}^{mn} = \Psi_{ijkl} b^{km} b^{ln}$ the following derivatives:
\begin{align*}
&\frac{\partial (\Psi_{ijkl} b^{km} b^{ln})}{\partial a_{st}} = -\frac{b^{st}}{2^{\delta_{st}}} \Psi_{ijkl} b^{km} b^{ln},\\
&\frac{\partial (\Psi_{ijkl} b^{km} b^{ln})}{\partial b^{st}} = -\frac{3}{2^{\delta_{st}}} \Psi_{ijklst}b^{km}b^{ln}+\frac{1}{2^{\delta_{st}}} ( \Psi_{ijsl} \delta^m_t b^{ln} + \Psi_{ijtl} \delta^m_s b^{ln} + \Psi_{ijks} \delta^n_t b^{km} + \Psi_{ijkt} \delta^n_s b^{km} ), \allowdisplaybreaks\\
&\frac{\partial (\Psi_{ijkl} b^{km} b^{ln})}{\partial s} = \Phi_{ijkl}  b^{km} b^{ln},\\
&\frac{\partial^2 (\Psi_{ijkl} b^{km} b^{ln})}{\partial a_{st}\partial s} = -\frac{b^{st}}{2^{\delta_{st}}} \Phi_{ijkl} b^{km} b^{ln},\\
&\frac{\partial^2 (\Psi_{ijkl} b^{km} b^{ln})}{\partial b^{st}\partial s} = -\frac{3}{2^{\delta_{st}}} \Phi_{ijklst}b^{km}b^{ln}+\frac{1}{2^{\delta_{st}}} ( \Phi_{ijsl} \delta^m_t b^{ln} + \Phi_{ijtl} \delta^m_s b^{ln} + \Phi_{ijks} \delta^n_t b^{km} + \Phi_{ijkt} \delta^n_s b^{km} ).
\end{align*}
and
\begin{align*}
\frac{\partial^2 \chi_{ij}^{mn}}{\partial x^2} 
= \begin{pmatrix}
\partial^2\chi_{ij}^{mn} / \partial a_{uv} \partial a_{st} & \partial^2\chi_{ij}^{mn} / \partial b^{uv} \partial a_{st} \\
\partial^2\chi_{ij}^{mn} / \partial a_{uv} \partial b^{st} & \partial^2\chi_{ij}^{mn} / \partial b^{uv} \partial b^{st}
\end{pmatrix}
=  \frac{1}{2^{\delta_{st}} 2^{\delta_{uv}}} 
\begin{pmatrix} 
\chi_{xx11} & \chi_{xx12}\\
\chi_{xx21}& \chi_{xx22} 
\end{pmatrix},
\end{align*}
where
\begin{align*}
\chi_{xx11} & =  b^{st} b^{uv}\Psi_{ijkl} b^{km} b^{ln},\\
\chi_{xx12} & = -\Psi_{ijkl}\left\{ (\delta^s_u \delta^t_v+\delta^s_v \delta^t_u) b^{km} b^{ln}+ b^{st}(\delta^k_u \delta^m_v+\delta^k_v \delta^m_u) b^{ln}+ b^{st} b^{km}(\delta^l_u \delta^n_v+\delta^l_v \delta^n_u)\right\} \nonumber \\
&\quad +3 \Psi_{ijkluv}b^{st}b^{km} b^{ln}, \\
\chi_{xx21} & = 3 b^{uv} \Psi_{ijklst}b^{km}b^{ln} - b^{uv} \left\{ \Psi_{ijsl} \delta^m_t b^{ln} + \Psi_{ijtl} \delta^m_s b^{ln} + \Psi_{ijks} \delta^n_t b^{km} + \Psi_{ijkt} \delta^n_s b^{km} \right\},\\
\chi_{xx22} & = 15 \Psi_{ijklstuv} b^{km}b^{ln}- 3 \Psi_{ijklst}\left\{ (\delta^k_u \delta^m_v+\delta^k_v \delta^m_u) b^{ln}+ b^{km} (\delta^l_u \delta^n_v+\delta^l_v \delta^n_u )\right\} \nonumber \\
&\quad - 3 \Psi_{ijkluv} \left\{ (\delta^k_s \delta^m_t+\delta^k_t \delta^m_s) b^{ln}+ b^{km} (\delta^l_s \delta^n_t+\delta^l_t \delta^n_s)\right\} \nonumber \\
&\quad +\Psi_{ijkl}  \left\{ (\delta^k_s \delta^m_t+\delta^k_t \delta^m_s ) (\delta^l_u \delta^n_v+\delta^l_v \delta^n_u ) + (\delta^l_s \delta^n_t+\delta^l_t \delta^n_s )(\delta^k_u \delta^m_v+\delta^k_v \delta^m_u ) \right\}.
\end{align*}
Finally, for $\Pi_{ij}^{mn} = \Psi_{ij} b^{mn}$ we obtain
\begin{align*}
& \frac{\partial (\Psi_{ij} b^{mn})}{\partial a_{st}} = -\frac{1}{2^{\delta_{st}}} b^{st} \Psi_{ij} b^{mn}, \\
& \frac{\partial (\Psi_{ij} b^{mn})}{\partial b^{st}} = \frac{1}{2^{\delta_{st}}} \left\{ -\Psi_{ijst} b^{mn} + \Psi_{ij} (\delta^m_s \delta^n_t + \delta^n_s \delta^m_t ) \right\}, \\
& \frac{\partial (\Psi_{ij} b^{mn})}{\partial s} = \Phi_{ij} b^{mn},\\
& \frac{\partial^2 (\Psi_{ij} b^{mn})}{\partial a_{st} \partial s} = -\frac{1}{2^{\delta_{st}}} b^{st} \Phi_{ij} b^{mn}, \\
& \frac{\partial^2 (\Psi_{ij} b^{mn})}{\partial b^{st} \partial s} = \frac{1}{2^{\delta_{st}}} \left\{ -\Phi_{ijst} b^{mn} + \Phi_{ij} (\delta^m_s \delta^n_t + \delta^n_s \delta^m_t ) \right\},
\end{align*}
and
\begin{align*}
\frac{\partial^2 (\Psi_{ij} b^{mn})}{\partial x^2} 
= \begin{pmatrix}
\partial^2(\Psi_{ij} b^{mn}) / \partial a_{uv} \partial a_{st} & \partial^2(\Psi_{ij} b^{mn}) / \partial b^{uv} \partial a_{st} \\
\partial^2(\Psi_{ij} b^{mn}) / \partial a_{uv} \partial b^{st} & \partial^2(\Psi_{ij} b^{mn}) / \partial b^{uv} \partial b^{st}
\end{pmatrix}
=  \frac{1}{2^{\delta_{st}} 2^{\delta_{uv}}} 
\begin{pmatrix} 
\Pi_{xx11} & \Pi_{xx12}\\
\Pi_{xx21}& \Pi_{xx22} 
\end{pmatrix},
\end{align*}
where
\begin{align*}
\Pi_{xx11}&=b^{st} b^{uv} \Psi_{ij} b^{mn},\\
\Pi_{xx12}&= - (\delta^s_u \delta^t_v + \delta^s_v \delta^t_u) \Psi_{ij} b^{mn}  + b^{st} \Psi_{ijuv} b^{mn} -b^{st} \Psi_{ij} (\delta^m_u \delta^n_v + \delta^n_u \delta^m_v ),\\
\Pi_{xx21}&=  b^{uv} \Psi_{ijst} b^{mn} - b^{uv} \Psi_{ij}  (\delta^m_s \delta^n_t + \delta^n_s \delta^m_t ),\\
\Pi_{xx22}&= 3 \Psi_{ijstuv} b^{mn} - \Psi_{ijst} (\delta^m_u \delta^n_v +\delta^m_v \delta^n_u) - \Psi_{ijuv} (\delta^m_s \delta^n_t + \delta^n_s \delta^m_t ).
\end{align*}
To estimate the above quantities we need the following lemma.

\begin{lemma}\label{lem e}
Let $ { e_j}^i $ and $ { e^i }_j $ be given by \eqref{ortho_e} and \eqref{ortho_theta}. Then, we have
\begin{align}
&|{ e_j}^i|\leq\frac{C\| b \|^{\frac72}}{\det b},\quad |\partial_{b^{st}}{e_j}^i | \leq \frac{C \| b \|^{\frac{17}{2}}}{(\det b)^3},\quad |\partial^2_{b^{uv}b^{st}} {e_j}^i | \leq \frac{C\|b\|^{\frac{27}{2}}}{(\det b)^5},\\
&| { e^i}_j|\leq \frac{C\| b\|^{\frac52}}{\det b},\quad |\partial_{b^{st}}{ e^i}_j | \leq \frac{C \| b \|^{\frac{15}{2}}}{(\det b)^3},\quad |\partial^2_{b^{uv}b^{st}} { e^i}_j | \leq \frac{C\|b\|^{\frac{25}{2}}}{(\det b)^5},
\end{align}
for any $i,j=1,2,3$. 
\end{lemma}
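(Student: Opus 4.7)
The plan is to exploit the explicit formulas \eqref{ortho_e}--\eqref{ortho_theta} directly. Every entry of $ {e_j}^i $ and $ {e^i}_j $ is of the form $ P Q^{-1/2} $, where $ P $ is a polynomial in the $ b^{ij} $'s of degree at most three (for $ e $) or two (for $ e^{-1} $), and $ Q $ is a product drawn from the three leading principal minors
\[
Q_1 = b^{11}, \qquad Q_2 = b^{11} b^{22} - (b^{12})^2, \qquad Q_3 = \det b
\]
of the positive-definite matrix $ ( b^{ij} ) $. Numerators will be estimated by the trivial bound $ | b^{ij} | \leq \| b \| $, and the task reduces to producing sharp lower bounds on $ Q_1, Q_2, Q_3 $ in terms of $ \det b $ and $ \| b \| $.

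The required lower bounds are $ Q_1 \geq Q_3 / \| b \|^2 $ (from Hadamard's inequality $ Q_3 \leq b^{11} b^{22} b^{33} \leq Q_1 \| b \|^2 $) and $ Q_2 \geq Q_3 / \| b \| $. The latter follows from the cofactor identity $ Q_2 = Q_3 \, (b^{-1})_{33} $ combined with the standard inequality $ (B^{-1})_{ii} \geq 1 / B_{ii} $ for positive-definite $ B $ (an immediate consequence of Cauchy--Schwarz in the $ B $-inner product). I would also record the Hadamard bound $ Q_3^{1/2} \leq \| b \|^{3/2} $, to be used repeatedly to convert the half-integer powers of $ \det b $ arising from the radicals into the integer powers in the statement. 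With these estimates the zeroth-order bounds follow by direct inspection of \eqref{ortho_e}--\eqref{ortho_theta}.

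For the derivatives I would apply the quotient rule to $ P Q^{-1/2} $: the first derivative is $ (\partial P) Q^{-1/2} - \tfrac{1}{2} P (\partial Q) Q^{-3/2} $, and the second derivative contributes as its most singular term $ \tfrac{3}{4} P (\partial Q)^2 Q^{-5/2} $. In the worst case $ Q = Q_2 Q_3 $, the lower bounds above give $ Q^{-3/2} \leq \| b \|^{3/2} / ( \det b )^3 $ and $ Q^{-5/2} \leq \| b \|^{5/2} / ( \det b )^5 $; multiplying by the polynomial factors $ | P | \leq \| b \|^3 $, $ | \partial Q | \leq \| b \|^4 $, $ |\partial Q|^2 \leq \| b \|^8 $ reproduces exactly $ \| b \|^{17/2} / ( \det b )^3 $ and $ \| b \|^{27/2} / ( \det b )^5 $ for $ {e_j}^i $, and the analogous $ \| b \|^{15/2} / ( \det b )^3 $, $ \| b \|^{25/2} / ( \det b )^5 $ for $ {e^i}_j $, where $ P $ has one degree less. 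The less singular terms in the derivative expansion are absorbed by one further application of $ Q_3^{-1} \leq \| b \|^3 / ( \det b )^2 $.

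The main obstacle is not conceptual but organisational: for each of the nine entries of \eqref{ortho_e}--\eqref{ortho_theta} one must track precisely which product of $ Q_1, Q_2, Q_3 $ sits in the denominator, and how many conversions $ Q_3^{1/2} \leq \| b \|^{3/2} $ are required to reach the integer exponents on $ \det b $ claimed in the statement. No new idea beyond the steps above is needed.
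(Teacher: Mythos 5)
Your proposal is correct and follows essentially the same route as the paper: explicit inspection of \eqref{ortho_e}--\eqref{ortho_theta}, lower bounds on the leading principal minors of the form $b^{11}\geq \det b/\|b\|^2$ and $b^{11}b^{22}-(b^{12})^2\geq \det b/\|b\|$, and the quotient rule applied to $P\,Q^{-1/2}$ with degree counting on the numerators. The only difference is cosmetic — you derive the minor bounds from Hadamard's inequality and the cofactor identity, whereas the paper cites Lemmas 1 and 2 of \cite{LN}, and the paper organises the bookkeeping via the column-dependent degrees $j$ and $2j-1$ rather than tracking $Q_1,Q_2,Q_3$ individually.
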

\begin{proof}
Since the matrix $ b^{ij} $ is symmetric and positive definite, we can apply Lemmas 1 and 2 of \cite{LN} to obtain
\begin{align}
\frac{1}{b^{11}}\leq\frac{b^{22} b^{33}}{\det b}, \qquad \frac{1}{b^{11}b^{22} - (b^{12})^2}\leq \frac{b^{33}}{\det b}.
\end{align}
Applying the above to the formulas \eqref{ortho_e} and \eqref{ortho_theta} we obtain the following:
\begin{align}
|{e_j}^i |\leq\frac{C\| b \|^{\frac72}}{\det b}, \qquad |{e^i}_j|\leq \frac{C\| b\|^{\frac52}}{\det b} ,
\end{align}
for any $i,j=1,2,3$. For the derivatives we note that
\begin{align}
{e_j}^i = \frac{P}{\sqrt{Q}},
\end{align}
where $P$ and $Q$ are polynomials of $b^{kl}$ of degree $j$ and $2j-1$, respectively. Here, we have
\begin{align}
\frac{1}{Q} \leq\frac{C\| b\|^{7-2j}}{(\det b)^2},
\end{align}
so that we obtain
\begin{align}
|\partial_{b^{st}} {e_j}^i | \leq \left|\frac{(\partial_{b^{st}}P) Q - P (\partial_{b^{st}}Q)/2}{Q^{3/2}}\right| \leq \frac{C \|b\|^{3j-2}}{Q^{3/2}} \leq \frac{C \| b \|^{\frac{17}{2}}}{(\det b)^3}.
\end{align}
For the second derivatives we note that
\begin{align}
\partial_{b^{st}} {e_j}^i = \frac{P}{Q^{3/2}},
\end{align}
where $P$ and $Q$ are polynomials of $b^{kl}$ of degree $3j-2$ and $2j-1$, respectively. Then, we obtain
\begin{align}
|\partial^2_{b^{uv}b^{st}} {e_j}^i | \leq \left| \frac{(\partial_{b^{uv}}P)Q - 3P(\partial_{b^{uv}}Q)/2}{Q^{5/2}}\right| \leq \frac{C \| b\|^{5j-4}}{Q^{5/2}} \leq \frac{C\|b\|^{\frac{27}{2}}}{(\det b)^5}.
\end{align}
In a similar way we can obtain the estimates for ${ e^i}_j$, and we skip the proof. 
\end{proof}

In order to estimate the tensors $ \Psi $ and $ \Phi $, which are defined by certain integrations of $ f $, we use the weighted $ L^1 $-norms $ \| \cdot \|_{ L^1_r } $ and $ \| \cdot \|_{ \langle r \rangle } $ for $ f $, where the weights are given by $ ( p^0 )^r $ and $ \langle p \rangle^r $, respectively. We first note that the weights can be estimated as follows:
\begin{align}
p^0 & = \sqrt{b^{ij} p_i p_j} \leq C \| b \|^{\frac{1}{2}} \langle p \rangle, \label{p^0 1} \\
\langle p \rangle & = \sqrt{ \delta^{ i j } p_i p_j } = \sqrt{\delta^{ij} \hat{p}_k { e^k }_i \hat{p}_l { e^l }_j } \leq \frac{C \| b \|^{\frac{5}{2}}}{\det b} p^0, \label{p^0 2}
\end{align}
where we used \eqref{phat} and Lemma \ref{lem e}. We now obtain the following lemma.

\begin{lemma}\label{lem N}
Suppose that there exist a time interval $[0,T]$ and positive constants $B_1$ and $B_2$ such that $f$ is defined on $[0,T]$ and satisfy
\begin{align}
&\sup_{0 \leq s \leq T} \| f(s) \|_{\langle 1 \rangle} \leq B_1, \\
&\sup_{0 \leq s \leq T} \left\| \frac{\partial f}{\partial s} (s) \right\|_{\langle 1 \rangle} \leq B_2.
\end{align}
Then, the coefficient function $N$ satisfies
\begin{align}
\sup_{0 \leq s \leq T} \| N (s) \| + \sup_{0 \leq s \leq T} \left\| \frac{\partial N}{\partial x}(s) \right\| + \sup_{0 \leq s \leq T} \left\| \frac{\partial^2 N}{\partial x^2} (s) \right\| & \leq C(a_{ij}, b^{ij}) (1 + B_1), \\
\sup_{0 \leq s \leq T} \left\| \frac{\partial N}{\partial s} (s) \right\| + \sup_{0 \leq s \leq T} \left\| \frac{\partial^2 N}{\partial x \partial s} (s) \right\| & \leq C (a_{ij}, b^{ij} ) ( 1 + B_2),
\end{align}
where $C(a_{ij}, b^{ij})$ are positive constants depending only on $a_{ij}$ and $b^{ij}$.
\end{lemma}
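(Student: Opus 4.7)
The plan is to combine the explicit derivative formulas computed immediately above the statement with uniform bounds on the moment tensors $\Psi_{i_1 \cdots i_n}$ and $\Phi_{i_1 \cdots i_n}$. Since $N$, $\partial N/\partial x$, $\partial^2 N/\partial x^2$, $\partial N/\partial s$ and $\partial^2 N/\partial x\,\partial s$ have already been written as explicit polynomials in $a_{ij}$, $b^{ij}$ and in $\Psi$- and $\Phi$-tensors with up to eight lower indices (the worst case being $\Psi_{ijklstuv}$ appearing in $\chi_{xx22}$), once these tensors are controlled the lemma reduces to counting.

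The first step is a weight comparison. From the bound $\langle p\rangle \leq C\|b\|^{5/2}(\det b)^{-1} p^0$ derived in \eqref{p^0 2}, one gets $(p^0)^{-(n-1)} \leq C(a,b)\langle p\rangle^{-(n-1)}$, where $C(a,b)$ is a continuous function of $a_{ij}$, $b^{ij}$ (finite as long as $\det a,\det b>0$). Combining this with $|p_{i_1}\cdots p_{i_n}|\leq \langle p\rangle^n$ and the definitions \eqref{Psi} and \eqref{Phi} gives, for every $n\geq 2$ and every choice of indices,
\begin{align*}
|\Psi_{i_1\cdots i_n}(s)| & \leq C(a,b)\int \langle p\rangle^n (p^0)^{-(n-1)} f(s)\,d^3p \leq C(a,b)\,\|f(s)\|_{\langle 1\rangle}, \\
|\Phi_{i_1\cdots i_n}(s)| & \leq C(a,b)\,\bigl\|(\partial f/\partial s)(s)\bigr\|_{\langle 1\rangle}.
\end{align*}
Together with the hypothesized bounds by $B_1$ and $B_2$, this controls every $\Psi$- and $\Phi$-tensor appearing in the explicit formulas for the derivatives of $N$.

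The second step is substitution. The non-trivial blocks of $N$ are $\pi_{ij}^{mn}=\frac13 a_{ij}b^{mn}$, $\chi_{ij}^{mn}=\Psi_{ijkl}b^{km}b^{ln}$ and $\Pi_{ij}^{mn}=\Psi_{ij}b^{mn}$. The term $\pi$ is a pure monomial in $(a,b)$, bounded by $C(a,b)$ and with vanishing $s$-derivative. The first and second $x$-derivatives of $\chi$ and $\Pi$ have already been written out as finite sums of products of monomials in $(a,b)$ with $\Psi$-tensors of index count at most eight; applying the $\Psi$-bound term by term gives the $N$, $\partial N/\partial x$ and $\partial^2 N/\partial x^2$ estimates. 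The formulas for $\partial_s\chi$, $\partial_s\Pi$ and their subsequent $x$-derivatives are identical except that each $\Psi$ is replaced by the corresponding $\Phi$; applying the $\Phi$-bound yields the $\partial N/\partial s$ and $\partial^2 N/\partial x\,\partial s$ estimates.

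There is no genuine obstacle here, only bookkeeping: one must check that the index count on each $\Psi$ and $\Phi$ that appears is at least $2$ (so that the weight comparison applies) and that the $(a,b)$-dependent prefactors collected from the monomials, from $1/\sqrt{\det a}$ in \eqref{Psi}--\eqref{Phi}, and from the weight estimate can all be absorbed into a single continuous function $C(a_{ij},b^{ij})$. Finally, I would remark that no $\partial^2 N/\partial s^2$ bound is needed, so only the first $s$-derivative of $f$ enters the estimates, matching the hypothesis on $\partial f/\partial s$ in the statement.
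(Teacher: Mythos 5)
Your proposal is correct and follows essentially the same route as the paper: bound each $\Psi_{i_1\cdots i_n}$ and $\Phi_{i_1\cdots i_n}$ by $C(a,b)\,\|f\|_{\langle 1\rangle}$ resp.\ $C(a,b)\,\|\partial f/\partial s\|_{\langle 1\rangle}$ via the weight comparison \eqref{p^0 2}, then substitute into the explicit derivative formulas of Section \ref{sDN}. The paper's proof is exactly this bookkeeping, with the worst case $\Psi_{ijklstuv}b^{km}b^{ln}$ in $\chi_{xx22}$ treated as an illustrative example.
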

\begin{proof}
The quantities $\Psi_{i_1\cdots i_n}$ can be easily estimated as follows:
\begin{align*}
|\Psi_{i_1\cdots i_n}| \leq \frac{C }{\sqrt{ \det a}} \int_{\bbr^3} \frac{\langle p \rangle^n}{(p^0)^{n-1}} |f(s,p)|\, d^3 p \leq \frac{ C \| b \|^{\frac{5(n-1)}{2}} B_1 }{(\det a)^{1/2} (\det b)^{n-1}}.
\end{align*}
Then, we obtain the estimates for $N$, $\partial N / \partial x$ and $\partial^2 N / \partial x^2$ by collecting the above computations. For instance, we estimate the quantity $\Psi_{ijklstuv} b^{km}b^{ln}$, which appears in $\chi_{xx22}$, as follows:
\begin{align*}
|\Psi_{ijklstuv} b^{km}b^{ln}| & \leq \frac{ C \| b \|^{\frac{39}{2}} B_1 }{(\det a)^{1/2} (\det b)^{7}}.
\end{align*}
In a similar way we obtain the estimates of the quantities $\Phi_{i_1\cdots i_n}$, which are used for the estimates of $\partial N / \partial s$ and $\partial^2 N / \partial x \partial s$, and this completes the proof. 
\end{proof}

\subsubsection{Differentiability condition for $ H $}\label{sDH}
We need to estimate $\partial H/\partial s$, $\partial H/\partial x$, $\partial^2 H/\partial x\partial s$ and $\partial^2 H/\partial x^2$. Let us write $H = H_{ij}$ by abuse of notation again. We observe from \eqref{hp^0q^0}--\eqref{q'} with \eqref{ortho_e} and \eqref{ortho_theta} that $h$, $p^0$, $q^0$, $p'_i$, $q'_i$, $p'^0$ and $q'^0$ are functions of $b^{ij}$ so that we may write \eqref{H} as
\begin{align}
H_ { i j } = \frac { 8 \pi c_\gamma } { \det a }  \iiint W_{ij}(p_k,q_k,\omega_k,b^{kl}) f(p) f(q) \, d \omega \, d ^ 3 q \, d ^ 3 p,
\end{align}
where
\begin{equation}
W_{ij}(p_k,q_k,\omega_k,b^{kl}) = \frac{h ^ { 2 - \gamma }}{p^0 q^0} \left( \frac{p' _ i p' _ j}{p'^0} - \frac{p_i p_j}{p^0}\right).
\end{equation}
The derivatives with respect to $s$ and $a_{st}$ are now easily obtained:
\begin{align}
\frac{\partial H_{ij}}{\partial s} & = \frac { 8 \pi c_\gamma } { \det a }  \iiint W_{ij} \left(\frac{\partial f}{\partial s}(p) f(q) + f(p)\frac{\partial f}{\partial s}(q)\right) \, d \omega \, d ^ 3 q \, d ^ 3 p,\label{H1}\\
\frac{\partial H_ { i j } }{\partial a_{st}} & = \left(-\frac{2b^{st}}{2^{\delta_{st}}}\right) \frac{8 \pi c_\gamma }{ \det a}  \iiint W_{ij} f(p) f(q) \, d \omega \, d ^ 3 q \, d ^ 3 p,\label{H2}\\
\frac{\partial^2 H_{ij}}{\partial s \partial a_{st}} & = \left(-\frac{2b^{uv}}{2^{\delta_{uv}}}\right) \frac{8 \pi c_\gamma }{ \det a} \iiint W_{ij} \left(\frac{\partial f}{\partial s}(p) f(q) + f(p)\frac{\partial f}{\partial s}(q)\right) \, d \omega \, d ^ 3 q \, d ^ 3 p,\label{H3}\\
\frac{\partial^2 H_{ij}}{\partial a_{uv} \partial a_{st}} & = \left(-\frac{2b^{st}}{2^{\delta_{st}}}\right)\left(-\frac{2b^{uv}}{2^{\delta_{uv}}}\right) \frac{8 \pi c_\gamma }{ \det a} \iiint W_{ij} f(p) f(q)\, d \omega \, d ^ 3 q \, d ^ 3 p,\label{H4}
\end{align}
where we used \eqref{derivdet} for the derivatives with respect to $a_{st}$. For the derivatives with respect to $b^{st}$ we only need to compute
\begin{align*}
\frac{\partial W_{ij}}{\partial b^{st}},\qquad \frac{\partial^2 W_{ij}}{\partial b^{uv} \partial b^{st}}.
\end{align*}
We first consider the derivatives of $1/p^0$ and $1/q^0$. Since
\begin{equation}
\frac{\partial p^0}{\partial b^{st}} = \frac{1}{2^{\delta_{st}}}\frac{p_s p_t}{p^0},
\end{equation}
we have
\begin{align}
\frac{\partial}{\partial b^{st}}\left(\frac{1}{p^0}\right) & = -\frac{1}{2^{\delta_{st}}}\frac{p_s p_t}{(p^0)^3}, \label{W1}\\
\frac{\partial^2}{\partial b^{uv} \partial b^{st}} \left(\frac{1}{p^0}\right) & = \frac{3}{2^{\delta_{st}}2^{\delta_{uv}}}\frac{p_s p_t p_u p_v}{(p^0)^5}, \label{W2} \\
\frac{\partial}{\partial b^{st}}\left(\frac{1}{q^0}\right) & = -\frac{1}{2^{\delta_{st}}}\frac{q_s q_t}{(q^0)^3}, \label{W3}\\
\frac{\partial^2}{\partial b^{uv} \partial b^{st}} \left(\frac{1}{q^0}\right) & = \frac{3}{2^{\delta_{st}}2^{\delta_{uv}}}\frac{q_s q_t q_u q_v}{(q^0)^5}. \label{W4}
\end{align}
Next, we recall that
\begin{align*}
h & = \sqrt{2p^0q^0 - 2b^{ij} p_i q_j},
\end{align*}
and
\begin{align*}
\frac{\partial b^{ij}}{\partial b^{st}} & = \frac{1}{2^{\delta_{st}}}(\delta^i_s \delta^j_t + \delta^i_t \delta^j_s).
\end{align*}
Then, we have
\begin{align}
\frac{\partial h}{\partial b^{st}}
& = \frac{1}{2h}\left(2\frac{\partial p^0}{\partial b^{st}} q^0 + 2p^0 \frac{\partial q^0}{\partial b^{st}} - 2\frac{\partial b^{ij}}{\partial b^{st}} p_i q_j\right) \nonumber\\
& = \frac{1}{h}\left(\frac{1}{2^{\delta_{st}}}\right)\left(\frac{p_s p_t}{p^0}q^0 + p^0\frac{q_s q_t}{q^0} - p_sq_t - p_t q_s\right) \nonumber\\
& = \frac{p^0 q^0}{2^{\delta_{st}}h}\left(\frac{p_s}{p^0} - \frac{q_s}{q^0}\right)\left(\frac{p_t}{p^0} - \frac{q_t}{q^0}\right). \label{W5}
\end{align}
Similarly, we have
\begin{align}
\frac{\partial^2 h}{\partial b^{uv} \partial b^{st}} & = \frac{1}{2^{\delta_{st}}h}\left(\frac{1}{2^{\delta_{uv}}}\right)\left(\frac{p_u p_v}{p^0}q^0 + p^0\frac{q_u q_v}{q^0}\right)\left(\frac{p_s}{p^0} - \frac{q_s}{q^0}\right)\left(\frac{p_t}{p^0} - \frac{q_t}{q^0}\right) \nonumber\\
& \quad + \frac{p^0 q^0}{2^{\delta_{st}}}\left(-\frac{1}{h^{2}}\right)\frac{p^0 q^0}{2^{\delta_{uv}}h}\left(\frac{p_u}{p^0} - \frac{q_u}{q^0}\right)\left(\frac{p_v}{p^0} - \frac{q_v}{q^0}\right)\left(\frac{p_s}{p^0} - \frac{q_s}{q^0}\right)\left(\frac{p_t}{p^0} - \frac{q_t}{q^0}\right) \nonumber \\
& \quad + \frac{p^0 q^0}{2^{\delta_{st}}h}\left(-\frac{1}{2^{\delta_{uv}}}\right)\left(\frac{p_s p_u p_v}{(p^0)^3} - \frac{q_s q_u q_v}{(q^0)^3}\right)\left(\frac{p_t}{p^0} - \frac{q_t}{q^0}\right) \nonumber \\
& \quad + \frac{p^0 q^0}{2^{\delta_{st}}h}\left(-\frac{1}{2^{\delta_{uv}}}\right)\left(\frac{p_s}{p^0} - \frac{q_s}{q^0}\right)\left(\frac{p_tp_u p_v}{(p^0)^3} - \frac{q_tq_u q_v}{(q^0)^3}\right). \label{W6}
\end{align}
For the quantity $h^{2- \gamma }$ we obtain
\begin{align}
\frac{\partial h^{2- \gamma }}{\partial b^{st}} & = \frac{(2- \gamma )p^0 q^0}{2^{\delta_{st}}h^b}\left(\frac{p_s}{p^0} - \frac{q_s}{q^0}\right)\left(\frac{p_t}{p^0} - \frac{q_t}{q^0}\right), \label{W7} \\
\frac{\partial^2 h^{2- \gamma }}{\partial b^{uv} \partial b^{st}} & = \frac{(2- \gamma )}{2^{\delta_{st}}h^b}\left(\frac{1}{2^{\delta_{uv}}}\right)\left(\frac{p_u p_v}{p^0}q^0 + p^0\frac{q_u q_v}{q^0}\right)\left(\frac{p_s}{p^0} - \frac{q_s}{q^0}\right)\left(\frac{p_t}{p^0} - \frac{q_t}{q^0}\right) \nonumber \\
& \quad + \frac{(2- \gamma )p^0 q^0}{2^{\delta_{st}}}\left(\frac{-b}{h^{1+b}}\right)\frac{p^0 q^0}{2^{\delta_{uv}}h}\left(\frac{p_u}{p^0} - \frac{q_u}{q^0}\right)\left(\frac{p_v}{p^0} - \frac{q_v}{q^0}\right)\left(\frac{p_s}{p^0} - \frac{q_s}{q^0}\right)\left(\frac{p_t}{p^0} - \frac{q_t}{q^0}\right) \nonumber \\
& \quad + \frac{(2- \gamma )p^0 q^0}{2^{\delta_{st}}h^b}\left(-\frac{1}{2^{\delta_{uv}}}\right)\left(\frac{p_s p_u p_v}{(p^0)^3} - \frac{q_s q_u q_v}{(q^0)^3}\right)\left(\frac{p_t}{p^0} - \frac{q_t}{q^0}\right) \nonumber \\
& \quad + \frac{(2- \gamma )p^0 q^0}{2^{\delta_{st}}h^b}\left(-\frac{1}{2^{\delta_{uv}}}\right)\left(\frac{p_s}{p^0} - \frac{q_s}{q^0}\right)\left(\frac{p_tp_u p_v}{(p^0)^3} - \frac{q_tq_u q_v}{(q^0)^3}\right). \label{W8}
\end{align}
Now, we consider the derivatives of post-collision momenta. Recall that 
\begin{equation}
p'^0 = \frac{1}{2}(n^0 + n_k \omega_j \delta^{ i j } { e_i }^k ),
\end{equation}
where $n^0 = p^0 + q^0$ and $n_i = p_i + q_i$, so that we have
\begin{align}
\frac{\partial p'^0}{\partial b^{st}} & = \frac{1}{2}\left(\frac{1}{2^{\delta_{st}}}\frac{p_s p_t}{p^0} + \frac{1}{2^{\delta_{st}}}\frac{q_s q_t}{q^0} + n_k \omega_j \delta^{ i j } (\partial_{b^{st}}{ e_i }^k) \right), \label{W9} \\
\frac{\partial^2 p'^0}{\partial b^{uv} \partial b^{st}} & = \frac{1}{2}\left(-\frac{1}{2^{\delta_{st}}2^{\delta_{uv}}}\frac{p_s p_t p_u p_v}{(p^0)^3} - \frac{1}{2^{\delta_{st}}2^{\delta_{uv}}}\frac{q_s q_t q_u q_v}{(q^0)^3} + n_k \omega_j \delta^{ i j } (\partial^2_{ b^{ u v } b^{st}}{ e_i }^k) \right). \label{W10}
\end{align}
For the spatial components we recall that
\begin{equation}
p'_i = \frac{1}{2}\left(n_i + h \omega_j { e^j }_i + \frac{ n_l \omega_k \delta^{ j k } { e_j }^l n_i }{ n^0 + h}\right) .
\end{equation}
Then, we compute
\begin{align}
\frac{\partial p'_i}{\partial b^{st}} & = \frac{1}{2}\bigg\{ \frac{\partial h}{\partial b^{st}} \omega_j { e^j }_i + h\omega_j (\partial_{b^{st}} { e^j }_i ) + \frac{n_l \omega_k \delta^{ j k } (\partial_{b^{st}} { e_j }^l ) n_i }{ n^0 + h} - \frac{n_l \omega_k \delta^{ j k } { e_j }^l n_i }{ (n^0 + h)^2}\left(\frac{\partial n^0}{\partial b^{st}} + \frac{\partial h}{\partial b^{st}}\right)\bigg\}, \label{W11}
\end{align}
and
\begin{align}
& \frac{\partial^2 p'_i}{\partial b^{uv} \partial b^{st}} = \frac{1}{2}\bigg\{\frac{\partial^2 h}{\partial b^{uv} \partial b^{st}} \omega_j { e^j }_i + \frac{\partial h}{\partial b^{st}}\omega_j (\partial_{b^{uv}} { e^j }_i) + \frac{\partial h}{\partial b^{uv}}\omega_j (\partial_{b^{st}} { e^j }_i) + h\omega_j (\partial^2_{b^{uv} b^{st}} { e^j }_i) \nonumber \\
& + \frac{n_l \omega_k \delta^{ j k } (\partial^2_{ b^{ u v } b^{st}} { e_j }^l ) n_i }{ n^0 + h} - \frac{n_l \omega_k \delta^{ j k } (\partial_{b^{st}} { e_j }^l ) n_i }{ (n^0 + h)^2}\left(\frac{\partial n^0}{\partial b^{uv}} + \frac{\partial h}{\partial b^{uv}}\right) - \frac{n_l \omega_k \delta^{ j k } (\partial_{b^{u v }} { e_j }^l ) n_i }{ (n^0 + h)^2}\left(\frac{\partial n^0}{\partial b^{st}} + \frac{\partial h}{\partial b^{st}}\right) \nonumber \\
& + \frac{2n_l \omega_k \delta^{ j k } { e_j }^l n_i }{ (n^0 + h)^3}\left(\frac{\partial n^0}{\partial b^{uv}} + \frac{\partial h}{\partial b^{uv}}\right)\left(\frac{\partial n^0}{\partial b^{st}} + \frac{\partial h}{\partial b^{st}}\right) - \frac{n_l \omega_k \delta^{ j k } { e_j }^l n_i }{ (n^0 + h)^2}\left(\frac{\partial^2 n^0}{\partial b^{uv}\partial b^{st}} + \frac{\partial^2 h}{\partial b^{uv} \partial b^{st}}\right)\bigg\}. \label{W12}
\end{align}
The derivatives of $ { e_j }^i $ and $ { e^i }_j $ are estimated by Lemma \ref{lem e}. We are now ready to prove the following lemma.

\begin{lemma}\label{lem W}
Let $W_{ij}$ be the quantity defined by
\begin{equation}
W_{ij}(p_k,q_k,\omega_k,b^{kl}) = \frac{h ^ { 2 - \gamma }}{p^0 q^0} \left( \frac{p' _ i p' _ j}{p'^0} - \frac{p_i p_j}{p^0}\right).
\end{equation}
Then, the derivatives of $W_{ij}$ with respect to $b^{st}$ are estimated as follows:
\begin{align}
| W_{ij}| + \left| \frac{\partial  W_{ij} }{\partial b^{st}} \right| & \leq C( a_{ij}, b^{ij} ) \left( \frac{ (p^0)^{ 1 - \frac{ \gamma }{2} } }{ (q^0)^{\frac{ \gamma }{2}}} + \frac{ (q^0)^{1-\frac{ \gamma }{2}}}{ (p^0)^{\frac{ \gamma }{2}}} \right), \label{lem W1}\\
\int_{\bbs^2} \left| \frac{\partial^2 W_{ij} }{ \partial b^{uv} \partial b^{st}} \right| d\omega & \leq C( a_{ij}, b^{ij} ) \left( \frac{ (p^0)^{ 2- \gamma } }{ q^0 } + \frac{ (p^0)^{ 1 - \frac{ \gamma }{2} } }{ (q^0)^{\frac{ \gamma }{2}}} + \frac{ (q^0)^{1-\frac{ \gamma }{2}}}{ (p^0)^{\frac{ \gamma }{2}}} + \frac{ (q^0)^{ 2- \gamma }}{ p^0 } \right), \label{lem W2}
\end{align}
where $C(a_{ij}, b^{ij})$ are positive constants depending only on $a_{ij}$ and $b^{ij}$. 
\end{lemma}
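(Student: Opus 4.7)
My plan rests on two pointwise identities that will control every $h$-denominator generated by differentiation. The first is the direct computation
\[
b^{ij}\left(\frac{p_i}{p^0}-\frac{q_i}{q^0}\right)\left(\frac{p_j}{p^0}-\frac{q_j}{q^0}\right)=\frac{h^2}{p^0 q^0},
\]
which, combined with the spatial-form comparison $\delta^{ij}\leq\|a\|\,b^{ij}$, yields $|p_s/p^0-q_s/q^0|\leq C(a,b)\,h/\sqrt{p^0 q^0}$; the second is the null bound $h\leq 2\sqrt{p^0 q^0}$, which follows from $h^2=2p^0q^0-2b^{ij}p_iq_j$ together with $|b^{ij}p_iq_j|\leq p^0q^0$. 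The estimates \eqref{p^0 1}--\eqref{p^0 2} give $|p_i|\leq C(a,b)\,p^0$, and the null condition $(p'^0)^2=b^{ij}p'_ip'_j$ gives $|p'_i|\leq C(a,b)\,p'^0\leq C(a,b)\,n^0$ (the bound $p'^0\leq n^0$ being immediate from $|\hat n|\leq n^0$ in the orthonormal frame, cf.\ \eqref{phat'^0}); hence $|p'_ip'_j/p'^0|\leq C(a,b)\,n^0$.

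The bound \eqref{lem W1} for $|W_{ij}|$ is then immediate: the $p_ip_j/p^0$ summand of $W_{ij}$ contributes at most $Ch^{2-\gamma}/q^0$, and the $p'_ip'_j/p'^0$ summand at most $Ch^{2-\gamma}(p^0+q^0)/(p^0q^0)$; after applying $h^{2-\gamma}\leq 2(p^0q^0)^{1-\gamma/2}$ both become the stated summands. For $\partial W_{ij}/\partial b^{st}$ I will expand by the product rule using \eqref{W1}--\eqref{W11}. The crucial observation is that each of the $h$-derivatives \eqref{W5} and \eqref{W7} already carries a factor $(p_s/p^0-q_s/q^0)(p_t/p^0-q_t/q^0)$, which by the key inequality is $O(h^2/(p^0q^0))$ and exactly cancels the negative power of $h$ in the denominator, so that $\partial h = O(h)$ and $\partial h^{2-\gamma} = O(h^{2-\gamma})$. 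Derivatives of $p'_i$ and $p'^0$ involve $\partial{e^j}_i$ and $\partial{e_j}^i$, bounded via Lemma \ref{lem e}, and the factor $1/(n^0+h)$ is harmless. Collecting all terms reproduces the same bound as for $|W_{ij}|$.

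For \eqref{lem W2} the argument is structurally similar: differentiate once more and invoke the same cancellations. The new observation is that $h$, $p^0$, $q^0$ do not depend on $\omega$, so all their $b$-derivatives pull outside the $\omega$-integral; the $\omega$-integration thereby acts only on factors built from $p'_i$ and $1/p'^0$. Using $|p'_ip'_j/p'^0|\leq C(a,b)\,n^0$, the bounds $|\partial^m p'_i|+|\partial^m p'^0|\leq C(a,b)\,n^0$ for $m\leq 2$, and the fact that $\int_{\bbs^2} d\omega/p'^0$ is only logarithmically divergent in $h$ (hence absorbed by the surplus power $h^{2-\gamma}$ with $\gamma<2$), the $\omega$-integrated second derivative is bounded by $C(a,b)\,h^{2-\gamma}$ times polynomial combinations of $1/p^0$ and $1/q^0$. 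Splitting $h^{2-\gamma}\leq C(p^0q^0)^{1-\gamma/2}$ or $h^{2-\gamma}\leq C\max(p^0,q^0)^{2-\gamma}$ as appropriate yields the four summands in \eqref{lem W2}.

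The principal technical obstacle is the most singular second-derivative term in \eqref{W8}, schematically $(p^0q^0)^2 h^{-\gamma-2}\prod_{k=1}^{4}(p_{i_k}/p^0-q_{i_k}/q^0)$; by the key inequality the four differences together supply $O(h^4/(p^0q^0)^2)$, so the putative $h^{-\gamma-2}$ singularity collapses to a manageable $O(h^{2-\gamma})$ contribution. The bookkeeping through \eqref{W1}--\eqref{W12} is then tedious but introduces no new ideas beyond this cancellation mechanism combined with the frame estimates of Lemma \ref{lem e}.
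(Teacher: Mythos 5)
Your proposal is correct and follows essentially the same route as the paper's proof: the same cancellation mechanism making $\partial h^{2-\gamma}=O(h^{2-\gamma})$ (your algebraic identity $b^{ij}(p_i/p^0-q_i/q^0)(p_j/p^0-q_j/q^0)=h^2/(p^0q^0)$ is just the coordinate form of the paper's orthonormal-frame bound $|\hat p_i/p^0-\hat q_i/q^0|\le 2\sin(\phi/2)=h/\sqrt{p^0q^0}$), the same frame estimates from Lemma \ref{lem e}, and the same use of the near-logarithmic bound on $\int_{\bbs^2}d\omega/p'^0$ (Lemma \ref{lem p'}, \eqref{l2}) absorbed by the surplus power $h^{2-\gamma}$ to handle the second derivative of $p'_ip'_j/p'^0$. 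No gaps.
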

\begin{proof}
Let us first consider the quantity $h^{2- \gamma }/(p^0 q^0)$. We use the relation $p_i = \hat{p}_j { e^j }_i$ in \eqref{phat} to estimate the derivatives of $1/p^0$ and $1/q^0$ in \eqref{W1}--\eqref{W4}:
\begin{align}
\left| \frac{\partial}{\partial b^{st}}\left(\frac{1}{p^0}\right) \right|  & \leq C \| e^{ - 1 } \|^2 \frac{ 1 }{p^0},\qquad \left| \frac{\partial^2}{\partial b^{uv} \partial b^{st}} \left(\frac{1}{p^0}\right) \right| \leq C \| e^{ - 1 } \|^4 \frac{ 1 }{p^0}, \label{Wconti1} \\
\left| \frac{\partial}{\partial b^{st}}\left(\frac{1}{q^0}\right) \right|  & \leq C \| e^{ - 1 } \|^2 \frac{ 1 }{q^0},\qquad \left| \frac{\partial^2}{\partial b^{uv} \partial b^{st}} \left(\frac{1}{q^0}\right) \right| \leq C \| e^{ - 1 } \|^4 \frac{ 1 }{q^0} , \label{Wconti2}
\end{align}
where $ e^{ - 1 } $ denotes the $ 3 \times 3 $ matrix $ { e^i }_j $. For the derivatives of $h^{2- \gamma }$ we note that in an orthonormal frame the angle $\phi$ between $\p$ and $\q$ is defined by 
\begin{equation}
\p \cdot \q = |\p| |\q| \cos \phi = p^0 q^0 \cos\phi,
\end{equation}
where $\p \cdot \q$ is the usual inner product in three dimensions. Hence, we have
\begin{align}
\left| \frac{\p_i}{p^0} - \frac{\q_i}{q^0} \right| \leq 2 \sin\frac{\phi}{2} = \frac{ h }{\sqrt{p^0 q^0}}, \label{Wsin}
\end{align}
where we used \eqref{l1} of Lemma \ref{lem h} below. We obtain from \eqref{W7} and \eqref{W8} the following estimates:
\begin{align}
\left| \frac{\partial h^{2- \gamma }}{\partial b^{st}} \right| \leq C \| e^{ - 1 } \|^2 h^{2- \gamma }, \qquad \left| \frac{\partial^2 h^{2- \gamma }}{\partial b^{uv} \partial b^{st}} \right|  \leq C \| e^{ - 1 } \|^4 h^{2- \gamma }. \label{Wconti3}
\end{align}
For later use, we consider \eqref{W5} to obtain the following in a similar way: 
\begin{align}
\left| \frac{ \partial h }{ \partial b^{ s t } } \right| \leq C \| e^{ - 1 } \|^2 h . \label{Wconti4}
\end{align}
We now apply Lemma \ref{lem e} and conclude that the derivatives of the quantity $h^{2- \gamma } / (p^0 q^0)$ can be estimated as follows:
\begin{align}
\left| \frac{h^{2- \gamma }}{p^0 q^0} \right| + \left| \frac{\partial}{\partial b^{st}} \left(\frac{h^{2- \gamma }}{p^0 q^0} \right) \right| + \left| \frac{\partial^2}{\partial  b^{uv} \partial b^{st}} \left(\frac{h^{2- \gamma }}{p^0 q^0} \right) \right| \leq  \frac{ C (a_{ij}, b^{ij}) }{(p^0)^{\frac{ \gamma }{2}} (q^0)^{\frac{ \gamma }{2}}}, \label{W13}
\end{align}
where $C(a_{ij}, b^{ij})$ is a positive constant depending only on $a_{ij}$ and $b^{ij}$.

Next, we consider the quantity $p'_i p'_j / p'^0$. Applying Lemma \ref{lem e}, we obtain from \eqref{W9} and \eqref{W10} the following estimates of the derivatives of $p'^0$:
\begin{align*}
| p'^0| + \left| \frac{\partial p'^0}{\partial b^{st}} \right| + \left| \frac{\partial^2 p'^0}{\partial b^{uv} \partial b^{st}} \right| \leq C(a_{ij}, b^{ij}) n^0,
\end{align*}
and from \eqref{W11} and \eqref{W12} with \eqref{W5} and \eqref{W6} the following estimates of $p'_i$:
\begin{align*}
| p_i'| + \left| \frac{\partial p_i'}{\partial b^{st}} \right| + \left| \frac{\partial^2 p_i'}{\partial b^{uv} \partial b^{st}} \right| \leq C(a_{ij}, b^{ij}) n^0.
\end{align*}
Then, the following is easily obtained:
\begin{align}
\left| \frac{p' _ i p' _ j}{p'^0} \right| \leq C \| e^{ - 1 } \|^2 p'^0 \leq C(a_{ij}, b^{ij}) n^0, \label{W14}
\end{align}
since $ p'_i = \p'_j { e^j }_i$ and $p'^0 \leq n^0$ by \eqref{conservation}. For the first derivatives we compute 
\begin{align*}
\frac{\partial}{\partial b^{st}} \left( \frac{p' _ i p' _ j}{p'^0} \right) = \frac{(\partial_{b^{st}} p'_i) p'_j}{ p'^0 } + \frac{ p'_i (\partial_{b^{st}} p'_j)}{ p'^0 } - \frac{p'_i p'_j (\partial_{b^{st}} p'^0)}{ (p'^0)^2 },
\end{align*}
and notice that the quantities of the form $p'_i / p'^0$ are bounded by $C(a_{ij}, b^{ij})$. Hence, we obtain
\begin{align}
\left| \frac{\partial}{\partial b^{st}} \left( \frac{p' _ i p' _ j}{p'^0} \right) \right| \leq C(a_{ij}, b^{ij}) n^0. \label{W15}
\end{align}
For the second derivatives we have
\begin{align*}
&\frac{\partial^2 }{\partial b^{uv} \partial b^{st}} \left( \frac{p' _ i p' _ j}{p'^0} \right) = \frac{(\partial_{b^{uv}}\partial_{b^{st}} p'_i) p'_j}{ p'^0 } + \frac{(\partial_{b^{st}} p'_i) ( \partial_{b^{uv}} p'_j) }{ p'^0 } - \frac{ (\partial_{b^{st}} p'_i) p'_j (\partial_{b^{uv}} p'^0)}{ (p'^0)^2 } \nonumber \\
& \quad + \frac{(\partial_{b^{uv}} p'_i) ( \partial_{b^{st}} p'_j) }{ p'^0 } + \frac{ p'_i (\partial_{b^{uv}}\partial_{b^{st}} p'_j)}{ p'^0 } - \frac{ p'_i (\partial_{b^{st}} p'_j) (\partial_{b^{uv}} p'^0)}{ (p'^0)^2 } \nonumber \\
& \quad - \frac{ (\partial_{b^{uv}} p'_i) p'_j (\partial_{b^{st}} p'^0)}{ (p'^0)^2 } - \frac{p'_i (\partial_{b^{uv}} p'_j) (\partial_{b^{st}} p'^0)}{ (p'^0)^2 } - \frac{p'_i p'_j ( \partial_{b^{uv}} \partial_{b^{st}} p'^0)}{ (p'^0)^2 } + \frac{ 2 p'_i p'_j (\partial_{b^{st}} p'^0) (\partial_{b^{uv}} p'^0) }{ (p'^0)^3 },
\end{align*}
and obtain the following estimate:
\begin{align}
\left| \frac{\partial^2 }{\partial b^{uv} \partial b^{st}} \left( \frac{p' _ i p' _ j}{p'^0} \right)\right| \leq C(a_{ij}, b^{ij}) \left( n^0 + \frac{(n^0)^2}{p'^0} \right) \leq C(a_{ij}, b^{ij} ) \frac{ (n^0)^2 }{p'^0}, \label{W16}
\end{align}
where we used $p'^0 \leq n^0$ again.

The quantity $ p_i p_j / p^0$ is easily estimated, since $p_i$ and $p_j$ do not depend on $b^{st}$. We have
\begin{align*}
\left| \frac{ p_i p_j }{ p^0} \right| + \left| \frac{\partial }{\partial b^{st}} \left( \frac{ p_i p_j }{ p^0} \right) \right| + \left| \frac{\partial^2}{\partial b^{uv} \partial b^{st}} \left( \frac{ p_i p_j }{ p^0 } \right) \right| \leq C(a_{ij}, b^{ij}) p^0, 
\end{align*}
which can be absorbed into \eqref{W14} or \eqref{W15}.

Now, by direct calculations we obtain the desired result. By \eqref{W13} and \eqref{W14} we first obtain
\begin{align}
| W_{ij} | \leq C( a_{ij}, b^{ij} ) \left( \frac{ (p^0)^{ 1 - \frac{ \gamma }{2} } }{ (q^0)^{\frac{ \gamma }{2}}} + \frac{ (q^0)^{1-\frac{ \gamma }{2}}}{ (p^0)^{\frac{ \gamma }{2}}} \right). \label{W17}
\end{align}
For the first derivatives of $W_{ij}$ we use \eqref{W13}, \eqref{W14} and \eqref{W15} to obtain the same estimate:
\begin{align}
\left| \frac{\partial  W_{ij} }{\partial b^{st}} \right| \leq C( a_{ij}, b^{ij} ) \left( \frac{ (p^0)^{ 1 - \frac{ \gamma }{2} } }{ (q^0)^{\frac{ \gamma }{2}}} + \frac{ (q^0)^{1-\frac{ \gamma }{2}}}{ (p^0)^{\frac{ \gamma }{2}}} \right). \label{W18}
\end{align}
For the second derivatives we write them as follows:
\begin{align*}
\frac{\partial^2 W_{ij} }{\partial b^{uv} \partial b^{st}} = W_{lower} + \frac{ h^{ 2 - \gamma } }{ p^0 q^0 } \frac{\partial^2 }{\partial b^{uv} \partial b^{st}} \left( \frac{p' _ i p' _ j}{p'^0} \right),
\end{align*}
where $W_{lower}$ is estimated by using \eqref{W13}, \eqref{W14} and \eqref{W15}:
\begin{align}
|W_{lower}| \leq C( a_{ij}, b^{ij} ) \left( \frac{ (p^0)^{ 1 - \frac{ \gamma }{2} } }{ (q^0)^{\frac{ \gamma }{2}}} + \frac{ (q^0)^{1-\frac{ \gamma }{2}}}{ (p^0)^{\frac{ \gamma }{2}}} \right). \label{W19}
\end{align}
The second term is estimated by using \eqref{W16}:
\begin{align*}
\left| \frac{ h^{ 2 - \gamma } }{ p^0 q^0 } \frac{\partial^2 }{\partial b^{uv} \partial b^{st}} \left( \frac{p' _ i p' _ j}{p'^0} \right) \right| \leq C(a_{ij}, b^{ij} ) \frac{ h^{2- \gamma } (n^0)^2 }{p^0 q^0 p'^0}.
\end{align*}
Now, we need to use \eqref{l2} of Lemma \ref{lem p'} below to take the integration on $\bbs^2$. We choose $\delta > 0$ sufficiently small such that
\begin{equation}
\gamma + \delta < 2
\end{equation}
to obtain
\begin{align}
\int_{\bbs^2} \left| \frac{h ^ { 2 - \gamma }}{p^0 q^0} \frac{\partial^2 }{\partial b^{uv} \partial b^{st}} \left( \frac{p' _ i p' _ j}{p'^0} \right) \right| \,  d\omega 
& \leq C(a_{ij}, b^{ij} ) \frac{ h^{2- \gamma - \delta} (n^0)^{ 1 + \delta} }{p^0 q^0} \nonumber \\
& \leq C(a_{ij}, b^{ij} ) \frac{ (n^0)^{ 3 - \gamma } }{p^0 q^0} \nonumber \\
& \leq C( a_{ij}, b^{ij} ) \left( \frac{ (p^0)^{ 2- \gamma } }{ q^0 } + \frac{ (q^0)^{ 2- \gamma }}{ p^0 } \right), \label{W20}
\end{align}
where we used $h \leq n^0$ by \eqref{l1}. We obtain the desired results by collecting the estimates \eqref{W17}, \eqref{W18}, \eqref{W19} and \eqref{W20}.
\end{proof}

\begin{lemma}\label{lem H}
Suppose that there exist a time interval $[0,T]$ and positive constants $B_1$ and $B_2$ such that $f$ is defined on $[0,T]$ and satisfy
\begin{align}
\sup_{0 \leq s \leq T} \| f(s) \|_{\langle -1 \rangle} + \sup_{0 \leq s \leq T} \| f(s) \|_{\langle 1 \rangle} & \leq B_1, \\
\sup_{0 \leq s \leq T} \left\| \frac{\partial f}{\partial s} (s) \right\|_{\langle -1 \rangle} + \sup_{0 \leq s \leq T} \left\| \frac{\partial f}{\partial s} (s) \right\|_{\langle 1 \rangle} & \leq B_2.
\end{align}
Then, the function $H$ satisfies
\begin{align}
\sup_{0 \leq s \leq T} \| H (s) \| + \sup_{0 \leq s \leq T} \left\| \frac{\partial H}{\partial x}(s) \right\| + \sup_{0 \leq s \leq T} \left\| \frac{\partial^2 H}{\partial x^2} (s) \right\| & \leq C(a_{ij}, b^{ij}) B_1^2, \\
\sup_{0 \leq s \leq T} \left\| \frac{\partial H}{\partial s} (s) \right\| + \sup_{0 \leq s \leq T} \left\| \frac{\partial^2 H}{\partial x \partial s} (s) \right\| & \leq C (a_{ij}, b^{ij}) B_1 B_2,
\end{align}
where $C(a_{ij}, b^{ij})$ are positive constants independent of $T$.
\end{lemma}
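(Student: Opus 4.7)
The plan is to estimate each of the five quantities $H$, $\partial H/\partial x$, $\partial^2 H/\partial x^2$, $\partial H/\partial s$, and $\partial^2 H/\partial x \partial s$ by substituting the explicit formulas \eqref{H1}--\eqref{H4} (together with the analogous formulas obtained by differentiating $W_{ij}$ in $b^{st}$) into the integral definition of $H_{ij}$, then applying the pointwise bounds from Lemma \ref{lem W} and factoring the triple integrals via Fubini. The $\omega$-integration is either trivial (for $H$ and first derivatives, where the integrand is bounded pointwise on $\bbs^2$) or absorbed into the $\int_{\bbs^2}|\partial^2 W_{ij}/\partial b^{uv}\partial b^{st}|\,d\omega$ bound \eqref{lem W2}. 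Once $\omega$ is integrated, what remains is a product of two one-particle momentum integrals in $p$ and $q$, weighted by various powers of $p^0$ and $q^0$.

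First I would treat $H$ itself. Substituting \eqref{lem W1} gives, up to factors $C(a,b)$,
\[
|H_{ij}| \le C(a,b)\iint \!\!\Bigl(\tfrac{(p^0)^{1-\gamma/2}}{(q^0)^{\gamma/2}} + \tfrac{(q^0)^{1-\gamma/2}}{(p^0)^{\gamma/2}}\Bigr) f(p)f(q)\,d^3p\,d^3q,
\]
which factors into a product of integrals of the form $\int (p^0)^r f\,d^3p$ with $r \in \{1-\gamma/2,\,-\gamma/2\}$. Since $1<\gamma<2$, both exponents lie in $(-1,1)$, so by the bound $p^0 \le C\|b\|^{1/2}\langle p\rangle$ from \eqref{p^0 1} together with $p^0 \ge c(b)\langle p\rangle$ from \eqref{p^0 2}, each factor is controlled by an interpolation of $\|f\|_{\langle -1\rangle}$ and $\|f\|_{\langle 1\rangle}$, giving the $B_1^2$ bound. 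The derivatives $\partial H/\partial a_{st}$ and $\partial H/\partial b^{st}$ admit the same treatment: from \eqref{H2} the $a$-derivative only pulls out a factor of $b^{st}$, while for the $b$-derivative we use the bound on $\partial W_{ij}/\partial b^{st}$ in \eqref{lem W1}, which has exactly the same pointwise form as $W_{ij}$. The second $x$-derivatives are handled identically using \eqref{H4} and \eqref{lem W2}, with the larger weights $(p^0)^{2-\gamma}/q^0$ and $(q^0)^{2-\gamma}/p^0$ still producing exponents in $(-1,1)$ on each factor (since $\gamma<2$), hence again bounded by $\|f\|_{\langle -1\rangle}\|f\|_{\langle 1\rangle} \le B_1^2$.

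The $s$-derivatives of $H$ are obtained from \eqref{H1} and \eqref{H3} and involve the same $W$-kernels, the only change being that one factor of $f$ is replaced by $\partial f/\partial s$. Distributing the derivative by the product rule yields two terms, each of which factors into one integral weighted by $f$ and one weighted by $\partial f/\partial s$, giving the $B_1 B_2$ bound by the same moment argument as above. For $\partial^2 H/\partial x\partial s$, one applies the same reasoning to the differentiated integrand; the $b$-derivative falls on $W_{ij}$ and we again use \eqref{lem W1}, while the $a$-derivative simply extracts $b^{st}$.

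The main obstacle is ensuring integrability of the momentum integrals near $p=0$ and $q=0$: the massless kinematics means $p^0 \to 0$ at the origin, so weights like $(p^0)^{-\gamma/2}$ and $(q^0)^{-\gamma/2}$ would not be integrable against an $L^1$ function without additional decay at the origin. This is precisely why the hypothesis includes the $\|f\|_{\langle -1\rangle}$-norm in addition to $\|f\|_{\langle 1\rangle}$: negative $r$ controls the singular low-momentum region, and the condition $\gamma<2$ guarantees $-\gamma/2 > -1$, so that this weight is dominated by $\langle p\rangle^{-1}$ up to constants depending on $a$, $b$. Everything else is bookkeeping: tracking which $C(a,b)$-factor appears from Lemma \ref{lem W}, from \eqref{derivdet}, and from the passage between $p^0$-weights and $\langle p\rangle$-weights via \eqref{p^0 1}--\eqref{p^0 2}.
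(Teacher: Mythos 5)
Your proposal is correct and follows essentially the same route as the paper: substitute the explicit derivative formulas \eqref{H1}--\eqref{H4} and the pointwise bounds of Lemma \ref{lem W} (with the $\omega$-integrated bound \eqref{lem W2} for the pure second $b$-derivatives), factor the momentum integrals into products of weighted $L^1$ moments, and control these by $\|f\|_{\langle\pm 1\rangle}$ and $\|\partial f/\partial s\|_{\langle\pm 1\rangle}$ via \eqref{p^0 1}--\eqref{p^0 2} and interpolation using $1<\gamma<2$. The only cosmetic slip is calling the exponent $-1$ (arising from $\|f\|_{L^1_{-1}}$ in the second-derivative estimate) an element of the open interval $(-1,1)$; it is the endpoint, but that is exactly what the hypothesis $\|f\|_{\langle -1\rangle}\leq B_1$ controls, so nothing is lost.
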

\begin{proof}
The proof is straightforward. Explicit formulas of the derivatives of $H_{ij}$ with respect to $s$ and $a_{st}$ are given in \eqref{H1}--\eqref{H4}, and derivatives with respect to $b^{st}$ are estimated by Lemma \ref{lem W}. Recall that $H_{ij}$ is given by
\begin{equation*}
H_ { i j } = \frac { 8 \pi c_\gamma } { \det a }  \iiint W_{ij}(p_k,q_k,\omega_k,b^{kl}) f(p) f(q) \, d \omega \, d ^ 3 q \, d ^ 3 p.
\end{equation*}
Applying \eqref{lem W1} of Lemma \ref{lem W} we obtain the following estimate:
\begin{equation}
| H_{ i j } | \leq C ( a_{ i j } , b^{ i j } ) \iint \frac{ ( p^0 )^{ 1 - \frac{ \gamma }{2} } }{ (q^0)^{\frac{ \gamma }{2}} } f(p) f(q) \, d^3q \, d^3p \leq C( a_{ij}, b^{ij} ) \| f \|_{L^1_{1- \gamma /2}} \| f \|_{L^1_{- \gamma /2}},
\end{equation}
where we used the symmetry in $ p $ and $ q $ in the integration. By the same arguments, we obtain the same estimates for the following quantities:
\begin{align}
\left| \frac{\partial H_{ij} }{\partial a_{st}} \right| + \left| \frac{\partial H_{ij} }{\partial b^{st}} \right| + \left| \frac{\partial^2 H_{ij} }{\partial b^{uv} \partial a_{st}} \right| + \left| \frac{\partial^2 H_{ij} }{\partial a_{uv} \partial b^{st}} \right| + \left| \frac{\partial^2 H_{ij} }{\partial a_{uv} \partial a_{st}} \right| \leq C( a_{ij}, b^{ij} ) \| f \|_{L^1_{1- \gamma /2}} \| f \|_{L^1_{- \gamma /2}}.
\end{align}
For the second derivatives with respect to $b^{st}$ we use \eqref{lem W2} of Lemma \ref{lem W}:
\begin{align}
\left| \frac{\partial^2 H_{ij} }{\partial b^{uv} \partial b^{st}} \right| \leq C( a_{ij}, b^{ij} ) \left( \| f \|_{L^1_{2- \gamma }} \| f \|_{L^1_{-1}} + \| f \|_{L^1_{1- \gamma /2}} \| f \|_{L^1_{- \gamma /2}} \right).
\end{align}
For the $s$-derivative of $H_{ij}$ we consider \eqref{H1} to obtain
\begin{align}
\left| \frac{\partial H_{ij} }{\partial s} \right| & \leq C(a_{ij}, b^{ij} ) \iint \left( \frac{ (p^0)^{ 1 - \frac{ \gamma }{2} } }{ (q^0)^{\frac{ \gamma }{2}}} + \frac{ (q^0)^{1-\frac{ \gamma }{2}}}{ (p^0)^{\frac{ \gamma }{2}}} \right) \left| \frac{\partial f}{\partial s}(p) \right| \left| f(q) \right| \, d^3q \, d^3p \nonumber \\
& \leq C(a_{ij}, b^{ij} ) \left( \left\| \frac{\partial f}{\partial s} \right\|_{L^1_{1- \gamma /2}} \| f \|_{L^1_{- \gamma /2}} + \left\| \frac{\partial f}{\partial s} \right\|_{L^1_{- \gamma /2}} \| f \|_{L^1_{1- \gamma /2}} \right).
\end{align}
We obtain the same estimates for the quantities $\partial^2 H_{ij} / \partial s \partial a_{st}$ and $\partial^2 H_{ij} / \partial s \partial b^{st}$. By the relations \eqref{p^0 1}--\eqref{p^0 2} we have for any $ m \in \bbr $, 
\begin{equation}
\| f \|_{L^1_m} \leq C(a_{ij}, b^{ij}) \| f \|_{\langle m \rangle},\qquad \left\| \frac{\partial f}{\partial s} \right\|_{L^1_m} \leq C(a_{ij}, b^{ij}) \left\| \frac{\partial f}{\partial s} \right\|_{\langle m \rangle}.
\end{equation}
The lemma is now proved by an interpolation, since $ 1< \gamma <2$. 
\end{proof}

\subsubsection{Eigenvalue conditions for $ N $}\label{eigen}
There are also eigenvalue conditions on the matrix~$ N $ required for Theorem \ref{ODEthm}, which we now verify by computing the eigenvalues.

The vector space in question is the direct sum of two copies of the space of symmetric bilinear forms on $ \mathbb R ^ 3 $, one for $ K $ and one for $ \hat Z $, since those are the variables in the singular part of our system of first order ordinary differential equations.

The theorem only requires the eigenvalue conditions to hold when $ s = 0 $, so we will assume $ s = 0 $ for the remainder of this section.
We therefore have $ \Psi _ { i j } = \frac 1 2 a _ { i j } $. This implies that
\begin{align}
  b ^ { i j } \Psi _ { i j } = \frac 1 2 b ^ { i j } a _ { i j } =
  \frac 3 2 ,
\end{align}
and $N = c_\gamma \, Q$, which are defined by \eqref{N} and \eqref{Q}, respectively. We note that the matrix $Q$ can be written as
\begin{align}
  Q = \left ( \begin{matrix}
    \gamma I + 3 \pi & - 2 I
    \\
    I - \chi + 3 \pi / 2 & c_\gamma^{-1} I
  \end{matrix} \right ),
\end{align}
where $I$, $\pi$ and $\chi$ are all considered as operators on the space of symmetric bilinear forms on $ \mathbb R ^ 3 $. Now
\begin{align}
  \pi ^ { q r } _ { k l } \chi ^ { k l } _ { m n }
  = \frac13 a_{kl} b^{qr} \Psi_{mncd} b^{ck} b^{dl} = \frac13 b^{qr} \Psi_{mnld} b^{dl} = \frac13 b^{qr} \Psi_{mn} = \frac13 \Pi_{mn}^{qr}
  = \frac 1 2 \pi ^ { q r } _ { m n },
\end{align}
where in the last equation we have assumed $s=0$,
while
\begin{align}
  \chi ^ { q r } _ { k l } \pi ^ { k l } _ { m n }
  = \frac13 \Psi_{klcd} b^{cq} b^{dr} a_{mn} b^{kl} = \frac13 \Psi_{cd} b^{cq} b^{dr} a_{mn} = \frac16 a_{cd} b^{cq} b^{dr} 2 \Psi_{mn}= \frac13 b^{qr} \Psi_{mn}
  = \frac 1 2 \pi ^ { q r } _ { m n },
\end{align}
where we also have assumed $s=0$. In other words we have
\begin{align}
  \chi \pi = \frac 1 2 \pi = \pi \chi .
\end{align}
Since $ \pi $ and $ \chi $ commute they are simultaneously diagonalisable.
$ \pi $ is a projection, so its eigenvalues are $ 1 $ and $ 0 $. $ \pi $
of anything is a multiple of $ \Psi _ { i j } $, so it is of rank one and
the eigenspace corresponding to the eigenvalue $ 1 $ is one-dimensional
and the eigenspace corresponding to the eigenvalue $ 0 $ is five-dimensional.
It follows from the equation above that $ \chi $ has eigenvalue
$ \frac 1 2 $ on the space where $ \pi $ has eigenvalue 1.

The equation $ \chi \pi = \frac 1 2 \pi = \pi \chi $
provides no further information on the eigenvalues of $ \chi $,
but we can get bounds for those eigenvalues by other means.
Suppose that $ z $ is an eigenvector of $ \chi $ with eigenvalue~$ \mu $.
Clearly
\begin{align}
  \frac { 4 \pi } { \sqrt{ \det a } }
  \int
  ( b ^ { i j } p _ i p _ j ) ^ { - 3 / 2 }
  ( b ^ { k l } b ^ { m n } z _ { l m } p _ k p _ n ) ^ 2
  f ( p )
  \, d ^ 3 p > 0
\end{align}
since the integrand is non-negative everywhere and positive on an open
set.
We can rewrite this as
\begin{align}
  \frac { 4 \pi } { \sqrt{ \det a } }
  \int
  ( b ^ { i j } p _ i p _ j ) ^ { - 3 / 2 }
  b ^ { k l } b ^ { m n } z _ { l m } p _ k p _ n
  b ^ { k ' l ' } b ^ { m ' n ' } z _ { l ' m ' } p _ { k ' } p _ { n ' }
  f ( p )
  \, d ^ 3 p > 0
\end{align}
or
\begin{align}
  b ^ { k l } b ^ { m n } z _ { l m } z _ { l ' m ' }
  \chi ^ { l ' m ' } _ { k n } > 0
\end{align}
or, using the eigenvalue equation,
\begin{align}
  \mu b ^ { k l } b ^ { m n } z _ { l m } z _ { k n } > 0 .
\end{align}
Since $ b ^ { k l } b ^ { m n } z _ { l m } z _ { k n } \ge 0 $ it
follows that $ \mu > 0 $. So all eigenvalues are positive.
The trace of $ \chi $ is
\begin{align}
  \chi ^ { i j } _ { i j }
  = b ^ { i k } b ^ { j l } \Psi _ { i j k l }
  = b ^ { i k } \Psi _ { i k }
  = \frac 3 2 .
\end{align}
Since we already have an eigenvector with eigenvalue
$ \frac 1 2 $ and all eigenvalues are positive it follows that
all eigenvalues are less than~1.

On a subspace where $ \pi = 1 $ we have $ \chi = \frac 1 2 $ and
\begin{align}
  Q = \left ( \begin{matrix} \gamma + 3 & - 2 \\ 2 & \gamma - 1 \end{matrix} \right ).
\end{align}
The corresponding eigenvalues of $ Q $ are the double eigenvalue $\gamma+1$ and the eigenvalues of $ N $ are thus
\begin{align}
 c_{\gamma}  (\gamma+1)= \frac { \gamma + 1 } { \gamma - 1 } > 1 .
\end{align}
Since $1<\gamma< 2$ by our assumption on the scattering kernel, the eigenvalues are strictly greater than one.

On a subspace where $ \pi = 0 $ and $ \chi = \mu $ we have
\begin{align}
  Q = \left ( \begin{matrix} \gamma & - 2 \\ 1 - \mu & \gamma - 1 \end{matrix} \right ) .
\end{align}
The corresponding eigenvalues of $ Q $ are
\begin{align}
  \frac { 2 \gamma - 1 \pm \sqrt { 1 - 8 ( 1 - \mu ) } } 2
\end{align}
and the corresponding eigenvalues of $ N $ are
\begin{align}
  \frac { 2 \gamma - 1 \pm \sqrt { 1 - 8 ( 1 - \mu ) } } { 2 ( \gamma - 1 ) } .
\end{align}
Now $ \mu < 1 $ so $ 1 - 8 ( 1 - \mu ) < 1 $ and hence the real parts of
both eigenvalues of $ N $ are positive and in fact bigger than one.

We've just seen that all eigenvalues of $ N $ initially have positive
real parts, so $ N $ is invertible. In particular this means that
$ W _ n = 0 $ and $ W = W _ i $ in the notation of the appendix.
We now return to the question of solving Equations (\ref{FF2bis}) and (\ref{FF3bis}) for $ a $, $ b $, $ K $ and $ \hat Z $ in terms of $ f $.
From those two equations it follows that
\begin{align}
  ( \gamma ^ 2 - \gamma + 2 ) K _ { i j } + 3 \gamma \pi ^ { m n } _ { i j } K _ { m n } - 2 \chi ^ { m n } _ { i j } K _ { m n } = 2 ( \gamma - 1 )  H  _ { i j } .
\end{align}
Now
\begin{align*}
  J = \left ( \begin{matrix} 0 & - I \\ I & 0 \end{matrix} \right )
\end{align*}
is invertible and we've just seen that $ N $ is invertible, so
\begin{align}
  - ( \gamma - 1 ) ^ 2 ( N J ) ^ 2
  = \left ( \begin{matrix}
    ( \gamma ^ 2 - \gamma + 2 ) I + 3 \gamma \pi - 2 \chi & 0 \cr
    0 & ( \gamma ^ 2 - \gamma + 2 ) I + 3 \gamma \pi - 2 \chi
    \end{matrix} \right )
\end{align}
is also invertible. From this it follows that
\begin{align}
  ( \gamma ^ 2 - \gamma + 2 ) I + 3 \gamma \pi - 2 \chi
\end{align}
is invertible, so there is a unique choice of $ K $ satisfying the equation
above, and then a unique $ \hat Z $ satisfying Equation~(\ref{FF2bis}). These
$ K $ and $ \hat Z $ then satisfy Equation~(\ref{FF3bis}). In other words,
the initial values of $ a $ uniquely determine those of $ K $ and $ \hat Z $,
as claimed in Section~\ref{ssfc}.

\subsubsection{Proof of Proposition \ref{propeinstein}}\label{sproofprop2}
We are now ready to prove Proposition \ref{propeinstein}. The proof is an application of Theorem \ref{ODEthm} to the equations \eqref{ODEx}--\eqref{ODE}, but we need to slightly modify it as follows. Let us write
\begin{align}
R : = 2 \max \left( \| a_0 \| , \| b_0 \| , \| K_0 \| , \| \hat{Z}_0 \| \right), 
\end{align}
which implies that $x_0 \in \overline { B _ V ( 0 , R/2 ) } $ and $ y_0 \in \overline { B _ W ( 0 , R/2 ) } $, where $V = W = S_3(\bbr) \times S_3(\bbr)$, and introduce $K_2 = [0,T] \times \overline { B _ V ( 0 , R ) }$ and $K_3 = [0,T] \times \overline { B _ V ( 0 , R ) } \times \overline { B _ W ( 0 , R ) }$. Following the proof of Theorem \ref{ODEthm} we observe that the solutions $x$ and $y$ exist on $[ 0 , t_{\min}]$, where $t_{\min}$ is given by
\begin{align}
t_{\min} = \min \left( T, \frac{R}{2Q_x}, \frac{R}{2Q_y}, \frac{1}{2Q_u}, \frac{1}{2Q_v}\right), 
\end{align}
which is slightly different from the one in Theorem \ref{ODEthm}. Here, $Q_x$, $Q_y$, $Q_u$ and $Q_v$ are almost the same as the ones in Theorem \ref{ODEthm}. The only difference is that we do not need the constant $C_y$, which is needed to estimate $y \in \overline{B_W (y_0, R)}$, but in this proof we consider $y \in \overline{B_W(0, R)}$ so that the constant $C_y$ can be replaced simply by $R$. Now, Lemmas \ref{lem F}, \ref{lem N} and \ref{lem H} imply that there exists a constant $C_R$, which depends on $R$, satisfying 
\begin{align}
\max \left( \frac{ 2 Q_x }{ R }, \frac{2 Q_y}{R}, 2 Q_u, 2 Q_v \right) \leq C_R ( 1 + B_1 + B_2 )^2.
\end{align}
We note that $\max_{[0,1]} \| P \|$ depends only on initial data so that the constant $C_P$ can be absorbed into $C_R$. This shows that there exists $0 < T_B \leq t_{\min} \leq T$, which depends on $B_1$ and $B_2$, such that the equations \eqref{ODEx}--\eqref{ODE} have a unique solution on $[0, T_B]$. We have proved in the previous section that $N_0$ satisfies the eigenvalue conditions. This completes the proof of Proposition \ref{propeinstein}.  \qed

\section{Estimates for the Boltzmann equation}\label{sBoltzmann}
In this part we study the Boltzmann equation for a given metric. The space-time will be assumed to be of Bianchi type I, in which case the structure constants $ { C^k }_{ i j } $ vanish so that the Boltzmann equation \eqref{V5} with \eqref{V6} and \eqref{C} reduces to 
\begin{align}
\frac{ \partial f }{ \partial s } = Q ( f , f ) , \label{B}
\end{align}
where the collision term $ Q $ is given as in \eqref{C}. In this section, we will replace the factor $ 1 / \sqrt{ \det a } $ in \eqref{C} with $ ( \det b )^{ \frac12 } $ for the convenience of calculation: 
\begin{align*}
Q ( f , f ) = ( \det b )^{ \frac12 } \int_{ \bbr^3 } \int_{ \bbs^2 } \frac{ h^{ 2 - \gamma } }{ p^0 q^0 }( f ( p' ) f ( q' ) - f ( p ) f ( q ) ) \, d \omega \, d^3 q . 
\end{align*}
Recall that $ h $, $ p^0 $ and $ q^0 $ are defined by \eqref{hp^0q^0}, and $ p' $ and $ q' $ are parametrized by \eqref{p'^0}--\eqref{q'}. We also recall that the post-collision momenta can be written with respect to the orthonormal frame \eqref{ortho_e} and \eqref{ortho_theta} as in \eqref{phat'^0}--\eqref{qhat'}:
\begin{align*}
p'^0 & = \frac12 ( n^0 + \hat{ n } \cdot \omega ) , \\
q'^0 & = \frac12 ( n^0 - \hat{ n } \cdot \omega ) , 
\end{align*}
and
\begin{align*}
\hat{ p }'_j & = \frac12 \left( \hat{ n }_j + h \omega_j + \frac{ ( \hat{ n } \cdot \omega ) \hat{ n }_j }{ n^0 + h } \right) , \\ 
\hat{ q }'_j & = \frac12 \left( \hat{ n }_j - h \omega_j - \frac{ ( \hat{ n } \cdot \omega ) \hat{ n }_j }{ n^0 + h } \right) . 
\end{align*}
Here, we denote 
\begin{equation*}
n^0 = p^0 + q^0 , \qquad \hat{ n }_j = \hat{ p }_j + \hat{ q }_j , 
\end{equation*}
and $ p^0 $, $ q^0 $ and $ h $ are understood as functions of $ \hat{ p }_j $ and $ \hat{ q }_j $ as in \eqref{hp^0q^0hat}:
\begin{equation*}
h = \sqrt{ 2 p^0 q^0 - 2 ( \hat{ p } \cdot \hat{ q } ) } , \qquad p^0 = | \hat{ p } | , \qquad q^0 = | \hat{ q } | ,
\end{equation*}
where $ \hat{ p } \cdot \hat{ q } $ is the usual inner product in three dimensions for $ { \hat p } = ( \hat{ p }_1 , \hat{ p }_2 , \hat{ p }_3 ) $ and $ { \hat q } = ( \hat{ q }_1 , \hat{ q }_2 , \hat{ q }_3 ) $. We remark that the representations \eqref{p'^0}--\eqref{q'} and \eqref{phat'^0}--\eqref{qhat'} can be used interchangeably through \eqref{ortho_e} and \eqref{ortho_theta}, as long as a metric $ b^{ i j } $ exists. We also remark that the following change of variables between pre- and post-collision momenta  will be frequently used: 
\begin{align}
\det b \frac{ d \omega \, d^3 p \, d^3 q }{ p^0 q^0 } = \frac{ d \omega \, d^3 { \hat p } \, d^3 \hat{ q } }{ p^0 q^0 } = \frac{ d \omega \, d^3 { \hat p }' \, d^3 \hat{ q }' }{ p'^0 q'^0 } = \det b \frac{ d \omega \, d^3 p' \, d^3 q' }{ p'^0 q'^0 } . \label{dpdq}
\end{align}

In this section we will obtain the existence for the Boltzmann equation. The existence will be obtained in $ L^1 $ by following the well-known $ L^1 $ framework in the Boltzmann theory \cite{ark1,ark2,CIP, LNT1, LR,SY}, but we use singular weights, so that singular moment estimates will be exploited. We need the following lemmas.

\begin{lemma}\label{lem h}
The relative momentum $ h $ is a collisional invariant and can be written as
\begin{align}
h = \sqrt{ ( n^0 )^2 - | { \hat n } |^2 } = 2 \sqrt{ p^0 q^0 } \sin \frac{ \phi }{ 2 } , \label{l1}
\end{align}
where $ \phi $ is the angle between the three dimensional vectors $ { \hat p } $ and $ { \hat q } $.
\end{lemma}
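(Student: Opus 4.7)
The lemma consists of two algebraic identities and one conservation statement, so my plan is to dispatch them in order, treating the formulas as purely computational and saving the collisional invariance for last.

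First I would verify the identity $h = \sqrt{(n^0)^2 - |\hat n|^2}$. Starting from the definition in \eqref{hp^0q^0hat}, we have $h^2 = 2p^0 q^0 - 2(\hat p\cdot\hat q)$. Since $p^0 = |\hat p|$ and $q^0 = |\hat q|$, expanding $(n^0)^2 - |\hat n|^2 = (p^0+q^0)^2 - |\hat p + \hat q|^2$ gives $(p^0)^2 + 2p^0 q^0 + (q^0)^2 - |\hat p|^2 - 2(\hat p\cdot\hat q) - |\hat q|^2 = 2p^0 q^0 - 2(\hat p\cdot\hat q)$, which matches $h^2$. For the second equality, introduce the angle $\phi$ via $\hat p \cdot \hat q = |\hat p||\hat q|\cos\phi = p^0 q^0 \cos\phi$, so $h^2 = 2p^0 q^0(1-\cos\phi) = 4 p^0 q^0 \sin^2(\phi/2)$ by the half-angle identity; taking the positive square root yields the claim.

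For the collisional invariance, the cleanest route is to use momentum conservation \eqref{conservation} together with the null condition \eqref{massless}. Since $p_\alpha + q_\alpha = p'_\alpha + q'_\alpha$ and all four momenta are null, we can rewrite $h^2 = -2 g^{\alpha\beta} p_\alpha q_\beta = -g^{\alpha\beta}(p_\alpha + q_\alpha)(p_\beta + q_\beta)$ (as noted at \eqref{htilde0mass} in the rescaled setting), and the right-hand side is manifestly preserved by the collision, giving $h(p,q) = h(p',q')$. As a consistency check, one can also verify this directly from the parametrization \eqref{phat'^0}--\eqref{qhat'}: summing the two expressions for $p'^0, q'^0$ gives $p'^0 + q'^0 = n^0$, and summing the two expressions for $\hat p', \hat q'$ gives $\hat p' + \hat q' = \hat n$, so the first identity of the lemma applied to the primed variables yields the same $\sqrt{(n^0)^2 - |\hat n|^2}$.

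There is no real obstacle here — the argument is purely algebraic. The only thing to be slightly careful about is keeping the two parametrizations (the covariant one \eqref{p'^0}--\eqref{q'} and the orthonormal-frame one \eqref{phat'^0}--\eqref{qhat'}) consistent, but since the lemma is stated in terms of $\hat p, \hat q$, working with the latter is the natural choice and the verification of $p'^0 + q'^0 = n^0$, $\hat p' + \hat q' = \hat n$ is immediate by inspection.
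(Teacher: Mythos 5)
Your proposal is correct and follows essentially the same route as the paper: the paper also writes $h^2=-2p_\alpha q^\alpha=-n_\alpha n^\alpha$ using the null conditions, invokes momentum conservation \eqref{conservation} for the collisional invariance, and then computes $h^2=2(|\hat p||\hat q|-\hat p\cdot\hat q)=4|\hat p||\hat q|\sin^2(\phi/2)$ in an orthonormal frame. The only difference is cosmetic: you additionally spell out the expansion of $(n^0)^2-|\hat n|^2$ and the consistency check via the explicit parametrization, which the paper leaves implicit.
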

\begin{proof}
Since $ p_\alpha p^\alpha = q_\alpha q^\alpha = 0 $, we have $ h^2 = - 2 p_\alpha q^\alpha = - n_\alpha n^\alpha $. Then, the energy-momentum conservation \eqref{conservation} shows that $ h $ is a collisional invariant. In an orthonormal frame we have
\begin{align*}
h^2 = - 2 p_\alpha q^\alpha = 2 ( | { \hat p } | | { \hat q } | - { \hat p } \cdot { \hat q } )
= 4 | { \hat p } | | { \hat q } | \sin^2 \frac{ \phi }{ 2 } ,
\end{align*}
where $ \phi $ is the angle between  $ { \hat p } $ and $ { \hat q } $.
\end{proof}

\begin{lemma}\label{lem p'}
Let $ p'^0 $ and $ q'^0 $ be the post-collision (unphysical) momenta for given $ p , q \in \bbr^3 $, $ \omega \in \bbs^2 $ and $ b^{ i j } \in S_3 ( \bbr ) $. Then, for any $ \delta > 0 $ there exists $ C > 0 $ such that the following hold:
\begin{align}
& \int_{ \bbs^2 } \frac{ 1 }{ p'^0 } \, d \omega = \int_{ \bbs^2 } \frac{ 1 }{ q'^0 } \, d \omega \leq \frac{ C }{ h^\delta ( n^0 )^{ 1-\delta } } , \label{l2} \\
& \int_{ \bbs^2 } \frac{ 1 }{ ( p'^0 )^2 } \, d \omega = \int_{ \bbs^2 } \frac{ 1 }{ ( q'^0 )^2 } \, d \omega = \frac{ 16 \pi }{ h^2 } , \label{l3} \\ 
& \int_{ \bbs^2 } \frac{ 1 }{ p'^0 q'^0 } \, d \omega \leq \frac{ C }{ h^\delta ( n^0 )^{ 2-\delta } } , \label{l4} 
\end{align}
where $ C $ does not depend on $ p $, $ q $ and $ b^{ i j } $.
\end{lemma}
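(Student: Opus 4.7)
The plan is to choose spherical coordinates on $\bbs^2$ with polar axis along $\hat{n}/|\hat{n}|$ so that $\hat{n}\cdot\omega = |\hat{n}|\cos\theta$, reduce each of the three angular integrals to an elementary integral in $u=\cos\theta$, and express the answer via the identity $(n^0)^2-|\hat{n}|^2=h^2$ from Lemma~\ref{lem h}. The three equalities between the $p'^0$ and $q'^0$ integrals are immediate from the measure-preserving involution $\omega\mapsto-\omega$ on $\bbs^2$, which swaps $p'^0$ and $q'^0$; hence only the $p'^0$ versions need to be estimated.

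For \eqref{l2} the reduction gives
\begin{align*}
\int_{\bbs^2}\frac{d\omega}{p'^0}
 = 4\pi\int_{-1}^{1}\frac{du}{n^0+|\hat{n}|u}
 = \frac{4\pi}{|\hat{n}|}\ln\frac{n^0+|\hat{n}|}{n^0-|\hat{n}|}
 = \frac{8\pi}{|\hat{n}|}\ln\frac{n^0+|\hat{n}|}{h},
\end{align*}
where I used $n^0-|\hat{n}|=h^2/(n^0+|\hat{n}|)$. Setting $t:=|\hat{n}|/n^0\in[0,1)$ one has $h/n^0=\sqrt{1-t^2}$, and the right-hand side becomes $(8\pi/n^0)\cdot\tanh^{-1}(t)/t$. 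The claim then reduces to the elementary inequality $\tanh^{-1}(t)/t\leq C_\delta(1-t^2)^{-\delta/2}$ on $[0,1)$, which holds because the product $\tanh^{-1}(t)(1-t^2)^{\delta/2}/t$ is continuous at $t=0$ (with limit~$1$) and, as $t\to 1^-$, is controlled by the standard fact that $(1-t)^{\delta/2}\ln(1/(1-t))\to 0$. Substituting $1-t^2=h^2/(n^0)^2$ then yields \eqref{l2} with a constant depending only on $\delta$.

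For \eqref{l3} the same reduction is purely computational,
\begin{align*}
\int_{\bbs^2}\frac{d\omega}{(p'^0)^2}
 = 8\pi\int_{-1}^{1}\frac{du}{(n^0+|\hat{n}|u)^2}
 = \frac{8\pi}{|\hat{n}|}\left(\frac{1}{n^0-|\hat{n}|}-\frac{1}{n^0+|\hat{n}|}\right)
 = \frac{16\pi}{h^2},
\end{align*}
so no estimate is required. For \eqref{l4} I would use the partial-fraction identity
\begin{align*}
\frac{1}{p'^0\,q'^0} = \frac{4}{(n^0)^2-(\hat{n}\cdot\omega)^2}
 = \frac{1}{n^0}\left(\frac{1}{p'^0}+\frac{1}{q'^0}\right),
\end{align*}
which together with the $p'^0\leftrightarrow q'^0$ symmetry reduces the integral to $(2/n^0)\int_{\bbs^2}(p'^0)^{-1}d\omega$; combining this with \eqref{l2} gives \eqref{l4} immediately. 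The only mild subtlety in the argument is trading the logarithm in \eqref{l2} for a power $h^{-\delta}$, paid for by the constant $C_\delta$ and the factor $(n^0)^\delta$; uniformity in $p$, $q$, and $b^{ij}$ is automatic because the final bounds depend on these data only through the scalar quantities $n^0$ and $h$.
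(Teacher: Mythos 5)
Your proposal is correct and follows essentially the same route as the paper: reduce each angular integral to a one-dimensional integral with polar axis along $\hat{n}$, obtain the logarithm $\ln\bigl((n^0+|\hat{n}|)/h\bigr)$ for \eqref{l2}, and trade the logarithm for the power $h^{-\delta}(n^0)^{\delta}$ at the cost of a $\delta$-dependent constant (your $\tanh^{-1}(t)/t\leq C_\delta(1-t^2)^{-\delta/2}$ is exactly the paper's inequality \eqref{l2-ineq} in the variable $t=|\hat n|/n^0$). The only cosmetic differences are that you verify \eqref{l3} directly where the paper cites Lemma~5 of \cite{LNT1}, and you derive \eqref{l4} from \eqref{l2} via the partial-fraction identity $1/(p'^0q'^0)=(1/n^0)\bigl(1/p'^0+1/q'^0\bigr)$ rather than recomputing the integral — a small but valid shortcut.
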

\begin{proof}
Let us consider an orthonormal frame, and use \eqref{phat'^0} to obtain
\begin{align}
\int_{ \bbs^2 } \frac{ 1 }{ p'^0 } \, d \omega & = \int_{ \bbs^2 } \frac{ 2 }{ n^0 + \hat{ n } \cdot \omega } \, d \omega \nonumber \\
& = \int_0^\pi \frac{ 4 \pi \sin \phi \, d \phi }{ n^0 + | \hat{ n } | \cos \phi } \nonumber \\
& = \frac{ 4 \pi }{ | \hat{ n } | } \ln \left( \frac{ n^0 + | \hat{ n } | }{ n^0 - | \hat{ n } | } \right) \nonumber \\
& = \frac{ 8 \pi }{ | \hat{ n } | } \ln \left( \frac{ n^0 + | \hat{ n } | }{ h } \right) ,
\end{align}
where we used \eqref{l1}. Now, for any $ \delta > 0 $ we have
\begin{align}\label{l2-ineq}
\frac{ 8 \pi }{ | \hat{ n } | } \ln \left( \frac{ n^0 + | \hat{ n } | }{ h } \right) = \frac{ 8 \pi }{ h^\delta ( n^0 )^{ 1-\delta } } \left( 1+ \frac{ | \hat{ n } |^2 }{ h^2 } \right)^{ \frac{ 1-\delta }{ 2 } } \frac{ \ln \left( \frac{ | \hat{ n } | }{ h } + \sqrt{ 1 + \frac{ | \hat{ n } |^2 }{ h^2 } } \right) }{ \frac{ | \hat{ n } | }{ h } } \leq \frac{ C }{ h^\delta ( n^0 )^{ 1-\delta } } ,
\end{align}
where $ C $ is a positive constant depending on $ \delta $. We obtain the same result for $ q'^0 $ by the symmetry. We skip the proof of \eqref{l3}, which is the same as the proof of Lemma 5 of \cite{LNT1}. For \eqref{l4}, we have
\begin{align}
\int_{ \bbs^2 } \frac{ 1 }{ p'^0 q'^0 } \, d \omega & = \int_{ \bbs^2 } \frac{ 4 }{ ( n^0 )^2 - ( \hat{ n } \cdot \omega )^2 } \, d \omega \nonumber \\
& = \int_0^\pi \frac{ 8 \pi \sin \phi \, d \phi }{ ( n^0 )^2 - | \hat{ n } |^2 \cos^2 \phi } \nonumber \\
& = \frac{ 8 \pi }{ n^0 | \hat{ n } | } \ln \left( \frac{ n^0 + | \hat{ n } | }{ n^0 - | \hat{ n } | } \right) \nonumber \\
& = \frac{16 \pi }{ n^0 | \hat{ n } | } \ln \left( \frac{ n^0 + | \hat{ n } | }{ h } \right) , 
\end{align}
where we used \eqref{l1}. Now, we obtain the desired result by the inequality \eqref{l2-ineq}. 
\end{proof}

\subsection{Existence for the Boltzmann equation with cut-off}\label{ExistBoltzmannCutoff}
We first prove the existence of solutions to a modified Boltzmann equation. We choose a constant $ \varepsilon > 0 $ and consider the modified Boltzmann equation:
\begin{align}
\frac{ \partial f }{ \partial s } = Q_\varepsilon ( f , f ) , \label{Bmod}
\end{align}
where $ Q_\varepsilon $ is the collision operator with cut-off defined by 
\begin{align}
Q_\varepsilon ( f , f ) = ( \det b )^{ \frac12 } \iint_{ h \geq \varepsilon } \frac{ h^{ 2 - \gamma } }{ p^0 q^0 } ( f ( p' ) f ( q' ) - f ( p ) f ( q ) ) \, d \omega \, d^3 q . \label{Cmod} 
\end{align} 
We notice that $ h^{ 2 - \gamma } / ( p^0 q^0 ) \sim h^{ - \gamma } $ by Lemma \ref{lem h}, and the singularity is cut-off by the restriction $ h \geq \varepsilon $. We may write the collision operator in an orthonormal frame and denote it by
\begin{equation*}
Q_\varepsilon ( f , f ) = \widehat{ Q }_\varepsilon ( \hat{ f } , \hat{ f } ) ,
\end{equation*}
where $ \hat{ f } $ is defined by $ \hat{ f } = \hat{ f } ( \hat{ p } ) = f ( p ) $, and 
\begin{equation*}
\widehat{ Q }_\varepsilon ( \hat{ f } , \hat{ f } ) = \iint_{ h \geq \varepsilon } \frac{ h^{ 2 - \gamma } }{ p^0 q^0 } ( \hat{ f } ( \hat{ p }' ) \hat{ f } ( \hat{ q }' ) - \hat{ f } ( \hat{ p } ) \hat{ f } ( \hat{ q } ) ) \, d \omega \, d^3 \hat{ q } ,
\end{equation*}
where the volume form is given by
\begin{align}
d^3 \hat{ q } = ( \det b )^{ \frac12 } \, d^3 q . \label{dqhat}
\end{align}
Now that an orthonormal frame is applied, we can follow the arguments of \cite{LNT1}:
\begin{align*}
\int_{ \bbr^3 } | \widehat{ Q }_\varepsilon ( \hat{ f } , \hat{ f } ) ( \hat{ p } ) | \, d^3 \hat{p} & \leq C_\varepsilon \left( \int_{ \bbr^3 } | \hat{ f } ( \hat{ p } ) | \, d^3 \hat{ p } \right)^2 , \\
\int_{ \bbr^3 } | \widehat{ Q }_\varepsilon ( \hat{ f } , \hat{ f } ) ( \hat{ p } ) - \widehat{ Q }_\varepsilon ( \hat{ g } , \hat{ g } ) ( \hat{ p } ) | \, d^3 \hat{ p } & \leq C_\varepsilon \left( \int_{ \bbr^3 } | \hat{ f } ( \hat{ p } ) + \hat{ g } ( \hat{ p } ) | \, d^3 \hat{ p } \right) \left( \int_{ \bbr^3 } | \hat{ f } ( \hat{ p } ) - \hat{ g } ( \hat{ p } ) | \, d^3 \hat{ p } \right).
\end{align*}
Applying \eqref{dqhat} we have
\begin{align*}
\| Q_\varepsilon ( f , f ) \|_{ L^1 } & \leq C_\varepsilon ( \det b )^{ \frac12 } \| f \|^2_{ L^1 } , \\
\| Q_\varepsilon ( f , f ) - Q_\varepsilon ( g , g ) \|_{ L^1 } & \leq C_\varepsilon ( \det b )^{ \frac12 } \| f + g \|_{ L^1 } \| f - g \|_{ L^1 } .
\end{align*}
This shows that the modified equation \eqref{Bmod} with \eqref{Cmod} has a solution in $ L^1 $ as long as the metric $ b^{ i j } $ exists and $ ( \det b )^{ \frac12 } $ is bounded. Next, we need to show that the solution $ f $ is non-negative. This ensures that the time interval on which the solution $ f $ exists can be extended to the time interval on which the metric $ b^{ i j } $ exists. Since the proof of the non-negativity of $ f $ is exactly the same as the proof of Proposition 1 of \cite{LNT1}, we only refer to \cite{LNT1} for the proof. We obtain the existence part of the following result:

\begin{prop}\label{prop mod}
Let $ \varepsilon > 0 $ be given. Suppose that there exist a time interval $ [ 0 , T ] $ and a positive constant $ A $ such that $ b^{ i j } $ and $ K_{ i j } $ are defined on $ [ 0 , T ] $ and satisfy
\begin{align}
\sup_{ 0 \leq s \leq T } \| b ( s ) \| \leq A , \qquad \sup_{ 0 \leq s \leq T } \| K ( s ) \| \leq A . 
\end{align}
Then, for any initial data $ f_0 \in L^1( \bbr^3 ) $ with $ f_0 \geq 0 $ the modified equation \eqref{Bmod} with \eqref{Cmod} has a unique non-negative solution $ f \in C^1 ( [ 0 , T ] ; L^1( \bbr^3 ) ) $. Moreover, if $ f_0 \in L^1_{ - 2 } ( \bbr^3 ) \cap L^\infty_\eta ( \bbr^3 ) $ for some $ 0 < \eta < 2 / A^2 $, then there exists $ 0 < T_A \leq T $ such that
\begin{align}
\sup_{ 0 \leq s \leq T_A } \| f ( s ) \|_{ L^1_{ - 2 } } \leq C_A ,
\end{align}
where $ T_A $ and $ C_A $ are positive constants depending on $ A $ and $ \eta $.
\end{prop}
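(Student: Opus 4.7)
The plan is to adapt the $L^1$ theory of the massless Boltzmann equation developed in \cite{LNT1} to the present Bianchi~I setting with a general anisotropic metric $b^{ij}$, and then to obtain the singular moment bound on $\|f\|_{L^1_{-2}}$ on a short interval by running a pair of coupled Riccati-type differential inequalities (at the weights $-1$ and $-2$) that exploit Lemma~\ref{lem p'} to tame the singularity at $p=0$.

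For the first assertion I would work in the orthonormal frame of \eqref{ortho_e}--\eqref{ortho_theta}, so that $\widehat Q_\varepsilon$ takes exactly the Minkowski form analysed in \cite{LNT1}. Once $\|b\|\leq A$ is assumed, standard arguments give
\begin{equation*}
\|\widehat Q_\varepsilon(\hat f,\hat f)\|_{L^1}\leq C_{A,\varepsilon}\|\hat f\|_{L^1}^2,\qquad \|\widehat Q_\varepsilon(\hat f,\hat f)-\widehat Q_\varepsilon(\hat g,\hat g)\|_{L^1}\leq C_{A,\varepsilon}\|\hat f+\hat g\|_{L^1}\|\hat f-\hat g\|_{L^1},
\end{equation*}
uniformly in $s\in[0,T]$. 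Picard iteration on the Duhamel integral equation yields a unique local $L^1$ solution, and mass conservation $\partial_s\int f\,d^3p=0$, which comes from the $(p,q)\leftrightarrow(p',q')$ symmetrisation, keeps $\|f(s)\|_{L^1}$ constant, so Gronwall extends the solution to the whole of $[0,T]$. Non-negativity is then propagated by the representation
\begin{equation*}
f(s)=f_0\exp\Bigl(-\int_0^s L[f]\Bigr)+\int_0^s \exp\Bigl(-\int_{s'}^s L[f]\Bigr) Q^+[f,f](s')\,ds',
\end{equation*}
iterated on the positive cone exactly as in Proposition~1 of \cite{LNT1}; the uniqueness obtained above identifies the resulting non-negative fixed point with the $L^1$ solution just constructed.

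For the singular moment I would differentiate
\begin{equation*}
\frac{d}{ds}\|f(s)\|_{L^1_{-2}}=\int Q_\varepsilon(f,f)(p^0)^{-2}\,d^3p+\int f\,\partial_s\!\bigl((p^0)^{-2}\bigr)\,d^3p,
\end{equation*}
bound the weight derivative by $C_A\|f_0\|_{L^1}$ using $\partial_sb^{ij}=-b^{im}b^{jn}K_{nm}$ together with $\|b\|,\|K\|\leq A$, drop the non-positive loss term, and apply the symmetry \eqref{dpdq} in the orthonormal frame to rewrite the gain as
\begin{equation*}
\tfrac12\iiint_{h\geq\varepsilon}\frac{h^{2-\gamma}}{|\hat p|\,|\hat q|}\hat f(\hat p)\hat f(\hat q)\bigl[|\hat p'|^{-2}+|\hat q'|^{-2}\bigr]d\omega\,d^3\hat q\,d^3\hat p.
\end{equation*}
The $\omega$-integration via \eqref{l3} of Lemma~\ref{lem p'} produces exactly the factor $16\pi/h^2$ that cancels two powers of $h$ in the kernel, leaving a bound $C_{A,\varepsilon}\|f\|_{L^1_{-1}}^2$. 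Running the same estimate one weight lower, with \eqref{l2} in place of \eqref{l3}, closes a Riccati inequality for $\|f\|_{L^1_{-1}}$, whose initial value is finite because the pointwise bound $f_0(p)\leq\|f_0\|_{L^\infty_\eta}/(p^0 e^{p^0/\eta^{2\eta}})$ gives, after passage to the orthonormal frame,
\begin{equation*}
\|f_0\|_{L^1_{-1}}\leq C_A\,\eta^{2\eta}\|f_0\|_{L^\infty_\eta}.
\end{equation*}
Solving the two Riccati inequalities in succession produces a time $T_A>0$, depending only on $A$, $\eta$ and $\|f_0\|_{L^\infty_\eta}+\|f_0\|_{L^1_{-2}}$, on which $\|f\|_{L^1_{-2}}$ stays bounded.

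The main obstacle is that a pointwise bound from $L^\infty_\eta$ alone is insufficient to control $L^1_{-2}$: the quantity $\int f/(p^0)^2 d^3p$ picks up the divergent integral $\int d^3\hat p/|\hat p|^3$ at the origin even after the exponential weight is used. The manoeuvre above sidesteps this by never estimating the gain pointwise, instead moving the singular weight onto the post-collision momenta via pre/post symmetry, where Lemma~\ref{lem p'} delivers sharp $\omega$-integrability precisely because the collision kernel itself carries the compensating factor $h^{2-\gamma}$; the cut-off $h\geq\varepsilon$ absorbs the residual $h^{-\gamma}$ into the constants $C_{A,\varepsilon}$, and the restriction $\eta<2/A^2$ is exactly what is needed for the initial $L^1_{-1}$ integral above to converge once $\|b\|\leq A$ is used to bound $p^0$ from above.
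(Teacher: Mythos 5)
Your treatment of the first assertion (existence, mass conservation and non-negativity for the cut-off equation via the orthonormal frame and the arguments of \cite{LNT1}) matches the paper. The moment estimate, however, has a genuine gap: your constants are $\varepsilon$-dependent. You absorb the residual $h^{-\gamma}$ into $C_{A,\varepsilon}$ using the cut-off $h\geq\varepsilon$, and you close Riccati inequalities whose coefficients therefore blow up as $\varepsilon\to 0$; the resulting lifespan and bound would be $T_{A,\varepsilon}$ and $C_{A,\varepsilon}$. But the proposition asserts, and the subsequent passage to the limit $\varepsilon=k^{-1}\to 0$ in Proposition \ref{prop B} requires, that $T_A$ and $C_A$ depend only on $A$ and $\eta$ (e.g.\ the estimate of $J_5$ there uses $\sup_k\|f_k\|_{L^1_{-2}}$). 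Note also that even after cancelling $h^2$ via \eqref{l3}, the remaining kernel $h^{-\gamma}/(p^0q^0)=(p^0q^0)^{-1-\gamma/2}(2\sin(\phi/2))^{-\gamma}$ is \emph{not} controlled by $\|f\|_{L^1_{-1}}^2$: the angular singularity $\sin^{-\gamma}(\phi/2)$ cannot be integrated against the measure $f(q)\,d^3q$ without a pointwise bound on $f(q)$, since $f$ could concentrate along the ray through $\hat p$.

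The paper's route avoids both problems by first \emph{propagating} the $L^\infty_\eta$ bound in time: multiplying by $w_\eta$, using $|\partial_s p^0|\leq (A^2/2)p^0$ and the key inequality $\partial_s w_\eta\leq (A^2/2)w_\eta$ — which holds on a short interval precisely because $\eta<2/A^2$ forces $A^2/2-\eta/(s+\eta^2)\leq 0$ near $s=0$ — and estimating the gain with \eqref{l4} at $\delta=2-\gamma$, one gets an $\varepsilon$-free Riccati inequality for $\|f\|_{L^\infty_\eta}$. This is the only genuinely nonlinear step, and it is where the hypothesis $\eta<2/A^2$ is actually used (not, as you claim, to make an initial momentum integral converge — that integral converges for every $\eta>0$). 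Feeding the resulting pointwise bound $f(q)\leq C_A(q^0)^{-1}e^{-s_\eta^{-1}q^0}$ into the $L^1_{-2}$ gain term makes the $q$-integral, angular singularity included, explicitly computable and bounded by $Cs_\eta^{1-\gamma/2}$, so the $L^1_{-2}$ inequality becomes \emph{linear} (Gr\"onwall) with $\varepsilon$-independent constants. You should restructure your argument around this $L^\infty_\eta$ propagation step; without it the estimate neither closes uniformly in $\varepsilon$ nor uses the hypothesis $\eta<2/A^2$ where it is needed.
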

\begin{proof}
We first estimate the $ L^\infty_\eta $-norm of $ f $. Multiplying the equation \eqref{Bmod} by $ w_\eta $, we obtain
\begin{align}
\frac{ \partial ( w_\eta f ) }{ \partial s } = \frac{ \partial w_\eta }{ \partial s } f + w_\eta Q_\varepsilon ( f , f ) .\label{L infty 0}
\end{align}
To estimate the first quantity on the right hand side we note that
\begin{align}\label{dp ds}
\left| \frac{ \partial p^0 }{ \partial s } \right| = \left| \frac{ b^{ i m } b^{ j n } K_{ m n } p_i p_j }{ 2 p^0 } \right| \leq \frac{ ( b^{ i m } b^{ j n } K_{ m n } K_{ i j } )^{ \frac12 } ( b^{ k l } p_k p_l ) }{ 2 p^0 } \leq \frac{ A^2 }{ 2 } p^0 ,
\end{align}
and that there exists $ 0 < T_A \leq T $ such that the following holds on $ [ 0 , T_A ] $:
\begin{align}
\frac{ A^2 }{ 2 } - \frac{ \eta }{ s + \eta^2 } \leq 0 ,
\end{align}
since $ 0 < \eta < 2 / A^2 $. We now have
\begin{align}
\frac{ \partial w_\eta }{ \partial s } & = \frac{ \partial p^0 }{ \partial s } \exp \left\{ \frac{ p^0 }{ ( s + \eta^2 )^\eta } \right\} + p^0 \left\{ \frac{ \partial p^0 }{ \partial s } \frac{ 1 }{ ( s + \eta^2 )^\eta } - \eta \frac{ p^0 }{ ( s + \eta^2 )^{ \eta + 1 } } \right\} \exp \left\{ \frac{ p^0 }{ ( s + \eta^2 )^\eta } \right\} \nonumber \\
& \leq \frac{ A^2 }{ 2 } p^0 \exp \left\{ \frac{ p^0 }{ ( s + \eta^2 )^\eta } \right\} + p^0 \left\{ \frac{ A_2 }{ 2 } p^0 \frac{ 1 }{ ( s + \eta^2 )^\eta } - \eta \frac{ p^0 }{ ( s + \eta^2 )^{ \eta + 1 } } \right\} \exp \left\{ \frac{ p^0 }{ ( s + \eta^2 )^\eta } \right\} \nonumber \\
& = \frac{ A^2 }{ 2 } p^0 \exp \left\{ \frac{ p^0 }{ ( s + \eta^2)^\eta } \right\} + \frac{ ( p^0 )^2 }{ ( s + \eta^2 )^\eta } \left\{ \frac{ A_2 }{ 2 } - \frac{ \eta }{ s + \eta^2 } \right\} \exp \left\{ \frac{ p^0 }{ ( s + \eta^2 )^\eta } \right\} \nonumber \\
& \leq \frac{ A^2 }{ 2 } w_\eta , \label{dw ds}
\end{align}
on $ [ 0 , T_A ] $. Hence, we obtain from the equation \eqref{L infty 0} the following inequality:
\begin{align}
\frac{ \partial ( w_\eta f ) }{ \partial s } \leq \frac{ A^2 }{ 2 } w_\eta f + w_\eta Q_\varepsilon ( f , f ) . \label{L infty 1}
\end{align}
The second quantity on the right hand side of \eqref{L infty 0} is estimated as follows. Since the solution is non-negative, we obtain
\begin{align}
w_\eta Q_\varepsilon ( f , f ) & \leq ( \det b )^{ \frac12 } \iint_{ h \geq \varepsilon } \frac{ h^{ 2 - \gamma } } { p^0 q^0 } p^0 \exp ( s^{ - 1 }_\eta p^0 ) f ( p' ) f ( q' ) \, d \omega \, d^3 q \nonumber \\
& \leq ( \det b )^{ \frac12 } \| f \|_{ L^\infty_\eta }^2 \int_{ \bbr^3 } \int_{ \bbs^2 } \frac{ h^{ 2 - \gamma } }{ q^0 } \exp( s^{ - 1 }_\eta p^0 ) \frac{ 1 }{ p'^0 } \exp( - s^{ - 1 }_\eta p'^0 ) \frac{ 1 }{ q'^0 } \exp( - s^{ - 1 }_\eta q'^0 ) \, d \omega\, d^3 q \nonumber \\
& = ( \det b )^{ \frac12 } \| f \|_{ L^\infty_\eta }^2 \int_{ \bbr^3 } \int_{ \bbs^2 } \frac{ h^{ 2 - \gamma } }{ q^0 } \exp( - s^{ - 1 }_\eta q^0 ) \frac{ 1 }{ p'^0 q'^0 } \, d \omega \, d^3 q ,
\end{align}
where we used the conservation of energy \eqref{conservation}. We apply \eqref{l4} of Lemma \ref{lem p'} to the integration on $ \bbs^2 $. We choose $ \delta = 2 - \gamma $ to obtain
\begin{align}
\int_{ \bbs^2 } \frac{ 1 }{ p'^0 q'^0 } \, d \omega \leq \frac{ C }{ h^{ 2 - \gamma } ( n^0 )^\gamma } \leq \frac{ C }{ h^{ 2 - \gamma } ( q^0 )^\gamma } ,
\end{align}
which implies that since $ 1 < \gamma < 2 $,
\begin{align}
w_\eta Q_\varepsilon ( f , f ) & \leq C ( \det b )^{ \frac12 } \| f \|_{ L^\infty_\eta }^2 \int_{ \bbr^3 } \frac{ 1 }{ ( q^0 )^{ 1 + \gamma } } \exp( - s^{ - 1 }_\eta q^0 ) \, d^3 q \nonumber \\
& \leq C s_\eta^{ 2 - \gamma } \| f \|_{ L^\infty_\eta }^2 . \label{L infty 2}
\end{align}
We combine \eqref{L infty 1} and \eqref{L infty 2} to obtain a Gr{\"o}nwall type inequality:
\begin{align}
\frac{ d }{ d s } \| f \|_{ L^\infty_\eta } \leq \frac{ A^2 }{ 2 } \| f \|_{ L^\infty_\eta } + C s_\eta^{ 2 - \gamma } \| f \|^2_{ L^\infty_\eta } .
\end{align}
Hence, the norm $ \| f \|_{ L^\infty_\eta } $ is bounded on a time interval, which we still denote by $ [ 0 , T_A ] $, so that we have
\begin{align}
\sup_{ 0 \leq s \leq T_A } \| f ( s ) \|_{ L^\infty_\eta } \leq C_A , \label{L infty}
\end{align}
where $ C_A $ is a constant depending on $ A $ and $ \eta $.

Next, we consider the $ L^1_r $-norm of $ f $. Multiplying \eqref{Bmod} by $ ( p^0 )^r $, integrating it over $ \bbr^3 $, and using the change of variables \eqref{dpdq}, we obtain
\begin{align}
& \frac{ d }{ d s } \int_{ \bbr^3 } f ( p ) ( p^0 )^r \, d^3 p = r \int_{ \bbr^3 } f ( p ) ( p^0 )^{ r - 1 } \frac{ \partial p^0 }{ \partial s } \, d^3 p \nonumber \\
& \qquad + \frac{ ( \det b )^{ \frac12 } }{ 2 } \iiint_{ h \geq \varepsilon } \frac{ h^{ 2 - \gamma } }{ p^0 q^0 } f ( p ) f ( q ) ( ( p'^0 )^r + ( q'^0 )^r - ( p^0 )^r - ( q^0 )^r ) \, d \omega \, d^3 q \, d^3 p . \label{intQmod}
\end{align}
For $ r = 0 $, we obtain
\begin{align}
\frac{ d }{ d s } \int_{ \bbr^3 } f ( p ) \, d^3 p = 0 .
\end{align}
Since $ f $ is non-negative, we have 
\begin{align}
\| f ( s ) \|_{ L^1 } = \| f_0 \|_{ L^1 } , \qquad 0 \leq s \leq T_A . \label{L10}
\end{align}
In the case $ r = - 2 $ we estimate \eqref{intQmod} as follows:
\begin{align}
& \frac{ d }{ d s } \int_{ \bbr^3 } f ( p ) \frac{ 1 }{ ( p^0 )^2 } \, d^3 p \leq 2 \int_{ \bbr^3 } f ( p ) ( p^0 )^{ - 3 } \left| \frac{ \partial p^0 }{ \partial s } \right| \, d^3 p \nonumber \\
& \qquad + \frac{ ( \det b )^{ \frac12 } }{ 2 } \int_{ \bbr^3 } \int_{ \bbr^3 } \int_{ \bbs^2 } \frac{ h^{ 2 - \gamma } }{ p^0 q^0 } f ( p ) f ( q ) \left( \frac{ 1 }{ ( p'^0 )^2 } + \frac{ 1 }{ ( q'^0 )^2 } \right) \, d \omega \, d^3 q \, d^3 p .
\end{align}
The first quantity on the right hand side is estimated by using \eqref{dp ds}:
\begin{align}
2 \int_{ \bbr^3 } | f ( p ) | ( p^0 )^{ - 3 } \left| \frac{ \partial p^0 }{ \partial s } \right| \, d^3 p \leq A^2 \| f \|_{ L^1_{ - 2 } } .
\end{align}
The second quantity is estimated by \eqref{l3} of Lemma \ref{lem p'}:
\begin{align}
\frac{ ( \det b )^{ \frac12 } }{ 2 } \iiint \frac{ h^{ 2 - \gamma } }{ p^0 q^0 } f ( p ) f ( q ) \left( \frac{ 1 }{ ( p'^0 )^2 } + \frac{ 1 }{ ( q'^0 )^2 } \right) \, d \omega \, d^3 q \, d^3 p \leq C ( \det b )^{ \frac12 } \iint \frac{ h^{ - \gamma } }{ p^0 q^0 } f ( p ) f ( q ) \, d^3 q \, d^3 p ,
\end{align}
and we apply \eqref{L infty} to obtain 
\begin{align}
& ( \det b )^{ \frac12 } \iint \frac{ h^{ - \gamma } }{ p^0 q^0 } f ( p ) f ( q ) \, d^3 q \, d^3 p \nonumber \\
& \leq C_A ( \det b )^{ \frac12 } \| f \|_{ L^\infty_\eta } \iint \frac{ h^{ - \gamma } }{ p^0 q^0 } f ( p ) \frac{ 1 }{ q^0 } \exp ( - s_\eta^{ - 1 } q^0 ) \, d^3 q \, d^3 p \nonumber \\
& \leq C_A ( \det b )^{ \frac12 } \| f \|_{ L^\infty_\eta } \iint \frac{ f ( p ) }{ ( p^0 )^{ 1 + \frac{ \gamma }{ 2 } } ( q^0 )^{ 2 + \frac{ \gamma }{ 2 } } \sin^\gamma ( \phi / 2 ) } \exp ( - s_\eta^{ - 1 } q^0 ) \, d^3 q \, d^3 p ,
\end{align}
where we used \eqref{l1}. The integration over $ \bbr^3_q $ can be explicitly computed as follows:
\begin{align}
& ( \det b )^{ \frac12 } \int_{ \bbr^3 } \frac{ 1 }{ ( q^0 )^{ 2 + \frac{ \gamma }{ 2 } } \sin^\gamma ( \phi / 2 ) } \exp ( - s_\eta^{ - 1 } q^0 ) \, d^3 q \nonumber \\
& = \int_{ \bbr^3 } \frac{ 1 }{ | \hat{ q } |^{ 2 + \frac{ \gamma }{ 2 } } \sin^\gamma ( \phi / 2 ) } \exp ( - s_\eta^{ - 1 } | \hat{ q } | ) \, d^3 \hat{ q } \nonumber \\
& = 2 \pi \int_0^\infty \int_0^\pi \frac{ \sin \phi }{ r^{ \frac{ \gamma }{ 2 } } \sin^\gamma ( \phi / 2 ) } \exp ( - s_\eta^{ - 1 } r ) \, d \phi \, d r \nonumber \\
& \leq C \int_0^\infty \frac{ 1 }{ r^{ \frac{ \gamma }{ 2 } } } \exp ( - s_\eta^{ - 1 } r ) \, d r \nonumber \\
& \leq C s_\eta^{ 1 - \frac{ \gamma }{ 2 } } ,
\end{align}
where $ C $ is a constant depending on $ \gamma $. Hence, we obtain
\begin{align}
\frac{ d }{ d s } \| f \|_{ L^1_{ - 2 } } & \leq A^2 \| f \|_{ L^1_{ - 2 } } + C_A s_\eta^{ 1 - \frac{ \gamma }{ 2 } } \| f \|_{ L^\infty_\eta } \| f \|_{ L^1_{ - 1 - \gamma / 2 } } \leq C_A \left( \| f \|_{ L^1 } + \| f \|_{ L^1_{ - 2 } } \right) .
\end{align}
Together with \eqref{L10} we conclude that the norm $ \| f \|_{ L^1_{ - 2 } } $ is bounded on a time interval, which we still denote by $ [ 0 , T_A ] $, so that we have
\[
\sup_{ 0 \leq s \leq T_A } \| f ( s ) \|_{ L^1_{ -2 } } \leq C_A ,
\]
which completes the proof.
\end{proof}

\subsection{Existence for the Boltzmann equation}\label{ExistBoltzmann}
In this part we prove the existence for the Boltzmann equation \eqref{B}. Proposition \ref{prop mod} shows that the modified equation \eqref{Bmod} with $ \varepsilon > 0 $ has a solution in $ L^1_{ - 2 } ( \bbr^3 ) $. We now take the limit $ \varepsilon \to 0 $ and obtain a solution to the Boltzmann equation \eqref{B}. This will be done in $ L^1_{ - 1 } ( \bbr^3 ) $.

\begin{prop}\label{prop B}
Suppose that there exist a time interval $ [ 0 , T ] $ and a positive constant $ A $ such that $ b^{ i j } $ and $ K_{ i j } $ are defined on $ [ 0 , T ] $ and satisfy
\begin{align}
\sup_{ 0 \leq s \leq T } \| b ( s ) \| \leq A , \qquad \sup_{ 0 \leq s \leq T } \| K ( s ) \| \leq A . 
\end{align}
Then, for any initial data $ 0 \leq f_0 \in L^1_1 ( \bbr^3 ) \cap L^1_{ - 2 } ( \bbr^3 ) \cap L^\infty_\eta ( \bbr^3 ) $ for some $ 0 < \eta < 2 / A^2 $, there exists $ 0 < T_A \leq T $ such that the Boltzmann equation \eqref{B} has a unique non-negative solution $ f \in C^1 ( [ 0 , T_A ] ; L^1 ( \bbr^3 ) ) $ satisfying
\begin{align}
\sup_{ 0 \leq s \leq T_A } \| f ( s ) \|_{ L^1_{ - 1 } } + \sup_{ 0 \leq s \leq T_A } \| f ( s ) \|_{ L^1_1 } + \sup_{ 0 \leq s \leq T_A } \| f ( s ) \|_{ L^\infty_\eta } & \leq C_A , \label{prop B1} \\
\sup_{ 0 \leq s \leq T_A } \left\| \frac{ \partial f }{ \partial s } ( s ) \right\|_{ L^1_{ - 1 } } + \sup_{ 0 \leq s \leq T_A } \left\| \frac{ \partial f }{ \partial s } ( s ) \right\|_{ L^1_1 } & \leq C_{A} , \label{prop B2} 
\end{align}
where $ T_A $ and $ C_A $ are positive constants depending on $ A $ and $ \eta $.
\end{prop}
\begin{proof}
Since the proof is almost the same as the proof of Theorem 1 of \cite{LNT1}, we only give a sketch of it. Let $ f_k $ denote the solution of \eqref{Bmod} with $ \varepsilon = k^{ - 1 } $ for $ k \in \bbn $. We will estimate $ \| f_k - f_m \|_{ L^1 } $ and $ \| f_k - f_m \|_{ L^1_{ - 1 } } $ for $ k < m $. Using \eqref{Bmod} we first obtain
\begin{align}
& \partial_s | f_k - f_m | ( p ) \nonumber \\
& \leq \frac{ ( \det b )^{ \frac12 } }{ 2 } \iint_{ h \geq k^{ - 1 } } \frac{ h^{ 2 - \gamma } }{ p^0 q^0 } \{ ( f_k + f_m ) ( p' ) | f_k - f_m | ( q' ) + ( f_k + f_m ) ( q' ) | f_k - f_m | ( p' ) \nonumber \\
& \qquad \qquad \qquad \qquad \qquad + ( f_k + f_m ) ( p ) | f_k - f_m | ( q ) - ( f_k + f_m ) ( q ) | f_k - f_m | ( p ) \} \, d \omega \, d^3 q \nonumber \\
& \quad + ( \det b )^{ \frac12 } \iint_{ m^{ - 1 } \leq h \leq k^{ - 1 } } \frac{ h^{ 2 - \gamma } } { p^0 q^0 } ( f_m ( p' ) f_m ( q' ) + f_m ( p ) f_m ( q ) ) \, d \omega \, d^3 q ,
\end{align}
and multiplying both sides by $ ( p^0 )^r $ and integrating it over $ \bbr^3_p $ we have
\begin{align}
& \frac{ d }{ d s } \| f_k - f_m \|_{ L^1_r } - \int_{ \bbr^3 } | f_k - f_m | ( p ) \left( r ( p^0 )^{ r - 1 } \frac{ \partial p^0 }{ \partial s } \right) \, d^3 p \nonumber \\
& \leq \frac{ ( \det b )^{ \frac12 } }{ 2 } \iiint_{ h \geq k^{ - 1 } } \frac{ h^{ 2 - \gamma } } { p^0 q^0 } \{ ( f_k + f_m ) ( p' ) | f_k - f_m | ( q' ) + ( f_k + f_m ) ( q' ) | f_k - f_m | ( p' ) \nonumber \\
& \qquad \qquad \qquad \qquad \qquad + ( f_k + f_m ) ( p ) | f_k - f_m | ( q ) - ( f_k + f_m ) ( q ) | f_k - f_m | ( p ) \} ( p^0 )^r \, d \omega \, d^3 q \, d^3 p \nonumber \\
& \quad + ( \det b )^{ \frac12 } \iiint_{ m^{ - 1 } \leq h \leq k^{ - 1 } } \frac{ h^{ 2 - \gamma } } { p^0 q^0 } ( f_m ( p' ) f_m ( q' ) + f_m ( p ) f_m ( q ) ) ( p^0 )^r \, d \omega \, d^3 q \, d^3p \nonumber \\
& = \frac{ ( \det b )^{ \frac12 } }{ 2 } \iiint_{ h \geq k^{ - 1 } } \frac{ h^{ 2 - \gamma } }{ p^0 q^0 } ( f_k + f_m ) ( p ) | f_k - f_m | ( q ) \{ ( p'^0 )^r + ( q'^0 )^r + ( p^0 )^r - ( q^0 )^r \} \, d \omega \, d^3 q \, d^3 p \nonumber \\
& \quad + ( \det b )^{ \frac12 } \iiint_{ m^{ - 1 } \leq h \leq k^{ - 1 } } \frac{ h^{ 2 - \gamma } }{ p^0 q^0 } f_m ( p ) f_m ( q ) ( ( p'^0 )^r + ( p^0 )^r ) \, d \omega \, d^3 q \, d^3 p .
\end{align}
Now, for $ r = 0 $ we have
\begin{align}
& \frac{ d }{ d s } \| f_k - f_m \|_{ L^1 } \leq I_1 + I_2 ,
\end{align}
where
\begin{align}
I_1 & = ( \det b )^{ \frac12 } \int_{ \bbr^3 } \int_{ \bbr^3 } \int_{ \bbs^2 } \frac{ h^{ 2 - \gamma } }{ p^0 q^0 } ( f_k + f_m ) ( p ) | f_k - f_m | ( q ) \, d \omega \, d^3 q \, d^3 p , \\
I_2 & = 2 ( \det b )^{ \frac12 } \iiint_{ h \leq k^{ - 1 } } \frac{ h^{ 2 - \gamma } }{ p^0 q^0 } f_m ( p ) f_m ( q ) \, d \omega \, d^3 q \, d^3 p .
\end{align}
The estimates of $ I_1 $ and $ I_2 $ are the same as the estimates (36) and (37) of \cite{LNT1}:
\begin{align}
I_1 & \leq C ( \det b )^{ \frac12 } \iint \frac{ 1 }{ ( p^0 )^{ \frac{ \gamma }{ 2 } } ( q^0 )^{ \frac{ \gamma }{ 2 } } } ( f_k + f_m ) ( p ) | f_k - f_m | ( q ) \, d^3 q \, d^3 p \nonumber \\
& \leq C_A \sup_k \| f_k \|_{ L^1_{ - \gamma / 2 } } \| f_k - f_m \|_{ L^1_{ - \gamma / 2 } } , \label{L101}
\end{align}
and
\begin{align}
I_2 & \leq C ( \det b )^{ \frac12 } k^{ - 2 + \gamma } \int_{ \bbr^3 } \int_{ \bbr^3 } \frac{ 1 }{ p^0 q^0 } f_m ( p ) f_m ( q ) \, d^3 q \, d^3 p \nonumber \\ 
& \leq C_A k^{ - 2 + \gamma } \sup_k \| f_k \|^2_{ L^1_{ - 1 } } . \label{L102}
\end{align}
In a similar way, for $ r = - 1 $ we obtain
\begin{align}
\frac{ d }{ d s } \| f_k - f_m \|_{ L^1_{ - 1 } } \leq J_0 + J_1 + J_2 + J_3 + J_4 + J_5 ,
\end{align}
where
\begin{align}
J_0 & = - \int_{ \bbr^3 } | f_k - f_m | ( p ) \left( ( p^0 )^{ - 2 } \frac{ \partial p^0 }{ \partial s } \right) \, d^3 p, \\
J_1 & = \frac{ ( \det b )^{ \frac12 } }{ 2 } \int_{ \bbr^3 } \int_{ \bbr^3 } \int_{ \bbs^2 } \frac{ h^{ 2 - \gamma } }{ p^0 q^0 } ( f_k + f_m ) ( p ) | f_k - f_m | ( q ) \frac{ 1 }{ p'^0 } \, d \omega \, d^3 q \, d^3 p , \\
J_2 & = \frac{ ( \det b )^{ \frac12 } }{ 2 } \int_{ \bbr^3 } \int_{ \bbr^3 } \int_{ \bbs^2 } \frac{ h^{ 2 - \gamma } }{ p^0 q^0 } ( f_k + f_m ) ( p ) | f_k - f_m | ( q ) \frac{ 1 }{ q'^0 } \, d \omega \, d^3 q \, d^3 p , \\
J_3 & = \frac{ ( \det b )^{ \frac12 } }{ 2 } \int_{ \bbr^3 } \int_{ \bbr^3 } \int_{ \bbs^2 } \frac{ h^{ 2 - \gamma } }{ p^0 q^0 } ( f_k + f_m ) ( p ) | f_k - f_m | ( q ) \frac{ 1 }{ p^0 } \, d \omega \, d^3 q \, d^3 p , \\
J_4 & = ( \det b )^{ \frac12 } \iiint_{ h \leq k^{ - 1 } } \frac{ h^{ 2 - \gamma } }{ p^0 q^0 } f_m ( p ) f_m ( q ) \frac{ 1 }{ p'^0 } \, d \omega \, d^3 q \, d^3 p , \\
J_5 & = ( \det b )^{ \frac12 } \iiint_{ h \leq k^{ - 1 } } \frac{ h^{ 2 - \gamma } }{ p^0 q^0 } f_m ( p ) f_m ( q ) \frac{ 1 }{ p^0 } \, d \omega \, d^3 q \, d^3 p .
\end{align}
The quantity $\partial p^0 / \partial s$ in $J_0$ is estimated by \eqref{dp ds} so that we have
\begin{align}
| J_0 | \leq \frac{ A^2 }{ 2 } \| f_k - f_m \|_{ L^1_{ - 1 } } . \label{L110}
\end{align}
The estimates of $ J_1 $, $ J_2 $, $ J_3 $, $ J_4 $ and $ J_5 $ are the same as the estimates (38)--(42) of \cite{LNT1}. For $ J_3 $ and $ J_5 $ we have
\begin{align}
J_3 & \leq C ( \det b )^{ \frac12 } \iint \frac{ 1 }{ ( p^0 )^{ 1 + \frac{ \gamma }{ 2 } } ( q^0 )^{ \frac{ \gamma }{ 2 } } } ( f_k + f_m ) ( p ) | f_k - f_m | ( q ) \, d^3 q \, d^3 p \nonumber \\
& \leq C_A \sup_k \| f_k \|_{ L^1_{ - 1 - \gamma / 2 } } \| f_k - f_m \|_{ L^1_{ - \gamma / 2 } } , \label{L113}
\end{align}
and
\begin{align}
J_5 & \leq C_A k^{ - 2 + \gamma } \sup_k \| f_k \|_{ L^1_{ - 2 } } \sup_m \| f_m \|_{ L^1_{ - 1 } } . \label{L115}
\end{align}
For $ J_1 $, $ J_2 $ and $ J_4 $ we need to choose $ 0 < \delta < 1 $ satisfying
\begin{align}
\gamma + \delta < 2 ,
\end{align}
and applying Young's inequality to the the estimate \eqref{l2} of Lemma \ref{lem p'} we obtain
\begin{align}
\int_{ \bbs^2 } \frac{ 1 }{ p'^0 } \, d \omega \leq \frac{ C }{ h^\delta ( n^0 )^{ 1 - \delta } } \leq \frac{ C }{ h^\delta ( p^0 )^{ \frac{ \gamma - \delta }{ 2 } } ( q^0 )^{ \frac{ 2 - \gamma - \delta }{ 2 } } } .
\end{align}
Applying this to $ J_1 $ we obtain the following estimate:
\begin{align}
J_1 & \leq C ( \det b )^{ \frac12 } \iint \frac{ h^{ 2 - \gamma - \delta } }{ ( p^0 )^{ 1 + \frac{ \gamma - \delta }{ 2 } } ( q^0 )^{ 1 + \frac{ 2 - \gamma - \delta }{ 2 } } } ( f_k + f_m ) ( p ) | f_k - f_m | ( q ) \, d^3 q \, d^3 p \nonumber \\
& \leq C_A \iint \frac{ 1 }{ ( p^0 )^\gamma q^0 } ( f_k + f_m ) ( p ) | f_k - f_m | ( q ) \, d^3 q \, d^3 p \nonumber \\
& \leq C_A \sup_k \| f_k \|_{ L^1_{ - \gamma } } \| f_k - f_m \|_{ L^1_{ - 1 } } . \label{L111}
\end{align}
We obtain the same estimate for $ J_2 $. For $ J_4 $ we have the following estimate:
\begin{align}
J_4 & \leq C ( \det b )^{ \frac12 } \iint_{ h \leq k^{ - 1 } } \frac{ h^{ 2 - \gamma - \delta } }{ ( p^0 )^{ 1 + \frac{ \gamma - \delta }{ 2 } } ( q^0 )^{ 1 + \frac{ 2 - \gamma - \delta }{ 2 } } } f_m ( p ) f_m ( q ) \, d^3 q \, d^3 p \nonumber \\
& \leq C_A k^{ - 2 + \gamma + \delta } \int_{ \bbr^3 } \int_{ \bbr^3 } \frac{ 1 }{ ( p^0 )^{ 1 + \frac{ \gamma - \delta }{ 2 } } ( q^0 )^{ 1 + \frac{ 2 - \gamma - \delta }{ 2 } } } f_m ( p ) f_m ( q ) \, d^3 q \, d^3 p \nonumber \\
&\leq C_A k^{ - 2 + \gamma + \delta } \sup_k \| f_k \|_{ L^1_{ - 1 - ( \gamma - \delta ) / 2 } } \sup_m \| f_m \|_{ L^1_{ - 2 + ( \gamma + \delta ) / 2 } } . \label{L114}
\end{align}
Here, we note that $ - 2 < - 1 - ( \gamma - \delta ) / 2 < - 1 $ and $ - 3 / 2 < - 2 + ( \gamma + \delta ) / 2 < - 1 $. We now combine the estimates \eqref{L101}--\eqref{L114} and apply Proposition \ref{prop mod} to obtain the following Gr{\"o}nwall type inequality:
\begin{align}
& \frac{ d }{ d s } \left( \| f_k - f_m \|_{ L^1 } + \| f_k - f_m \|_{ L^1_{ - 1 } } \right) \leq C_A \left( k^{ - 2 + \gamma + \delta } + \| f_k - f_m \|_{ L^1 } + \| f_k - f_m \|_{ L^1_{ - 1 } } \right) ,
\end{align}
and conclude that the following holds on a time interval, which we still denote by $ [ 0 , T_A ] $:
\begin{align}
\sup_{ 0 \leq s \leq T_A } \| ( f_k - f_m ) ( s ) \|_{ L^1 } + \| ( f_k - f_m ) ( s ) \|_{ L^1_{ - 1 } } \leq C_A k^{ - 2 + \gamma + \delta } .
\end{align}
This shows that the sequence $ f_k $ converges in $ L^1 \cap L^1_{ - 1 } $ as $ k \to \infty $, and we obtain a solution of the Boltzmann equation \eqref{B}.

Finally, we consider the estimates \eqref{prop B1} and \eqref{prop B2}. Note that we already obtained the boundedness of $ \| f \|_{ L^1 } $ and $ \| f \|_{ L^1_{ - 1 } } $. For $ \| f \|_{ L^1_1 } $ we use \eqref{intQmod} with the restriction $ h \geq \varepsilon $ removed. By the energy conservation \eqref{conservation} and the estimate \eqref{dp ds} we obtain
\begin{align}
\frac{ d }{ d s } \int_{ \bbr^3 } f ( p ) p^0 \, d^3 p = \int_{ \bbr^3 } f ( p ) \frac{ \partial p^0 }{ \partial s } \, d^3 p \leq \frac{ A^2 }{ 2 } \int_{ \bbr^3 } f ( p ) p^0 \, d^3 p .
\end{align}
This shows that $ \| f \|_{ L^1_1 } $ is bounded on $ [ 0 , T_A ] $. The boundedness of $ \| f \|_{ L^\infty_\eta } $ is proved by the same argument as in \eqref{L infty}, and we obtain the estimates \eqref{prop B1}. For the boundedness of $ \| \partial f / \partial s \|_{ L^1_r } $ we consider $ \| Q_\pm ( f , f ) \|_{ L^1_r } $, where the gain and loss terms should be estimated separately. For $ r = 1 $, the loss term is estimated as follows: 
\begin{align}
\| Q_- ( f , f ) \|_{ L^1_1 } & = ( \det b )^{ \frac12 } \iiint \frac{ h^{ 2 - \gamma } }{ p^0 q^0 } f ( p ) f ( q ) p^0 \, d \omega \, d^3 q \, d^3 p \nonumber \\
& \leq C_A \iint ( p^0 )^{ 1 - \frac{ \gamma }{ 2 } } ( q^0 )^{ - \frac{ \gamma }{ 2 } } f ( p ) f ( q ) \, d^3 q \, d^3 p \nonumber \\ 
& \leq C_A \left( \| f \|_{ L^1_{ - 1 } } + \| f \|_{ L^1_1 } \right)^2 , 
\end{align}
by an interpolation, since $ 1 < \gamma < 2 $. The gain term can be written as
\begin{align}
\| Q_+ ( f , f ) \|_{ L^1_1 } & = ( \det b )^{ \frac12 } \iiint \frac{ h^{ 2 - \gamma } }{ p^0 q^0 } f ( p' ) f ( q' ) p^0 \, d \omega \, d^3 q \, d^3 p \nonumber \\
& = ( \det b )^{ \frac12 } \iiint \frac{ h^{ 2 - \gamma } }{ p^0 q^0 } f ( p ) f ( q ) p'^0 \, d \omega \, d^3 q \, d^3 p .
\end{align}
Since $ p'^0 \leq p^0 + q^0 $, we have the same estimate as for the loss term. For $ r = - 1 $, we need to use \eqref{L infty}, which still holds for the original case \eqref{B}. For the gain term, we apply \eqref{l2} of Lemma \ref{lem p'} with $ \delta = 2 - \gamma $ to obtain
\begin{align}
\| Q_+ ( f , f ) \|_{ L^1_{ - 1 } } & = ( \det b )^{ \frac12 } \iiint \frac{ h^{ 2 - \gamma } }{ p^0 q^0 } f ( p ) f ( q ) \frac{ 1 }{ p'^0 } \, d \omega \, d^3 q \, d^3 p \nonumber \\
& \leq C_A \iint \frac{ 1 }{ p^0 q^0 ( n^0 )^{ \gamma - 1 } } f ( p ) f ( q ) \, d^3 q \, d^3 p \allowdisplaybreaks \nonumber \\
& \leq C_A \| f \|_{ L^1_{ - 1 } } \| f \|_{ L^\infty_\eta } \int \frac{ 1 }{ ( q^0 )^{ \gamma + 1 } } \exp ( - s^{ - 1 }_\eta q^0 ) \, d^3 q \nonumber \\
& \leq C_A s_\eta^{ 2 - \gamma }  \| f \|_{ L^1_{ - 1 } } \| f \|_{ L^\infty_\eta } , 
\end{align}
which is bounded on $ [ 0 , T_A ] $. For the loss term we simply use Lemma \ref{lem h} to obtain 
\begin{align}
\| Q_- ( f , f ) \|_{L^1_{ - 1 } } & = ( \det b )^{ \frac12 } \iiint \frac{ h^{ 2 - \gamma } }{ p^0 q^0 } f ( p ) f ( q ) \frac{ 1 }{ p^0 } \, d \omega \, d^3 q \, d^3 p \nonumber \\
& \leq C_A \iint \frac{ 1 }{ ( p^0 )^{ 1 + \frac{ \gamma }{ 2 } } ( q^0 )^{ \frac{ \gamma }{ 2 } } } f ( p ) f ( q ) \, d^3 q \, d^3 p \nonumber \\
& \leq C_A \| f \|_{ L^\infty_\eta } \| f \|_{ L^1_{ - \gamma / 2 } } \int \frac{ 1 }{ ( p^0 )^{ 2 + \frac{ \gamma }{ 2 } } } \exp ( - s^{ - 1 }_\eta p^0 ) \, d^3 p \nonumber \\
& \leq C_A s_\eta^{ 1 - \frac{ \gamma }{ 2 } } \| f \|_{ L^\infty_\eta } \| f \|_{ L^1_{ - \gamma / 2 } } , 
\end{align}
which is also bounded. We obtain the estimates \eqref{prop B2}, and this completes the proof.
\end{proof}

\subsection{Continuous dependence on the metric}\label{ContiBoltzmann}
In Proposition \ref{prop B} we showed that the Boltzmann equation has a solution for given $ b^{ i j } $ and $ K_{ i j } $. We further need to show that the solution depends continuously on $ b^{ i j } $ and $ K_{ i j } $ in order to obtain the existence result for the coupled Einstein-Boltzmann system.

Suppose that two metrics $ { b_1 }^{ i j } $ and $ { b_2 }^{ i j } $ are given, so that we have two solutions $ f_1 $ and $ f_2 $ corresponding to $ { b_1 }^{ i j } $ and $ { b_2 }^{ i j } $ respectively, but also have two post-collision momenta $ p_1' $ and $ p_2' $, corresponding to $ { b_1 }^{ i j } $ and $ { b_2 }^{ i j } $ respectively. Hence, we have to estimate $ f_1 ( p'_1 ) - f_2 ( p'_2 ) $. This will be estimated by considering
\begin{align}
f_1 ( p_1' ) - f_1 ( p_2' ) + ( f_1 - f_2 ) ( p_2' ),
\end{align}
so that we have to consider $ p $-derivatives of $ f $ in order to estimate the first two terms above in terms of $ { b_1 }^{ i j } - { b_2 }^{ i j } $. This will be done in Lemma \ref{lem dfdp} in Section \ref{sdfdp}. The continuous dependence will be obtained in Proposition \ref{prop C} in Section \ref{scontidepmet}.

\subsubsection{Estimates of $ p $-derivatives of $ f $}\label{sdfdp}
We first estimate $ \partial p' / \partial p $ in Lemma \ref{lem dp'dp}. This will be used in Lemma \ref{lem dfdp} to estimate $ \partial f / \partial p $.

\begin{lemma} \label{lem dp'dp}
Suppose that there exist a time interval $ [ 0 , T ] $ and a positive constant $ A $ such that $ { b }^{ i j } $ and $ { K }_{ i j } $ are defined on $ [ 0 , T ] $ and satisfy
\begin{align}
\sup_{ 0 \leq s \leq T } \| b ( s ) \| \leq A , \qquad \sup_{ 0 \leq s \leq T } \| K ( s ) \| \leq A , \qquad \inf_{ 0 \leq s \leq T } \det b ( s ) \geq \frac{ 1 }{ A } . 
\end{align}
Then, we have the following estimate on $ [ 0 , T ] $: 
\begin{align}
\left| \frac{ \partial p_i' }{ \partial p_l } \right| \leq C_A \left( 1 + \sqrt{ \frac{ q^0 }{ p^0 } } \right) , 
\end{align}
where $ C_A $ is a positive constant depending on $ A $. 
\end{lemma}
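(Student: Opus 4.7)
My plan is to differentiate the explicit parametrization \eqref{p'} term by term, and then bound each contribution using the structural bounds already proved in the paper, keeping a careful eye on the one place where a genuine singularity can appear, namely the derivative of $h$.

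First I would write
\[
2\,p_i' = n_i + h\,\omega_j\,{e^j}_i + \frac{ n_l\,\omega_k\,\delta^{jk}\,{e_j}^l\, n_i }{ n^0 + h },
\]
and compute the partial derivatives of the building blocks with respect to $p_l$. The quantities $n_i=p_i+q_i$ and $e^j{}_i$, $e_j{}^i$ do not depend on $p_l$ directly (the latter depend only on $b^{ij}$), so differentiating $n_i$ contributes $\delta^l_i$ and is bounded. For $p^0=\sqrt{b^{ij}p_ip_j}$ and $n^0=p^0+q^0$ one has
\[
\frac{\partial p^0}{\partial p_l}=\frac{b^{il}p_i}{p^0},\qquad \left|\frac{\partial p^0}{\partial p_l}\right|\leq\sqrt{b^{ll}}\leq C_A,
\]
by Cauchy--Schwarz in the inner product defined by $b^{ij}$, which is harmless. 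The size of $e_j{}^i$ and $e^i{}_j$ is controlled by Lemma~\ref{lem e} together with the lower bound $\det b\geq 1/A$, so these contribute only factors bounded by $C_A$.

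The delicate point is $\partial h/\partial p_l$, because the representation $h=\sqrt{2p^0q^0-2b^{ij}p_iq_j}$ has $h$ in the denominator after differentiating. A direct computation gives
\[
\frac{\partial h}{\partial p_l}=\frac{1}{h}\left(\frac{b^{il}p_i}{p^0}q^0-b^{lj}q_j\right)=\frac{q^0}{h}\left(\frac{b^{il}p_i}{p^0}-\frac{b^{lj}q_j}{q^0}\right).
\]
The key observation is that the bracket on the right vanishes exactly where $h$ does: passing to an orthonormal frame via \eqref{phat} one recognises it as ${\hat e}^l$ applied to $\hat p/p^0-\hat q/q^0$, and the inequality \eqref{Wsin} (which was already used in Section~\ref{sDH}) yields $|\hat p_l/p^0-\hat q_l/q^0|\leq h/\sqrt{p^0q^0}$. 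Combining with $\|e^{-1}\|\leq C_A$ from Lemma~\ref{lem e}, one obtains
\[
\left|\frac{\partial h}{\partial p_l}\right|\leq C_A\sqrt{\frac{q^0}{p^0}}.
\]
This is the only place where the factor $\sqrt{q^0/p^0}$ arises in the final bound.

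It then remains to assemble the pieces. The first two terms of $2p_i'$ give $\delta^l_i+\omega_j e^j{}_i\,\partial h/\partial p_l$, bounded by $C_A(1+\sqrt{q^0/p^0})$. In the third term, factors of the form $n_l/(n^0+h)$ and $n_i/(n^0+h)$ are bounded by $1$, $\partial n^0/\partial p_l$ is bounded by $C_A$, $\partial h/\partial p_l$ contributes the $\sqrt{q^0/p^0}$ factor, and all remaining $e$-factors are controlled by Lemma~\ref{lem e}. Summing gives the stated inequality $|\partial p_i'/\partial p_l|\leq C_A(1+\sqrt{q^0/p^0})$. The main obstacle in the proof is the identification that the apparent $1/h$ singularity in $\partial h/\partial p_l$ is spurious, and this is handled exactly by the sharp inequality \eqref{Wsin}; every other estimate is bookkeeping with bounds on $b^{ij}$, $\det b$ and the orthonormal frame.
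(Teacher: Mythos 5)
Your proposal is correct and follows essentially the same route as the paper: differentiate the parametrization \eqref{p'} term by term, bound the frame components via Lemma~\ref{lem e} and the assumptions on $\|b\|$ and $\det b$, and isolate the only singular contribution in $\partial h/\partial p_l=(q^0/h)\,b^{il}(p_i/p^0-q_i/q^0)$, which is tamed by \eqref{Wsin} to give the factor $\sqrt{q^0/p^0}$. Nothing further is needed.
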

\begin{proof}
Recall that $ p' $ is given in \eqref{p'}: 
\begin{align*}
p'_i & = \frac{1}{2}\left(n_i + h \omega_j { e^j }_i + \frac{ n_l \omega_k \delta^{ j k } { e_j }^l n_i }{ n^0 + h}\right) . 
\end{align*}
By direct calculations, we have 
\begin{align}
\frac{ \partial n_i }{ \partial p_l } & = \delta^l_i , \\ 
\frac{ \partial p^0 }{ \partial p_l } & = \frac{ b^{ i l } p_i }{ p^0 } , \\ 
\frac{ \partial h }{ \partial p_l } & = \frac{ q^0 b^{ i l } }{ h } \left( \frac{ p_i }{ p^0 } - \frac{ q_i }{ q^0 } \right) . 
\end{align}
In a similar way we obtain 
\begin{align}
\frac{ \partial }{ \partial p_l } \left( \frac{ 1 }{ n^0 + h } \right) & = - \frac{ 1 }{ ( n^0 + h )^2 } \left( \frac{ \partial p^0 }{ \partial p_l } + \frac{ \partial h }{ \partial p_l } \right) \nonumber \\ 
& = - \frac{ 1 }{ ( n^0 + h )^2 } \left( \frac{ b^{ i l } p_i }{ p^0 } + \frac{ q^0 b^{ i l } }{ h } \left( \frac{ p_i }{ p^0 } - \frac{ q_i }{ q^0 } \right) \right) . 
\end{align}
Now, by the assumptions on $ \| b \| $ and $ \det b $ we have from \eqref{p^0 1} and \eqref{p^0 2} that there exists $ C_A > 0 $ such that 
\begin{align}
\frac{ 1 }{ C_A } p^0 \leq \langle p \rangle \leq C_A p^0 . 
\end{align}
Hence, we have 
\begin{align}
\left| \frac{ \partial n_i }{ \partial p_l } \right| & \leq C_A , \label{dp dp1} \\ 
\left| \frac{ \partial p^0 }{ \partial p_l } \right| & \leq C_A . \label{dp dp2} 
\end{align}
Moreover, we obtain from Lemma \ref{lem e} that $ { e_j }^i $ and $ { e^i }_j $ are bounded on $ [ 0 , T ] $: 
\begin{align}
| { e_j }^i | \leq C_A , \qquad | { e^i }_j | \leq C_A , 
\end{align}
so that we obtain 
\begin{align}
\left| \frac{ \partial h }{ \partial p_l } \right| & \leq C_A \frac{ q^0 }{ h } \frac{ h }{ \sqrt{ p^0 q^0 } } \leq C_A \sqrt{ \frac{ q^0 }{ p^0 } } , \label{dp dp3} 
\end{align}
where we used \eqref{phat} and \eqref{Wsin}. In a similar way we obtain 
\begin{align}
\left| \frac{ \partial }{ \partial p_l } \left( \frac{ 1 }{ n^0 + h } \right) \right| & \leq \frac{ C_A }{ ( n^0 + h )^2 } \left( 1 + \sqrt{ \frac{ q^0 }{ p^0 } } \right) . 
\end{align}
We collect the above estimates to obtain the desired result.
\end{proof}

We now estimate the $ p $-derivatives of $ f $. In Proposition \ref{prop B} we showed that $ f $ is bounded in $ L^1_{ - 1 } $. In Lemma \ref{lem dfdp} below, we will first show that $ f $ is bounded in $ L^1_{ - 2 - \delta / 2 } $, and use this to prove that $ \partial f / \partial p $ is bounded in $ L^1_{ - 1 - \delta / 2 } $ for a small $ \delta > 0 $.

\begin{lemma} \label{lem dfdp} 
Suppose that there exist a time interval $ [ 0 , T ] $ and a positive constant $ A $ such that $ b^{ i j } $ and $ K_{ i j } $ are defined on $ [ 0 , T ] $ and satisfy
\begin{align}
\sup_{ 0 \leq s \leq T } \| b ( s ) \| \leq A , \qquad \sup_{ 0 \leq s \leq T } \| K ( s ) \| \leq A , \qquad \inf_{ 0 \leq s \leq T } \det b ( s ) \geq \frac{ 1 }{ A } . 
\end{align}
Then, for any initial data $ f_0 \geq 0 $ such that 
\begin{align}
f_0 \in L^1_1 ( \bbr^3 ) \cap L^1_{ - 2 - \delta / 2 } ( \bbr^3 ) \cap L^\infty_\eta ( \bbr^3 ) , \qquad \frac{ \partial f_0 }{ \partial p } \in L^1_1 ( \bbr^3 ) \cap L^1_{ - 1 - \delta / 2 } ( \bbr^3 ) , 
\end{align}
where $ 0 < \eta < 2 / A^2 $, and $ \delta > 0 $ is a number satisfying
\begin{align}
\gamma + \delta < 2 , 
\end{align}
there exists $ 0 < T_A \leq T $ such that the Boltzmann equation \eqref{B} has a unique non-negative solution $ f \in C^1 ( [ 0 , T_A ] ; L^1 ( \bbr^3 ) ) $ satisfying \eqref{prop B1}, \eqref{prop B2} and 
\begin{align}
\sup_{ 0 \leq s \leq T_A } \| f ( s ) \|_{ L^1_{ - 2 - \delta / 2 } } + \int_0^{ T_A } \| f ( s ) \|_{ L^1_{ - 2 - ( \gamma + \delta ) / 2 } } \, d s \leq C_A , \label{prop B3} \\ 
\sup_{ 0 \leq s \leq T_A } \left\| \frac{ \partial f }{ \partial p } ( s ) \right\|_{ L^1_{ - 1 - \delta / 2 } } + \sup_{ 0 \leq s \leq T_A } \left\| \frac{ \partial f }{ \partial p } ( s ) \right\|_{ L^1_1 } \leq C_A , \label{prop B4}
\end{align}
where $ T_A $ and $ C_A $ are positive constants depending on $ A $, $ \eta $ and $ \delta $.
\end{lemma}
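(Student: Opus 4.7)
The plan is to prove \eqref{prop B3} and \eqref{prop B4} separately, starting from the solution $f$ given by Proposition~\ref{prop B} and working on a possibly smaller interval $[0,T_A]$.

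For \eqref{prop B3}, I would repeat the weighted moment argument of Proposition~\ref{prop mod} at the more singular weight $r=-2-\delta/2$. Multiplying \eqref{B} by $(p^0)^r$ and integrating in $p$ yields a transport term bounded by $|r|A^2\|f\|_{L^1_r}/2$ via \eqref{dp ds}, a gain contribution, and a nonpositive loss contribution. The new ingredient is an extension of Lemma~\ref{lem p'} to the power $\alpha=2+\delta/2$: a direct angular integration as in that lemma's proof gives
\[
\int_{\bbs^2}\frac{d\omega}{(p'^0)^{2+\delta/2}} \leq \frac{C(n^0)^{\delta/2}}{h^{2+\delta}},
\]
so that the gain kernel, after integrating in $\omega$, carries a singularity of order $h^{-(\gamma+\delta)}$, which is integrable thanks to $\gamma+\delta<2$. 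Splitting $(n^0)^{\delta/2}\leq C((p^0)^{\delta/2}+(q^0)^{\delta/2})$ symmetrically and using the $L^\infty_\eta$ bound from Proposition~\ref{prop B} on one copy of $f$, the gain contribution reduces to $C_A s_\eta^{1-(\gamma+\delta)/2}\|f\|_{L^1_{-1-\gamma/2}}$, controlled by interpolation between the $L^1_{-1}$ and $L^1_1$ norms of Proposition~\ref{prop B}. A Gr{\"o}nwall argument then produces the pointwise bound on $\|f\|_{L^1_{-2-\delta/2}}$. The nonpositive loss term contains the lower bound $h^{2-\gamma}\geq C(p^0 q^0)^{(2-\gamma)/2}$ on any region where the angle between $\hat p$ and $\hat q$ is bounded below; bringing this term to the left-hand side before integrating in time produces precisely the integrated control $\int_0^{T_A}\|f\|_{L^1_{-2-(\gamma+\delta)/2}}\,ds$ in \eqref{prop B3}.

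For \eqref{prop B4}, I would differentiate \eqref{B} in $p_l$ and study the resulting integro-differential equation for $g_l:=\partial f/\partial p_l$. Rigorously, the differentiation is first carried out on the cut-off equation \eqref{Bmod}, uniform-in-$\varepsilon$ estimates are obtained, and the limit $\varepsilon\to0$ is taken as in Proposition~\ref{prop B}. The differentiated equation contains three types of terms: derivatives of the kernel $h^{2-\gamma}/(p^0 q^0)$, bounded using the computations of Section~\ref{sDH}, in particular \eqref{Wconti1}--\eqref{Wconti4}; a loss-type term with factor $g_l(p)f(q)$; and a chain-rule gain-type term carrying $g_k(p')(\partial p'_k/\partial p_l)f(q')$ together with its $q'$-symmetric partner. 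The chain-rule factor is bounded by Lemma~\ref{lem dp'dp}, giving $|\partial p'/\partial p|\leq C_A(1+\sqrt{q^0/p^0})$. The $L^1_1$ bound on $g$ follows from a standard high-moment estimate using the conservation $p'^0+q'^0=p^0+q^0$; the $\sqrt{q^0/p^0}$ factor is harmless against a positive weight after Cauchy--Schwarz.

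The main obstacle is the singular moment estimate on $\|g\|_{L^1_{-1-\delta/2}}$. The $(p^0)^{-1/2}$ singularity introduced by $\partial p'/\partial p$ must be absorbed by the joint singular budget of $h^{2-\gamma}$, $1/(p^0 q^0)$, and the angular integral $\int d\omega/p'^0$ bounded by \eqref{l2}. This is precisely why the allowed weight for $g$ must drop from $-2-\delta/2$ to $-1-\delta/2$: the missing half power of $p^0$ is consumed by $\partial p'/\partial p$. After careful bookkeeping of the singular exponents, the gain-type contribution to $\partial_s\|g\|_{L^1_{-1-\delta/2}}$ reduces to products of $\|g\|_{L^1_{-1-\delta/2}}$ with the now-controlled $\|f\|_{L^1_{-2-\delta/2}}$ from the first step, together with terms controlled by $\|f\|_{L^\infty_\eta}$. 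Gr{\"o}nwall then yields \eqref{prop B4} on an interval $[0,T_A]$ depending only on $A$, $\eta$ and $\delta$.
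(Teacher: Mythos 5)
Your overall strategy coincides with the paper's: a singular moment estimate at weight $-2-\delta/2$ whose dissipative loss part yields the time-integrated control in \eqref{prop B3}, followed by the differentiated equation for $\partial f/\partial p$ closed at the weights $-1-\delta/2$ and $1$ with the help of Lemma \ref{lem dp'dp}. However, two steps are asserted rather than proved, and as written neither of them closes.

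First, to bring the loss term to the left-hand side and obtain $\int_0^{T_A}\|f\|_{L^1_{-2-(\gamma+\delta)/2}}\,ds\leq C_A$, you need the loss coefficient (the quantity $B$ of \eqref{B''}) to be bounded below by a positive constant \emph{uniformly in $p$ and $s$}. Your remark that $h^{2-\gamma}\gtrsim (p^0q^0)^{(2-\gamma)/2}$ where the angle $\phi$ is bounded below only reduces this to showing that, for every direction of $p$, a fixed amount of the mass of $f(\cdot,s)$ lies at angles bounded away from $0$; that does not follow from mass conservation alone. The paper supplies the missing argument by Cauchy--Schwarz, $\bigl(\int f\,d^3q\bigr)^2\leq\bigl(\int\frac{\sin^{2-\gamma}(\phi/2)}{(q^0)^{\gamma/2}}f\,d^3q\bigr)\bigl(\int\frac{(q^0)^{\gamma/2}}{\sin^{2-\gamma}(\phi/2)}f\,d^3q\bigr)$, where the left side is the conserved mass and the second factor is bounded above uniformly in $p$ via the $L^\infty_\eta$ bound; this gives \eqref{B_*}. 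Second, in the estimate of $\tfrac{d}{ds}\|\partial f/\partial p\|_{L^1_{-1-\delta/2}}$ the chain-rule gain term does \emph{not} reduce to $\|\partial f/\partial p\|_{L^1_{-1-\delta/2}}$ times controlled quantities of $f$: after the angular integration via \eqref{l2} one must pay an extra $h^{-\epsilon}$, and the bookkeeping produces $\|\partial f/\partial p\|_{L^1_{-(1+\gamma+\delta+\epsilon)/2}}$, which, since $\gamma>1$, is a strictly more singular derivative norm than the one being propagated. A plain Gr\"onwall argument therefore cannot close; one must keep the coercive loss term $B_*\|\partial f/\partial p\|_{L^1_{-1-(\gamma+\delta)/2}}$ on the left and absorb the supercritical piece by Young's inequality (the passage from \eqref{df dp1-1} to \eqref{df dp1} in the paper). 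Both devices are compatible with your setup, but they constitute the actual content of the proof and must be supplied.
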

\begin{proof}
We note that the existence of $ f $ satisfying \eqref{prop B1} and \eqref{prop B2} is given by Proposition \ref{prop B}, so we only need to prove \eqref{prop B3} and \eqref{prop B4}. We use Lemma \ref{lem h} to write the Boltzmann equation \eqref{B} as follows:
\begin{align}
\frac{ \partial f }{ \partial s } + B \frac{ 1 }{ ( p^0 )^{ \frac{ \gamma }{ 2 } } } f ( p ) = ( \det b )^{ \frac12 } \int_{ \bbr^3 } \int_{ \bbs^2 } \frac{ h^{ 2 - \gamma } }{ p^0 q^0 } f ( p' ) f ( q' ) \, d \omega \, d^3 q , \label{B'}
\end{align}
where 
\begin{equation}
B = 2^{ 4 - \gamma } \pi ( \det b )^{ \frac12 } \int_{ \bbr^3 } \frac{ \sin^{ 2 - \gamma } ( \phi / 2 ) }{ ( q^0 )^{ \frac{ \gamma }{ 2 } } } f ( q ) \, d^3 q . \label{B''}
\end{equation}
Note that $ B $ depends on $ p $ through $ \phi $, which is the angle between $ p $ and $ q $, but we have 
\begin{align}
\int f ( q ) \, d^3 q \leq \left( \int \frac{ \sin^{ 2 - \gamma } ( \phi / 2 ) }{ ( q^0 )^{ \frac{ \gamma }{ 2 } } } f ( q ) \, d^3 q \right)^{ \frac12 } \left( \int \frac{ ( q^0 )^{ \frac{ \gamma }{ 2 } } }{ \sin^{ 2 - \gamma } ( \phi / 2 ) } f ( q ) \, d^3 q \right)^{ \frac12 } , 
\end{align}
where the integral on the left hand side is a conserved quantity, and 
\begin{align}
\int \frac{ ( q^0 )^{ \frac{ \gamma }{ 2 } } }{ \sin^{ 2 - \gamma } ( \phi / 2 ) } f ( q ) \, d^3 q \leq C_A \int \frac{ ( q^0 )^{ - 1 + \frac{ \gamma }{ 2 } } e^{ - s^{ - 1 }_\eta q^0 } }{ \sin^{ 2 - \gamma } ( \phi / 2 ) } \, d^3 q \leq C_A . 
\end{align}
Hence, $ B \geq B_* $ for some $ B_* > 0 $ independent of $ p $. It is clear that $ B $ is bounded above by $ \| f \|_{ L^1_{ - \gamma / 2 } } $ so that we have 
\begin{align}
0 < B_* \leq B \leq C_A \| f \|_{ - \gamma / 2 } . \label{B_*}
\end{align}
Let $ \delta > 0 $ be a number satisfying 
\begin{align}
\gamma + \delta < 2 . 
\end{align}
Then, multiplying the equation \eqref{B'} by $ ( p^0 )^{ - 2 - \frac{ \delta }{ 2 } } $ and integrating over $ \bbr^3_p $, we obtain 
\begin{align}
\frac{ d }{ d s } \| f \|_{ L^1_{ - 2 - \delta / 2 } } + B_* \| f \|_{ L^1_{ - 2 - ( \gamma + \delta ) / 2 } } \leq C_A \| f \|_{ L^1_{ - 2 - \delta / 2 } } + C_A \iiint \frac{ h^{ 2 - \gamma } }{ p^0 q^0 } f ( p' ) f ( q' ) \frac{ 1 }{ ( p^0 )^{ 2 + \frac{ \delta }{ 2 } } } \, d \omega \, d^3 q \, d^3 p . 
\end{align}
Here, the integral on the right hand side is estimated as follows: 
\begin{align}
& \iiint \frac{ h^{ 2 - \gamma } }{ p^0 q^0 } f ( p' ) f ( q' ) \frac{ 1 }{ ( p^0 )^{ 2 + \frac{ \delta }{ 2 } } } \, d \omega \, d^3 q \, d^3 p \nonumber \\ 
& \leq C_A \iiint \frac{ h^{ 2 - \gamma } }{ p^0 q^0 } f ( p ) f ( q ) \frac{ 1 }{ ( p'^0 )^{ 2 + \frac{ \delta }{ 2 } } } \, d \omega \, d^3 q \, d^3 p \nonumber \\ 
& \leq C_A \iiint \frac{ h^{ 2 - \gamma } }{ p^0 q^0 } f ( p ) f ( q ) \frac{ 1 }{ ( p'^0 )^{ 2 } } \frac{ ( q'^0 )^{ \frac{ \delta }{ 2 } } }{ ( p'^0 q'^0 )^{ \frac{ \delta }{ 2 } } } \, d \omega \, d^3 q \, d^3 p \nonumber \\ 
& \leq C_A \iint \frac{ h^{ 2 - \gamma } }{ p^0 q^0 } f ( p ) f ( q ) \frac{ 1 }{ ( p'^0 )^{ 2 } } \frac{ ( n^0 )^{ \frac{ \delta }{ 2 } } }{ h^{ \delta } } \, d \omega \, d^3 q \, d^3 p \nonumber \\ 
& \leq C_A \iint \frac{ h^{ 2 - \gamma } }{ p^0 q^0 } f ( p ) f ( q ) \frac{ ( n^0 )^{ \frac{ \delta }{ 2 } } }{ h^{ 2 + \delta } } \, d^3 q \, d^3 p \nonumber \\ 
& \leq C_A \iint \frac{ ( p^0 )^{ \frac{ \delta }{ 2 } } }{ p^0 q^0 h^{ \gamma + \delta } } f ( p ) f ( q ) \, d^3 q \, d^3 p \nonumber \\ 
& \leq C_A \iint \frac{ ( p^0 )^{ \frac{ \delta }{ 2 } } }{ ( p^0 q^0 )^{ 1 + \frac{ \gamma + \delta }{ 2 } } \sin^{ \gamma + \delta } ( \phi / 2 ) } f ( p ) f ( q ) \, d^3 q \, d^3 p \nonumber \\ 
& \leq C_A \iint \frac{ 1 }{ ( p^0 )^{ 1 + \frac{ \gamma }{ 2 } } ( q^0 )^{ 1 + \frac{ \gamma + \delta }{ 2 } } \sin^{ \gamma + \delta } ( \phi / 2 ) } f ( p ) f ( q ) \, d^3 q \, d^3 p \nonumber \\ 
& \leq C_A \int \frac{ 1 }{ ( p^0 )^{ 1 + \frac{ \gamma }{ 2 } } } f ( p ) \int \frac{ e^{ - s^{ - 1 }_\eta q^0 } }{ ( q^0 )^{ 2 + \frac{ \gamma + \delta }{ 2 } } \sin^{ \gamma + \delta } ( \phi / 2 ) } \, d^3 q \, d^3 p \nonumber \\ 
& \leq C_A \| f \|_{ L^1_{ - 1 - \gamma / 2 } } ,
\end{align}
where we used $ \gamma + \delta < 2 $ in the last inequality. Hence, we obtain by an interpolation
\begin{align}
\frac{ d }{ d s } \| f \|_{ L^1_{ - 2 - \delta / 2 } } + B_* \| f \|_{ L^1_{ - 2 - ( \gamma + \delta ) / 2 } } \leq C_A \left( 1 + \| f \|_{ L^1_{ - 2 - \delta / 2 } } \right) , 
\end{align}
which shows that $ \| f \|_{ L^1_{ - 2 - \delta / 2 } } $ is bounded on $ [ 0 , T_A ] $. Moreover, the following is also bounded on $ [ 0 , T_A ] $: 
\begin{align}
\int_0^t  \| f ( s ) \|_{ L^1_{ - 2 - ( \gamma + \delta ) / 2 } } \, d s \leq C_A . 
\end{align}
This completes the proof of \eqref{prop B3}. 

We now consider the $ p $ derivatives of $ f $. We use the original equation \eqref{B} to obtain 
\begin{align}
& \frac{ \partial }{ \partial s } \left( \frac{ \partial f }{ \partial p_l } \right) + ( \det b )^{ \frac12 } \iint \frac{ h^{ 2 - \gamma } }{ p^0 q^0 } \frac{ \partial f }{ \partial p_l } ( p ) f ( q ) \, d \omega \, d^3 q \nonumber \\ 
& = ( \det b )^{ \frac12 } \iint \frac{ \partial }{ \partial p_l } \left( \frac{ h^{ 2 - \gamma } }{ p^0 q^0 } \right) f ( p' ) f ( q' ) \, d \omega \, d^3 q \nonumber \\ 
& \quad + ( \det b )^{ \frac12 } \iint \frac{ h^{ 2 - \gamma } }{ p^0 q^0 } \left( \frac{ \partial p'_i }{ \partial p_l } \frac{ \partial f }{ \partial p_i } ( p' ) f ( q' ) + f ( p' ) \frac{ \partial q'_i }{ \partial p_l } \frac{ \partial f }{ \partial p_i } ( q' ) \right) \, d \omega \, d^3 q \nonumber \\ 
& \quad - ( \det b )^{ \frac12 } \iint \frac{ \partial }{ \partial p_l } \left( \frac{ h^{ 2 - \gamma } }{ p^0 q^0 } \right) f ( p ) f ( q ) \, d \omega \, d^3 q , 
\end{align}
for $ l = 1 , 2 , 3 $. Multiplying the above by $ ( p^0 )^r $, we obtain 
\begin{align}
& \frac{ \partial }{ \partial s } \left( ( p^0 )^r \frac{ \partial f }{ \partial p_l } \right) - \frac{ \partial ( p^0 )^r }{ \partial s } \frac{ \partial f }{ \partial p_l } + ( \det b )^{ \frac12 } \iint \frac{ h^{ 2 - \gamma } }{ p^0 q^0 } \frac{ \partial f }{ \partial p_l } ( p ) f ( q ) ( p^0 )^r \, d \omega \, d^3 q \nonumber \\ 
& = ( \det b )^{ \frac12 } \iint \frac{ \partial }{ \partial p_l } \left( \frac{ h^{ 2 - \gamma } }{ p^0 q^0 } \right) f ( p' ) f ( q' ) ( p^0 )^r \, d \omega \, d^3 q \nonumber \\ 
& \quad + ( \det b )^{ \frac12 } \iint \frac{ h^{ 2 - \gamma } }{ p^0 q^0 } \left( \frac{ \partial p'_i }{ \partial p_l } \frac{ \partial f }{ \partial p_i } ( p' ) f ( q' ) + f ( p' ) \frac{ \partial q'_i }{ \partial p_l } \frac{ \partial f }{ \partial p_i } ( q' ) \right) ( p^0 )^r \, d \omega \, d^3 q \nonumber \\ 
& \quad - ( \det b )^{ \frac12 } \iint \frac{ \partial }{ \partial p_l } \left( \frac{ h^{ 2 - \gamma } }{ p^0 q^0 } \right) f ( p ) f ( q ) ( p^0 )^r \, d \omega \, d^3 q . 
\end{align}
Multiplying the above again by $ ( \partial f / \partial p_l ) / | \partial f / \partial p_l | $, we obtain 
\begin{align}
& \frac{ \partial }{ \partial s } \left| ( p^0 )^r \frac{ \partial f }{ \partial p_l } \right| - \frac{ \partial ( p^0 )^r }{ \partial s } \left| \frac{ \partial f }{ \partial p_l } \right| + ( \det b )^{ \frac12 } \iint \frac{ h^{ 2 - \gamma } }{ p^0 q^0 } \left| \frac{ \partial f }{ \partial p_l } ( p ) \right| f ( q ) ( p^0 )^r \, d \omega \, d^3 q \nonumber \\ 
& \leq ( \det b )^{ \frac12 } \iint \left| \frac{ \partial }{ \partial p_l } \left( \frac{ h^{ 2 - \gamma } }{ p^0 q^0 } \right) \right| f ( p' ) f ( q' ) ( p^0 )^r \, d \omega \, d^3 q \nonumber \\ 
& \quad + ( \det b )^{ \frac12 } \iint \frac{ h^{ 2 - \gamma } }{ p^0 q^0 } \left( \left| \frac{ \partial p'_i }{ \partial p_l } \right| \left| \frac{ \partial f }{ \partial p_i } ( p' ) \right| f ( q' ) + f ( p' ) \left| \frac{ \partial q'_i }{ \partial p_l } \right| \left| \frac{ \partial f }{ \partial p_i } ( q' ) \right| \right) ( p^0 )^r \, d \omega \, d^3 q \nonumber \\ 
& \quad + ( \det b )^{ \frac12 } \iint \left| \frac{ \partial }{ \partial p_l } \left( \frac{ h^{ 2 - \gamma } }{ p^0 q^0 } \right) \right| f ( p ) f ( q ) ( p^0 )^r \, d \omega \, d^3 q . 
\end{align}
Applying Lemma \ref{lem dp'dp} together with \eqref{dp dp2} and \eqref{dp dp3}, we obtain 
\begin{align}
& \frac{ \partial }{ \partial s } \left| ( p^0 )^r \frac{ \partial f }{ \partial p_l } \right| + ( \det b )^{ \frac12 } \iint \frac{ h^{ 2 - \gamma } }{ p^0 q^0 } \left| \frac{ \partial f }{ \partial p_l } ( p ) \right| f ( q ) ( p^0 )^r \, d \omega \, d^3 q \nonumber \\ 
& \leq C_A ( p^0 )^r \left| \frac{ \partial f }{ \partial p_l } \right| + C_A ( \det b )^{ \frac12 } \iint \left( h^{ 1 - \gamma } \sqrt{ \frac{ q^0 }{ p^0 } } \frac{ 1 }{ p^0 q^0 } + h^{ 2 - \gamma } \frac{ 1 }{ ( p^0 )^2 q^0 } \right) f ( p' ) f ( q' ) ( p^0 )^r \, d \omega \, d^3 q \nonumber \\ 
& \quad + C_A ( \det b )^{ \frac12 } \sum_{ i = 1 }^3 \iint \frac{ h^{ 2 - \gamma } }{ p^0 q^0 } \left( 1 + \sqrt{ \frac{ q^0 }{ p^0 } } \right) \left( \left| \frac{ \partial f }{ \partial p_i } ( p' ) \right| f ( q' ) + f ( p' ) \left| \frac{ \partial f }{ \partial p_i } ( q' ) \right| \right) ( p^0 )^r \, d \omega \, d^3 q \nonumber \\ 
& \quad + ( \det b )^{ \frac12 } \iint \left( h^{ 1 - \gamma } \sqrt{ \frac{ q^0 }{ p^0 } } \frac{ 1 }{ p^0 q^0 } + h^{ 2 - \gamma } \frac{ 1 }{ ( p^0 )^2 q^0 } \right) f ( p ) f ( q ) ( p^0 )^r \, d \omega \, d^3 q . 
\end{align}
The second quantity on the left hand side can be written as 
\begin{align}
( \det b )^{ \frac12 } \iint \frac{ h^{ 2 - \gamma } }{ p^0 q^0 } \left| \frac{ \partial f }{ \partial p_l } ( p ) \right| f ( q ) ( p^0 )^r \, d \omega \, d^3 q = B \left| \frac{ \partial f }{ \partial p_l } ( p ) \right| ( p^0 )^{ r - \frac{ \gamma }{ 2 } } , 
\end{align} 
where $ B $ is the quantity defined in \eqref{B''}. Note that 
\begin{align*}
h^{ 2 - \gamma } \frac{ 1 }{ ( p^0 )^2 q^0 } \leq C_A h^{ 1 - \gamma } \frac{ 1 }{ ( p^0 )^{ \frac32 } ( q^0 )^{ \frac12 } } , \qquad 1 + \sqrt{ \frac{ q^0 }{ p^0 } } \leq C_A \sqrt{ \frac{ n^0 }{ p^0 } } . 
\end{align*} 
Then, using \eqref{B_*} and integrating over $ \bbr^3_p $, we have 
\begin{align}
& \frac{ d }{ d s } \left\| \frac{ \partial f }{ \partial p_l } \right\|_{ L^1_r } + B_* \left\| \frac{ \partial f }{ \partial p_l } \right\|_{ L^1_{ r - \gamma / 2 } } \nonumber \\ 
& \leq C_A \left\| \frac{ \partial f }{ \partial p_l } \right\|_{ L^1_r } + C_A \iiint h^{ 1 - \gamma } \frac{ 1 }{ ( p^0 )^{ \frac32 } ( q^0 )^{ \frac12 } } f ( p' ) f ( q' ) ( p^0 )^r \, d \omega \, d^3 q \, d^3 p \nonumber \\ 
& \quad + C_A \sum_{ i = 1 }^3 \iiint \frac{ h^{ 2 - \gamma } }{ p^0 q^0 } \sqrt{ \frac{ n^0 }{ p^0 } } \left( \left| \frac{ \partial f }{ \partial p_i } ( p' ) \right| f ( q' ) + f ( p' ) \left| \frac{ \partial f }{ \partial p_i } ( q' ) \right| \right) ( p^0 )^r \, d \omega \, d^3 q \, d^3 p \nonumber \\ 
& \quad + C_A \iiint h^{ 1 - \gamma } \frac{ 1 }{ ( p^0 )^{ \frac32 } ( q^0 )^{ \frac12 } } f ( p ) f ( q ) ( p^0 )^r \, d \omega \, d^3 q \, d^3 p . 
\end{align}
The last quantity can be estimated as follows: 
\begin{align}
& \iiint h^{ 1 - \gamma } \frac{ 1 }{ ( p^0 )^{ \frac32 } ( q^0 )^{ \frac12 } } f ( p ) f ( q ) ( p^0 )^r \, d \omega \, d^3 q \, d^3 p \nonumber \\ 
& \leq C_A \iint \frac{ ( p^0 )^r }{ ( p^0 )^{ \frac32 + \frac{ \gamma - 1 }{ 2 } } ( q^0 )^{ \frac12 + \frac{ \gamma - 1 }{ 2 } } \sin^{ \gamma - 1 } ( \phi / 2 ) } f ( p ) f ( q ) \, d^3 q \, d^3 p \nonumber \\ 
& \leq C_A \int ( p^0 )^{ r - 1 - \frac{ \gamma }{ 2 } } f ( p ) \int \frac{ e^{ - s^{ - 1 }_\eta q^0 } }{ ( q^0 )^{ \frac{ \gamma }{ 2 } + 1 } \sin^{ \gamma - 1 } ( \phi / 2 ) } \, d^3 q \, d^3 p \nonumber \\ 
& \leq C_A \| f \|_{ L^1_{ r - 1 - \gamma / 2 } } . 
\end{align}
Hence, we obtain 
\begin{align}
& \frac{ d }{ d s } \left\| \frac{ \partial f }{ \partial p_l } \right\|_{ L^1_r } + B_* \left\| \frac{ \partial f }{ \partial p_l } \right\|_{ L^1_{ r - \gamma / 2 } } \nonumber \\ 
& \leq C_A \| f \|_{ L^1_{ r - 1 - \gamma / 2 } } + C_A \left\| \frac{ \partial f }{ \partial p_l } \right\|_{ L^1_r } + C_A \iiint h^{ 1 - \gamma } \frac{ 1 }{ ( p^0 )^{ \frac32 } ( q^0 )^{ \frac12 } } f ( p' ) f ( q' ) ( p^0 )^r \, d \omega \, d^3 q \, d^3 p \nonumber \\ 
& \quad + C_A \sum_{ i = 1 }^3 \iiint \frac{ h^{ 2 - \gamma } }{ p^0 q^0 } \sqrt{ \frac{ n^0 }{ p^0 } } \left( \left| \frac{ \partial f }{ \partial p_i } ( p' ) \right| f ( q' ) + f ( p' ) \left| \frac{ \partial f }{ \partial p_i } ( q' ) \right| \right) ( p^0 )^r \, d \omega \, d^3 q \, d^3 p . \label{df dpr}
\end{align}
We need to consider two cases (1) $ r = - 1 - \delta / 2 $ and (2) $ r = 1 $. \medskip

\noindent (1) In the case $ r = - 1 - \delta / 2 $, we have from \eqref{df dpr} 
\begin{align}
& \frac{ d }{ d s } \left\| \frac{ \partial f }{ \partial p_l } \right\|_{ L^1_{ - 1 - \delta / 2 } } + B_* \left\| \frac{ \partial f }{ \partial p_l } \right\|_{ L^1_{ - 1 - ( \delta + \gamma ) / 2 } } \nonumber \\ 
& \leq C_A \| f \|_{ L^1_{ - 2 - ( \gamma + \delta ) / 2 } } + C_A \left\| \frac{ \partial f }{ \partial p_l } \right\|_{ L^1_{ - 1 - \delta / 2 } } + C_A \iiint h^{ 1 - \gamma } \frac{ 1 }{ ( p^0 )^{ \frac52 + \frac{ \delta }{ 2 } } ( q^0 )^{ \frac12 } } f ( p' ) f ( q' ) \, d \omega \, d^3 q \, d^3 p \nonumber \\ 
& \quad + C_A \sum_{ i = 1 }^3 \iiint \frac{ h^{ 2 - \gamma } }{ p^0 q^0 } \frac{ ( n^0 )^{ \frac12 } }{ ( p^0 )^{ \frac32 + \frac{ \delta }{ 2 } } } \left( \left| \frac{ \partial f }{ \partial p_i } ( p' ) \right| f ( q' ) + f ( p' ) \left| \frac{ \partial f }{ \partial p_i } ( q' ) \right| \right) \, d \omega \, d^3 q \, d^3 p . 
\end{align}
Let us write for simplicity 
\begin{align}
L_1 & = \iiint h^{ 1 - \gamma } \frac{ 1 }{ ( p^0 )^{ \frac52 + \frac{ \delta }{ 2 } } ( q^0 )^{ \frac12 } } f ( p' ) f ( q' ) \, d \omega \, d^3 q \, d^3 p , \\ 
L_2 & = \iiint \frac{ h^{ 2 - \gamma } }{ p^0 q^0 } \frac{ ( n^0 )^{ \frac12 } }{ ( p^0 )^{ \frac32 + \frac{ \delta }{ 2 } } } \left| \frac{ \partial f }{ \partial p_i } ( p' ) \right| f ( q' ) \, d \omega \, d^3 q \, d^3 p , \\ 
L_3 & = \iiint \frac{ h^{ 2 - \gamma } }{ p^0 q^0 } \frac{ ( n^0 )^{ \frac12 } }{ ( p^0 )^{ \frac32 + \frac{ \delta }{ 2 } } } f ( p' ) \left| \frac{ \partial f }{ \partial p_i } ( q' ) \right| \, d \omega \, d^3 q \, d^3 p . 
\end{align}
We first estimate $ L_1 $ as follows: 
\begin{align}
L_1 & \leq C_A \iiint h^{ 1 - \gamma } \frac{ 1 }{ ( p^0 )^{ \frac52 + \frac{ \delta }{ 2 } } ( q^0 )^{ \frac12 } } \frac{ 1 }{ p'^0 } e^{ - s^{ - 1 }_\eta p'^0 } \frac{ 1 }{ q'^0 } e^{ - s^{ - 1 }_\eta q'^0 } \, d \omega \, d^3 q \, d^3 p \nonumber \\ 
& \leq C_A \iiint \frac{ e^{ - s^{ - 1 }_\eta p'^0 } e^{ - s^{ - 1 }_\eta q'^0 } }{ ( p^0 q^0 )^{ \frac{ \gamma - 1 }{ 2 } } \sin^{ \gamma - 1 } ( \phi / 2 ) ( p^0 )^{ \frac52 + \frac{ \delta }{ 2 } } ( q^0 )^{ \frac12 } } \frac{ 1 }{ p'^0 q'^0 } \, d \omega \, d^3 q \, d^3 p \nonumber \\ 
& \leq C_A \iint \frac{ e^{ - s^{ - 1 }_\eta p^0 } e^{ - s^{ - 1 }_\eta q^0 } }{ ( p^0 )^{ 2 + \frac{ \gamma + \delta }{ 2 } } ( q^0 )^{ \frac{ \gamma }{ 2 } } \sin^{ \gamma - 1 } ( \phi / 2 ) } \int_{ \bbs^2 } \frac{ 1 }{ p'^0 q'^0 } \, d \omega \, d^3 q \, d^3 p . 
\end{align}
We use \eqref{l4} of Lemma \ref{lem p'} with $ \epsilon > 0 $ satisfying
\begin{align}
\gamma + \delta + \epsilon < 2 
\end{align}
to obtain 
\begin{align}
L_1 & \leq C_A \iint \frac{ e^{ - s^{ - 1 }_\eta p^0 } e^{ - s^{ - 1 }_\eta q^0 } }{ ( p^0 )^{ 2 + \frac{ \gamma + \delta }{ 2 } } ( q^0 )^{ \frac{ \gamma }{ 2 } } \sin^{ \gamma - 1 } ( \phi / 2 ) } \frac{ 1 }{ h^\epsilon ( n^0 )^{ 2 - \epsilon } } \, d^3 q \, d^3 p \nonumber \\ 
& \leq C_A \iint \frac{ e^{ - s^{ - 1 }_\eta p^0 } e^{ - s^{ - 1 }_\eta q^0 } }{ ( p^0 )^{ 2 + \frac{ \gamma + \delta + \epsilon }{ 2 } } ( q^0 )^{ \frac{ \gamma + \epsilon }{ 2 } + 2 - \epsilon } \sin^{ \gamma + \epsilon - 1 } ( \phi / 2 ) } \, d^3 q \, d^3 p \nonumber \\ 
& \leq C_A , 
\end{align}
since 
\begin{align}
2 + \frac{ \gamma + \delta + \epsilon }{ 2 } < 3 , \qquad \frac{ \gamma + \epsilon }{ 2 } + 2 - \epsilon < 3 , \qquad \gamma + \epsilon - 1 < 1 . 
\end{align}
For $ L_2 $, we use the change of variables \eqref{dpdq} to obtain 
\begin{align}
L_2 & = \iiint \frac{ h^{ 2 - \gamma } }{ p^0 q^0 } \frac{ ( n^0 )^{ \frac12 } }{ ( p'^0 )^{ \frac32 + \frac{ \delta }{ 2 } } } \left| \frac{ \partial f }{ \partial p_i } ( p ) \right| f ( q ) \, d \omega \, d^3 q \, d^3 p . 
\end{align}
Note that 
\begin{align}
\frac{ 1 }{ ( p'^0 )^{ \frac{ 1 + \delta }{ 2 } } } = \frac{ ( q'^0 )^{ \frac{ 1 + \delta }{ 2 } } }{ ( p'^0 q'^0 )^{ \frac{ 1 + \delta }{ 2 } } } \leq \frac{ ( n^0 )^{ \frac{ 1 + \delta }{ 2 } } }{ h^{ 1 + \delta } } . 
\end{align}
Hence, we have 
\begin{align}
L_2 & \leq C_A \iiint \frac{ h^{ 2 - \gamma } }{ p^0 q^0 } \frac{ ( n^0 )^{ \frac12 } }{ p'^0 } \frac{ ( n^0 )^{ \frac{ 1 + \delta }{ 2 } } }{ h^{ 1 + \delta } } \left| \frac{ \partial f }{ \partial p_i } ( p ) \right| f ( q ) \, d \omega \, d^3 q \, d^3 p \nonumber \\ 
& \leq C_A \iint \frac{ h^{ 1 - \gamma - \delta } ( n^0 )^{ 1 + \frac{ \delta }{ 2 } } }{ p^0 q^0 } \left| \frac{ \partial f }{ \partial p_i } ( p ) \right| f ( q ) \int_{ \bbs^2 } \frac{ 1 }{ p'^0 } \, d \omega \, d^3 q \, d^3 p . 
\end{align}
We use \eqref{l2} of Lemma \ref{lem p'} with $ \epsilon > 0 $ satisfying again
\begin{align}
\gamma + \delta + \epsilon < 2
\end{align}
to obtain 
\begin{align}
L_2 & \leq C_A \iint \frac{ h^{ 1 - \gamma - \delta - \epsilon } ( n^0 )^{ \frac{ \delta }{ 2 } + \epsilon } }{ p^0 q^0 } \left| \frac{ \partial f }{ \partial p_i } ( p ) \right| f ( q ) \, d^3 q \, d^3 p \nonumber \\ 
& \leq C_A \iint \frac{ ( n^0 )^{ \frac{ \delta }{ 2 } + \epsilon } }{ ( p^0 q^0 )^{ \frac{ 1 + \gamma + \delta + \epsilon }{ 2 } } \sin^{ - 1 + \gamma + \delta + \epsilon } ( \phi / 2 ) } \left| \frac{ \partial f }{ \partial p_i } ( p ) \right| f ( q ) \, d^3 q \, d^3 p \nonumber \\ 
& \leq C_A \iint \left( \frac{ 1 }{ ( p^0 )^{ \frac{ 1 + \gamma - \epsilon }{ 2 } } ( q^0 )^{ \frac{ 1 + \gamma + \delta + \epsilon }{ 2 } } } + \frac{ 1 }{ ( p^0 )^{ \frac{ 1 + \gamma + \delta + \epsilon }{ 2 } } ( q^0 )^{ \frac{ 1 + \gamma - \epsilon }{ 2 } } } \right) \frac{ 1 }{ \sin^{ - 1 + \gamma + \delta + \epsilon } ( \phi / 2 ) } \left| \frac{ \partial f }{ \partial p_i } ( p ) \right| f ( q ) \, d^3 q \, d^3 p . 
\end{align}
Note that, for sufficiently small $ \epsilon > 0 $, 
\begin{align}
0 < \frac{ 1 + \gamma - \epsilon }{ 2 } < \frac{ 1 + \gamma + \delta + \epsilon }{ 2 } < \frac32 , \qquad 0 < - 1 + \gamma + \delta + \epsilon < 1 . 
\end{align}
Hence, the following are bounded: 
\begin{align}
\int \frac{ 1 }{ ( q^0 )^{ \frac{ 1 + \gamma + \delta + \epsilon }{ 2 } } \sin^{ - 1 + \gamma + \delta + \epsilon } ( \phi / 2 ) } f ( q ) \, d^3 q & \leq C_A , \\ 
\int \frac{ 1 }{ ( q^0 )^{ \frac{ 1 + \gamma - \epsilon }{ 2 } } \sin^{ - 1 + \gamma + \delta + \epsilon } ( \phi / 2 ) } f ( q ) \, d^3 q & \leq C_A . 
\end{align}
Consequently, we obtain by an interpolation
\begin{align}
L_2 & \leq C_A \int \left( \frac{ 1 }{ ( p^0 )^{ \frac{ 1 + \gamma - \epsilon }{ 2 } } } + \frac{ 1 }{ ( p^0 )^{ \frac{ 1 + \gamma + \delta + \epsilon }{ 2 } } } \right) \left| \frac{ \partial f }{ \partial p_i } ( p ) \right| \, d^3 p \nonumber \\ 
& \leq C_A \left\| \frac{ \partial f }{ \partial p_i } \right\|_{ L^1_1 } + C_A \left\| \frac{ \partial f }{ \partial p_i } \right\|_{ L^1_{ - ( 1 + \gamma + \delta + \epsilon ) / 2 } } . 
\end{align}
The estimate for $ L_3 $ is the same as for $ L_2 $. Now, combine all the estimates above to obtain 
\begin{align}
& \frac{ d }{ d s } \left\| \frac{ \partial f }{ \partial p } \right\|_{ L^1_{ - 1 - \delta / 2 } } + B_* \left\| \frac{ \partial f }{ \partial p } \right\|_{ L^1_{ - 1 - ( \delta + \gamma ) / 2 } } \nonumber \\ 
& \leq C_A  + C_A \| f \|_{ L^1_{ - 2 - ( \gamma + \delta ) / 2 } } + C_A \left\| \frac{ \partial f }{ \partial p } \right\|_{ L^1_{ - 1 - \delta / 2 } } + C_A \left\| \frac{ \partial f }{ \partial p } \right\|_{ L^1_1 } + C_A \left\| \frac{ \partial f }{ \partial p } \right\|_{ L^1_{ - ( 1 + \gamma + \delta + \epsilon ) / 2 } } . \label{df dp1-1}
\end{align}
Here, the last term on the right hand side can be controlled by the second term on the left hand side and the third and fourth terms on the right hand side by applying Young's inequality: $ a b \leq \eta a^{ \frac32 } + C_\eta b^3 $ for any small $ \eta > 0 $. To be precise, we apply 
\begin{align}
\frac{ 1}{ ( p^0 )^{ \frac{ 1 + \gamma + \delta + \epsilon }{ 2 } } } = \frac{ 1 }{ ( p^0 )^{ \frac{ 2 + \gamma + \delta }{ 3 } } ( p^0 )^{ \frac{ - 1 + \gamma + \delta + 3 \epsilon }{ 6 } } } \leq \frac{ \eta }{ ( p^0 )^{ \frac{ 2 + \gamma + \delta }{ 2 } } } + \frac{ C_\eta }{ ( p^0 )^{ \frac{ - 1 + \gamma + \delta + 3 \epsilon }{ 2 } } } , 
\end{align}
to obtain 
\begin{align}
\left\| \frac{ \partial f }{ \partial p } \right\|_{ L^1_{ - ( 1 + \gamma + \delta + \epsilon ) / 2 } } \leq \eta \left\| \frac{ \partial f }{ \partial p } \right\|_{ L^1_{ - 1 - ( \gamma + \delta ) / 2 } } + C_\eta \left\| \frac{ \partial f }{ \partial p } \right\|_{ L^1_{ ( 1 - \gamma - \delta - 3 \epsilon ) / 2 } } , 
\end{align} 
and the second term on the right hand side above can be estimated by the third and fourth terms on the right hand side of \eqref{df dp1-1}, since 
\begin{align}
- 1 - \frac{ \delta }{ 2 } < \frac{ 1 - \gamma - \delta - 3 \epsilon }{ 2 } < 1 , 
\end{align}
for small $ \epsilon > 0 $. Finally, we obtain from \eqref{df dp1-1}
\begin{align}
& \frac{ d }{ d s } \left\| \frac{ \partial f }{ \partial p } \right\|_{ L^1_{ - 1 - \delta / 2 } } \leq C_A + C_A \| f \|_{ L^1_{ - 2 - ( \gamma + \delta ) / 2 } } + C_A \left\| \frac{ \partial f }{ \partial p } \right\|_{ L^1_{ - 1 - \delta / 2 } } + C_A \left\| \frac{ \partial f }{ \partial p } \right\|_{ L^1_1 } . \label{df dp1}
\end{align}\medskip

\noindent (2) In the case $ r = 1 $, we have from \eqref{df dpr} 
\begin{align}
& \frac{ d }{ d s } \left\| \frac{ \partial f }{ \partial p_l } \right\|_{ L^1_1 } + B_* \left\| \frac{ \partial f }{ \partial p_l } \right\|_{ L^1_{ 1 - \gamma / 2 } } \nonumber \\  
& \leq C_A \| f \|_{ L^1_{ - \gamma / 2 } } + C_A \left\| \frac{ \partial f }{ \partial p_l } \right\|_{ L^1_1 } + C_A \iiint h^{ 1 - \gamma } \frac{ 1 }{ ( p^0 )^{ \frac12 } ( q^0 )^{ \frac12 } } f ( p' ) f ( q' ) \, d \omega \, d^3 q \, d^3 p \nonumber \\ 
& \quad + C_A \sum_{ i = 1 }^3 \iiint \frac{ h^{ 2 - \gamma } }{ p^0 q^0 } \sqrt{ n^0 p^0 } \left( \left| \frac{ \partial f }{ \partial p_i } ( p' ) \right| f ( q' ) + f ( p' ) \left| \frac{ \partial f }{ \partial p_i } ( q' ) \right| \right) \, d \omega \, d^3 q \, d^3 p . 
\end{align}
Let use write 
\begin{align}
M_1 & = \iiint h^{ 1 - \gamma } \frac{ 1 }{ ( p^0 )^{ \frac12 } ( q^0 )^{ \frac12 } } f ( p' ) f ( q' ) \, d \omega \, d^3 q \, d^3 p , \\ 
M_2 & = \iiint \frac{ h^{ 2 - \gamma } }{ p^0 q^0 } \sqrt{ n^0 p^0 } \left| \frac{ \partial f }{ \partial p_i } ( p' ) \right| f ( q' ) \, d \omega \, d^3 q \, d^3 p , \\ 
M_3 & = \iiint \frac{ h^{ 2 - \gamma } }{ p^0 q^0 } \sqrt{ n^0 p^0 } f ( p' ) \left| \frac{ \partial f }{ \partial p_i } ( q' ) \right| \, d \omega \, d^3 q \, d^3 p . 
\end{align}
For $ M_1 $, we estimate 
\begin{align}
M_1 & = \iiint h^{ 1 - \gamma } \frac{ 1 }{ ( p^0 )^{ \frac12 } ( q^0 )^{ \frac12 } } f ( p' ) f ( q' ) \, d \omega \, d^3 q \, d^3 p \nonumber \\ 
& \leq C_A \iiint h^{ 1 - \gamma } \frac{ n^0 }{ p^0 q^0 } f ( p' ) f ( q' ) \, d \omega \, d^3 q \, d^3 p \nonumber \\ 
& \leq C_A \iiint h^{ 1 - \gamma } \frac{ 1 }{ q^0 } f ( p ) f ( q ) \, d \omega \, d^3 q \, d^3 p \nonumber \\ 
& \leq C_A \iint \frac{ 1 }{ ( p^0 )^{ \frac{ \gamma - 1 }{ 2 } } ( q^0 )^{ \frac{ \gamma + 1 }{ 2 } } \sin^{ \gamma - 1 } ( \phi / 2 ) } f ( p ) f ( q ) \, d^3 q \, d^3 p \nonumber \\ 
& \leq C_A \iint \frac{ e^{ - s^{ - 1 }_\eta p^0 } e^{ - s^{ - 1 }_\eta q^0 } }{ ( p^0 )^{ \frac{ \gamma + 1 }{ 2 } } ( q^0 )^{ \frac{ \gamma + 3 }{ 2 } } \sin^{ \gamma - 1 } ( \phi / 2 ) } \, d^3 q \, d^3 p \nonumber \\ 
& \leq C_A , 
\end{align}
since 
\begin{align}
1 < \frac{ \gamma + 1 }{ 2 } < \frac32 , \qquad 2 < \frac{ \gamma + 3 }{ 2 } < \frac52 , \qquad 0 < \gamma - 1 < 1 . 
\end{align}
For $ M_2 $, we have 
\begin{align}
M_2 & = \iiint \frac{ h^{ 2 - \gamma } }{ p^0 q^0 } \sqrt{ n^0 p^0 } \left| \frac{ \partial f }{ \partial p_i } ( p' ) \right| f ( q' ) \, d \omega \, d^3 q \, d^3 p \nonumber \\ 
& \leq C_A \iiint \frac{ h^{ 2 - \gamma } }{ p^0 q^0 } n^0 \left| \frac{ \partial f }{ \partial p_i } ( p' ) \right| f ( q' ) \, d \omega \, d^3 q \, d^3 p \nonumber \\ 
& \leq C_A \iiint \frac{ h^{ 2 - \gamma } }{ p^0 q^0 } n^0 \left| \frac{ \partial f }{ \partial p_i } ( p ) \right| f ( q ) \, d^3 q \, d^3 p \nonumber \\ 
& \leq C_A \iiint h^{ 2 - \gamma } \left( \frac{ 1 }{ p^0 } + \frac{ 1 }{ q^0 } \right) \left| \frac{ \partial f }{ \partial p_i } ( p ) \right| f ( q ) \, d^3 q \, d^3 p \nonumber \\ 
& \leq C_A \iiint \left( \frac{ ( q^0 )^{ 1 - \frac{ \gamma }{ 2 } } }{ ( p^0 )^{ \frac{ \gamma }{ 2 } } } + \frac{ ( p^0 )^{ 1 - \frac{ \gamma }{ 2 } } }{ ( q^0 )^{ \frac{ \gamma }{ 2 } } } \right) \left| \frac{ \partial f }{ \partial p_i } ( p ) \right| f ( q ) \, d^3 q \, d^3 p \nonumber \\ 
& \leq C_A \left\| \frac{ \partial f }{ \partial p_i } \right\|_{ L^1_{ - \gamma / 2 } } \| f \|_{ L^1_{ 1 - \gamma / 2 } } + C_A  \left\| \frac{ \partial f }{ \partial p_i } \right\|_{ L^1_{ 1 - \gamma / 2 } } \| f \|_{ L^1_{ - \gamma / 2 } } . 
\end{align}
Note that $ \| f \|_{ L^1_{ 1 - \gamma / 2 } } $ and $ \| f \|_{ L^1_{ - \gamma / 2 } } $ are bounded quantities, and the estimate for $ M_3 $ is the same as for $ M_2 $. We combine all the estimates above to obtain 
\begin{align}
& \frac{ d }{ d s } \left\| \frac{ \partial f }{ \partial p } \right\|_{ L^1_1 } + B_* \left\| \frac{ \partial f }{ \partial p } \right\|_{ L^1_{ 1 - \gamma / 2 } } \leq C_A + C_A \left\| \frac{ \partial f }{ \partial p } \right\|_{ L^1_1 } + C_A \left\| \frac{ \partial f }{ \partial p } \right\|_{ L^1_{ - \gamma / 2 } } + C_A  \left\| \frac{ \partial f }{ \partial p } \right\|_{ L^1_{ 1 - \gamma / 2 } } . 
\end{align}
Since $ -1 < - \gamma / 2 < - 1 / 2 $ and $ 0 < 1 - \gamma / 2 < 1 / 2 $, we obtain by an interpolation
\begin{align}
\frac{ d }{ d s } \left\| \frac{ \partial f }{ \partial p } \right\|_{ L^1_1 } \leq C_A + C_A \left\| \frac{ \partial f }{ \partial p } \right\|_{ L^1_1 } + C_A \left\| \frac{ \partial f }{ \partial p } \right\|_{ L^1_{ - 1 - \delta / 2 } } . \label{df dp2}
\end{align}
We now combine \eqref{df dp1} and \eqref{df dp2} to obtain 
\begin{align}
& \left\| \frac{ \partial f }{ \partial p } ( t ) \right\|_{ L^1_{ - 1 - \delta / 2 } } + \left\| \frac{ \partial f }{ \partial p } ( t ) \right\|_{ L^1_1 } \nonumber \\ 
& \leq C_A + C_A \int_0^t \| f ( s ) \|_{ L^1_{ - 2 - ( \gamma + \delta ) / 2 } } \, d s + C_A \int_0^t \left\| \frac{ \partial f }{ \partial p } ( s ) \right\|_{ L^1_{ - 1 - \delta / 2 } } + \left\| \frac{ \partial f }{ \partial p } ( s ) \right\|_{ L^1_1 } \, d s . 
\end{align}
Finally, we apply \eqref{prop B3} and use Gr{\" o}nwall's inequality to obtain the desired result: 
\begin{align}
\sup_{ 0 \leq t \leq T_A } \left\| \frac{ \partial f }{ \partial p } ( t ) \right\|_{ L^1_{ - 1 - \delta / 2 } } + \sup_{ 0 \leq t \leq T_A } \left\| \frac{ \partial f }{ \partial p } ( t ) \right\|_{ L^1_1 } \leq C_A . 
\end{align}
This completes the proof of the lemma. 
\end{proof}

\subsubsection{Continuous dependence on the metric}\label{scontidepmet}
We are now ready to prove the continuous dependence on the metric. Below, we suppose that $ { b_1 }^{ i j } $, $ { K_1 }_{ i j } $, $ { b_2 }^{ i j } $ and $ { K_2 }_{ i j } $ are given. Some quantities like $ p^0 $, $ q^0 $ and $ h $ depend on the metric so that we need to explicitly write the dependence on $ { b_1 }^{ i j } $ or $ { b_2 }^{ i j } $. Hence, we denote
\begin{align}
{ p_1 }^0 = \sqrt{ { b_1 }^{ i j } p_i p_j } , \qquad { q_1 }^0 = \sqrt{ { b_1 }^{ i j } q_i q_j } , \qquad h_1 = \sqrt{ ( { n_1 }^0 )^2 - { b_1 }^{ i j } n_i n_j } , 
\end{align}
and $ { p_1' }^0 $, $ { q_1' }^0 $, $ { p_1' }_i $ and $ { q_1' }_i $ are defined by \eqref{p'^0}--\eqref{q'} using the metric $ { b_1 }^{ i j } $. Similarly, we denote 
\begin{align}
{ p_2 }^0 = \sqrt{ { b_2 }^{ i j } p_i p_j } , \qquad { q_2 }^0 = \sqrt{ { b_2 }^{ i j } q_i q_j } , \qquad h_2 = \sqrt{ ( { n_2 }^0 )^2 - { b_2 }^{ i j } n_i n_j } , 
\end{align}
and $ { p_2' }^0 $, $ { q_2' }^0 $, $ { p_2' }_i $ and $ { q_2' }_i $ are defined by \eqref{p'^0}--\eqref{q'} using the metric $ { b_2 }^{ i j } $. Recall that the norms for $ f $ also depend on the metric. Hence, we need to write 
\begin{align}
& \| f \|_{ L^1_{ 1 , r } } = \int_{ \bbr^3 } | f ( p ) | ( { p_1 }^0 )^r \, d^3 p , \qquad { p_1 }^0 = \sqrt{ { b_1 }^{ i j } p_i p_j } , \\
& \| f \|_{ L^\infty_{ 1 , \eta } } = \sup_{ p \in \bbr^3 } | w_{ 1 , \eta } f ( p ) | , \qquad w_{ 1 , \eta } = { p_1 }^0 \exp ( s^{ - 1 }_\eta { p_1 }^0 ) , \qquad s_\eta = ( s + \eta^2 )^\eta , \qquad \eta > 0 , 
\end{align}
and the norms $ \| f \|_{ L^1_{ 2 , r } } $ and $ \| f \|_{ L^\infty_{ 2 , \eta } } $ are defined similarly.

\begin{lemma}\label{lem HJ}
Let $ A $ and $ B $ be positive definite $ n \times n $ matrices. For any $ 0 < \alpha < 1 $, we have 
\[
\det ( \alpha A + ( 1 - \alpha ) B ) \geq ( \det A )^\alpha ( \det B )^{ 1 - \alpha } , 
\]
where the equality holds, if and only if $ A = B $. 
\end{lemma}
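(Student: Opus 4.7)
The plan is to prove this classical Hadamard-type inequality (essentially concavity of $\log\det$ on the cone of positive definite matrices) by reducing it via simultaneous diagonalization to a scalar weighted AM-GM inequality applied to the eigenvalues of one matrix relative to the other.

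First I would reduce to the case in which one of the matrices is the identity. Since $A$ is positive definite, the unique positive definite square root $A^{1/2}$ exists, and $M := A^{-1/2} B A^{-1/2}$ is again positive definite (it is congruent, hence also similar up to scaling, to $B$). Writing
\begin{equation*}
  \alpha A + (1-\alpha) B = A^{1/2}\bigl(\alpha I + (1-\alpha) M\bigr) A^{1/2},
\end{equation*}
taking determinants, and using $\det B = \det A \cdot \det M$, the claimed inequality becomes equivalent to
\begin{equation*}
  \det\bigl(\alpha I + (1-\alpha) M\bigr) \geq (\det M)^{1-\alpha}.
\end{equation*}

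Next, since $M$ is symmetric positive definite, I can diagonalize it by an orthogonal transformation, so that the last inequality is simply a statement about its (strictly positive) eigenvalues $\lambda_1,\dots,\lambda_n>0$:
\begin{equation*}
  \prod_{i=1}^n \bigl(\alpha + (1-\alpha)\lambda_i\bigr) \geq \prod_{i=1}^n \lambda_i^{1-\alpha}.
\end{equation*}
This reduces to the componentwise inequality $\alpha \cdot 1 + (1-\alpha)\lambda_i \geq 1^{\alpha}\lambda_i^{1-\alpha}$, which is exactly the weighted AM-GM inequality with weights $\alpha$ and $1-\alpha$ applied to the pair $(1,\lambda_i)$. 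Taking the product over $i$ yields the required bound.

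For the equality case, weighted AM-GM forces equality iff $\lambda_i = 1$ for each $i$, i.e.\ $M = I$, which in turn is equivalent to $B = A^{1/2}\cdot I \cdot A^{1/2}=A$. I expect no real obstacle here since each step is elementary; the only care needed is the justification that $A^{-1/2} B A^{-1/2}$ is a positive definite symmetric matrix (so the spectral theorem applies) and that the reduction genuinely preserves the equality case, which follows from the invertibility of $A^{1/2}$.
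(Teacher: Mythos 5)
Your proof is correct: the reduction to $M = A^{-1/2}BA^{-1/2}$, the orthogonal diagonalization, and the weighted AM--GM inequality on each eigenvalue pair $(1,\lambda_i)$ together give exactly the claimed inequality, and the equality analysis ($\lambda_i=1$ for all $i$ iff $M=I$ iff $B=A$) is handled properly. The paper itself offers no argument here --- it simply cites Corollary 7.6.8 of Horn and Johnson's \emph{Matrix Analysis} (the log-concavity of the determinant on the positive definite cone) --- so your self-contained derivation is the standard proof of that cited result and is a perfectly good replacement for the citation.
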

\begin{proof}
See the Corollary 7.6.8.~of Ref.~\cite{HJ} for the proof. 
\end{proof}

\begin{lemma}\label{lem F-F}
Suppose that there exist a time interval $ [ 0 , T ] $ and a positive constant $ A $ such that $ { b_1 }^{ i j } $, $ { K_1 }_{ i j } $, $ { b_2 }^{ i j } $ and $ { K_2 }_{ i j } $ are defined on $ [ 0 , T ] $ and satisfy
\begin{align}
\sup_{ 0 \leq s \leq T } \| b_1 ( s ) \| \leq A , \qquad \sup_{ 0 \leq s \leq T } \| K_1 ( s ) \| \leq A , \qquad \inf_{ 0 \leq s \leq T } \det b_1 ( s ) \geq \frac{ 1 }{ A } , \\ 
\sup_{ 0 \leq s \leq T } \| b_2 ( s ) \| \leq A , \qquad \sup_{ 0 \leq s \leq T } \| K_2 ( s ) \| \leq A , \qquad \inf_{ 0 \leq s \leq T } \det b_2 ( s ) \geq \frac{ 1 }{ A } . 
\end{align}
Then, we have the following estimates on $ [ 0 , T ] $: 
\begin{align}
\left| ( \det b_1 )^{ \frac12 } - ( \det b_2 )^{ \frac12 } \right| & \leq C_A \| b_1 - b_2 \| , \\ 
\left| \frac{ 1 }{ { p_1 }^0 { q_1 }^0 } - \frac{ 1 }{ { p_2 }^0 { q_2 }^0 } \right| & \leq C_A \frac{ \| b_1 - b_2 \| }{ \langle p \rangle \langle q \rangle } , \\ 
\left| { h_1 }^{ 2 - \gamma } - { h_2 }^{ 2 - \gamma } \right| & \leq C_A \| b_1 - b_2 \| \langle p \rangle^{ 1 - \frac{ \gamma }{ 2 } } \langle q \rangle^{ 1 - \frac{ \gamma }{ 2 } } , 
\end{align}
where $ C_A $ is a positive constant depending on $ A $. 
\end{lemma}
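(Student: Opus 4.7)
The plan is to treat the three bounds in order of increasing difficulty. Throughout, the assumptions $\|b_\alpha\|\le A$ and $\det b_\alpha\ge 1/A$ ($\alpha=1,2$) pin the eigenvalues of $b_1,b_2$ between two positive constants $c_A$ and $C_A$, so by \eqref{p^0 1}--\eqref{p^0 2} we have $c_A\langle p\rangle\le p_\alpha^0\le C_A\langle p\rangle$, and similarly for $q$.

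For the first inequality I would write
\[
(\det b_1)^{1/2}-(\det b_2)^{1/2}=\frac{\det b_1-\det b_2}{(\det b_1)^{1/2}+(\det b_2)^{1/2}}.
\]
The denominator is bounded below by $2/\sqrt{A}$ by assumption, while $\det b_1-\det b_2$ is a polynomial of degree three in the entries of $b_1,b_2$, so by multilinearity it is Lipschitz in $\|b_1-b_2\|$ with constant depending only on $A$.

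For the second inequality I would reduce, via $\tfrac{1}{p_1^0q_1^0}-\tfrac{1}{p_2^0q_2^0}=\tfrac{p_2^0q_2^0-p_1^0q_1^0}{p_1^0q_1^0p_2^0q_2^0}$, to the key pointwise bound
\[
|p_1^0-p_2^0|=\frac{|(b_1^{ij}-b_2^{ij})p_ip_j|}{p_1^0+p_2^0}\le C_A\|b_1-b_2\|\,\langle p\rangle,
\]
and analogously for $q$; expanding $p_2^0q_2^0-p_1^0q_1^0$ then gives a numerator bounded by $C_A\|b_1-b_2\|\langle p\rangle\langle q\rangle$, while the denominator is at least $c_A\langle p\rangle^2\langle q\rangle^2$.

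The third inequality is the main obstacle, since $2-\gamma\in(0,1)$ makes a naive mean-value argument only Hölder-continuous, not Lipschitz, in $h$. My plan is to interpolate along the segment $b_t=(1-t)b_2+tb_1$ in the (convex) cone of positive-definite matrices; Lemma \ref{lem HJ} is what guarantees that $\det b_t\ge (\det b_1)^t(\det b_2)^{1-t}\ge 1/A$ uniformly in $t$, preserving all the ellipticity estimates of the previous paragraph along the path. Setting $h_t:=h(b_t)$, we may assume $h_t>0$ for all $t$ (if $p,q$ are proportional then $h_1=h_2=0$ and the bound is trivial). A direct computation parallel to \eqref{W5} gives, with $\Delta b:=b_1-b_2$ and $w_i:=p_i/p_t^0-q_i/q_t^0$,
\[
\frac{dh_t^2}{dt}=\Delta b^{ij}\!\left[\frac{q_t^0 p_ip_j}{p_t^0}+\frac{p_t^0 q_iq_j}{q_t^0}-2p_iq_j\right]=p_t^0q_t^0\,\Delta b^{ij}w_iw_j,
\]
the second equality using the symmetry of $\Delta b$. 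The identity $b_t^{ij}w_iw_j=h_t^2/(p_t^0q_t^0)$ follows by expanding and using $h_t^2=2p_t^0q_t^0-2b_t^{ij}p_iq_j$; combined with the uniform lower bound $\lambda_{\min}(b_t)\ge c_A$, it yields
\[
|\Delta b^{ij}w_iw_j|\le C_A\|\Delta b\|\,b_t^{ij}w_iw_j=\frac{C_A\|\Delta b\|\,h_t^2}{p_t^0q_t^0},
\]
hence $|dh_t^2/dt|\le C_A\|\Delta b\|\,h_t^2$. The chain rule then gives $|\partial_th_t^{2-\gamma}|\le C_A\|\Delta b\|\,h_t^{2-\gamma}$, and integrating in $t\in[0,1]$ together with $h_t^{2-\gamma}\le(2p_t^0q_t^0)^{1-\gamma/2}\le C_A\langle p\rangle^{1-\gamma/2}\langle q\rangle^{1-\gamma/2}$ yields the claim. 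The delicate point, and the reason the factor $h_t^{-\gamma}$ from differentiation does not blow up, is that $dh_t^2/dt$ itself carries an $h_t^2$ factor; this cancellation is the content of the $w$-identity and is what turns a formally Hölder estimate into the Lipschitz one stated.
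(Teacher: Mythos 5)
Your proposal is correct, and its core strategy for the hardest (third) bound is the same as the paper's: interpolate along the straight segment joining \( b_2 \) to \( b_1 \), use Lemma \ref{lem HJ} to keep \( \det b_t \geq 1/A \) (hence uniform ellipticity) along the path, and exploit the fact that the metric-derivative of \( h^2 \) itself carries a factor \( h^2 \), which is exactly the cancellation encoded in the paper's formula \eqref{W7} and the bound \eqref{Wconti3}. Where you differ is in how the derivative bounds are obtained and in the first two estimates: the paper runs all three quantities through the generic segment identity \eqref{F-F} and bounds \( \partial F / \partial b^{kl} \) using the orthonormal frame of \eqref{ortho_e}--\eqref{ortho_theta}, Lemma \ref{lem e} and \eqref{Wsin}, whereas you handle \( (\det b)^{1/2} \) and \( 1/(p^0 q^0) \) by elementary algebra (difference of square roots, \( |p_1^0 - p_2^0| = |(b_1-b_2)^{ij} p_i p_j| / (p_1^0 + p_2^0) \)), and you replace the frame computation by the intrinsic identity \( b_t^{ij} w_i w_j = h_t^2/(p_t^0 q_t^0) \) together with \( \lambda_{\min}(b_t) \geq c_A \). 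This is a cleaner, self-contained derivation of the same cancellation; what it gives up is reuse of the estimates already established for Lemma \ref{lem W}, which the paper gets for free. Two small points: your intermediate inequality \( h_t^{2-\gamma} \leq (2 p_t^0 q_t^0)^{1-\gamma/2} \) is off by a constant — Cauchy--Schwarz only gives \( h_t^2 \leq 4 p_t^0 q_t^0 \) (cf.\ \eqref{l1}), e.g.\ for antiparallel momenta — but this is absorbed into \( C_A \); and your treatment of the degenerate case is right, since \( h_t = 0 \) holds iff \( q \) is a nonnegative multiple of \( p \), a metric-independent condition, so \( h_t \) cannot vanish at an interior point of the path while \( h_1, h_2 > 0 \).
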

\begin{proof}
Let $ F = F ( b^{ i j } ) $ be a differentiable function with respect to $ b^{ i j } $ for $ i , j = 1 , 2 , 3 $. Define 
\begin{align}
{ b_\alpha }^{ i j } = ( \alpha - 1 ) { b_2 }^{ i j } + ( 2 - \alpha ) { b_1 }^{ i j } , \qquad 1 \leq \alpha \leq 2 . 
\end{align}
Then, we have 
\begin{align}
F ( { b_2 }^{ i j } ) - F ( { b_1 }^{ i j } ) & = \int_1^2 \frac{d ( F ( { b_\alpha }^{ i j } )) }{ d \alpha } \, d \alpha \nonumber \\ 
& = \int_1^2 \frac{ \partial F }{ \partial b^{ k l } } ( { b_\alpha }^{ i j } ) \frac{ d { b_\alpha }^{ k l } }{ d \alpha } \, d \alpha \nonumber \\ 
& = \int_1^2 \frac{ \partial F }{ \partial b^{ k l } } ( { b_\alpha }^{ i j } ) ( { b_2 }^{ k l } - { b_1 }^{ k l } ) \, d \alpha \nonumber \\ 
& = ( { b_2 }^{ k l } - { b_1 }^{ k l } ) \int_1^2 \frac{ \partial F }{ \partial b^{ k l } } ( { b_\alpha }^{ i j } ) \, d \alpha . 
\end{align}
Hence, we obtain 
\begin{align}
| F ( { b_2 }^{ i j } ) - F ( { b_1 }^{ i j } ) | \leq \| b_2 - b_1 \| \max_{ k , l } \sup_{ 1 \leq \alpha \leq 2 } \left| \frac{ \partial F }{ \partial b^{ k l } } ( { b_\alpha }^{ i j } ) \right| . \label{F-F}
\end{align}
The lemma is now proved by using \eqref{F-F} with the estimates of the following quantities: 
\begin{align*}
\frac{ \partial ( \det b )^{ \frac12 } }{ \partial b^{ k l } } , \qquad \frac{ \partial }{ \partial b^{ k l } } \left( \frac{ 1 }{ p^0 q^0 } \right) , \qquad \frac{ \partial ( h^{ 2 - \gamma } ) }{ \partial b^{ k l } } . 
\end{align*}
We first note that $ \partial ( \det b ) / \partial b^{ k l } $ is a second order polynomial of $ b^{ i j } $ so that 
\begin{align}
\left| \frac{ \partial ( \det b )^{ \frac12 } }{ \partial b^{ k l } } \right| \leq C \frac{ \| b \|^2 }{ ( \det b )^{ \frac12 } } . 
\end{align}
For the other quantities we use \eqref{Wconti1}, \eqref{Wconti2} and \eqref{Wconti3}, and apply Lemma \ref{lem e} to obtain 
\begin{align}
& \left| \frac{ \partial }{ \partial b^{ k l } } \left( \frac{ 1 }{ p^0 q^0 } \right) \right| \leq C \| e^{ - 1 } \|^2 \frac{ 1 }{ p^0 q^0 } \leq C \frac{ \| b \|^5 }{ ( \det b )^2 } \frac{ 1 }{ p^0 q^0 } , \label{F-F2} \\ 
& \left| \frac{ \partial ( h^{ 2 - \gamma } ) }{ \partial b^{ k l } } \right| \leq C \| e^{ - 1 } \|^2 h^{ 2 - \gamma } \leq C \frac{ \| b \|^5 }{ ( \det b )^2 } h^{ 2 - \gamma } \leq C \frac{ \| b \|^5 }{ ( \det b )^2 } ( p^0 q^0 )^{ 1 - \frac{ \gamma }{ 2 } } . \label{F-F3}
\end{align}
Now, we note that 
\begin{align}
\sup_{ 1 \leq \alpha \leq 2 } \| b_\alpha \| \leq C_A , 
\end{align}
since $ | { b_\alpha }^{ i j } | \leq 2 A $. Moreover, by Lemma \ref{lem HJ} we have for any $ 1 \leq \alpha \leq 2 $, 
\begin{align}
\det b_\alpha \geq ( \det b_2 )^{ \alpha - 1 } ( \det b_1 )^{ 2 - \alpha } \geq \frac{ 1 }{ A } . 
\end{align}
Hence, we obtain 
\begin{align}
\sup_{ 1 \leq \alpha \leq 2 } \frac{ \| b_\alpha \|^2 }{ ( \det b_\alpha )^{ \frac12 } } \leq C_A , 
\end{align}
which implies that by \eqref{F-F} with $ F = ( \det b )^{ 1 / 2 } $, 
\begin{align}
\left| ( \det b_1 )^{ \frac12 } - ( \det b_2 )^{ \frac12 } \right| & \leq C_A \| b_1 - b_2 \| . 
\end{align}
For the second and third estimates of the lemma, we use \eqref{p^0 1} and \eqref{p^0 2} to obtain 
\begin{align}
\sup_{ 1 \leq \alpha \leq 2 } { p_\alpha }^0 \leq C_A \langle p \rangle , \qquad \sup_{ 1 \leq \alpha \leq 2 } { q_\alpha }^0 \leq C_A \langle q \rangle , \qquad \sup_{ 1 \leq \alpha \leq 2 } \frac{ 1 }{ { p_\alpha }^0 } \leq \frac{ C_A }{ \langle p \rangle } , \qquad \sup_{ 1 \leq \alpha \leq 2 } \frac{ 1 }{ { q_\alpha }^0 } \leq \frac{ C_A }{ \langle q \rangle } , 
\end{align}
which imply that 
\begin{align}
\sup_{ 1 \leq \alpha \leq 2 } \frac{ \| b_\alpha \|^5 }{ ( \det b_\alpha )^2 } \frac{ 1 }{ { p_\alpha }^0 { q_\alpha }^0 } \leq \frac{ C_A }{ \langle p \rangle \langle q \rangle } , \qquad \sup_{ 1 \leq \alpha \leq 2 } \frac{ \| b_\alpha \|^5 }{ ( \det b_\alpha )^2 } ( { p_\alpha }^0 { q_\alpha }^0 )^{ 1 - \frac{ \gamma }{ 2 } } \leq C_A \langle p \rangle^{ 1 - \frac{ \gamma }{ 2 } } \langle q \rangle^{ 1 - \frac{ \gamma }{ 2 } } . 
\end{align}
Applying these estimates to \eqref{F-F}--\eqref{F-F3} we obtain the desired results. 
\end{proof}

\begin{lemma} \label{lem dp'db}
Suppose that there exist a time interval $ [ 0 , T ] $ and a positive constant $ A $ such that $ { b }^{ i j } $ and $ { K }_{ i j } $ are defined on $ [ 0 , T ] $ and satisfy
\begin{align}
\sup_{ 0 \leq s \leq T } \| b ( s ) \| \leq A , \qquad \sup_{ 0 \leq s \leq T } \| K ( s ) \| \leq A , \qquad \inf_{ 0 \leq s \leq T } \det b ( s ) \geq \frac{ 1 }{ A } . 
\end{align}
Then, we have the following estimate on $ [ 0 , T ] $: 
\begin{align}
\left| \frac{ \partial p' }{ \partial b^{ k l } } \right| + \left| \frac{ \partial q' }{ \partial b^{ k l } } \right| \leq C_A n^0 , 
\end{align}
where $ C_A $ is a positive constant depending on $ A $. 
\end{lemma}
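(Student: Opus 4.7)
The plan is to differentiate the explicit representations \eqref{p'} and \eqref{q'} of $p'_i$ and $q'_i$ with respect to $b^{kl}$ and then bound each summand using the auxiliary estimates collected earlier in the paper. The derivative $\partial p'_i/\partial b^{st}$ has already been computed in \eqref{W11} in the course of proving Lemma~\ref{lem W}; the derivative $\partial q'_i/\partial b^{st}$ is of the same form with two sign changes. Hence no new calculation of derivatives is required, only term-by-term estimation.

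The key ingredients are the following consequences of the hypotheses $\|b\|\le A$, $\|K\|\le A$, and $\det b \ge 1/A$. First, Lemma~\ref{lem e} applied with the lower bound on $\det b$ turns the rational bounds there into uniform constants, so $|{e_j}^i|+|{e^i}_j|+|\partial_{b^{st}}{e_j}^i|+|\partial_{b^{st}}{e^i}_j|\le C_A$. Second, since $p_i=\hat p_j{e^j}_i$ and $|\hat p|=p^0$, one has $|p_i|\le C_A p^0$, and similarly $|q_i|\le C_A q^0$, giving in particular $|n_i|\le C_A n^0$. Third, $|\partial p^0/\partial b^{st}|\le C_A p^0$ (and likewise for $q^0$), so $|\partial n^0/\partial b^{st}|\le C_A n^0$. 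Fourth, by \eqref{Wconti4}, $|\partial h/\partial b^{st}|\le C\|e^{-1}\|^2 h\le C_A h$, and by Lemma~\ref{lem h}, $h\le n^0$.

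Using these bounds, each of the four terms in the displayed expression for $\partial p'_i/\partial b^{st}$ in \eqref{W11} is $O(n^0)$: the first term is bounded by $|\partial h/\partial b^{st}|\le C_A h\le C_A n^0$; the second by $h\cdot C_A\le C_A n^0$; the third by $C_A n^0\cdot C_A\cdot n^0/(n^0+h)\le C_A n^0$; and the last one by $C_A(n^0)^2\cdot C_A n^0/(n^0+h)^2\le C_A n^0$, where we used $n^0+h\ge n^0$. Summing gives $|\partial p'_i/\partial b^{kl}|\le C_A n^0$, and the identical reasoning applied to the analogue of \eqref{W11} for $q'_i$ (which differs from $p'_i$ only in the sign of the $h$-dependent and $\omega$-dependent pieces) yields the same bound for $\partial q'_i/\partial b^{kl}$.

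I do not expect any genuine obstacle here: all of the nontrivial analytic content (the regularity of the frame $e$, the smallness of $h$-derivatives relative to $h$, the control of $n_i$ in the non-orthonormal basis) is already isolated in Lemma~\ref{lem e} and in the estimates \eqref{Wconti1}--\eqref{Wconti4} from Section~\ref{sDH}. The only mild care required is to keep track of the fact that the bounds in Lemma~\ref{lem e} originally depend on $\det b$ appearing in denominators, and to absorb those denominators into $C_A$ using the hypothesis $\det b\ge 1/A$; this is routine bookkeeping rather than a new estimate.
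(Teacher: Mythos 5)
Your proposal is correct and follows essentially the same route as the paper: it starts from the already-computed derivative \eqref{W11}, bounds the frame components and their derivatives via Lemma~\ref{lem e} together with the hypotheses $\|b\|\le A$ and $\det b\ge 1/A$, uses \eqref{Wconti4} and Lemma~\ref{lem h} for $\partial h/\partial b^{st}$, and estimates term by term. The paper merely compresses the final term-by-term step into one sentence, so your write-up is, if anything, slightly more explicit.
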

\begin{proof}
Recall that $ \partial p' / \partial b^{ k l } $ is computed in \eqref{W11}: 
\begin{align*}
\frac{\partial p'_i}{\partial b^{st}} & = \frac{1}{2}\bigg\{ \frac{\partial h}{\partial b^{st}} \omega_j { e^j }_i + h\omega_j (\partial_{b^{st}} { e^j }_i ) + \frac{n_l \omega_k \delta^{ j k } (\partial_{b^{st}} { e_j }^l ) n_i }{ n^0 + h} - \frac{n_l \omega_k \delta^{ j k } { e_j }^l n_i }{ (n^0 + h)^2}\left(\frac{\partial n^0}{\partial b^{st}} + \frac{\partial h}{\partial b^{st}}\right)\bigg\} .
\end{align*}
Here, the quantities $ \omega_j $, $ { e^j }_i $, $ \partial_{ b^{ s t } } { e^j }_i $, $ \partial_{ b^{ s t } } { e_j }^l $ and $ { e_j }^l $ are all bounded by $ C_A $ by the same arguments as in the proof of Lemma \ref{lem F-F}. We use \eqref{Wconti4} to obtain 
\begin{align}
\left| \frac{ \partial h }{ \partial b^{ s t } } \right| \leq C \| e^{ - 1 } \|^2 h \leq C_A n^0 . 
\end{align}
We also easily obtain $ h \leq C n^0 $, $ | n_j | \leq C_A n^0 $ and $ | \partial n^0 / \partial b^{ s t } | \leq C_A n^0 $. We combine the above estimates to obtain the desired result for $ \partial p' / \partial b^{ k l } $. The result for $ \partial q' / \partial b^{ k l } $ is obtained by the same arguments, and this completes the proof. 
\end{proof}

In Proposition \ref{prop C} below, we will show that the solution $ f $ of the Boltzmann equation depends continuously on the metric $ b^{ i j } $. To see this we will suppose that two metrics $ { b_1 }^{ i j } $, $ { b_2 }^{ i j } $ and an initial data $ f_0 $ are given, so that we have two solutions $ f_1 $ and $ f_2 $, corresponding to $ { b_1 }^{ i j } $ and $ { b_2 }^{ i j } $ respectively, having the initial data $ f_0 $. Then, $ f_1 $ is bounded in $ L^1_{ 1 , r } $, and $ f_2 $ is bounded in $ L^1_{ 2 , r } $, for some $ r $, provided that $ f_0 $ is bounded in $ L^1_{ 1 , r } \cap L^1_{ 2 , r } $. However, we will assume that the initial values of $ { b_1 }^{ i j } $ and $ { b_2 }^{ i j } $ are the same, i.e., 
\begin{align}
{ b_1 }^{ i j } ( 0 ) = { b_0 }^{ i j } = { b_2 }^{ i j } ( 0 ) , 
\end{align}
for some suitable $ { b_0 }^{ i j } $, and this implies that $ L^1_{ 1 , r } = L^1_{ 2 , r } $ at $ s = 0 $. Hence, we may denote it by $ L^1_r $ and will assume that the initial data $ f_0 $ is bounded in $ L^1_r $.

\begin{prop}\label{prop C}
Suppose that there exist a time interval $ [ 0 , T ] $ and a positive constant $ A $ such that $ { b_1 }^{ i j } $, $ { K_1 }_{ i j } $, $ { b_2 }^{ i j } $ and $ { K_2 }_{ i j } $ are defined on $ [ 0 , T ] $ and satisfy $ { b_1 }^{ i j } ( 0 ) = { b_2 }^{ i j } ( 0 ) $, $ { K_1 }_{ i j } ( 0 ) = { K_2 }_{ i j } ( 0 ) $ and 
\begin{align}
\sup_{ 0 \leq s \leq T } \| b_1 ( s ) \| \leq A , \qquad \sup_{ 0 \leq s \leq T } \| K_1 ( s ) \| \leq A , \qquad \inf_{ 0 \leq s \leq T } \det b_1 ( s ) \geq \frac{ 1 }{ A } , \\ 
\sup_{ 0 \leq s \leq T } \| b_2 ( s ) \| \leq A , \qquad \sup_{ 0 \leq s \leq T } \| K_2 ( s ) \| \leq A , \qquad \inf_{ 0 \leq s \leq T } \det b_2 ( s ) \geq \frac{ 1 }{ A } . 
\end{align}
Then, for any initial data $ f_0 \geq 0 $ such that 
\begin{align}
f_0 \in L^1_1 ( \bbr^3 ) \cap L^1_{ - 2 - \delta / 2 } ( \bbr^3 ) \cap L^\infty_\eta ( \bbr^3 ) , \qquad \frac{ \partial f_0 }{ \partial p } \in L^1_1 ( \bbr^3 ) \cap L^1_{ - 1 - \delta / 2 } ( \bbr^3 ) , 
\end{align}
where $ 0 < \eta < 2 / A^2 $, and $ \delta > 0 $ is a number satisfying
\begin{align}
\gamma + \delta < 2 , 
\end{align}
there exists $ 0 \leq T_A \leq T $ such that the Boltzmann equation \eqref{B} has unique non-negative solutions $ f_1 $ and $ f_2 $, corresponding to $ { b_1 }^{ i j } $ and $ { b_2 }^{ i j } $ respectively, such that
\begin{align}
\| f_1 ( s ) - f_2 ( s ) \|_{ L^1_{ 1 , 1 } } + \| f_1 ( s ) - f_2 ( s ) \|_{ L^1_{ 1 , - 1 } } & \leq C_A \sup_{ 0 \leq \rho \leq s } \| b_1 ( \rho ) - b_2 ( \rho ) \| , \label{prop B5} \\
\left\| \frac{ \partial f_1 }{ \partial s } ( s ) - \frac{ \partial f_2 }{ \partial s } ( s ) \right\|_{ L^1_{ 1 , 1 } } + \left\| \frac{ \partial f_1 }{ \partial s } ( s ) - \frac{ \partial f_2 }{ \partial s } ( s ) \right\|_{ L^1_{ 1 , - 1 } } & \leq C_A \sup_{ 0 \leq \rho \leq s } \| b_1 ( \rho ) - b_2 ( \rho ) \| , \label{prop B6}
\end{align}
for $ 0 \leq s \leq T_A $, where $ C_A $ is a positive constant depending on $ A $, $ \eta $ and $ \delta $. 
\end{prop}
\begin{proof}
Let us denote by $ f_1 $ and $ f_2 $ the solutions of the Boltzmann equation associated with $ { b_1 }^{ i j } $ and $ { b_2 }^{ i j } $, respectively. Note that the existence of $ f_1 $ and $ f_2 $ is given by Lemma \ref{lem dfdp}. Since $ f_1 $ and $ f_2 $ satisfy 
\begin{align}
\frac{ \partial f_1 }{ \partial s } & = ( \det b_1 )^{ \frac12 } \iint \frac{ { h_1 }^{ 2 - \gamma } }{ { p_1 }^0 { q_1 }^0 } ( f_1 ( p_1' ) f_1 ( q_1' ) - f_1 ( p ) f_1 ( q ) ) \, d \omega \, d^3 q , \label{df ds1} \\ 
\frac{ \partial f_2 }{ \partial s } & = ( \det b_2 )^{ \frac12 } \iint \frac{ { h_2 }^{ 2 - \gamma } }{ { p_2 }^0 { q_2 }^0 } ( f_2 ( p_2' ) f_2 ( q_2' ) - f_2 ( p ) f_2 ( q ) ) \, d \omega \, d^3 q , \label{df ds2}
\end{align}
respectively, we have 
\begin{align}
\frac{ \partial ( f_1 - f_2 ) }{ \partial s } & = ( ( \det b_1 )^{ \frac12 } - ( \det b_2 )^{ \frac12 } ) \iint \frac{ { h_1 }^{ 2 - \gamma } }{ { p_1 }^0 { q_1 }^0 } ( f_1 ( p_1' ) f_1 ( q_1' ) - f_1 ( p ) f_1 ( q ) ) \, d \omega \, d^3 q \nonumber \\ 
& \quad + ( \det b_2 )^{ \frac12 } \iint \left( \frac{ 1 }{ { p_1 }^0 { q_1 }^0 } - \frac{ 1 }{ { p_2 }^0 { q_2 }^0 } \right) { h_1 }^{ 2 - \gamma } ( f_1 ( p_1' ) f_1 ( q_1' ) - f_1 ( p ) f_1 ( q ) ) \, d \omega \, d^3 q \nonumber \\ 
& \quad + ( \det b_2 )^{ \frac12 } \iint \frac{ 1 }{ { p_2 }^0 { q_2 }^0 } ( { h_1 }^{ 2 - \gamma } - { h_2 }^{ 2 - \gamma } ) ( f_1 ( p_1' ) f_1 ( q_1' ) - f_1 ( p ) f_1 ( q ) ) \, d \omega \, d^3 q \nonumber \\ 
& \quad + ( \det b_2 )^{ \frac12 } \iint \frac{ { h_2 }^{ 2 - \gamma } }{ { p_2 }^0 { q_2 }^0 } ( f_1 ( p_1' ) f_1 ( q_1' ) - f_1 ( p_2' ) f_1 ( q_2' ) ) \, d \omega \, d^3 q \nonumber \\ 
& \quad + \frac{ ( \det b_2 )^{ \frac12 } }{ 2 } \iint \frac{ { h_2 }^{ 2 - \gamma } }{ { p_2 }^0 { q_2 }^0 } ( ( f_1 + f_2 ) ( p_2' ) ( f_1 - f_2 ) ( q_2' ) + ( f_1 - f_2 ) ( p_2' ) ( f_1 + f_2 ) ( q_2' ) \nonumber \\ 
& \qquad \qquad \qquad \qquad - ( f_1 + f_2 ) ( p ) ( f_1 - f_2 ) ( q ) - ( f_1 - f_2 ) ( p ) ( f_1 + f_2 ) ( q ) ) \, d \omega \, d^3 q . \label{df ds3}
\end{align}
Multiplying the above by $ ( f_1 - f_2 ) ( p ) / | f_1 - f_2 | ( p ) $, we obtain 
\begin{align}
\frac{ \partial | f_1 - f_2 | }{ \partial s } & \leq | ( \det b_1 )^{ \frac12 } - ( \det b_2 )^{ \frac12 } | \iint \frac{ { h_1 }^{ 2 - \gamma } }{ { p_1 }^0 { q_1 }^0 } ( f_1 ( p_1' ) f_1 ( q_1' ) + f_1 ( p ) f_1 ( q ) ) \, d \omega \, d^3 q \nonumber \\ 
& \quad + ( \det b_2 )^{ \frac12 } \iint \left| \frac{ 1 }{ { p_1 }^0 { q_1 }^0 } - \frac{ 1 }{ { p_2 }^0 { q_2 }^0 } \right| { h_1 }^{ 2 - \gamma } ( f_1 ( p_1' ) f_1 ( q_1' ) + f_1 ( p ) f_1 ( q ) ) \, d \omega \, d^3 q \nonumber \\ 
& \quad + ( \det b_2 )^{ \frac12 } \iint \frac{ 1 }{ { p_2 }^0 { q_2 }^0 } | { h_1 }^{ 2 - \gamma } - { h_2 }^{ 2 - \gamma } | ( f_1 ( p_1' ) f_1 ( q_1' ) + f_1 ( p ) f_1 ( q ) ) \, d \omega \, d^3 q \nonumber \\ 
& \quad + ( \det b_2 )^{ \frac12 } \iint \frac{ { h_2 }^{ 2 - \gamma } }{ { p_2 }^0 { q_2 }^0 } | f_1 ( p_1' ) f_1 ( q_1' ) - f_1 ( p_2' ) f_1 ( q_2' ) | \, d \omega \, d^3 q \nonumber \\ 
& \quad + \frac{ ( \det b_2 )^{ \frac12 } }{ 2 } \iint \frac{ { h_2 }^{ 2 - \gamma } }{ { p_2 }^0 { q_2 }^0 } ( ( f_1 + f_2 ) ( p_2' ) | f_1 - f_2 | ( q_2' ) + | f_1 - f_2 | ( p_2' ) ( f_1 + f_2 ) ( q_2' ) \nonumber \\ 
& \qquad \qquad \qquad \qquad + ( f_1 + f_2 ) ( p ) | f_1 - f_2 | ( q ) - | f_1 - f_2 | ( p ) ( f_1 + f_2 ) ( q ) ) \, d \omega \, d^3 q . \label{df ds4}
\end{align}
Finally, we multiply the above by $ ( { p_1 }^0 )^r $ and integrate over $ \bbr^3_p $ to obtain 
\begin{align}
& \frac{ d }{ d s } \| f_1 - f_2 \|_{ L^1_{ 1 , r } }  - \int_{ \bbr^3 } | f_1 - f_2 | ( p ) \left( r ( { p_1 }^0 )^{ r - 1 } \frac{ \partial { p_1 }^0 }{ \partial s } \right) \, d^3 p \nonumber \\ 
& \leq | ( \det b_1 )^{ \frac12 } - ( \det b_2 )^{ \frac12 } | \iiint \frac{ { h_1 }^{ 2 - \gamma } }{ { p_1 }^0 { q_1 }^0 } ( f_1 ( p_1' ) f_1 ( q_1' ) + f_1 ( p ) f_1 ( q ) ) ( { p_1 }^0 )^r \, d \omega \, d^3 q \, d^3 p \nonumber \\ 
& \quad + ( \det b_2 )^{ \frac12 } \iiint \left| \frac{ 1 }{ { p_1 }^0 { q_1 }^0 } - \frac{ 1 }{ { p_2 }^0 { q_2 }^0 } \right| { h_1 }^{ 2 - \gamma } ( f_1 ( p_1' ) f_1 ( q_1' ) + f_1 ( p ) f_1 ( q ) ) ( { p_1 }^0 )^r \, d \omega \, d^3 q \, d^3 p \nonumber \\ 
& \quad + ( \det b_2 )^{ \frac12 } \iiint \frac{ 1 }{ { p_2 }^0 { q_2 }^0 } | { h_1 }^{ 2 - \gamma } - { h_2 }^{ 2 - \gamma } | ( f_1 ( p_1' ) f_1 ( q_1' ) + f_1 ( p ) f_1 ( q ) ) ( { p_1 }^0 )^r \, d \omega \, d^3 q \, d^3 p \nonumber \\ 
& \quad + ( \det b_2 )^{ \frac12 } \iiint \frac{ { h_2 }^{ 2 - \gamma } }{ { p_2 }^0 { q_2 }^0 } | f_1 ( p_1' ) f_1 ( q_1' ) - f_1 ( p_2' ) f_1 ( q_2' ) | ( { p_1 }^0 )^r \, d \omega \, d^3 q \, d^3 p \nonumber \\ 
& \quad + \frac{ ( \det b_2 )^{ \frac12 } }{ 2 } \iiint \frac{ { h_2 }^{ 2 - \gamma } }{ { p_2 }^0 { q_2 }^0 } ( ( f_1 + f_2 ) ( p_2' ) | f_1 - f_2 | ( q_2' ) + | f_1 - f_2 | ( p_2' ) ( f_1 + f_2 ) ( q_2' ) \nonumber \\ 
& \qquad \qquad \qquad \qquad + ( f_1 + f_2 ) ( p ) | f_1 - f_2 | ( q ) ) ( { p_1 }^0 )^r \, d \omega \, d^3 q \, d^3 p \nonumber \\
& \quad - \frac{ ( \det b_2 )^{ \frac12 } }{ 2 } \iiint \frac{ { h_2 }^{ 2 - \gamma } }{ { p_2 }^0 { q_2 }^0 } | f_1 - f_2 | ( p ) ( f_1 + f_2 ) ( q ) ( { p_1 }^0 )^r \, d \omega \, d^3 q \, d^3 p . \label{f-f}
\end{align}
We can write the above as 
\begin{align}
\frac{ d }{ d s } \| f_1 - f_2 \|_{ L^1_{ 1 , r } } \leq K_{ 0 , r } + K_{ 1 , r } + K_{ 2 , r } + K_{ 3 , r } + K_{ 4 , r } + K_{ 5 , r } - K_{ 6 , r } , \label{f-fs}
\end{align}
where $ K_{ 0 , r } $ corresponds to the second term on the left hand side of \eqref{f-f}, and the other quantities are the six integrals on the right hand side of \eqref{f-f}. Then, for $ r = 1 $ and $ r = - 1 $ we have 
\begin{align}
\frac{ d }{ d s } \left( \| f_1 - f_2 \|_{ L^1_{ 1 , 1 } } + \| f_1 - f_2 \|_{ L^1_{ 1 , - 1 } } \right) + K_6 \leq K_{ 0 } + K_{ 1 } + K_{ 2 } + K_{ 3 } + K_{ 4 } + K_{ 5 } , \label{f-fa} 
\end{align}
where $ K_{ n } = K_{ n , 1 } + K_{ n , - 1 } $ $ ( n = 0 , \dots , 6 ) $. We claim that each $ K_n $ for $ n = 0 , \dots , 5 $ satisfies the following estimates:
\begin{align}
| K_0 | & \leq C_A \left( \| f_1 - f_2 \|_{ L^1_{ 1 , 1 } } + \| f_1 - f_2 \|_{ L^1_{ 1 , - 1 } } \right) , \label{f-f0} \\
K_1 & \leq C_A \| b_1 - b_2 \| , \label{f-f1} \\
K_2 & \leq C_A \| b_1 - b_2 \| , \label{f-f2} \\
K_3 & \leq C_A \| b_1 - b_2 \| , \label{f-f3} \\
K_4 & \leq C_A \| b_1 - b_2 \| , \label{f-f4} \\
K_5 & \leq C_A \left( \| f_1 - f_2 \|_{ L^1_{ 1 , 1 } } + \| f_1 - f_2 \|_{ L^1_{ 1 , - 1 } } \right) . \label{f-f5} 
\end{align}
If the estimates \eqref{f-f0}--\eqref{f-f5} are satisfied, then we have from \eqref{f-fa}
\begin{align}
\frac{ d }{ d s } \left( \| f_1 - f_2 \|_{ L^1_{ 1 , 1 } } + \| f_1 - f_2 \|_{ L^1_{ 1 , - 1 } } \right) \leq C_A \left( \| b_1 - b_2 \| + \| f_1 - f_2 \|_{ L^1_{ 1 , 1 } } + \| f_1 - f_2 \|_{ L^1_{ 1 , - 1 } } \right) ,
\end{align}
since $ K_6 $ is non-negative, and use Gr{\"o}nwall's inequality to obtain the desired result \eqref{prop B5}. We use again \eqref{f-fa} to obtain the estimate for $ K_6 $:
\begin{align}
K_6 \leq C_A \sup_{ 0 \leq \rho \leq s } \| b_1 ( \rho ) - b_2 ( \rho ) \| , \label{f-f6}
\end{align}
which will be used to prove \eqref{prop B6}. Below, we will estimate each $ K_n $ for $ n = 0 , \dots , 5 $ separately. \medskip

\noindent (0) {\it Estimate of $ K_0 $.} Recall that $ K_{ 0 , r } $ is given by 
\begin{align*}
K_{ 0 , r } = \int_{ \bbr^3 } | f_1 - f_2 | ( p ) \left( r ( { p_1 }^0 )^{ r - 1 } \frac{ \partial { p_1 }^0 }{ \partial s } \right) \, d p . 
\end{align*}

\begin{itemize}

\item $ ( r = 1 ) $ For $ r = 1 $, we use \eqref{dp ds} to have
\begin{align}
| K_{ 0 , 1 } | \leq C_A \int | f_1 - f_2 | ( p ) { p_1 }^0 \, d^3 p \leq C_A \| f_1 - f_2 \|_{ L^1_{ 1 , 1 } } .
\end{align}

\item $ ( r = - 1 ) $ Similarly, we have 
\begin{align}
| K_{ 0 , - 1 } | \leq C_A \int | f_1 - f_2 | ( p ) \frac{ 1 }{ { p_1 }^0 } \, d p \leq C_A \| f_1 - f_2 \|_{ L^1_{ 1 , - 1 } } . 
\end{align}

\end{itemize} 
Since $ K_0 = K_{ 0 , 1 } + K_{ 0 , - 1 } $, we obtain \eqref{f-f0}. \medskip

\noindent (1) {\it Estimate of $ K_1 $.} We first use Lemma \ref{lem F-F} to estimate $ K_{ 1 , r } $ as 
\begin{align}
K_{ 1 , r } & = | ( \det b_1 )^{ \frac12 } - ( \det b_2 )^{ \frac12 } | \iiint \frac{ { h_1 }^{ 2 - \gamma } }{ { p_1 }^0 { q_1 }^0 } ( f_1 ( p_1' ) f_1 ( q_1' ) + f_1 ( p ) f_1 ( q ) ) ( { p_1 }^0 )^r \, d \omega \, d q \, d p \nonumber \\ 
& \leq C_A \| b_1 - b_2 \| \iiint \frac{ { h_1 }^{ 2 - \gamma } }{ { p_1 }^0 { q_1 }^0 } f_1 ( p ) f_1 ( q ) ( ( { p_1' }^0 )^r + ( { p_1 }^0 )^r ) \, d \omega \, d^3 q \, d^3 p , 
\end{align}
where we used 
\begin{align}
\det b_1 \frac{ d \omega \, d^3 p \, d^3 q }{ { p_1 }^0 { q_1 }^0 } = \det b_1 \frac{ d \omega \, d^3 p_1' \, d^3 q_1' }{ { p_1' }^0 { q_1' }^0 }, \label{dpdq1}
\end{align}
as in \eqref{dpdq}. 

\begin{itemize}

\item $ ( r = 1 ) $ Applying the energy conservation \eqref{conservation}, we obtain
\begin{align}
K_{ 1 , 1 } & \leq C_A \| b_1 - b_2 \| \iint \frac{ { h_1 }^{ 2 - \gamma } }{ { p_1 }^0 { q_1 }^0 } f_1 ( p ) f_1 ( q ) ( { p_1 }^0 + { q_1 }^0 ) \, d^3 q \, d^3 p \nonumber \\ 
& \leq C_A \| b_1 - b_2 \| \| f_1 \|_{ L^1_{ 1 , 1 - \gamma / 2 } } \| f_1 \|_{ L^1_{ 1 , - \gamma / 2 } } . 
\end{align}

\item $ ( r = - 1 ) $ We apply the same estimates as in \eqref{L111} and \eqref{L113} to obtain 
\begin{align}
K_{ 1 , - 1 } & \leq C_A \| b_1 - b_2 \| \iiint \frac{ { h_1 }^{ 2 - \gamma } }{ { p_1 }^0 { q_1 }^0 } f_1 ( p ) f_1 ( q ) \frac{ 1 }{ { p_1' }^0 } \, d \omega \, d^3 q \, d^3 p \nonumber \\ 
& \quad + C_A \| b_1 - b_2 \| \iint \frac{ { h_1 }^{ 2 - \gamma } }{ { p_1 }^0 { q_1 }^0 } f_1 ( p ) f_1 ( q ) \frac{ 1 }{ { p_1 }^0 } \, d^3 q \, d^3 p \nonumber \\ 
& \leq C_A \| b_1 - b_2 \| \left( \| f_1 \|_{ L^1_{ 1 , - \gamma } } \| f_1 \|_{ L^1_{ 1 , - 1 } } + \| f_1 \|_{ L^1_{ 1 , - 1 - \gamma / 2 } } \| f_1 \|_{ L^1_{ 1 , - \gamma / 2 } } \right) . 
\end{align} 
\end{itemize} 
By Lemma \ref{lem dfdp}, the quantities $ \| f_1 \|_{ L^1_{ 1 , 1 - \gamma / 2 } } $, $ \| f_1 \|_{ L^1_{ 1 , - \gamma / 2 } } $, $ \| f_1 \|_{ L^1_{ 1 , - \gamma } } $, $ \| f_1 \|_{ L^1_{ 1 , - 1 } } $ and $ \| f_1 \|_{ L^1_{ 1 , - 1 - \gamma / 2 } } $ are bounded for $ 1 < \gamma < 2 $ . Hence, we obtain \eqref{f-f1}. \medskip

\noindent (2) {\it Estimate of $ K_2 $.} We use Lemma \ref{lem F-F} to estimate $ K_{ 2 , r } $ as 
\begin{align}
K_{ 2 , r } & \leq C_A \| b_1 - b_2 \| \iiint \frac{ { h_1 }^{ 2 - \gamma } }{ \langle p \rangle \langle q \rangle } ( f_1 ( p_1' ) f_1 ( q_1' ) + f_1 ( p ) f_1 ( q ) ) ( { p_1 }^0 )^r \, d \omega \, d^3 q \, d^3 p \nonumber \\ 
& \leq C_A \| b_1 - b_2 \| \iiint \frac{ { h_1 }^{ 2 - \gamma } }{ { p_1 }^0 { q_1 }^0 } f_1 ( p ) f_1 ( q ) ( ( { p_1' }^0 )^r + ( { p_1 }^0 )^r ) \, d \omega \, d^3 q \, d^3 p , 
\end{align}
where we used \eqref{p^0 1}. Then, we obtain \eqref{f-f2} by the same arguments as in the estimate of $ K_1 $. \medskip

\noindent (3) {\it Estimate of $ K_3 $.} We use Lemma \ref{lem F-F} to estimate $ K_{ 3 , r } $ as 
\begin{align}
K_{ 3 , r } & \leq C_A \| b_1 - b_2 \| \iiint \frac{ \langle p \rangle^{ 1 - \frac{ \gamma }{ 2 } } \langle q \rangle^{ 1 - \frac{ \gamma }{ 2 } } }{ { p_2 }^0 { q_2 }^0 } ( f_1 ( p_1' ) f_1 ( q_1' ) + f_1 ( p ) f_1 ( q ) ) ( { p_1 }^0 )^r \, d \omega \, d^3 q \, d^3 p \nonumber \\ 
& \leq C_A \| b_1 - b_2 \| \iiint \frac{ ( { p_1 }^0 )^{ 1 - \frac{ \gamma }{ 2 } } ( { q_1 }^0 )^{ 1 - \frac{ \gamma }{ 2 } } }{ { p_1 }^0 { q_1 }^0 } ( f_1 ( p_1' ) f_1 ( q_1' ) + f_1 ( p ) f_1 ( q ) ) ( { p_1 }^0 )^r \, d \omega \, d^3 q \, d^3 p , 
\end{align}
where we used \eqref{p^0 1}--\eqref{p^0 2}, so that we can apply \eqref{dpdq1}. Then, we have 
\begin{align}
K_{ 3 , r } & \leq C_A \| b_1 - b_2 \| \iiint \frac{ ( { n_1 }^0 )^{ 2 - \gamma } }{ { p_1 }^0 { q_1 }^0 } f_1 ( p_1' ) f_1 ( q_1' ) ( { p_1 }^0 )^r \, d \omega \, d^3 q \, d^3 p \nonumber \\ 
& \quad + C_A \| b_1 - b_2 \| \iiint \frac{ 1 }{ ( { p_1 }^0 )^{ \frac{ \gamma }{ 2 } } ( { q_1 }^0 )^{ \frac{ \gamma }{ 2 } } } f_1 ( p ) f_1 ( q ) ( { p_1 }^0 )^r \, d \omega \, d^3 q \, d^3 p \nonumber \\ 
& = : K_{ 3 1 , r } + K_{ 3 2 , r } , 
\end{align}
where we simply used $ { p_1 }^0 , { q_1 }^0 \leq { n_1 }^0 $ for the gain term.

\begin{itemize}
\item $ ( r = 1 ) $ For the gain term $ K_{ 3 1 , 1 } $ we apply the change of variables \eqref{dpdq1} and the energy conservation \eqref{conservation} to obtain 
\begin{align}
K_{ 3 1 , 1 } & \leq C_A \| b_1 - b_2 \| \iiint \frac{ ( { n_1 }^0 )^{ 3 - \gamma } }{ { p_1 }^0 { q_1 }^0 } f_1 ( p ) f_1 ( q ) \, d \omega \, d^3 q \, d^3 p \nonumber \\ 
& \leq C_A \| b_1 - b_2 \| \iint \left( \frac{ ( { p_1 }^0 )^{ 2 - \gamma } }{ { q_1 }^0 } + \frac{ ( { q_1 }^0 )^{ 2 - \gamma } }{ { p_1 }^0 } \right) f_1 ( p ) f_1 ( q ) \, d^3 q \, d^3 p \nonumber \\ 
& \leq C_A \| b_1 - b_2 \| \| f_1 \|_{ L^1_{ 1 , 2 - \gamma } } \| f_1 \|_{ L^1_{ 1 , - 1 } } .
\end{align}
For the loss term $ K_{ 3 2 , 1 } $ we easily obtain 
\begin{align}
K_{ 3 2 , 1 } \leq C_A \| b_1 - b_2 \| \| f_1 \|_{ L^1_{ 1 , 1 - \gamma / 2 } } \| f_1 \|_{ L^1_{ 1 , - \gamma / 2 } } . 
\end{align} 

\item $ ( r = - 1 ) $ We apply \eqref{dpdq1} again to obtain 
\begin{align}
K_{ 3 1 , - 1 } & \leq C_A \| b_1 - b_2 \| \iiint \frac{ ( { n_1 }^0 )^{ 2 - \gamma } }{ { p_1 }^0 { q_1 }^0 } f_1 ( p_1' ) f_1 ( q_1' ) \frac{ 1 }{ { p_1 }^0 } \, d \omega \, d^3 q \, d^3 p \nonumber \\ 
& \leq C_A \| b_1 - b_2 \| \iiint \frac{ ( { n_1 }^0 )^{ 2 - \gamma } }{ { p_1 }^0 { q_1 }^0 } f_1 ( p ) f_1 ( q ) \frac{ 1 }{ { p_1' }^0 } \, d \omega \, d^3 q \, d^3 p . 
\end{align}
Applying \eqref{l2} of Lemma \ref{lem p'}, we further estimate as
\begin{align}
K_{ 3 1 , - 1 } & \leq C_A \| b_1 - b_2 \| \iint \frac{ 1 }{ { p_1 }^0 { q_1 }^0 } f_1 ( p ) f_1 ( q ) \frac{ 1 }{ { h_1 }^{ \gamma - 1 } } \, d^3 q \, d^3 p \nonumber \\ 
& \leq C_A \| b_1 - b_2 \| \| f_1 \|^2_{ L^\infty_{ 1 , \eta } } \iint \frac{ e^{ - s_\eta^{ - 1 } { p_1 }^0 } e^{ - s_\eta^{ - 1 } { q_1 }^0 } }{ ( { p_1 }^0 )^2 ( { q_1 }^0 )^2 { h_1 }^{ \gamma - 1 } } \, d^3 q \, d^3 p , 
\end{align}
since $ f_1 $ is bounded in $ L^\infty_{ 1 , \eta } $. Here, the integral over $ \bbr^3_p \times \bbr^3_q $ is finite: 
\begin{align}
\iint \frac{ e^{ - s_\eta^{ - 1 } { p_1 }^0 } e^{ - s_\eta^{ - 1 } { q_1 }^0 } }{ ( { p_1 }^0 )^2 ( { q_1 }^0 )^2 { h_1 }^{ \gamma - 1 } } \, d^3 q \, d^3 p & \leq C_A \iint \frac{ e^{ - s_\eta^{ - 1 } { p_1 }^0 } e^{ - s_\eta^{ - 1 } { q_1 }^0 } }{ ( { p_1 }^0 )^{ 2 + \frac{ \gamma - 1 }{ 2 } } ( { q_1 }^0 )^{ 2 + \frac{ \gamma - 1 }{ 2 } } \sin^{ \gamma - 1 } ( \phi / 2 ) } \, d^3 q \, d^3 p \nonumber \\ 
& \leq C_A \int \frac{ e^{ - s_\eta^{ - 1 } | \hat{ p } | } }{ | \hat{ p } |^{ 2 + \frac{ \gamma - 1 }{ 2 } } } \int_0^\infty \int_0^\pi \frac{ e^{ - s_\eta^{ - 1 } | \hat{ q } | } \sin \phi }{ | \hat{ q } |^{ \frac{ \gamma - 1 }{ 2 } } \sin^{ \gamma - 1 } ( \phi / 2 ) } \, d \phi \, d | \hat{ q } | \, d^3 \hat{ p } \nonumber \\ 
& \leq C_A \int \frac{ e^{ - s_\eta^{ - 1 } | \hat{ p } | } }{ | \hat{ p } |^{ 2 + \frac{ \gamma - 1 }{ 2 } } } \int_0^\infty \frac{ e^{ - s_\eta^{ - 1 } | \hat{ q } | } }{ | \hat{ q } |^{ \frac{ \gamma - 1 }{ 2 } } } \, d | \hat{ q } | \, d^3 \hat{ p } \nonumber \\ 
& \leq C_A , 
\end{align}
since $ 1 < \gamma < 2 $. For $ K_{ 3 2 , - 1 } $ we easily obtain  
\begin{align}
K_{ 3 2 , - 1 } & \leq C_A \| b_1 - b_2 \| \iint \frac{ 1 }{ ( { p_1 }^0 )^{ 1 + \frac{ \gamma }{ 2 } } ( { q_1 }^0 )^{ \frac{ \gamma }{ 2 } } } f_1 ( p ) f_1 ( q ) \, d^3 q \, d^3 p \nonumber \\ 
& \leq C_A \| b_1 - b_2 \| \| f_1 \|_{ L^1_{ 1 , - 1 - \gamma / 2 } } \| f_1 \|_{ L^1_{ 1 , - \gamma / 2 } } . 
\end{align} 
\end{itemize} 
Now, $ \| f_1 \|_{ L^1_{ 1 , 2 - \gamma } } $, $ \| f_1 \|_{ L^1_{ 1 , - 1 } } $, $ \| f_1 \|_{ L^1_{ 1 , 1 - \gamma / 2 } } $, $ \| f_1 \|_{ L^1_{ 1 , - \gamma / 2 } } $, $ \| f_1 \|_{ L^\infty_{ 1 , \eta } } $, $ \| f_1 \|_{ L^1_{ 1 , - 1 - \gamma / 2 } } $ and $ \| f_1 \|_{ L^1_{ 1 , - \gamma / 2 } } $ are all bounded by Lemma \ref{lem dfdp}. Hence, we obtain \eqref{f-f3}. \medskip

\noindent (4) {\it Estimate of $ K_4 $.} Recall that 
\begin{align*}
K_{ 4 , r } = ( \det b_2 )^{ \frac12 } \iiint \frac{ { h_2 }^{ 2 - \gamma } }{ { p_2 }^0 { q_2 }^0 } | f_1 ( p_1' ) f_1 ( q_1' ) - f_1 ( p_2' ) f_1 ( q_2' ) | ( { p_1 }^0 )^r \, d \omega \, d^3 q \, d^3 p , 
\end{align*}
where we denote 
\begin{align}
p_1' = p' ( { b_1 }^{ i j } ) , \qquad q_1' = q' ( { b_1 }^{ i j } ) , \\ 
p_2' = p' ( { b_2 }^{ i j } ) , \qquad q_2' = q' ( { b_2 }^{ i j } ) . 
\end{align}
Hence, we may write 
\begin{align}
f_1 ( p_1' ) f_1 ( q_1' ) = F ( { b_1 }^{ i j } ) , \qquad f_1 ( p_2' ) f_1 ( q_2' ) = F ( { b_2 }^{ i j } ) , 
\end{align}
for some $ F = F ( b^{ k l } ) $. Then, we have 
\begin{align} 
F ( { b_1 }^{ i j } ) - F ( { b_2 }^{ i j } ) & = \int_1^2 \frac{ d ( F ( { b_\alpha }^{ i j } ) ) }{ d \alpha } \, d \alpha \nonumber \\ 
& = ( { b_2 }^{ k l } - { b_1 }^{ k l } ) \int_1^2 \frac{ \partial F }{ \partial b^{ k l } } ( { b_\alpha }^{ i j } ) \, d \alpha , 
\end{align}
where $ { b_\alpha }^{ i j } = ( \alpha - 1 ) { b_2 }^{ i j } + ( 2 - \alpha ) { b_1 }^{ i j } $, and this implies that 
\begin{align}
| F ( { b_1 }^{ i j } ) - F ( { b_2 }^{ i j } ) | \leq \| { b_2 } - { b_1 } \| \max_{ k , l } \int_1^2 \left| \frac{ \partial F }{ \partial b^{ k l } } ( { b_\alpha }^{ i j } ) \right| \, d \alpha , 
\end{align}
where 
\begin{align}
\frac{ \partial F }{ \partial b^{ k l } } ( { b_\alpha }^{ i j } ) = \frac{ \partial f_1 }{ \partial p } ( p_\alpha' ) \cdot \frac{ \partial p' }{ \partial b^{ k l } } ( { b_\alpha }^{ i j } ) f_1 ( q_\alpha' ) + f_1 ( p_\alpha' ) \frac{ \partial f_1 }{ \partial p } ( q_\alpha' ) \cdot \frac{ \partial q' }{ \partial b^{ k l } } ( { b_\alpha }^{ i j } ) . 
\end{align}
Now, applying Lemma \ref{lem dp'db} with respect to $ { b_\alpha }^{ i j } $, i.e., 
\begin{align}
\left| \frac{ \partial p' }{ \partial b^{ k l } } ( { b_\alpha }^{ i j } ) \right| + \left| \frac{ \partial q' }{ \partial b^{ k l } } ( { b_\alpha }^{ i j } ) \right| \leq C_A { n_\alpha }^0 , 
\end{align}
we obtain 
\begin{align}
K_{ 4 , r } \leq C_A \iiint \frac{ { h_2 }^{ 2 - \gamma } }{ { p_2 }^0 { q_2 }^0 } \| { b_2 } - { b_1 } \| \int_1^2 \left( \left| \frac{ \partial f_1 }{ \partial p } ( p_\alpha' ) \right| f_1 ( q_\alpha' ) + f_1 ( p_\alpha' ) \left| \frac{ \partial f_1 }{ \partial p } ( q_\alpha' ) \right| \right) { n_\alpha }^0 ( { p_1 }^0 )^r \, d \alpha \, d \omega \, d^3 q \, d^3 p . 
\end{align}
Note that the estimates \eqref{p^0 1} and \eqref{p^0 2} with respect to $ { b_1 }^{ i j } $ and $ { b_2 }^{ i j } $ imply 
\begin{align}
{ p_1 }^0 \leq C \| b_1 \|^{ \frac12 } \langle p \rangle, & \qquad \langle p \rangle \leq \frac{ C \| b_1 \|^{ \frac52 } }{ \det b_1 } { p_1 }^0 , \\ 
{ p_2 }^0 \leq C \| b_2 \|^{ \frac12 } \langle p \rangle, & \qquad \langle p \rangle \leq \frac{ C \| b_2 \|^{ \frac52 } }{ \det b_2 } { p_2 }^0 , 
\end{align}
so that $ { p_1 }^0 $ and $ { p_2 }^0 $ are equivalent by the assumptions on $ { b_1 }^{ i j } $, $ { b_2 }^{ i j } $, $ \det b_1 $ and $ \det b_2 $. Moreover, as in the proof of Lemma \ref{lem F-F}, we have 
\begin{align}
\| b_\alpha \| \leq C_A , \qquad \det b_\alpha \geq \frac{ 1 }{ A } , 
\end{align}
which imply that $ { p_1 }^0 $, $ { p_\alpha }^0 $ and $ { p_2 }^0 $ are all equivalent: 
\begin{align}
\frac{ 1 }{ C_A } \langle p \rangle \leq { p_\alpha }^0 \leq C_A \langle p \rangle , \qquad 1 \leq \alpha \leq 2 . 
\end{align}
Similarly, $ { q_1 }^0 $, $ { q_\alpha }^0 $ and $ { q_2 }^0 $ are all equivalent. Now, applying $ h_2 \leq { n_2 }^0 $ and the above equivalence properties, we obtain 
\begin{align}
K_{ 4 , r } & \leq C_A \| b_1 - b_2 \| \int_1^2 \iiint \frac{ ( { n_\alpha }^0 )^{ 3 - \gamma } }{ { p_\alpha }^0 { q_\alpha }^0 } \left( \left| \frac{ \partial f_1 }{ \partial p } ( p_\alpha' ) \right| f_1 ( q_\alpha' ) + f_1 ( p_\alpha' ) \left| \frac{ \partial f_1 }{ \partial p } ( q_\alpha' ) \right| \right) ( { p_\alpha }^0 )^r \, d \omega \, d^3 q \, d^3 p \, d \alpha . 
\end{align}

\begin{itemize}

\item $ ( r = 1 ) $ For $ r = 1 $, we have 
\begin{align}
K_{ 4 , 1 } & \leq C_A \| b_1 - b_2 \| \int_1^2 \iiint \frac{ ( { n_\alpha }^0 )^{ 3 - \gamma } }{ { p_\alpha }^0 { q_\alpha }^0 } \left( \left| \frac{ \partial f_1 }{ \partial p } ( p_\alpha' ) \right| f_1 ( q_\alpha' ) + f_1 ( p_\alpha' ) \left| \frac{ \partial f_1 }{ \partial p } ( q_\alpha' ) \right| \right) { p_\alpha }^0 \, d \omega \, d^3 q \, d^3 p \, d \alpha \nonumber \\ 
& \leq C_A \| b_1 - b_2 \| \int_1^2 \iiint \frac{ ( { n_\alpha }^0 )^{ 2 - \gamma } }{ { p_\alpha }^0 { q_\alpha }^0 } \left( \left| \frac{ \partial f_1 }{ \partial p } ( p ) \right| f_1 ( q ) + f_1 ( p ) \left| \frac{ \partial f_1 }{ \partial p } ( q ) \right| \right) \, d \omega \, d^3 q \, d^3 p \, d \alpha \nonumber \\ 
& \leq C_A \| b_1 - b_2 \| \int_1^2 \iiint \left( \frac{ 1 }{ ( { p_\alpha }^0 )^{ \gamma - 1 } { q_\alpha }^0 } + \frac{ 1 }{ { p_\alpha }^0 ( { q_\alpha }^0 )^{ \gamma - 1 } } \right) \left( \left| \frac{ \partial f_1 }{ \partial p } ( p ) \right| f_1 ( q ) + f_1 ( p ) \left| \frac{ \partial f_1 }{ \partial p } ( q ) \right| \right) \, d^3 q \, d^3 p \, d \alpha \nonumber \\ 
& \leq C_A \| b_1 - b_2 \| \int_1^2 \left\| \frac{ \partial f_1 }{ \partial p } \right\|_{ L^1_{ \alpha , - \gamma + 1 } } \| f_1 \|_{ L^1_{ \alpha , - 1 } } + \left\| \frac{ \partial f_1 }{ \partial p } \right\|_{ L^1_{ \alpha , - 1 } } \| f_1 \|_{ L^1_{ \alpha , - \gamma + 1 } } \, d \alpha \nonumber \\ 
& \leq C_A \| b_1 - b_2 \| \left( \left\| \frac{ \partial f_1 }{ \partial p } \right\|_{ L^1_{ 1 , - \gamma + 1 } } \| f_1 \|_{ L^1_{ 1 , - 1 } } + \left\| \frac{ \partial f_1 }{ \partial p } \right\|_{ L^1_{ 1 , - 1 } } \| f_1 \|_{ L^1_{ 1 , - \gamma + 1 } } \right) , 
\end{align}
where we used \eqref{dpdq1} with $ b_1 $ replaced by $ b_\alpha $, and $ { p_\alpha' }^0 \leq { n_\alpha }^0 $, in the second inequality. 

\item $ ( r = - 1 ) $ Recall that we have chosen $ \delta > 0 $ satisfying $ \gamma + \delta < 2 $. We first estimate $ K_{ 4 , - 1 } $ as follows: 
\begin{align}
K_{ 4 , - 1 } & \leq C_A \| b_1 - b_2 \| \int_1^2 \iiint \frac{ ( { n_\alpha }^0 )^{ 3 - \gamma } }{ { p_\alpha }^0 { q_\alpha }^0 } \left( \left| \frac{ \partial f_1 }{ \partial p } ( p_\alpha' ) \right| f_1 ( q_\alpha' ) + f_1 ( p_\alpha' ) \left| \frac{ \partial f_1 }{ \partial p } ( q_\alpha' ) \right| \right) \frac{ 1 }{ { p_\alpha }^0 } \, d \omega \, d^3 q \, d^3 p \, d \alpha \nonumber \\ 
& \leq C_A \| b_1 - b_2 \| \int_1^2 \iiint \frac{ ( { n_\alpha }^0 )^{ 3 - \gamma } }{ { p_\alpha }^0 { q_\alpha }^0 } \left( \left| \frac{ \partial f_1 }{ \partial p } ( p ) \right| f_1 ( q ) + f_1 ( p ) \left| \frac{ \partial f_1 }{ \partial p } ( q ) \right| \right) \frac{ 1 }{ { p_\alpha' }^0 } \, d \omega \, d^3 q \, d^3 p \, d \alpha \nonumber \\ 
& \leq C_A \| b_1 - b_2 \| \int_1^2 \iint \frac{ ( { n_\alpha }^0 )^{ 3 - \gamma } }{ { p_\alpha }^0 { q_\alpha }^0 } \left( \left| \frac{ \partial f_1 }{ \partial p } ( p ) \right| f_1 ( q ) + f_1 ( p ) \left| \frac{ \partial f_1 }{ \partial p } ( q ) \right| \right) \frac{ 1 }{ ( h_\alpha )^\delta ( { n_\alpha }^0 )^{ 1 - \delta } } \, d^3 q \, d^3 p \, d \alpha \nonumber \\ 
& \leq C_A \| b_1 - b_2 \| \int_1^2 \iint \frac{ ( { n_\alpha }^0 )^{ 2 - \gamma + \delta } }{ { p_\alpha }^0 { q_\alpha }^0 } \left| \frac{ \partial f_1 }{ \partial p } ( p ) \right| f_1 ( q ) \frac{ 1 }{ ( h_\alpha )^\delta } \, d^3 q \, d^3 p \, d \alpha \nonumber \\ 
& \leq C_A \| b_1 - b_2 \| \int_1^2 \iint \frac{ ( { n_\alpha }^0 )^{ 2 - \gamma + \delta } }{ ( { p_\alpha }^0 )^{ 1 + \frac{ \delta }{ 2 } } ( { q_\alpha }^0 )^{ 1 + \frac{ \delta }{ 2 } } \sin^\delta ( \phi / 2 ) } \left| \frac{ \partial f_1 }{ \partial p } ( p ) \right| f_1 ( q ) \, d^3 q \, d^3 p \, d \alpha \nonumber \\ 
& \leq C_A \| b_1 - b_2 \| \int_1^2 \iint \frac{ 1 }{ ( { p_\alpha }^0 )^{ \gamma - 1 - \frac{ \delta }{ 2 } } ( { q_\alpha }^0 )^{ 1 + \frac{ \delta }{ 2 } } \sin^\delta ( \phi / 2 ) } \left| \frac{ \partial f_1 }{ \partial p } ( p ) \right| f_1 ( q ) \, d^3 q \, d^3 p \, d \alpha \nonumber \\ 
& \quad + C_A \| b_1 - b_2 \| \int_1^2 \iint \frac{ 1 }{ ( { p_\alpha }^0 )^{ 1 + \frac{ \delta }{ 2 } } ( { q_\alpha }^0 )^{ \gamma - 1 - \frac{ \delta }{ 2 } } \sin^\delta ( \phi / 2 ) } \left| \frac{ \partial f_1 }{ \partial p } ( p ) \right| f_1 ( q ) \, d^3 q \, d^3 p \, d \alpha \nonumber \\ 
& = : K_{ 4 1 , - 1 } + K_{ 4 2 , - 1 } . 
\end{align}
Note that $ f_1 $ is bounded in $ L^\infty_{ 1 , \eta } $, which implies that 
\begin{align}
f_1 ( q ) \leq \frac{ C_A }{ { q_1 }^0 } e^{ - s_\eta^{ - 1 } { q_1 }^0 } \leq \frac{ C_A }{ { q_\alpha }^0 } e^{ - s^{ - 1 }_\eta { q_\alpha }^0 / C_A } . 
\end{align}
Then, $ K_{ 4 1 , - 1 } $ is estimated as 
\begin{align}
K_{ 4 1 , - 1 } & \leq C_A \| b_1 - b_2 \| \iint \frac{ 1 }{ ( { p_\alpha }^0 )^{ \gamma - 1 - \frac{ \delta }{ 2 } } } \left| \frac{ \partial f_1 }{ \partial p } ( p ) \right| \int \frac{ e^{ - s_\eta^{ - 1 } { q_\alpha }^0 / C_A } }{ ( { q_\alpha }^0 )^{ 2 + \frac{ \delta }{ 2 } } \sin^\delta ( \phi / 2 ) } \, d^3 q \, d^3 p \, d \alpha , 
\end{align}
where the integral over $ \bbr^3_q $ is finite, since $ 2 + \delta / 2 < 3 $ and $ \delta < 1 $. Hence, we have 
\begin{align}
K_{ 4 1 , - 1 } & \leq C_A \| b_1 - b_2 \| \left\| \frac{ \partial f_1 }{ \partial p } \right\|_{ L^1_{ 1 , - \gamma + 1 + \delta / 2 } } , 
\end{align}
where we used the equivalence between $ { p_\alpha }^0 $ and $ { p_1 }^0 $. The term $ K_{ 4 2 , - 1 } $ is estimated in a similar way: 
\begin{align}
K_{ 4 2 , - 1 } & \leq C_A \| b_1 - b_2 \| \iint \frac{ 1 }{ ( { p_\alpha }^0 )^{ 1 + \frac{ \delta }{ 2 } } } \left| \frac{ \partial f_1 }{ \partial p } ( p ) \right| \int \frac{ e^{ - s_\eta^{ - 1 } { q_\alpha }^0 / C_A } }{ ( { q_\alpha }^0 )^{ \gamma - \frac{ \delta }{ 2 } } \sin^\delta ( \phi / 2 ) } \, d^3 q \, d^3 p \, d \alpha \nonumber \\ 
& \leq C_A \| b_1 - b_2 \| \left\| \frac{ \partial f_1 }{ \partial p } \right\|_{ L^1_{ 1 , - 1 - \delta / 2 } } . 
\end{align}
\end{itemize}
We combine the estimates for $ K_{ 4 , 0 } $, $ K_{ 4 1 , - 1 } $ and $ K_{ 4 2 , - 1 } $ and use Lemma \ref{lem dfdp} to obtain \eqref{f-f4}. \medskip

\noindent (5) {\it Estimate of $ K_5 $.} Using \eqref{p^0 1}--\eqref{p^0 2}, we can replace the weight $ ( { p_1 }^0 )^r $ with $ ( { p_2 }^0 )^r $: 
\begin{align}
K_{ 5 , r } & \leq C_A \iiint \frac{ { h_2 }^{ 2 - \gamma } }{ { p_2 }^0 { q_2 }^0 } ( ( f_1 + f_2 ) ( p_2' ) | f_1 - f_2 | ( q_2' ) + | f_1 - f_2 | ( p_2' ) ( f_1 + f_2 ) ( q_2' ) \nonumber \\ 
& \qquad \qquad \qquad \qquad + ( f_1 + f_2 ) ( p ) | f_1 - f_2 | ( q ) ) ( { p_2 }^0 )^r \, d \omega \, d^3 q \, d^3 p \nonumber \\ 
& \leq C_A \iiint \frac{ { h_2 }^{ 2 - \gamma } }{ { p_2 }^0 { q_2 }^0 } ( f_1 + f_2 ) ( p ) | f_1 - f_2 | ( q ) \left\{ ( { p_2' }^0 )^r + ( { q_2' }^0 )^r + ( { p_2 }^0 )^r \right\} \, d \omega \, d^3 q \, d^3 p . 
\end{align} 

\begin{itemize}
\item $ ( r = 1 ) $ We apply $ h_2 \leq { n_2 }^0 $ and use the energy conservation \eqref{conservation} to obtain
\begin{align}
K_{ 5 , 1 } \leq C_A \left( \| f_1 - f_2 \|_{ L^1_{ 2 , - 1 } } + \| f_1 - f_2 \|_{ L^1_{ 2 , 2 - \gamma } } \right) . 
\end{align}

\item $ ( r = - 1 ) $ We apply the same estimates as in \eqref{L113} and \eqref{L111} to obtain 
\begin{align}
K_{ 5 , - 1 } \leq C_A \left( \| f_1 - f_2 \|_{ L^1_{ 2 , - \gamma / 2 } } + \| f_1 - f_2 \|_{ L^1_{ 2 , - 1 } } \right) . 
\end{align} 
\end{itemize} 
Since $ 1 < \gamma < 2 $, we obtain 
\begin{align}
K_5 \leq C_A \left( \| f_1 - f_2 \|_{ L^1_{ 2 , 1 } } + \| f_1 - f_2 \|_{ L^1_{ 2 , - 1 } } \right) .
\end{align}
By applying the equivalence between $ { p_1 }^0 $ and $ { p_2 }^0 $ we can replace the norms $ \| \cdot \|_{ L^1_{ 2 , 1 } } $ and $ \| \cdot \|_{ L^1_{ 2 , - 1 } } $ with $ \| \cdot \|_{ L^1_{ 1 , 1 } } $ and $ \| \cdot \|_{ L^1_{ 1 , - 1 } } $, respectively. Hence, we obtain \eqref{f-f5}, and this completes the proof of \eqref{prop B5}. \medskip

The proof of \eqref{prop B6} is almost the same. We have the equations \eqref{df ds1}--\eqref{df ds2} for $ f_1 $ and $ f_2 $, and obtain the expression \eqref{df ds3} for $ \partial f_1 / \partial s - \partial f_2 / \partial s $. Now, we have
\begin{align}
\left| \frac{ \partial f_1 }{ \partial s } - \frac{ \partial f_2 }{ \partial s } \right| & \leq | ( \det b_1 )^{ \frac12 } - ( \det b_2 )^{ \frac12 } | \iint \frac{ { h_1 }^{ 2 - \gamma } }{ { p_1 }^0 { q_1 }^0 } ( f_1 ( p_1' ) f_1 ( q_1' ) + f_1 ( p ) f_1 ( q ) ) \, d \omega \, d^3 q \nonumber \\ 
& \quad + ( \det b_2 )^{ \frac12 } \iint \left| \frac{ 1 }{ { p_1 }^0 { q_1 }^0 } - \frac{ 1 }{ { p_2 }^0 { q_2 }^0 } \right| { h_1 }^{ 2 - \gamma } ( f_1 ( p_1' ) f_1 ( q_1' ) + f_1 ( p ) f_1 ( q ) ) \, d \omega \, d^3 q \nonumber \\ 
& \quad + ( \det b_2 )^{ \frac12 } \iint \frac{ 1 }{ { p_2 }^0 { q_2 }^0 } | { h_1 }^{ 2 - \gamma } - { h_2 }^{ 2 - \gamma } | ( f_1 ( p_1' ) f_1 ( q_1' ) + f_1 ( p ) f_1 ( q ) ) \, d \omega \, d^3 q \nonumber \\ 
& \quad + ( \det b_2 )^{ \frac12 } \iint \frac{ { h_2 }^{ 2 - \gamma } }{ { p_2 }^0 { q_2 }^0 } | f_1 ( p_1' ) f_1 ( q_1' ) - f_1 ( p_2' ) f_1 ( q_2' ) | \, d \omega \, d^3 q \nonumber \\ 
& \quad + \frac{ ( \det b_2 )^{ \frac12 } }{ 2 } \iint \frac{ { h_2 }^{ 2 - \gamma } }{ { p_2 }^0 { q_2 }^0 } ( ( f_1 + f_2 ) ( p_2' ) | f_1 - f_2 | ( q_2' ) + | f_1 - f_2 | ( p_2' ) ( f_1 + f_2 ) ( q_2' ) \nonumber \\ 
& \qquad \qquad \qquad \qquad + ( f_1 + f_2 ) ( p ) | f_1 - f_2 | ( q ) + | f_1 - f_2 | ( p ) ( f_1 + f_2 ) ( q ) ) \, d \omega \, d^3 q ,
\end{align}
where the right hand side is the same as that of \eqref{df ds4} except the last quantity. Multiplying the above by $ ( { p_1 }^0 )^r $ and integrating over $ \bbr^3_p $, we obtain 
\begin{align}
& \left\| \frac{ \partial f_1 }{ \partial s } - \frac{ \partial f_2 }{ \partial s } \right\|_{ L^1_{ 1 , r } } \nonumber \\ 
& \leq | ( \det b_1 )^{ \frac12 } - ( \det b_2 )^{ \frac12 } | \iiint \frac{ { h_1 }^{ 2 - \gamma } }{ { p_1 }^0 { q_1 }^0 } ( f_1 ( p_1' ) f_1 ( q_1' ) + f_1 ( p ) f_1 ( q ) ) ( { p_1 }^0 )^r \, d \omega \, d^3 q \, d^3 p \nonumber \\ 
& \quad + ( \det b_2 )^{ \frac12 } \iiint \left| \frac{ 1 }{ { p_1 }^0 { q_1 }^0 } - \frac{ 1 }{ { p_2 }^0 { q_2 }^0 } \right| { h_1 }^{ 2 - \gamma } ( f_1 ( p_1' ) f_1 ( q_1' ) + f_1 ( p ) f_1 ( q ) ) ( { p_1 }^0 )^r \, d \omega \, d^3 q \, d^3 p \nonumber \\ 
& \quad + ( \det b_2 )^{ \frac12 } \iiint \frac{ 1 }{ { p_2 }^0 { q_2 }^0 } | { h_1 }^{ 2 - \gamma } - { h_2 }^{ 2 - \gamma } | ( f_1 ( p_1' ) f_1 ( q_1' ) + f_1 ( p ) f_1 ( q ) ) ( { p_1 }^0 )^r \, d \omega \, d^3 q \, d^3 p \nonumber \\ 
& \quad + ( \det b_2 )^{ \frac12 } \iiint \frac{ { h_2 }^{ 2 - \gamma } }{ { p_2 }^0 { q_2 }^0 } | f_1 ( p_1' ) f_1 ( q_1' ) - f_1 ( p_2' ) f_1 ( q_2' ) | ( { p_1 }^0 )^r \, d \omega \, d^3 q \, d^3 p \nonumber \\ 
& \quad + \frac{ ( \det b_2 )^{ \frac12 } }{ 2 } \iiint \frac{ { h_2 }^{ 2 - \gamma } }{ { p_2 }^0 { q_2 }^0 } ( ( f_1 + f_2 ) ( p_2' ) | f_1 - f_2 | ( q_2' ) + | f_1 - f_2 | ( p_2' ) ( f_1 + f_2 ) ( q_2' ) \nonumber \\ 
& \qquad \qquad \qquad \qquad + ( f_1 + f_2 ) ( p ) | f_1 - f_2 | ( q ) + | f_1 - f_2 | ( p ) ( f_1 + f_2 ) ( q ) ) ( { p_1 }^0 )^r \, d \omega \, d^3 q \, d^3 p \nonumber \\
& = K_{ 1 , r } + K_{ 2 , r } + K_{ 3 , r } + K_{ 4 , r } + K_{ 5 , r } + K_{ 6 , r } ,
\end{align}
where $ K_{ 1 , r } , \dots , K_{ 6 , r } $ are the same as in \eqref{f-fs}. Again, we write $ K_n = K_{ n , 1 } + K_{ n , - 1 } $ to have
\begin{align}
\left\| \frac{ \partial f_1 }{ \partial s } - \frac{ \partial f_2 }{ \partial s } \right\|_{ L^1_{ 1 , 1 } } + \left\| \frac{ \partial f_1 }{ \partial s } - \frac{ \partial f_2 }{ \partial s } \right\|_{ L^1_{ 1 , - 1 } } \leq K_1 + K_2 + K_3 + K_4 + K_5 + K_6 ,
\end{align}
and obtain the desired result \eqref{prop B6} by combining the estimates \eqref{f-f1}--\eqref{f-f6} together with \eqref{prop B5}. This completes the proof of the proposition.
\end{proof}

\section{Proofs of the main results}\label{sMain}
We are now ready to prove the main results of the paper. We first prove Theorem \ref{conformalprop}, where we need to combine Propositions \ref{propeinstein}, \ref{prop B} and \ref{prop C}, which we recall are given as follows:

\setcounter{prop}{0}
\begin{prop}
Suppose that there exist a time interval $[0,T]$ and positive constants $B_1$ and $B_2$ such that $f$ is defined on $[0,T]$ and satisfy
\begin{align*}
\sup_{0 \leq s \leq T} \| f(s) \|_{\langle -1 \rangle} + \sup_{0 \leq s \leq T} \| f(s) \|_{\langle 1 \rangle} & \leq B_1, \\
\sup_{0 \leq s \leq T} \left\| \frac{\partial f}{\partial s} (s) \right\|_{\langle -1 \rangle} + \sup_{0 \leq s \leq T} \left\| \frac{\partial f}{\partial s} (s) \right\|_{\langle 1 \rangle} & \leq B_2.
\end{align*}
Then, for any initial data $a_{0}, {b_0}, K_{0}, \hat{Z}_{0} \in S_3(\bbr)$ satisfying the Fuchsian conditions \eqref{FF1bis} and \eqref{FF3bis}, there exists $0 < T_B \leq T$ such that the Einstein equations \eqref{ODEx}--\eqref{ODE} have a unique solution $a_{ij}, b^{ij}, K_{ij}, \hat{Z}_{ij} \in C^1([0, T_B]; S_3(\bbr) )$ satisfying
\[
\sup_{ [0, T_B] } \max \left( \| a \| , \| b \| , \| K \| , \| \hat{Z} \| \right) \leq 2 \max \left( \| a_0 \| , \| b_0 \| , \| K_0 \| , \| \hat{Z}_0 \| \right), 
\]
where $T_B$ depends on $B_1$ and $B_2$. 
\end{prop}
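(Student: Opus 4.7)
The plan is to apply Theorem \ref{ODEthm} from the appendix to the matrix-form system \eqref{ODEx}--\eqref{ODE}, with the variables $x=(a_{ij},b^{ij})$, $y=(K_{ij},\hat Z_{ij})$, nonlinearities $F$, $G$, $H$, and singular coefficient $N$ as defined in \eqref{x}--\eqref{N}. The work is essentially bookkeeping: I must (i) verify the hypotheses of Theorem \ref{ODEthm} for the specific $F,G,N,H$ arising here, (ii) verify the eigenvalue conditions on $N|_{s=0}$, and (iii) track how the resulting time of existence depends on $B_1$ and $B_2$.

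For step (i), the differentiability and boundedness hypotheses on $F,G,N,H$ are precisely what Lemmas \ref{lem F}, \ref{lem N} and \ref{lem H} provide. Each lemma gives bounds in terms of $\|a\|,\|b\|,\|K\|,\|\hat Z\|$ and the moments $B_1$, $B_2$ of $f$; since these are polynomial in the former and at worst quadratic in the latter, they are uniformly controlled on any set of the form $[0,T]\times\overline{B_V(0,R)}\times\overline{B_W(0,R)}$. For step (ii), Section \ref{eigen} shows that at $s=0$, under the Fuchsian condition \eqref{FF1bis}, the matrix $N$ decomposes according to the projection $\pi$ and the simultaneously-diagonalisable $\chi$; the eigenvalues on the rank-one subspace where $\pi=1$ are the double value $c_\gamma(\gamma+1)=(\gamma+1)/(\gamma-1)>1$, and on the orthogonal five-dimensional subspace they are $(2\gamma-1\pm\sqrt{1-8(1-\mu)})/(2(\gamma-1))$ with $0<\mu<1$, whose real parts are also $>1$. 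This gives eigenvalues with real parts strictly greater than one, which is the stronger eigenvalue hypothesis of Theorem \ref{ODEthm} required for $C^1$ regularity at $s=0$.

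For step (iii), I will run the closed-ball argument exactly as in the proof of Theorem \ref{ODEthm} but with one small modification: since the initial data lies in $\overline{B_V(0,R/2)}\times\overline{B_W(0,R/2)}$ with $R=2\max(\|a_0\|,\|b_0\|,\|K_0\|,\|\hat Z_0\|)$, I take the closed balls centered at $0$ rather than at $(x_0,y_0)$, which only replaces the bound $C_y$ appearing in the appendix by $R$. The existence time $t_{\min}$ is then $\min(T, R/(2Q_x), R/(2Q_y), 1/(2Q_u), 1/(2Q_v))$, where $Q_x,Q_y,Q_u,Q_v$ are the supremum quantities that appear in Theorem \ref{ODEthm} and can be read off from Lemmas \ref{lem F}, \ref{lem N}, \ref{lem H}. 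The compatibility condition \eqref{Compatibility} was already verified in Section \ref{sEinsteinODE} via the identification $Q=c_\gamma^{-1}N|_{s=0}$ and the Fuchsian equation \eqref{Compatibility}.

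The only nontrivial obstacle is to extract a clean dependence of $T_B$ on $B_1,B_2$. Here Lemmas \ref{lem F}--\ref{lem H} show that each of $Q_x,Q_y,Q_u,Q_v$ can be dominated by a constant $C_R(1+B_1+B_2)^2$, where $C_R$ depends only on $R$ (hence only on the initial data). Setting $T_B$ equal to the minimum of $T$ and a suitable constant times $1/(C_R(1+B_1+B_2)^2)$ then ensures both that $(a,b,K,\hat Z)$ remains within the doubled ball and that the Picard iteration underlying Theorem \ref{ODEthm} closes. The eigenvalue hypothesis, verified above, then yields the $C^1$ regularity up to $s=0$, completing the proof. \qed
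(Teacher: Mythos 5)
Your proposal is correct and follows essentially the same route as the paper's own proof: applying Theorem \ref{ODEthm} with the closed balls recentered at the origin of radius $R=2\max(\|a_0\|,\|b_0\|,\|K_0\|,\|\hat Z_0\|)$ (so that $C_y$ is replaced by $R$), invoking Lemmas \ref{lem F}, \ref{lem N} and \ref{lem H} to bound $Q_x,Q_y,Q_u,Q_v$ by $C_R(1+B_1+B_2)^2$, and relying on the eigenvalue analysis of Section \ref{eigen} together with the compatibility condition \eqref{Compatibility}. No gaps.
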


\setcounter{prop}{2}
\begin{prop}
Suppose that there exist a time interval $ [ 0 , T ] $ and a positive constant $ A $ such that $ b^{ i j } $ and $ K_{ i j } $ are defined on $ [ 0 , T ] $ and satisfy
\[
\sup_{ 0 \leq s \leq T } \| b ( s ) \| \leq A , \qquad \sup_{ 0 \leq s \leq T } \| K ( s ) \| \leq A . 
\]
Then, for any initial data $ 0 \leq f_0 \in L^1_1 ( \bbr^3 ) \cap L^1_{ - 2 } ( \bbr^3 ) \cap L^\infty_\eta ( \bbr^3 ) $ for some $ 0 < \eta < 2 / A^2 $, there exists $ 0 < T_A \leq T $ such that the Boltzmann equation \eqref{B} has a unique non-negative solution $ f \in C^1 ( [ 0 , T_A ] ; L^1 ( \bbr^3 ) ) $ satisfying
\begin{align*}
\sup_{ 0 \leq s \leq T_A } \| f ( s ) \|_{ L^1_{ - 1 } } + \sup_{ 0 \leq s \leq T_A } \| f ( s ) \|_{ L^1_1 } + \sup_{ 0 \leq s \leq T_A } \| f ( s ) \|_{ L^\infty_\eta } & \leq C_A , \\
\sup_{ 0 \leq s \leq T_A } \left\| \frac{ \partial f }{ \partial s } ( s ) \right\|_{ L^1_{ - 1 } } + \sup_{ 0 \leq s \leq T_A } \left\| \frac{ \partial f }{ \partial s } ( s ) \right\|_{ L^1_1 } & \leq C_{A} ,
\end{align*}
where $ T_A $ and $ C_A $ are positive constants depending on $ A $ and $ \eta $.
\end{prop}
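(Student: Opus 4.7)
The plan is to follow the standard $L^1$ approximation scheme for the Boltzmann equation, building on the cut-off existence result of Proposition~\ref{prop mod}. First I would take a sequence $\varepsilon_k = 1/k$ and apply Proposition~\ref{prop mod} to obtain, on a common time interval $[0,T_A]$ determined by $A$ and $\eta$ alone, a family of non-negative solutions $f_k \in C^1([0,T_A]; L^1(\bbr^3))$ of the cut-off equation \eqref{Bmod}--\eqref{Cmod}, each enjoying uniform bounds in $L^1_{-2}$ and $L^\infty_\eta$. The target is to show that $\{f_k\}$ is Cauchy in $L^1 \cap L^1_{-1}$ as $k \to \infty$, and that the limit satisfies the full Boltzmann equation \eqref{B} together with the quantitative bounds \eqref{prop B1}--\eqref{prop B2}.

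Next I would establish the Cauchy property. For $k < m$, comparing the equations for $f_k$ and $f_m$ splits the difference of their collision operators into a \emph{common} part over $\{h \ge 1/k\}$ that depends bilinearly on $f_k+f_m$ and $f_k-f_m$, and a \emph{boundary} part over the annulus $\{1/m \le h < 1/k\}$ that is purely quadratic in $f_m$. After multiplying $\partial_s|f_k-f_m|$ by $(p^0)^r$ (taking $r=0$ and $r=-1$) and integrating, the common part yields a linear source in $\|f_k-f_m\|_{L^1_r}$ via the standard bilinear estimates, while the boundary part picks up a small factor because $h^{2-\gamma} \lesssim k^{\gamma-2}$ on the annulus. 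For $r=-1$ the gain terms carry a weight $1/p'^0$, for which I would use \eqref{l2} of Lemma~\ref{lem p'} with $\delta$ chosen so that $\gamma+\delta<2$, reducing the boundary contribution to a bounded moment of $f_m$ times $k^{-(2-\gamma-\delta)}$. A Grönwall argument on $\|f_k-f_m\|_{L^1} + \|f_k-f_m\|_{L^1_{-1}}$ then produces convergence on a (possibly shrunk) interval $[0,T_A]$.

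The limit $f=\lim f_k$ is automatically non-negative and, by dominated convergence, solves \eqref{B}. For the remaining estimates, I would work directly with \eqref{B}. The bound on $\|f\|_{L^1_1}$ follows because the collision integral vanishes against $p^0$ by energy conservation \eqref{conservation}, leaving $(d/ds)\|f\|_{L^1_1} = \int f\,\partial_s p^0\,d^3p \le (A^2/2)\|f\|_{L^1_1}$ by \eqref{dp ds}. The $L^\infty_\eta$ bound is recovered by repeating the weight-vs.-gain argument used in the proof of Proposition~\ref{prop mod}, which never relied on the cut-off. For $\|\partial_s f\|_{L^1_r}$ one estimates $\|Q_\pm(f,f)\|_{L^1_r}$ separately: the loss term is immediate since $h^{2-\gamma} \lesssim (p^0 q^0)^{1-\gamma/2}$, so interpolation between $L^1_{-1}$ and $L^1_1$ suffices; the gain term for $r=-1$ again requires \eqref{l2} of Lemma~\ref{lem p'}, combined with the $L^\infty_\eta$ bound, to yield exponential decay at infinity and an integrable moment at the origin.

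The main obstacle will be controlling the negative-weight norm $L^1_{-1}$, since the gain operator produces a factor $1/p'^0$ whose spherical average is logarithmically singular in $h$; this is precisely why Lemma~\ref{lem p'} is stated with a free parameter $\delta>0$, and why the constraint $1<\gamma<2$ enters through the requirement $\gamma+\delta<2$. A secondary difficulty is that every estimate depends on the metric $b^{ij}$ through $p^0$, $q^0$ and $h$, so the bound $A$ on $\|b\|$ must be propagated uniformly; however, the time-dependent factor $s_\eta^{-1}=(s+\eta^2)^{-\eta}$ built into the $L^\infty_\eta$ weight, together with the bookkeeping already established in \eqref{dw ds}, ensures the required integrability uniformly in $s\in[0,T_A]$.
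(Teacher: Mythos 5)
Your proposal is correct and follows essentially the same route as the paper's proof: a cut-off approximation via Proposition~\ref{prop mod} with $\varepsilon=1/k$, a Cauchy estimate for $\|f_k-f_m\|_{L^1}+\|f_k-f_m\|_{L^1_{-1}}$ obtained by splitting the collision difference into a common part and an annulus part of size $k^{-(2-\gamma-\delta)}$ controlled through \eqref{l2} of Lemma~\ref{lem p'}, followed by Gr\"onwall, and then the separate moment estimates for $\|f\|_{L^1_1}$ (energy conservation plus \eqref{dp ds}), $\|f\|_{L^\infty_\eta}$ (the weight argument of \eqref{L infty}), and $\|\partial_s f\|_{L^1_{\pm 1}}$ (gain/loss split, interpolation, and \eqref{l2} with $\delta=2-\gamma$). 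No gaps; this is the paper's argument.
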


\setcounter{prop}{3}
\begin{prop}
Suppose that there exist a time interval $ [ 0 , T ] $ and a positive constant $ A $ such that $ { b_1 }^{ i j } $, $ { K_1 }_{ i j } $, $ { b_2 }^{ i j } $ and $ { K_2 }_{ i j } $ are defined on $ [ 0 , T ] $ and satisfy $ { b_1 }^{ i j } ( 0 ) = { b_2 }^{ i j } ( 0 ) $, $ { K_1 }_{ i j } ( 0 ) = { K_2 }_{ i j } ( 0 ) $ and 
\begin{align*}
\sup_{ 0 \leq s \leq T } \| b_1 ( s ) \| \leq A , \qquad \sup_{ 0 \leq s \leq T } \| K_1 ( s ) \| \leq A , \qquad \inf_{ 0 \leq s \leq T } \det b_1 ( s ) \geq \frac{ 1 }{ A } , \\ 
\sup_{ 0 \leq s \leq T } \| b_2 ( s ) \| \leq A , \qquad \sup_{ 0 \leq s \leq T } \| K_2 ( s ) \| \leq A , \qquad \inf_{ 0 \leq s \leq T } \det b_2 ( s ) \geq \frac{ 1 }{ A } . 
\end{align*}
Then, for any initial data $ f_0 \geq 0 $ such that 
\begin{align*}
f_0 \in L^1_1 ( \bbr^3 ) \cap L^1_{ - 2 - \delta / 2 } ( \bbr^3 ) \cap L^\infty_\eta ( \bbr^3 ) , \qquad \frac{ \partial f_0 }{ \partial p } \in L^1_1 ( \bbr^3 ) \cap L^1_{ - 1 - \delta / 2 } ( \bbr^3 ) , 
\end{align*}
where $ 0 < \eta < 2 / A^2 $, and $ \delta > 0 $ is a number satisfying
\[
\gamma + \delta < 2 , 
\]
there exists $ 0 \leq T_A \leq T $ such that the Boltzmann equation \eqref{B} has unique non-negative solutions $ f_1 $ and $ f_2 $, corresponding to $ { b_1 }^{ i j } $ and $ { b_2 }^{ i j } $ respectively, such that 
\begin{align*}
\| f_1 ( s ) - f_2 ( s ) \|_{ L^1_{ 1 , 1 } } + \| f_1 ( s ) - f_2 ( s ) \|_{ L^1_{ 1 , - 1 } } & \leq C_A \sup_{ 0 \leq \rho \leq s } \| b_1 ( \rho ) - b_2 ( \rho ) \| , \\
\left\| \frac{ \partial f_1 }{ \partial s } ( s ) - \frac{ \partial f_2 }{ \partial s } ( s ) \right\|_{ L^1_{ 1 , 1 } } + \left\| \frac{ \partial f_1 }{ \partial s } ( s ) - \frac{ \partial f_2 }{ \partial s } ( s ) \right\|_{ L^1_{ 1 , - 1 } } & \leq C_A \sup_{ 0 \leq \rho \leq s } \| b_1 ( \rho ) - b_2 ( \rho ) \| ,
\end{align*}
for $ 0 \leq s \leq T_A $, where $ C_A $ is a positive constant depending on $ A $, $ \eta $ and $ \delta $. 
\end{prop}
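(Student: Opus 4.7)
The plan is to subtract the Boltzmann equations satisfied by $f_1$ and $f_2$, telescope the resulting right-hand side so that each term either contains an explicit factor of $\|b_1-b_2\|$ or is linear in $f_1-f_2$, and close the argument by Gr{\"o}nwall's inequality (recalling that $f_1(0)=f_2(0)=f_0$). The key telescoping is carried out by swapping one at a time the quantities $(\det b)^{1/2}$, $1/(p^0q^0)$, $h^{2-\gamma}$ and, finally, the pair of post-collision evaluations $f_1(p')f_1(q')$, ending with the bilinear identity
\[
f_2(p_2')f_2(q_2')-f_1(p_2')f_1(q_2') = \tfrac12\bigl[(f_1+f_2)(p_2')(f_2-f_1)(q_2')+(f_2-f_1)(p_2')(f_1+f_2)(q_2')\bigr],
\]
together with the analogous splitting for the loss terms (in which, because $p$ and $q$ do not depend on the metric, only the ``solution-swap'' piece survives). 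Multiplying by $(p_1^0)^r\,\mathrm{sgn}(f_1-f_2)(p)$ with $r=\pm 1$ and integrating in $p$ produces a pointwise inequality for $\tfrac{d}{ds}\|f_1-f_2\|_{L^1_{1,r}}$ controlled by seven integrals $K_0,K_1,\dots,K_5$ and $-K_6$, where $K_0$ comes from $\partial_s(p_1^0)^r$, $K_1,K_2,K_3,K_4$ are the four metric-swap pieces, $K_5$ is linear in $|f_1-f_2|$, and $K_6\ge 0$ can be discarded.

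For $K_1,K_2,K_3$, I would invoke Lemma \ref{lem F-F} to factor out $\|b_1-b_2\|$ from each of the differences $(\det b_1)^{1/2}-(\det b_2)^{1/2}$, $1/(p_1^0q_1^0)-1/(p_2^0q_2^0)$ and $h_1^{2-\gamma}-h_2^{2-\gamma}$; the remaining integrals then reduce, via the change of variables \eqref{dpdq1} and estimates of exactly the type \eqref{L111}--\eqref{L113} from the proof of Proposition \ref{prop B}, to weighted moments of $f_1$ controlled by the $L^1_1\cap L^1_{-2-\delta/2}\cap L^\infty_\eta$ bounds furnished by Lemma \ref{lem dfdp}. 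The term $K_0$ is absorbed through the pointwise bound \eqref{dp ds}. The ``linear'' term $K_5$ is handled by the same bookkeeping as in the uniqueness part of Proposition \ref{prop B}, producing $C_A(\|f_1-f_2\|_{L^1_{2,1}}+\|f_1-f_2\|_{L^1_{2,-1}})$, which is comparable to the corresponding $L^1_{1,\pm 1}$ norms because $p_1^0$ and $p_2^0$ are equivalent under the lower bound on $\det b_1,\det b_2$.

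The main obstacle is $K_4=\iiint \frac{h_2^{2-\gamma}}{p_2^0q_2^0}|f_1(p_1')f_1(q_1')-f_1(p_2')f_1(q_2')|(p_1^0)^r\,d\omega\,d^3q\,d^3p$, because both collision partners $p',q'$ depend on the metric. I would introduce the convex interpolant $b_\alpha^{ij}=(\alpha-1)b_2^{ij}+(2-\alpha)b_1^{ij}$ for $\alpha\in[1,2]$ and write the inner difference as $\int_1^2\frac{d}{d\alpha}[f_1(p_\alpha')f_1(q_\alpha')]\,d\alpha$, thereby producing an integrand proportional to $\|b_1-b_2\|$ with a chain-rule factor $(\partial p_\alpha'/\partial b)\cdot(\partial f_1/\partial p)$. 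Lemma \ref{lem HJ} ensures $\det b_\alpha\ge 1/A$ uniformly in $\alpha$, so all the a priori bounds of Lemmas \ref{lem dfdp} and \ref{lem dp'db} (in particular $|\partial p_\alpha'/\partial b|\le C_A n_\alpha^0$ and $\partial f_1/\partial p\in L^1_1\cap L^1_{-1-\delta/2}$) apply uniformly in $\alpha$, and the weights $p_1^0,p_\alpha^0,p_2^0$ are all mutually equivalent. The delicate subcase is $r=-1$: after the change of variables $(p,q)\mapsto(p_\alpha',q_\alpha')$ a factor $1/p_\alpha^0$ appears which must be absorbed using the sharp angular estimate $\int_{\bbs^2}(p_\alpha'^0)^{-1}\,d\omega\le C h_\alpha^{-\delta}(n_\alpha^0)^{-1+\delta}$ of Lemma \ref{lem p'}, combined with the $L^\infty_\eta$ decay of $f_1(q)$ and the constraint $\gamma+\delta<2$ to make the $d^3q$ integral finite.

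Collecting all contributions yields
\[
\tfrac{d}{ds}\bigl(\|f_1-f_2\|_{L^1_{1,1}}+\|f_1-f_2\|_{L^1_{1,-1}}\bigr) \le C_A\|b_1-b_2\| + C_A\bigl(\|f_1-f_2\|_{L^1_{1,1}}+\|f_1-f_2\|_{L^1_{1,-1}}\bigr),
\]
whence Gr{\"o}nwall's inequality, applied with $f_1(0)-f_2(0)=0$, proves \eqref{prop B5}. For \eqref{prop B6} I would re-examine the pointwise identity for $\partial_s f_1-\partial_s f_2$ (no integration in $s$ or differentiation of the weight is required there, so the $K_0$ contribution is absent and the positive $K_6$ piece need not be discarded), apply exactly the same estimates $K_1,\dots,K_5$ to bound it by $C_A(\|b_1-b_2\|+\|f_1-f_2\|_{L^1_{1,\pm 1}})$, and then feed in the just-proven bound \eqref{prop B5} together with the hypothesis to conclude.
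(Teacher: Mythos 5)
Your proposal is correct and follows essentially the same route as the paper's proof: the identical five-way telescoping of the collision operators, the same $K_0,\dots,K_6$ decomposition with Lemma \ref{lem F-F} for the metric-swap terms, the same convex interpolant $b_\alpha$ together with Lemmas \ref{lem HJ}, \ref{lem dp'db} and \ref{lem dfdp} for $K_4$, the angular estimate \eqref{l2} for the $r=-1$ gain terms, and Gr\"onwall to close, with \eqref{prop B6} obtained by re-using the same bounds (including a direct bound on the loss-type term $K_6$) and feeding in \eqref{prop B5}. No substantive differences from the paper's argument.
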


\subsection{Proof of Theorem \ref{conformalprop}}\label{sMain1}
We now prove the existence of a solution to the rescaled Einstein-Boltzmann system. The system consists of the rescaled Einstein equations \eqref{F1}--\eqref{F4}, which are rewritten as \eqref{ODEx}--\eqref{ODE}, and the Boltzmann equation \eqref{V5}, which reduces to \eqref{B} in the Bianchi I case. Initial data will be given at $ s = 0 $, which we assume satisfy the Fuchsian conditions \eqref{FF1}--\eqref{FF3}, or equivalently \eqref{FF1bis}--\eqref{FF3bis}, and the Einstein constraint \eqref{C3}.

The proposition is proved by an iteration scheme with geometric convergence, the limit of which is our solution. At each stage we have functions $ a $, $ b $, $ K $, $ \hat Z $ and $ f $ defined on a rescaled time interval whose length we will specify in a moment. For our zeroth approximation we take $ a $, $ b $, $ K $, $ \hat Z $ and $ f $ equal to the given initial values, subject to the constraints \eqref{FF1bis}--\eqref{FF3bis} and \eqref{C3}. These do not, of course, satisfy the Einstein equations \eqref{ODEx}--\eqref{ODE} in general, or the Boltzmann equation \eqref{B}. At each stage in the iteration we form the Einstein part, $ a $, $ b $, $ K $, $ \hat Z $ to be the unique solutions to \eqref{ODEx}--\eqref{ODE} corresponding to the Boltzmann part, $ f $, at the previous stage, and the given initial data. Similarly, we take the Boltzmann part, $ f $, to be the unique solution to the Boltzmann equation~(\ref{B}) corresponding to the Einstein part, $ a $, $ b $, $ K $, $ \hat Z $, at the previous stage and the given initial data.

We need to make precise the domain of definition of these solutions. Assume the conditions of Theorem \ref{conformalprop}. Let $ a_k $, $ b_k $, $ K_k $, $ \hat{ Z }_k $ and $ f_k $ be the approximation described above, with initial data $ a_0 $, $ b_0 $, $ K_0 $, $ \hat{ Z }_0 $ and $ f_0 $, which are equal to the zeroth approximation. Let $ A > 0 $ denote
\begin{equation}
2 \max \left( \| a_0 \| , \| b_0 \| , \| K_0 \| , \| \hat{Z}_0 \| \right) = A.
\end{equation}
Then, $ b_0 $ and $ K_0 $ satisfy the conditions of Proposition \ref{prop B} on $ [ 0 , \infty ) $. Suppose that we have for some $ k \geq 0 $ a time interval $ [ 0 , T ] $ on which $ a_k $, $ b_k $, $ K_k $ and $ \hat{ Z }_k $ satisfy
\begin{align}
\sup_{ [0, T ] } \max \left( \| a_k \| , \| b_k \| , \| K_k \| , \| \hat{Z}_k \| \right) \leq 2 \max \left( \| a_0 \| , \| b_0 \| , \| K_0 \| , \| \hat{Z}_0 \| \right) = A .
\end{align}
Then, we obtain by Proposition \ref{prop B} an interval $ [ 0 , T_A ] \subset [ 0 , T ] $ on which $ f_{ k + 1 } $ satisfies
\begin{align}
\sup_{ 0 \leq s \leq T_A } \| f_{ k + 1 } ( s ) \|_{ L^1_{ - 1 } } + \sup_{ 0 \leq s \leq T_A } \| f_{ k + 1 } ( s ) \|_{ L^1_1 } & \leq C_A , \\
\sup_{ 0 \leq s \leq T_A } \left\| \frac{ \partial f_{ k + 1 } }{ \partial s } ( s ) \right\|_{ L^1_{ - 1 } } + \sup_{ 0 \leq s \leq T_A } \left\| \frac{ \partial f_{ k + 1 } }{ \partial s } ( s ) \right\|_{ L^1_1 } & \leq C_{A} .
\end{align}
Applying \eqref{p^0 1}--\eqref{p^0 2}, we have
\begin{align}
\sup_{0 \leq s \leq T_A } \| f_{ k + 1 } (s) \|_{\langle -1 \rangle} + \sup_{0 \leq s \leq T_A } \| f_{ k + 1 } (s) \|_{\langle 1 \rangle} & \leq B_A , \\
\sup_{0 \leq s \leq T_A } \left\| \frac{\partial f_{ k + 1 } }{\partial s} (s) \right\|_{\langle -1 \rangle} + \sup_{0 \leq s \leq T_A } \left\| \frac{\partial f_{ k + 1 } }{\partial s} (s) \right\|_{\langle 1 \rangle} & \leq B_A ,
\end{align}
for some different $ B_A $. Now, we can apply Proposition \ref{propeinstein} to obtain an interval $ [ 0 , T_B ] \subset [ 0 , T_A ] $ on which $ a_{ k + 1 } $, $ b_{ k + 1 } $, $ K_{ k + 1 } $ and $ \hat{ Z }_{ k + 1 } $ satisfy
\begin{align}
\sup_{ [0, T_B] } \max \left( \| a_{ k + 1 } \| , \| b_{ k + 1 } \| , \| K_{ k + 1 } \| , \| \hat{Z}_{ k + 1 } \| \right) \leq 2 \max \left( \| a_0 \| , \| b_0 \| , \| K_0 \| , \| \hat{Z}_0 \| \right) = A .
\end{align}
We notice that $ b_{ k + 1 } $ and $ K_{ k + 1 } $ satisfy the conditions of Proposition \ref{prop B} with the same number $ A $ as in the previous step, which implies that $ f_{ k + 2 } $ exists on the same interval $ [ 0 , T_B ] $ and satisfies
\begin{align}
\sup_{ 0 \leq s \leq T_B } \| f_{ k + 2 } ( s ) \|_{ L^1_{ - 1 } } + \sup_{ 0 \leq s \leq T_B } \| f_{ k + 2 } ( s ) \|_{ L^1_1 } & \leq C_A , \\
\sup_{ 0 \leq s \leq T_B } \left\| \frac{ \partial f_{ k + 2 } }{ \partial s } ( s ) \right\|_{ L^1_{ - 1 } } + \sup_{ 0 \leq s \leq T_B } \left\| \frac{ \partial f_{ k + 2 } }{ \partial s } ( s ) \right\|_{ L^1_1 } & \leq C_{A} ,
\end{align}
with the same constant $ C_A $. We conclude that there exists an interval $ [ 0 , T ] $, on which $ a_k $, $ b_k $, $ K_k $ and $ \hat{ Z }_k $ are bounded in $ S_3 ( \bbr ) $, and $ f_k $ and $ \partial f_k / \partial s $ are bounded in $ L^1_{ - 1 } ( \bbr^3 ) \cap L^1_1 ( \bbr^3 ) $.

Next, we need to show that the sequences $ a_k $, $ b_k $, $ K_k $, $ \hat{ Z }_k $ and $ f_k $ converge. We first notice that $ f_k $ depends continuously on $ b_k $ by Proposition \ref{prop C}. To be precise, we have
\begin{align}
\| f_{ k + 1 } ( s ) - f_{ k + 2 } ( s ) \|_{ \langle 1 \rangle } + \| f_{ k + 1 } ( s ) - f_{ k + 2 } ( s ) \|_{ \langle - 1 \rangle } & \leq C \sup_{ 0 \leq \rho \leq s } \| b_{ k } ( \rho ) - b_{ k + 1 } ( \rho ) \| , \\
\left\| \frac{ \partial f_{ k + 1 } }{ \partial s } ( s ) - \frac{ \partial f_{ k + 2 } }{ \partial s } ( s ) \right\|_{ \langle 1 \rangle } + \left\| \frac{ \partial f_{ k + 1 } }{ \partial s } ( s ) - \frac{ \partial f_{ k + 2 } }{ \partial s } ( s ) \right\|_{ \langle - 1 \rangle } & \leq C \sup_{ 0 \leq \rho \leq s } \| b_{ k } ( \rho ) - b_{ k + 1 } ( \rho ) \| ,
\end{align}
for $ 0 \leq s \leq T $, where we used \eqref{p^0 1}--\eqref{p^0 2}. On the other hand, we need to show that $ a_k $, $ b_k $, $ K_k $ and $ \hat{ Z }_k $ depend continuously on $ f_k $. By Theorem \ref{ODEthm}, we observe that $ a_k $, $ b_k $, $ K_k $ and $ \hat{ Z }_k $ depend continuously on the coefficient functions. To be precise, let us write the $ k $-th approximation as
\begin{equation}\begin{aligned}
  x_k ' & = F ( x_k , y_k ) , \qquad & x_k ( 0 ) = x _ 0 , \\
  s y_k ' + N_k ( s , x_k ) y_k & = s G ( s , x_k , y_k ) + H_k ( s , x_k ) , \qquad & y_k ( 0 ) = y _ 0 ,
\end{aligned}\end{equation}
where we used the notations in \eqref{ODEx}--\eqref{N}. We notice that $ F $ and $ G $ do not depend on $ k $, but $ N_k $ and $ H_k $ change as the iteration step increases, since they depend on $ f_k $ through $ \chi_{ i j }^{ m n } $, $ \Pi_{ i j }^{ m n } $ and $ H_{ i j } $. Now, applying Theorem \ref{ODEthm} we have
\begin{align}
\| x_k ( s ) - x_{ k + 1 } ( s ) \| \leq 2 L s , \qquad \| y_k ( s ) - y_{ k + 1 } ( s ) \| \leq 2 L s ,
\end{align}
where $ L $ should be understood as
\begin{align}
L & = C_1 \sup_{ 0 \leq \rho \leq s } \left\| \frac{ \partial H_k }{ \partial s } ( \rho ) - \frac{ \partial H_{ k + 1 } }{ \partial s } ( \rho ) \right\| + C_2 \sup_{ 0 \leq \rho \leq s } \left\| \frac{ \partial H_k }{ \partial x } ( \rho ) - \frac{ \partial H_{ k + 1 } }{ \partial x } ( \rho ) \right\| \nonumber \\
& \quad + C_3 \sup_{ 0 \leq \rho \leq s } \left\| \frac{ \partial N_k }{ \partial s } ( \rho ) - \frac{ \partial N_{ k + 1 } }{ \partial s } ( \rho ) \right\| + C_4 \sup_{ 0 \leq \rho \leq s } \left\| \frac{ \partial N_k }{ \partial x } ( \rho ) - \frac{ \partial N_{ k + 1 } }{ \partial x } ( \rho ) \right\| ,
\end{align}
for some finite $ C_1 , \dots , C_4 $. Let us first consider the $ N_k - N_{ k + 1 } $ terms. We only need to estimate the derivatives of $ \chi_k - \chi_{ k + 1 } $ and $ \Pi_k- \Pi_{ k + 1 } $ at $ ( s , x ) $, which are given by
\begin{align}
\chi_k = { \chi_k }_{ i j }^{ q r } = { \Psi }_{ k \; i j m n } { b }^{ m q } { b }^{ n r } , \qquad \Pi_k = { \Pi_k }_{ i j }^{ m n } = \Psi_{ k \; i j } b^{ m n } ,
\end{align}
where $ { \Psi }_{ k \; i j m n } $ and $ \Psi_{ k \; i j } $ are defined by
\begin{align}
{ \Psi }_{ k \; i j m n } & = \frac { 4 \pi } { \sqrt { \det a } } \int \frac{ p_i p_j p_m p_n }{ ( { p }^0 )^3 } f_k ( p ) \, d^3 p , \\
{ \Psi }_{ k \; i j } & = \frac { 4 \pi } { \sqrt { \det a } } \int \frac{ p_i p_j }{ { p }^0 } f_k ( p ) \, d^3 p .
\end{align}
We can use the computations in Section \ref{sDN} to obtain
\begin{align}
\left\| \frac{ \partial \chi_k }{ \partial s } - \frac{ \partial \chi_{ k + 1 } }{ \partial s } \right\| \leq C \| \Phi_k^{ ( 4 ) } - \Phi_{ k + 1 }^{ ( 4 ) } \| ,
\end{align}
where $ \Phi_k^{ ( 4 ) } $ is defined by \eqref{Phi} using $ f_k $, and the superscript $ ( 4 ) $ denotes the number of indices it has. Similarly, we obtain
\begin{align}
& \left\| \frac{ \partial \chi_k }{ \partial a } - \frac{ \partial \chi_{ k + 1 } }{ \partial a } \right\| \leq C \| \Psi_k^{ ( 4 ) } - \Psi_{ k + 1 }^{ ( 4 ) } \| , \\
& \left\| \frac{ \partial \chi_k }{ \partial b } - \frac{ \partial \chi_{ k + 1 } }{ \partial b } \right\| \leq C ( \| \Psi_k^{ ( 4 ) } - \Psi_{ k + 1 }^{ ( 4 ) } \| + \| \Psi_k^{ ( 6 ) } - \Psi_{ k + 1 }^{ ( 6 ) } \| ) ,
\end{align}
and
\begin{align}
& \left\| \frac{ \partial \Pi_k }{ \partial s } - \frac{ \partial \Pi_{ k + 1 } }{ \partial s } \right\| \leq C \| \Phi_k^{ ( 2 ) } - \Phi_{ k + 1 }^{ ( 2 ) } \| , \\
& \left\| \frac{ \partial \Pi_k }{ \partial a } - \frac{ \partial \Pi_{ k + 1 } }{ \partial a } \right\| \leq C \| \Psi_k^{ ( 2 ) } - \Psi_{ k + 1 }^{ ( 2 ) } \| , \\
& \left\| \frac{ \partial \Pi_k }{ \partial b } - \frac{ \partial \Pi_{ k + 1 } }{ \partial b } \right\| \leq C ( \| \Psi_k^{ ( 2 ) } - \Psi_{ k + 1 }^{ ( 2 ) } \| + \| \Psi_k^{ ( 4 ) } - \Psi_{ k + 1 }^{ ( 4 ) } \| ) .
\end{align}
We can easily estimate the quantities on the right hand sides above as
\begin{align}
\| \Phi_k^{ ( n ) } - \Phi_{ k + 1 }^{ ( n ) } \| \leq C \left\| \frac{ \partial f_k }{ \partial s } - \frac{ \partial f_{ k + 1 } }{ \partial s } \right\|_{ \langle 1 \rangle } , \qquad \| \Psi_k^{ ( n ) } - \Psi_{ k + 1 }^{ ( n ) } \| \leq C \| f_k - f_{ k + 1 } \|_{ \langle 1 \rangle } ,
\end{align}
for any $ n $. Hence, we obtain
\begin{align}
& \sup_{ 0 \leq \rho \leq s } \left\| \frac{ \partial N_k }{ \partial s } ( \rho ) - \frac{ \partial N_{ k + 1 } }{ \partial s } ( \rho ) \right\| + \sup_{ 0 \leq \rho \leq s } \left\| \frac{ \partial N_k }{ \partial x } ( \rho ) - \frac{ \partial N_{ k + 1 } }{ \partial x } ( \rho ) \right\| \nonumber \\
& \leq C \sup_{ 0 \leq \rho \leq s } \| f_k ( \rho ) - f_{ k + 1 } ( \rho ) \|_{ \langle 1 \rangle } + C \sup_{ 0 \leq \rho \leq s } \left\| \frac{ \partial f_k }{ \partial s } ( \rho ) - \frac{ \partial f_{ k + 1 } }{ \partial s } ( \rho ) \right\|_{ \langle 1 \rangle } .
\end{align}
Next, we recall that $ H_k $ is given by \eqref{H} using $ f_k $, so that we have
\begin{align}
H_k - H_{ k + 1 } & = ( H_k - H_{ k + 1 } )_{ i j } \nonumber \\
& = \frac { 8 \pi c_\gamma  } { \det a }  \iiint  \frac{ { h } ^ { 2 - \gamma }}{ { p }^0 {q }^0} f_k (p) ( f_k (q) - f_{ k + 1 } ( q ) ) \left( \frac{ p' _ i p'_ j}{ p'^0} - \frac{p_i p_j}{ { p }^0}\right) \, d \omega \, d ^ 3 q \, d ^ 3 p \nonumber \\
& \quad + \frac { 8 \pi c_\gamma  } { \det a }  \iiint  \frac{ { h } ^ { 2 - \gamma }}{ { p }^0 {q }^0} ( f_k ( p ) - f_{ k + 1 } ( p ) ) f_{ k + 1 } ( q ) \left( \frac{ p' _ i p'_ j}{ p'^0} - \frac{p_i p_j}{ { p }^0}\right) \, d \omega \, d ^ 3 q \, d ^ 3 p .
\end{align}
We use the computations in Section \ref{sDH} to obtain
\begin{align}
& \sup_{ 0 \leq \rho \leq s } \left\| \frac{ \partial H_k }{ \partial s } ( \rho ) - \frac{ \partial H_{ k + 1 } }{ \partial s } ( \rho ) \right\| + \sup_{ 0 \leq \rho \leq s } \left\| \frac{ \partial H_k }{ \partial x } ( \rho ) - \frac{ \partial H_{ k + 1 } }{ \partial x } ( \rho ) \right\| \nonumber \\
& \leq C \sup_{ 0 \leq \rho \leq s } \| f_k ( \rho ) - f_{ k + 1 } ( \rho ) \|_{ \langle - 1 \rangle } + C \sup_{ 0 \leq \rho \leq s } \| f_k ( \rho ) - f_{ k + 1 } ( \rho ) \|_{ \langle 1 \rangle } \nonumber \\
& \quad + C \sup_{ 0 \leq \rho \leq s } \left\| \frac{ \partial f_k }{ \partial s } ( \rho ) - \frac{ \partial f_{ k + 1 } }{ \partial s } ( \rho ) \right\|_{ \langle - 1 \rangle } + C \sup_{ 0 \leq \rho \leq s } \left\| \frac{ \partial f_k }{ \partial s } ( \rho ) - \frac{ \partial f_{ k + 1 } }{ \partial s } ( \rho ) \right\|_{ \langle 1 \rangle } .
\end{align}
We combine the above estimates to obtain
\begin{align}
& \| a_k ( s ) - a_{ k + 1 } ( s ) \| + \| b_k ( s ) - b_{ k + 1 } ( s ) \| + \| K_k ( s ) - K_{ k + 1 } ( s ) \| + \| { \hat Z }_k ( s ) - { \hat Z }_{ k + 1 } ( s ) \| \nonumber \\
& \leq C s \bigg( \sup_{ 0 \leq \rho \leq s } \| f_k ( \rho ) - f_{ k + 1 } ( \rho ) \|_{ \langle - 1 \rangle } + \sup_{ 0 \leq \rho \leq s } \| f_k ( \rho ) - f_{ k + 1 } ( \rho ) \|_{ \langle 1 \rangle } \nonumber \\
& \qquad \quad + \sup_{ 0 \leq \rho \leq s } \left\| \frac{ \partial f_k }{ \partial s } ( \rho ) - \frac{ \partial f_{ k + 1 } }{ \partial s } ( \rho ) \right\|_{ \langle - 1 \rangle } + C \sup_{ 0 \leq \rho \leq s } \left\| \frac{ \partial f_k }{ \partial s } ( \rho ) - \frac{ \partial f_{ k + 1 } }{ \partial s } ( \rho ) \right\|_{ \langle 1 \rangle } \bigg) .
\end{align}
Let us write
\begin{align}
& \| f_{ k } ( s ) - f_{ k + 1 } ( s ) \| \nonumber \\
& = \| f_k ( s ) - f_{ k + 1 } ( s ) \|_{ \langle - 1 \rangle } + \| f_k ( s ) - f_{ k + 1 } ( s ) \|_{ \langle 1 \rangle } + \left\| \frac{ \partial f_k }{ \partial s } ( s ) - \frac{ \partial f_{ k + 1 } }{ \partial s } ( s ) \right\|_{ \langle - 1 \rangle } + \left\| \frac{ \partial f_k }{ \partial s } ( s ) - \frac{ \partial f_{ k + 1 } }{ \partial s } ( s ) \right\|_{ \langle 1 \rangle } , \\
& \| ( a , b , K , \hat Z ) _ { k } ( s ) - ( a , b , K , \hat Z ) _ { k + 1 } ( s ) \| \nonumber \\
& = \| a_k ( s ) - a_{ k + 1 } ( s ) \| + \| b_k ( s ) - b_{ k + 1 } ( s ) \| + \| K_k ( s ) - K_{ k + 1 } ( s ) \| + \| { \hat Z }_k ( s ) - { \hat Z }_{ k + 1 } ( s ) \| .
\end{align}
Now, we can find a time interval, which we still denote by $ [ 0 , T ] $, on which we have
\begin{align}
\sup_{ 0 \leq s \leq T }  \| f_{ k + 1 } ( s ) - f_{ k + 2 } ( s ) \| & \leq \frac 1 2 \sup_{ 0 \leq s \leq T } \| f_k ( s ) - f_{ k + 1 } ( s ) \| , \\
\sup_{ 0 \leq s \leq T }  \| ( a , b , K , \hat Z ) _ { k } ( s ) - ( a , b , K , \hat Z ) _ { k + 1 } ( s ) \| & \leq \frac 1 2 \sup_{ 0 \leq s \leq T } \| ( a , b , K , \hat Z ) _ { k - 1 } ( s ) - ( a , b , K , \hat Z ) _ { k } ( s ) \| .
\end{align}
We therefore have geometric convergence of the sequences. The limits satisfy the coupled system consisting of the Einstein equations~\eqref{ODEx}--\eqref{ODE} and the Boltzmann equation~\eqref{B}. This finishes the proof of Theorem \ref{conformalprop}.  \qed

\subsection{Proof of Theorem \ref{phystheorem}}\label{sMain2}
Let $ f_0 \geq 0 $ be a smooth function with compact support in $ \bbr^3 \setminus \{ 0 \} $ such that it is not identically zero and satisfies the constraint \eqref{C3}. By Theorem \ref{fuchsian} we obtain a unique set of initial data $ a_0 $, $ b_0 $, $ K_0 $, $ \hat{ Z}_0 $ and $ f_0 $ for the rescaled Einstein-Boltzmann system \eqref{V5}, \eqref{F1}--\eqref{F4} satisfying the Fuchsian conditions \eqref{FF1}--\eqref{FF3}. Note that $ f_0 $ satisfies the conditions of Theorem \ref{conformalprop}, and we obtain a unique solution $ a_{ i j } $, $ b^{ i j } $, $ K_{ i j } $, $ { \hat Z }_{ i j } $ and $ f $ by Theorem \ref{conformalprop}. Finally, using the relations \eqref{physicalk}, \eqref{ttau}, \eqref{aa} and \eqref{stau}, we obtain the existence of the physical metric $ \tilde{ a }_{ i j } $, the physical rate of change $ \tilde{ k }_{ i j } $ and the physical distribution function $ f $ on a time interval $ ( 0 , T ] $ such that $ f $ converges to $ f_0 $ in $ L^1 $ as $ t \to 0^+ $.

It remains to prove the asymptotics \eqref{1}--\eqref{2}. From Proposition \ref{propeinstein} we have that $a$ and $K$ are $C^1$ functions of $s$ and by \eqref{F1} we have:
\begin{align}
\frac{d a_{ij}}{ds} = K_{ij}.
\end{align}
It therefore follows that $ a _ { i j } $ is $ C ^ 2 $,
$ d a _ { i j } / d s $ is $ C ^ 1 $ and $ d ^ 2a _ { i j } / d s ^ 2 $
is $ C ^ 0 $, all as functions of $ s $.
By Taylor's Theorem we have for small $s$ that
\begin{align}
  \label{taylor1} a _ { i j }  &= a _ { 0 i j } + a _ { 1 i j } s + a _ { 2 i j } s ^ 2
  + o ( s ^ 2 ) , \\
  \label{taylor2}  \frac { d a _ { i j } } { d s } & = a _ { 1 i j } + 2 a _ { 2 i j } s + o ( s ) ,\\
   \label{taylor3}  \frac { d ^ 2 a _ { i j } } { d s ^ 2 }& = 2 a _ { 2 i j } + o ( 1 ) ,
\end{align}
where $a _ { 0 i j }$, $a _ { 1 i j }$ and $a _ { 2 i j } $ are some constants.

Since we have a unique $a_{ij}$ by \eqref{aa} and using \eqref{ttau} we have that
\begin{align}\label{phys1}
\tilde{a}_{ij} = \tau^2 a_{ij} = 2 t  a_{ij}.
\end{align}
Now using the relations between $t$ and $\tau$ \eqref{ttau} and $s$ and $\tau$ \eqref{stau} we obtain the relation between $s$ and $t$:
\begin{align}\label{st}
s = \frac{1}{\gamma-1} (2t)^{\frac{\gamma-1}{2}},
\end{align}
and 
\begin{align}\label{dsdt}
\frac{ds}{dt} = (2t)^{\frac{\gamma - 3}{2}}.
\end{align}
As a consequence of \eqref{phys1} using \eqref{taylor1} and \eqref{st} we can express $\tilde{a}_{ij}$ in terms of $t$ as follows:
\begin{align}\label{finaltildea}
 \tilde a _ { i j } = 2 t a _ { 0 i j }
    + ( 2 t ) ^ { \frac { \gamma + 1 } 2 } \frac { 1 } { \gamma - 1 } a _ { 1 i j }
    + ( 2 t ) ^ \gamma \frac { 1 } { ( \gamma - 1 ) ^ 2 } a _ { 2 i j }
    + o ( t ^ \gamma ).
    \end{align}
Now we consider the $t$-derivative of \eqref{phys1} using \eqref{dsdt}
   \begin{align}
   \frac{d \tilde{a}_{ij}}{dt}  = 2a_{ij} +2t \frac{ds}{dt} \frac{d a_{ij}}{ds} = 2a_{ij} +(2t)^{\frac{\gamma-1}{2}}  \frac{d a_{ij}}{ds} ,
      \end{align}
which using  \eqref{taylor1}--\eqref{taylor2} becomes
    \begin{align}\label{finaltildek}
     \frac{d \tilde{a}_{ij}}{dt} =  2 a _ { 0 i j } + ( 2 t ) ^ { \frac { \gamma - 1 } 2 } \frac { \gamma + 1 } { \gamma - 1 } a _ { 1 i j }
  + ( 2 t ) ^ { \gamma - 1 } \frac { 2 \gamma } { ( \gamma - 1 ) ^ 2 } a _ { 2 i j }
  + o ( t ^ { \gamma - 1 } ).
    \end{align}
Similarly using \eqref{taylor2}--\eqref{taylor3}
   \begin{align}
      \frac{d ^2\tilde{a}_{ij}}{dt^2} = ( 2 t ) ^ { \frac { \gamma - 3 } 2 } ( \gamma + 1 ) a _ { 1 i j }
  + ( 2 t ) ^ { \gamma - 2 } \frac { 4 \gamma } { \gamma - 1 } a _ { 2 i j } + o(t^{\gamma-2}).
     \end{align}
This implies that $ d ^2\tilde{a}_{ij} / dt^2 $ will not be continuous as a function of $t$ in general, whereas $\tilde{a}$ and $\tilde{k}$ are continuous as functions of $t$. Absorbing constants in \eqref{finaltildea} and \eqref{finaltildek} using the notation
     \begin{align}
  \label{not}   \mathcal{A}_{ij} = 2 a_{0ij},\quad  \mathcal{B}_{ij} = \frac{1}{\gamma-1} 2^{\frac{\gamma+1}{2}} a_{1ij},\quad  \mathcal{C}_{ij}  = \frac{1}{(\gamma-1)^2} 2^{\gamma} a_{2ij},
     \end{align}
together with \eqref{physicalk} we obtain the asymptotics \eqref{1}--\eqref{2}. \qed

\section*{Appendix}
\newcommand{\tmin}{t _ { \min }}
We are interested in the existence and uniqueness of solutions
to the initial value problem
\begin{equation}\label{IVP}\begin{aligned}
  x ' ( t ) & = F ( t , x ( t ) , y ( t ) ) ,
  \qquad & x ( 0 ) = x _ 0 ,
  \\
  t y ' ( t ) + N ( t , x ( t ) ) y ( t ) & = t G ( t , x ( t ) , y ( t ) ) + H ( t , x ( t ) ) ,
  \qquad
&  y ( 0 ) = y _ 0 .
\end{aligned}\end{equation}
An obvious necessary condition for the existence of solutions is the
compatibility condition
\begin{equation}\label{compatibility}
  N _ 0 y _ 0 = H _ 0 ,
\end{equation}
obtained by substituting $ t = 0 $ in the equations above.
Here
\begin{align}
  N _ 0 = N ( 0 , x _ 0 ) , \quad H _ 0 = H ( 0 , x _ 0 ) ,
\end{align}

We suppose that $ V $ and $ W $ are vector spaces,
$ U $ is an open subset of $ V $, $ x _ 0 \in U $,
$ y _ 0 \in W $, $ T > 0 $,
and
$ F \colon [ 0 , T ] \times U \times W \to V $,
$ G \colon [ 0 , T ] \times U \times W \to W $,
$ H \colon [ 0 , T ] \times U $
and
$ N \colon [ 0 , T ] \times U \to \hom ( W , W ) $
are continuous along with
$ \partial F / \partial x $,
$ \partial F / \partial y $,
$ \partial G / \partial x $,
$ \partial G / \partial y $,
$ \partial H / \partial t $,
$ \partial H / \partial x $,
$ \partial ^ 2 H / \partial t \partial x $,
$ \partial ^ 2 H / \partial x ^ 2 $,
$ \partial N / \partial t $,
$ \partial N / \partial x $,
$ \partial ^ 2 N / \partial t \partial x $ and
$ \partial ^ 2 N / \partial x ^ 2 $.
Here and elsewhere $ t $~derivatives
denote derivatives with respect to the first
argument, $ x $~derivatives denote derivatives with respect to the
second argument and $ y $~derivatives denote derivatives with respect
to the third argument of these functions, regardless of where those
derivatives happen to be evaluated.
By a solution we of course mean continuously differentiable functions
$ x \colon [ 0 , \tmin ] \to U $ and $ y \colon [ 0 , \tmin ] \to W $,
for some $ \tmin \in ( 0 , T ] $,
which satisfy~(\ref{IVP}).

There's a further condition on $ N _ 0 $ which will be needed to
ensure existence and uniqueness.
There are, by Linear Algebra, unique subspaces $ W _ n $ and $ W _ i $
such that $ W = W _ n \oplus W _ i $ with $ N _ 0 W _ n \subseteq W _ n $ and
$ N _ 0 W _ i \subseteq W _ i $, with $ N _ n = N _ 0 | _ { W _ n } $ nilpotent
and $ N _ i = N _ 0 | _ { W _ i } $ invertible.
We assume that $ N _ n = 0 $ and that all complex eigenvalues of
$ N _ i $ have positive real part.

Note that it is not necessary to compute the subspaces $ W _ n $ or
$ W _ i $ in order to check this. We simply need to apply the
Hurwitz criterion to $ x ^ { - d } p ( x ) $ where $ d $ is the
dimension of the nullspace of $ N _ 0 $ and $ p $ is the characteristic
polynomial of~$ N _ 0 $.
Note that our assumptions on $ N _ 0 $ are exactly the necessary and sufficient
conditions for $ P $, defined initially on $ ( 0 , \infty ) $ by
\begin{align}
  P ( t ) = \exp ( ( \log t ) N _ 0 ) ,
\end{align}
to extend continously to $ [ 0 , \infty ) $. The value of this extension
at $ 0 $ is easily seen to be the unique projection with image $ W _ n $
and nullspace $ W _ i $.
From the definition
and the properties of the matrix exponential it follows that
\begin{gather}
  P ( t ) W _ n \subseteq W _ n ,
  \quad
  P ( t ) W _ i \subseteq W _ i ,
  \\
  P ( 1 ) = I ,
  \quad
  P ( s t ) = P ( s ) P ( t ) = P ( t ) P ( s ) ,
  \quad
  t P ' ( t ) = N _ 0 P ( t ) = P ( t ) N _ 0
\end{gather}
for all $ s , t \ge 0 $.

\begin{thm}\label{ODEthm}
Assuming that the coefficient functions $ F $, $ G $, $ H $ and $ N $
have the differentiability properties assumed above and that $ N _ 0 $
satisfies the eigenvalue condition described above, there is a unique
solution to the initial value problem~(\ref{IVP}) for all initial
data satisfying the compatibility condition~(\ref{compatibility}).
Furthermore, this solution depends continuously on $ F $, $ G $,
$ H $, $ N $, $ x _ 0 $ and $ y _ 0 $.

The solution is continuous for $ t \ge 0 $ and is continuously
differentiable and satisfies the differential equation equation for $ t > 0 $.
If the eigenvalues of $ N _ i $ have real parts which are not merely
positive but also greater than 1 then the solution is continuously
differentiable and satisfies the equation for $ t \ge 0 $.

More precisely, suppose $ R > 0 $ is such that $ \overline { B _ V ( x _ 0 , R ) } \subseteq U $.
Let $ K _ 2 $ and $ K _ 3 $ be the compact sets
$ K _ 2 = [ 0 , T ] \times \overline { B _ V ( x _ 0 , R ) } $
and $ K _ 3 = [ 0 , T ] \times \overline { B _ V ( x _ 0 , R ) }
\times \overline { B _ W ( y _ 0 , R ) } $. The solutions $ x $ and $ y $
then exist at least up until the time
\begin{align}
  \tmin = \min \left ( T , \frac R { Q _ x } , \frac R { Q _ y } , \frac 1 { 2 Q _ u } , \frac 1 { 2 Q _ { v } } \right ) ,
\end{align}
where
\begin{align}
  Q _ x & = C _ F , \\
  Q _ u & = C _ { F x } + C _ { F y } , \\
  Q _ y & = C _ P \left [ C _ G + C _ { H t } + C _ { H x } C _ F + \left ( C _ { N t } + C _ { N x } C _ F \right ) C _ y \right ] , \\
  Q _ v & =
    C _ P C _ { G x }
    + C _ P C _ { G y }
    + C _ P C _ { H t x }
    + C _ P C _ { H x x } C _ F
    + C _ P C _ { H x } C _ { F x } \nonumber
   \\ &\quad
    + C _ P C _ { H x } C _ { F y }
    + C _ P C _ { N t x } C _ y
    + C _ P C _ { N t } 
    + C _ P C _ { N x x } C _ F C _ y \nonumber
   \\ & \quad
    + C _ P C _ { N x } C _ { F x } C _ y
    + C _ P C _ { N x } C _ { F y } C _ y
    + C _ P C _ { N x } C _ F ,
\end{align}
where the constants $ C $ with various subscripts are such that
\begin{align}
  \max _ { K _ 3 } \| F \|
  & \le
  C _ F
  &
  \max _ { K _ 3 } \| G \|
  & \le
  C _ G
  &
  \max _ { K _ 3 } \| \partial F / \partial x \|
  & \le
  C _ { F x }
  \\
  \max _ { K _ 3 } \| \partial G / \partial x \|
  & \le
  C _ { G x }
  &
  \max _ { K _ 3 } \| \partial F / \partial y \|
  & \le
  C _ { F y }
  &
  \max _ { K _ 3 } \| \partial G / \partial y \|
  & \le
  C _ { G y }
  \\
  \max _ { K _ 2 } \left \| \partial H / \partial t \right \|
  & \le
  C _ { H t }
  &
  \max _ { K _ 2 } \left \| \partial N / \partial t \right \|
  & \le
  C _ { N t }
  &
  \max _ { K _ 2 } \left \| \partial H / \partial x \right \|
  & \le
  C _ { H x }
  \\
  \max _ { K _ 2 } \left \| \partial N / \partial x \right \|
  & \le
  C _ { N x }
  &
  \max _ { K _ 2 } \left \| \partial ^ 2 H / \partial t \partial x \right \|
  & \le
  C _ { H t x }
  & 
  \max _ { K _ 2 } \left \| \partial ^ 2 N / \partial t \partial x \right \|
  & \le
  C _ { N t x }
  \\
  \max _ { K _ 2 } \left \| \partial ^ 2 H / \partial x ^ 2 \right \|
  & \le
  C _ { H x x }
  &
  \max _ { K _ 2 } \left \| \partial ^ 2 N / \partial x ^ 2 \right \|
  & \le
  C _ { N x x }
  \\
  \max _ { [ 0 , 1 ] } \| P \|
  & \le
  C _ P
  &
  \| y _ 0 \| + R
  & \le
  C _ y .
\end{align}

Also, this solution depends in a continuous way on $ x _ 0 $, $ y _ 0 $,
$ N $, $ F $, $ G $ and~$ H $.
Suppose that $ \tilde N $, $ \tilde F $, $ \tilde G $
and $ \tilde H $ satisfy the same differentiability
hypotheses as we have assumed for
$ N $, $ F $, $ G $ and $ H $, the same bounds,
and the compatibility condition
\begin{align}
  \tilde N ( 0 , x _ 0 ) y _ 0
  = \tilde H ( 0 , x _ 0 ) .
\end{align}
Then there is a unique solution
$ ( \tilde x , \tilde y ) $ to the
initial value problem
\begin{equation}\label{IVP2}\begin{aligned}
  \tilde x ' ( t ) & = \tilde F ( t , \tilde x ( t ) , \tilde y ( t ) ) ,
  \qquad & \tilde x ( 0 ) = x _ 0 ,
  \\
  t \tilde y ' ( t ) + \tilde N ( t , \tilde x ( t ) ) \tilde y ( t ) & = t \tilde G ( t , \tilde x ( t ) , \tilde y ( t ) ) + \tilde H ( t , \tilde x ( t ) ) ,
  \qquad
 & \tilde y ( 0 ) = y _ 0 .
\end{aligned}\end{equation}
on $ [ 0 , \tmin ] $. For $ t $ in this interval we have
\begin{align}
  \| \tilde x ( t ) - x ( t ) \| \le 2 L t
\end{align}
and
\begin{align}
  \| \tilde y ( t ) - y ( t ) \| \le 2 L t
\end{align}
where
\begin{gather}
  L = \max ( L _ x , L _ y ) ,
\end{gather}
and
\begin{align}
  L _ x & = \max _ { K _ 3 } \| \tilde F - F \| , \\
  L _ y
  & =
  L _ P \max _ { [ 0 , 1 ] } \| \tilde P - P \|
  + L _ F \max _ { K _ 3 } \| \tilde F - F \|
  + L _ G \max _ { K _ 3 } \| \tilde G - G \|
  + L _ { H t } \max _ { K _ 2 } \left \| \frac { \partial \tilde H } { \partial t } - \frac { \partial H } { \partial t } \right \| \nonumber
  \\ & \quad
  + L _ { H x } \max _ { K _ 2 } \left \| \frac { \partial \tilde H } { \partial x } - \frac { \partial H } { \partial x } \right \|
  + L _ { N t } \max _ { K _ 2 } \left \| \frac { \partial \tilde N } { \partial t } - \frac { \partial N } { \partial t } \right \|
  + L _ { N x } \max _ { K _ 2 } \left \| \frac { \partial \tilde N } { \partial x } - \frac { \partial N } { \partial x } \right \| ,
\end{align}
with
\begin{gather}
  L _ P = C _ G + C _ { H t } + C _ { H x } C _ F + C _ { N t } C _ y + C _ { N x } C _ F C _ y ,
  \qquad
  L _ F = C _ P C _ { H x } + C _ P C _ { N x } C _ y ,
  \\
  L _ G = L _ { H t } = C _ P ,
  \qquad
  L _ { H x } = C _ P C _ F ,
  \qquad
  L _ { N t } = C _ P C _ y ,
  \qquad
  L _ { N x } = C _ P C _ F C _ y .
\end{gather}
\end{thm}
\begin{proof}
We start with the proof of uniqueness. Suppose then that $ x $ and
$ y $ are solutions of the initial value problem~(\ref{IVP}) and the
compatibility condition~(\ref{compatibility}).
Integrating the first equation.
\begin{align}
  x ( t ) = x _ 0 + \int _ 0 ^ t x ' ( s ) \, d s
  = x _ 0 + \int _ 0 ^ t F ( s , x ( s ) , y ( s ) ) \, d s .
\end{align}
The second equation requires more work.
We rewrite it as
\begin{align}
  \left ( t \frac d { d t } + N _ 0 \right ) y ( t ) - H _ 0
  = t G ( t , x ( t ) , y ( t ) )
  + \Delta H ( t , x ( t ) ) - \Delta N ( t , x ( t ) ) y ( t )
\end{align}
where
\begin{align}
  \Delta H ( t , x ) = H ( t , x ) - H _ 0 ,
  \quad
  \Delta N ( t , x ) = N ( t , x ) - N _ 0 .
\end{align}
Now
\begin{align}
  \left ( t \frac d { d t } + N _ 0 \right ) y _ 0
  = N _ 0 y _ 0 = H _ 0
\end{align}
so we can rewrite the differential equation as
\begin{align}
  \left ( t \frac d { d t } + N _ 0 \right ) \left ( y ( t ) - y _ 0 \right )
  = t G ( t , x ( t ) , y ( t ) )
  + \Delta H ( t , x ( t ) ) - \Delta N ( t , x ( t ) ) y ( t ) .
\end{align}
We then multiply from the left by $ P ( t ) $, using the fact that as operators
\begin{align}
  t \frac d { d t } P ( t )
  = P ( t ) \left ( t \frac d { d t } + N _ 0 \right ) ,
\end{align}
to obtain
\begin{align}
  t \frac d { d t } \left [ P ( t )
  \left ( y ( t ) - y _ 0 \right ) \right ]
  = P ( t ) \left [ t G ( t , x ( t ) , y ( t ) )
  + \Delta H ( t , x ( t ) ) - \Delta N ( t , x ( t ) ) y ( t ) \right ] .
\end{align}
This is where it matters whether the real parts of the eigenvalues
are merely positive or are greater than 1. In the former case the
calculation above is valid for $ t > 0 $, but $ d P / d t $
does not extend continuously to $ t = 0 $. In the latter case
it does.

We now examine in more detail the $ \Delta H $ and $ \Delta N $ terms.
\begin{align}
  \Delta H ( t , x ( t ) )
  & = \int _ 0 ^ t \frac d { d r } H ( r , x ( r ) ) \, d r
  = \int _ 0 ^ t \left ( \frac { \partial H } { \partial t } ( r , x ( r ) ) + \frac { \partial H } { \partial x } ( r , x ( r ) ) x ' ( r ) \right ) \, d r \nonumber 
  \\ & = \int _ 0 ^ t \left ( \frac { \partial H } { \partial t } ( r , x ( r ) ) + \frac { \partial H } { \partial x } ( r , x ( r ) ) F ( r , x ( r ) , y ( r ) ) \right ) \, d r \nonumber 
  \\ & = t \int _ 0 ^ 1 \left ( \frac { \partial H } { \partial t } ( \rho t , x ( \rho t ) ) + \frac { \partial H } { \partial x } ( \rho t , x ( \rho t ) ) F ( \rho t , x ( \rho t ) , y ( \rho t ) ) \right ) \, d \rho
\end{align}
and similarly for $ \Delta N $.
Substituting this into the differential equation gives
\begin{align}
  \frac d { d t } \left [ P ( t )
  \left ( y ( t ) - y _ 0 \right ) \right ]
  & = P ( t ) G ( t , x ( t ) , y ( t ) )
  + P ( t ) \int _ 0 ^ 1 \frac { \partial H } { \partial t } ( \rho t , x ( \rho t ) ) \, d \rho \nonumber 
  \\ & \quad + P ( t ) \int _ 0 ^ 1 \frac { \partial H } { \partial x } ( \rho t , x ( \rho t ) ) F ( \rho t , x ( \rho t ) , y ( \rho t ) ) \, d \rho \nonumber 
  \\ & \quad {} - P ( t ) \int _ 0 ^ 1 \frac { \partial N } { \partial t } ( \rho t , x ( \rho t ) ) y ( t ) \, d \rho \nonumber 
  \\ & \quad {} - P ( t ) \int _ 0 ^ 1 \frac { \partial N } { \partial x } ( \rho t , x ( \rho t ) ) F ( \rho t , x ( \rho t ) , y ( \rho t ) ) y ( t ) \, d \rho .
\end{align}
Integrating this,
\begin{align}
  P ( t )
  \left ( y ( t ) - y _ 0 \right )
  & = \int _ 0 ^ t P ( s ) G ( s , x ( s ) , y ( s ) ) \, d s
  + \int _ 0 ^ t P ( s ) \int _ 0 ^ 1 \frac { \partial H } { \partial t } ( \rho s , x ( \rho s ) ) \, d \rho \, d s \nonumber 
  \\ & \quad {} + \int _ 0 ^ t P ( s ) \int _ 0 ^ 1 \frac { \partial H } { \partial x } ( \rho s , x ( \rho s ) ) F ( \rho s , x ( \rho s ) , y ( \rho s ) ) \, d \rho \, d s \nonumber 
  \\ & \quad {} - \int _ 0 ^ t P ( s ) \int _ 0 ^ 1 \frac { \partial N } { \partial t } ( \rho s , x ( \rho s ) ) y ( s ) \, d \rho \, d s \nonumber 
  \\ & \quad {} - \int _ 0 ^ t P ( s ) \int _ 0 ^ 1 \frac { \partial N } { \partial x } ( \rho s , x ( \rho s ) ) F ( \rho s , x ( \rho s ) , y ( \rho s ) ) y ( s ) \, d \rho \, d s .
\end{align}
Multiplying from the left by $ P ( 1 / t ) $,
\begin{align}
  y ( t )
  & = y _ 0 + \int _ 0 ^ t P ( s / t ) G ( s , x ( s ) , y ( s ) ) \, d s
  + \int _ 0 ^ t P ( s / t ) \int _ 0 ^ 1 \frac { \partial H } { \partial t } ( \rho s , x ( \rho s ) ) \, d \rho \, d s \nonumber 
  \\ & \quad + \int _ 0 ^ t P ( s / t ) \int _ 0 ^ 1 \frac { \partial H } { \partial x } ( \rho s , x ( \rho s ) ) F ( \rho s , x ( \rho s ) , y ( \rho s ) ) \, d \rho \, d s
  - \int _ 0 ^ t P ( s / t ) \int _ 0 ^ 1 \frac { \partial N } { \partial t } ( \rho s , x ( \rho s ) ) \, d \rho \, d s \nonumber 
  \\ & \quad {} - \int _ 0 ^ t P ( s / t ) \int _ 0 ^ 1 \frac { \partial N } { \partial x } ( \rho s , x ( \rho s ) ) F ( \rho s , x ( \rho s ) , y ( \rho s ) ) y ( s ) \, d \rho \, d s \nonumber 
  \\ & = y _ 0 + t \int _ 0 ^ 1 P ( \sigma ) G ( \sigma t , x ( \sigma t ) , y ( \sigma t ) ) \, d \sigma
  + t \int _ 0 ^ 1 \int _ 0 ^ 1 P ( \sigma ) \frac { \partial H } { \partial t } ( \rho \sigma t , x ( \rho \sigma t ) ) \, d \rho \, d \sigma \nonumber 
  \\ & \quad + t \int _ 0 ^ 1 \int _ 0 ^ 1 P ( \sigma ) \frac { \partial H } { \partial x } ( \rho \sigma t , x ( \rho \sigma t ) ) F ( \rho \sigma t , x ( \rho \sigma t ) , y ( \rho \sigma t ) ) \, d \rho \, d \sigma \nonumber 
  \\ & \quad {} - t \int _ 0 ^ 1 \int _ 0 ^ 1 P ( \sigma ) \frac { \partial N } { \partial t } ( \rho \sigma t , x ( \rho \sigma t ) ) y ( \sigma t ) \, d \rho \, d \sigma \nonumber 
  \\ & \quad {} - t \int _ 0 ^ 1 \int _ 0 ^ 1 P ( \sigma ) \frac { \partial N } { \partial x } ( \rho \sigma t , x ( \rho \sigma t ) ) F ( \rho \sigma t , x ( \rho \sigma t ) , y ( \rho \sigma t ) ) y ( \sigma t ) \, d \rho \, d \sigma .
\end{align}

Suppose $ \tau \in ( 0 , \tmin ) $.
Let $ M $ be the set of functions from $ [ 0 , \tau ] $ to
$ \overline { B _ V ( x _ 0 , R ) } \times
\overline { B _ W ( y _ 0 , R ) } $, with the metric
\begin{align}
  d ( ( x _ 1 , y _ 1 ) , ( x _ 2 , y _ 2 ) )
  = \sup _ { [ 0 , \tau ] } \max ( \| x _ 1 - x _ 2 \| , \| y _ 1 - y _ 2 \| ) .
\end{align}
We define a function $ \Phi \colon M \to M $ by
\begin{align}
  \Phi ( x , y ) = ( u , v )
\end{align}
where
\begin{align}
  u ( t )
&  = x _ 0 + t \int _ 0 ^ 1 F ( \sigma t , x ( \sigma t ) , y ( \sigma t ) ) \, d \sigma , \\
  v ( t )
  & = y _ 0 + t \int _ 0 ^ 1 P ( \sigma ) G ( \sigma t , x ( \sigma t ) , y ( \sigma t ) ) \, d \sigma
  + t \int _ 0 ^ 1 \int _ 0 ^ 1 P ( \sigma ) \frac { \partial H } { \partial t } ( \rho \sigma t , x ( \rho \sigma t ) ) \, d \rho \, d \sigma \nonumber
  \\ & \quad {} + t \int _ 0 ^ 1 \int _ 0 ^ 1 P ( \sigma ) \frac { \partial H } { \partial x } ( \rho \sigma t , x ( \rho \sigma t ) ) F ( \rho \sigma t , x ( \rho \sigma t ) , y ( \rho \sigma t ) ) \, d \rho \, d \sigma \nonumber
  \\ & \quad {} - t \int _ 0 ^ 1 \int _ 0 ^ 1 P ( \sigma ) \frac { \partial N } { \partial t } ( \rho \sigma t , x ( \rho \sigma t ) ) y ( \sigma t ) \, d \rho \, d \sigma \nonumber
  \\ & \quad {} - t \int _ 0 ^ 1 \int _ 0 ^ 1 P ( \sigma ) \frac { \partial N } { \partial x } ( \rho \sigma t , x ( \rho \sigma t ) ) F ( \rho \sigma t , x ( \rho \sigma t ) , y ( \rho \sigma t ) ) y ( \sigma t ) \, d \rho \, d \sigma .
\end{align}
Note that
\begin{align}
  \| u ( t ) - x _ 0 \| & \le C _ F \tau = Q _ x \tau \le R ,
  \\
  \| v ( t ) - y _ 0 \| & \le C _ P \left ( C _ G + C _ { H t } + C _ { H x } C _ F + C _ { N t } C _ y + C _ { N x } C _ { F } C _ y \right ) \tau
  = Q _ y \tau \le R ,
\end{align}
so $ \Phi $ does indeed map $ M $ to itself.
We have just seen that a solution of the initial value problem~(\ref{IVP})
is a fixed point of $ \Phi $. If we can show that $ \Phi $ is a contraction
mapping then the Banach Fixed Point Theorem will imply the uniqueness of
solutions of the initial value problem.

Suppose that
\begin{align}
  ( u _ 1 , v _ 1 ) = \Phi ( x _ 1 , y _ 1 )
  \qquad
  ( u _ 2 , v _ 2 ) = \Phi ( x _ 2 , y _ 2 )
\end{align}
for $ ( x _ 1 , y _ 1 ) , ( x _ 2 , y _ 2 ) \in M $.
Then
\begin{align}
  u _ 1 ( t ) - u _ 2 ( t )
  & =
  t \int _ 0 ^ 1
  \left [
    F ( \sigma t , x _ 1 ( \sigma t ) , y _ 1 ( \sigma t ) )
    - F ( \sigma t , x _ 2 ( \sigma t ) , y _ 2 ( \sigma t ) )
  \right ]
  \, d \sigma \nonumber
  \\ & =
  t \int _ 0 ^ 1
  \int _ 0 ^ 1
  \frac d { d \kappa } F ( \sigma t , \xi ( \kappa , \sigma t ) , \eta ( \kappa , \sigma t ) )
  \, d \kappa \, d \sigma \nonumber
  \\ & =
  t \int _ 0 ^ 1
  \int _ 0 ^ 1
    \frac { \partial F } { \partial x }
      ( \sigma t , \xi ( \kappa , \sigma t ) , \eta ( \kappa , \sigma t ) )
      ( x _ 1 ( \sigma t ) - x _ 2 ( \sigma t ) )
  \, d \kappa \, d \sigma \nonumber
  \\ & \quad +
  t \int _ 0 ^ 1
  \int _ 0 ^ 1
    \frac { \partial F } { \partial y }
      ( \sigma t , \xi ( \kappa , \sigma t ) , \eta ( \kappa , \sigma t ) )
      ( y _ 1 ( \sigma t ) - y _ 2 ( \sigma t ) )
  \, d \kappa \, d \sigma
\end{align}
where
\begin{align}
  \xi ( \kappa , s ) = \kappa x _ 1 ( s ) + ( 1 - \kappa ) x _ 2 ( s ) ,
  \qquad
  \eta ( \kappa , s ) = \kappa y _ 1 ( s ) + ( 1 - \kappa ) y _ 2 ( s ) .
\end{align}
Hence
\begin{align}
  \sup _ { [ 0 , \tau ] } \| u _ 1 - u _ 2 \|
  & \le C _ { F x } \tau \sup _ { [ 0 , \tau ] } \| x _ 1 - x _ 2 \|
  + C _ { F y } \tau \sup _ { [ 0 , \tau ] } \| y _ 1 - y _ 2 \| \nonumber
  \\ & \le ( C _ { F x } + C _ { F y } ) \tau \max \left (
    \sup _ { [ 0 , \tau ] } \| x _ 1 - x _ 2 \| ,
    \sup _ { [ 0 , \tau ] } \| y _ 1 - y _ 2 \|
  \right ) \nonumber
  \\ & = Q _ u \tau d ( ( x _ 1 , y _ 1 ) , ( x _ 2 , y _ 2 ) ) \nonumber
  \\ & \le \frac 1 2 d ( ( x _ 1 , y _ 1 ) , ( x _ 2 , y _ 2 ) ) .
\end{align}
The calculation for $ v _ 1 - v _ 2 $ is similar and yields
\begin{align}
  \sup _ { [ 0 , \tau ] } \| v _ 1 - v _ 2 \|
  \le Q _ v \tau
  d ( ( x _ 1 , y _ 1 ) , ( x _ 2 , y _ 2 ) )
  \le \frac 1 2
  d ( ( x _ 1 , y _ 1 ) , ( x _ 2 , y _ 2 ) ) .
\end{align}
So
\begin{gather}
  d ( \Phi ( x _ 1 , y _ 1 ) , \Phi ( x _ 2 , y _ 2 ) )
  = d ( ( u _ 1 , v _ 1 ) , ( u _ 2 , v _ 2 ))
  \le \frac 1 2 d ( ( x _ 1 , y _ 1 ) , ( x _ 2 , y _ 2 ) )
\end{gather}
and $ \Phi $ is a contraction with contraction constant $ 1 / 2 $.
The Banach Fixed Point Theorem then shows that $ \Phi $ has a unique
fixed point and hence that there is at most one solution to the
initial value problem~(\ref{IVP}). To show that there is at least
one solution we need to show that every fixed point of $ \Phi $
is indeed a solution of the initial value problem.

Suppose then that $ \Phi ( x , y ) = ( x , y ) $, i.e. that
\begin{align}
  x ( t )
&  = x _ 0 + t \int _ 0 ^ 1 F ( \sigma t , x ( \sigma t ) , y ( \sigma t ) ) \, d \sigma , \\
  y ( t )
  & = y _ 0 + t \int _ 0 ^ 1 P ( \sigma ) G ( \sigma t , x ( \sigma t ) , y ( \sigma t ) ) \, d \sigma
  + t \int _ 0 ^ 1 \int _ 0 ^ 1 P ( \sigma ) \frac { \partial H } { \partial t } ( \rho \sigma t , x ( \rho \sigma t ) ) \, d \rho \, d \sigma \nonumber
  \\ & \quad {} + t \int _ 0 ^ 1 \int _ 0 ^ 1 P ( \sigma ) \frac { \partial H } { \partial x } ( \rho \sigma t , x ( \rho \sigma t ) ) F ( \rho \sigma t , x ( \rho \sigma t ) , y ( \rho \sigma t ) ) \, d \rho \, d \sigma \nonumber
  \\ & \quad {} - t \int _ 0 ^ 1 \int _ 0 ^ 1 P ( \sigma ) \frac { \partial N } { \partial t } ( \rho \sigma t , x ( \rho \sigma t ) ) y ( \sigma t ) \, d \rho \, d \sigma \nonumber
  \\ & \quad {} - t \int _ 0 ^ 1 \int _ 0 ^ 1 P ( \sigma ) \frac { \partial N } { \partial x } ( \rho \sigma t , x ( \rho \sigma t ) ) F ( \rho \sigma t , x ( \rho \sigma t ) , y ( \rho \sigma t ) ) y ( \sigma t ) \, d \rho \, d \sigma .
\end{align}
Then
\begin{align}
  x ( t ) = x _ 0 + \int _ 0 ^ t F ( s , x ( s ) , y ( s ) ) \, d s
\end{align}
and hence $ x ( 0 ) = x _ 0 $ and
\begin{align}
  x ' ( t ) = F ( t , x ( t ) , y ( t ) ) .
\end{align}
For $ y $ we have to work a bit harder. Clearly, $ y ( 0 ) = y _ 0 $. Also
\begin{align}
  y ( t )
  & = y _ 0 + \int _ 0 ^ t P ( s / t ) G ( s , x ( s ) , y ( s ) ) \, d s
  + \int _ 0 ^ t P ( s / t ) \int _ 0 ^ 1 \frac { \partial H } { \partial t } ( \rho s , x ( \rho s ) ) \, d \rho \, d s \nonumber
  \\ & \quad + \int _ 0 ^ t P ( s / t ) \int _ 0 ^ 1 \frac { \partial H } { \partial x } ( \rho s , x ( \rho s ) ) F ( \rho s , x ( \rho s ) , y ( \rho s ) ) \, d \rho \, d s \nonumber
  \\ & \quad {} - \int _ 0 ^ t P ( s / t ) \int _ 0 ^ 1 \frac { \partial N } { \partial t } ( \rho s , x ( \rho s ) ) \, d \rho \, d s \nonumber
  \\ & \quad {} - \int _ 0 ^ t P ( s / t ) \int _ 0 ^ 1 \frac { \partial N } { \partial x } ( \rho s , x ( \rho s ) ) F ( \rho s , x ( \rho s ) , y ( \rho s ) ) y ( s ) \, d \rho \, d s
\end{align}
and so, multiplying from the left by $ P ( t ) $,
\begin{align}
  P ( t )
  \left ( y ( t ) - y _ 0 \right )
  & = \int _ 0 ^ t P ( s ) G ( s , x ( s ) , y ( s ) ) \, d s
  + \int _ 0 ^ t P ( s ) \int _ 0 ^ 1 \frac { \partial H } { \partial t } ( \rho s , x ( \rho s ) ) \, d \rho \, d s \nonumber
  \\ & \quad {} + \int _ 0 ^ t P ( s ) \int _ 0 ^ 1 \frac { \partial H } { \partial x } ( \rho s , x ( \rho s ) ) F ( \rho s , x ( \rho s ) , y ( \rho s ) ) \, d \rho \, d s \nonumber
  \\ & \quad {} - \int _ 0 ^ t P ( s ) \int _ 0 ^ 1 \frac { \partial N } { \partial t } ( \rho s , x ( \rho s ) ) y ( s ) \, d \rho \, d s \nonumber
  \\ & \quad {} - \int _ 0 ^ t P ( s ) \int _ 0 ^ 1 \frac { \partial N } { \partial x } ( \rho s , x ( \rho s ) ) F ( \rho s , x ( \rho s ) , y ( \rho s ) ) y ( s ) \, d \rho \, d s .
\end{align}
Differentiating, and multiplying by $ t $,
\begin{align}
  t \frac d { d t } \left [ P ( t ) ( y ( t ) - y _ 0 ) \right ]
  & = t P ( t ) G ( t , x ( t ) , y ( t ) )
  + t \int _ 0 ^ 1 P ( t ) \frac { \partial H } { \partial t } ( \rho t , x ( \rho t ) ) \, d \rho \nonumber
  \\ & \quad {} + t \int _ 0 ^ 1 P ( t ) \frac { \partial H } { \partial x } ( \rho t , x ( \rho t ) ) F ( \rho t , x ( \rho t ) , y ( \rho t ) ) \, d \rho \nonumber
  \\ & \quad {} - t \int _ 0 ^ 1 P ( t ) \frac { \partial N } { \partial t } ( \rho t , x ( \rho t ) ) y ( t ) \, d \rho \nonumber
  \\ & \quad {} - t \int _ 0 ^ 1 P ( t ) \frac { \partial N } { \partial x } ( \rho t , x ( \rho t ) ) F ( \rho t , x ( \rho t ) , y ( \rho t ) ) y ( t ) \, d \rho .
\end{align}
Using the fact that
\begin{gather}
  t \frac d { d t } P ( t )
  = P ( t ) \left ( t \frac d { d t } + N _ 0 \right )
\end{gather}
and then multiplying from the left by $ P ( 1 / t ) $ gives
\begin{align}
  \left ( t \frac d { d t } + N _ 0 \right ) ( y ( t ) - y _ 0 )
  & = t G ( t , x ( t ) , y ( t ) )
  + t \int _ 0 ^ 1 \frac { \partial H } { \partial t } ( \rho t , x ( \rho t ) ) \, d \rho \nonumber 
  \\ & \quad {} + t \int _ 0 ^ 1 \frac { \partial H } { \partial x } ( \rho t , x ( \rho t ) ) F ( \rho t , x ( \rho t ) , y ( \rho t ) ) \, d \rho
  - t \int _ 0 ^ 1 \frac { \partial N } { \partial t } ( \rho t , x ( \rho t ) ) y ( t ) \, d \rho \nonumber 
  \\ & \quad {} - t \int _ 0 ^ 1 \frac { \partial N } { \partial x } ( \rho t , x ( \rho t ) ) F ( \rho t , x ( \rho t ) , y ( \rho t ) ) y ( t ) \, d \rho .
\end{align}
Now
\begin{gather}
  t \int _ 0 ^ 1 \left ( \frac { \partial H } { \partial t } ( \rho s , x ( \rho s ) ) + \frac { \partial H } { \partial x } ( \rho s , x ( \rho s ) ) F ( \rho s , x ( \rho s ) , y ( \rho s ) ) \right ) \, d \rho
\end{gather}
can be rewritten as
\begin{gather}
  \int _ 0 ^ t \left ( \frac { \partial H } { \partial t } ( r , x ( r ) ) + \frac { \partial H } { \partial x } ( r , x ( r ) ) F ( r , x ( r ) , y ( r ) ) \right ) \, d r
\end{gather}
or
\begin{gather}
  \int _ 0 ^ t \left ( \frac { \partial H } { \partial t } ( r , x ( r ) ) + \frac { \partial H } { \partial x } ( r , x ( r ) ) x ' ( r ) \right ) \, d r
\end{gather}
or
\begin{gather}
  \int _ 0 ^ t \frac d { d r } H ( r , x ( r ) ) \, d r = H ( t , x ( t ) ) - H _ 0
\end{gather}
and similarly for
\begin{gather}
  t \int _ 0 ^ 1 \left ( \frac { \partial N } { \partial t } ( \rho s , x ( \rho s ) ) + \frac { \partial N } { \partial x } ( \rho s , x ( \rho s ) ) F ( \rho s , x ( \rho s ) , y ( \rho s ) ) \right ) \, d \rho .
\end{gather}
Also,
\begin{gather}
  \left ( t \frac d { d t } + N _ 0 \right ) ( y ( t ) - y _ 0 )
  = t y ' ( t ) + N _ 0 y ( t ) - N _ 0 y _ 0
  = t y ' ( t ) + N _ 0 y ( t ) - H _ 0
\end{gather}
by the compatibility condition~(\ref{compatibility}).
So
\begin{gather}
  t y ' ( t ) + N _ 0 y ( t ) - H _ 0
  = t G ( t , x ( t ) , y ( t ) )
  + H ( t , x ( t ) ) - H _ 0
  - ( N ( t , x ( y ) ) - N _ 0 ) y ( t )
\end{gather}
or, equivalently,
\begin{gather}
  t y ' ( t ) + N ( t , x ( t ) ) y ( t )
  = t G ( t , x ( t ) , y ( t ) )
  + H ( t , x ( t ) ) .
\end{gather}
So $ ( x , y ) $ is a fixed point of $ \Phi $ if and only if it is
a solution to the initial value problem~(\ref{IVP}).
The solutions we obtain in this way are solutions in $ [ 0 , \tau ] $,
where $ \tau $ is an arbitrary element of $ ( 0 , \tmin ] $, so
we get a solution on $ [ 0 , \tmin ] $. The flexibility afforded
by allowing $ \tau < \tmin $ is not needed here, but will be useful
in the continuity result which we will now prove.

To establish the continuity result we note that
just as the unique solution to~(\ref{IVP}) is a fixed point of~$ \Phi $,
the unique solution to~(\ref{IVP2}) is a fixed point of~$ \tilde \Phi $,
defined by
\begin{gather}
  \tilde \Phi ( \tilde x , \tilde y ) = ( \tilde u , \tilde v )
\end{gather}
where
\begin{align}
  \tilde u ( t )
&  = x _ 0 + t \int _ 0 ^ 1 \tilde F ( \sigma t , \tilde x ( \sigma t ) , \tilde y ( \sigma t ) ) \, d \sigma , \\
  \tilde v ( t )
  & = y _ 0 + t \int _ 0 ^ 1 \tilde P ( \sigma ) \tilde G ( \sigma t , \tilde x ( \sigma t ) , \tilde y ( \sigma t ) ) \, d \sigma
  + t \int _ 0 ^ 1 \int _ 0 ^ 1 \tilde P ( \sigma ) \frac { \partial \tilde H } { \partial t } ( \rho \sigma t , \tilde x ( \rho \sigma t ) ) \, d \rho \, d \sigma \nonumber
  \\ & \quad {} + t \int _ 0 ^ 1 \int _ 0 ^ 1 \tilde P ( \sigma ) \frac { \partial \tilde H } { \partial x } ( \rho \sigma t , \tilde x ( \rho \sigma t ) ) \tilde F ( \rho \sigma t , \tilde x ( \rho \sigma t ) , \tilde y ( \rho \sigma t ) ) \, d \rho \, d \sigma \nonumber
  \\ & \quad {} - t \int _ 0 ^ 1 \int _ 0 ^ 1 \tilde P ( \sigma ) \frac { \partial \tilde N } { \partial t } ( \rho \sigma t , \tilde x ( \rho \sigma t ) ) \tilde y ( \sigma t ) \, d \rho \, d \sigma \nonumber
  \\ & \quad {} - t \int _ 0 ^ 1 \int _ 0 ^ 1 \tilde P ( \sigma ) \frac { \partial \tilde N } { \partial x } ( \rho \sigma t , \tilde x ( \rho \sigma t ) ) \tilde F ( \rho \sigma t , \tilde x ( \rho \sigma t ) , \tilde y ( \rho \sigma t ) )  \tilde y ( \sigma t ) \, d \rho \, d \sigma .
\end{align}
Also, set
\begin{gather}
  ( \hat x , \hat y ) = \Phi ( \tilde x , \tilde y ) .
\end{gather}
Note that it is $ \Phi $ which appears on the right hand side, not $ \tilde \Phi $.
Then
\begin{gather}
  \tilde x ( t ) - \hat x ( t )
  =
  t \int _ 0 ^ 1 \left [
    \left ( \tilde F - F \right ) ( \sigma t , \tilde x ( \sigma t ) , \tilde y ( \sigma t ) )
  \right ] \, d \sigma ,
\end{gather}
so
\begin{gather}
  \| \tilde x ( t ) - \hat x ( t ) \|
  \le \left ( \max _ { K _ 3 } \| \tilde F - F \| \right ) t
  = L _ x t
\end{gather}
for $ t \in [ 0 , \tau ] $.
Similarly,
\begin{gather}
  \tilde y ( t ) - \hat y ( t )
\end{gather}
is the sum of twelve terms,
\begin{gather}
  t \int _ 0 ^ 1 \left ( \tilde P - P \right ) ( \sigma ) G ( \sigma t , \tilde x ( \sigma t ) , \tilde y ( \sigma t ) ) \, d \sigma , \\
  t \int _ 0 ^ 1 \tilde P ( \sigma ) \left ( \tilde G - G \right ) ( \sigma t , \tilde x ( \sigma t ) , \tilde y ( \sigma t ) )
  \, d \sigma , \\
  t \int _ 0 ^ 1 \int _ 0 ^ 1 \left ( \tilde P - P \right ) ( \sigma ) \frac { \partial H } { \partial t } ( \rho \sigma t , \tilde x ( \rho \sigma t ) ) \, d \rho \, d \sigma , \\
  t \int _ 0 ^ 1 \int _ 0 ^ 1 \tilde P ( \sigma ) \left ( \frac { \partial \tilde H } { \partial t } - \frac { \partial H } { \partial t } \right ) ( \rho \sigma t , \tilde x ( \rho \sigma t ) ) \, d \rho \, d \sigma , \\
  t \int _ 0 ^ 1 \int _ 0 ^ 1 \left ( \tilde P - P \right ) ( \sigma ) \frac { \partial H } { \partial x } ( \rho \sigma t , \tilde x ( \rho \sigma t ) ) F ( \rho \sigma t , \tilde x ( \rho \sigma t ) , \tilde y ( \rho \sigma t ) ) \, d \rho \, d \sigma , \\
  t \int _ 0 ^ 1 \int _ 0 ^ 1 \tilde P ( \sigma ) \left ( \frac { \partial \tilde H } { \partial x } - \frac { \partial H } { \partial x } \right ) ( \rho \sigma t , \tilde x ( \rho \sigma t ) ) F ( \rho \sigma t , \tilde x ( \rho \sigma t ) , \tilde y ( \rho \sigma t ) ) \, d \rho \, d \sigma , \\
  t \int _ 0 ^ 1 \int _ 0 ^ 1 \tilde P ( \sigma ) \frac { \partial \tilde H } { \partial x } ( \rho \sigma t , \tilde x ( \rho \sigma t ) ) \left ( \tilde F - F \right ) ( \rho \sigma t , \tilde x ( \rho \sigma t ) , \tilde y ( \rho \sigma t ) ) \, d \rho \, d \sigma , \\
  t \int _ 0 ^ 1 \int _ 0 ^ 1 \left ( P - \tilde P \right )  ( \sigma ) \frac { \partial N } { \partial t } ( \rho \sigma t , \tilde x ( \rho \sigma t ) ) \tilde y ( \sigma t ) \, d \rho \, d \sigma , \\
  t \int _ 0 ^ 1 \int _ 0 ^ 1 \tilde P ( \sigma ) \left ( \frac { \partial N } { \partial t } - \frac { \partial \tilde N } { \partial t } \right ) ( \rho \sigma t , \tilde x ( \rho \sigma t ) ) \tilde y ( \sigma t ) \, d \rho \, d \sigma , \\
  t \int _ 0 ^ 1 \int _ 0 ^ 1 \left ( P - \tilde P \right ) ( \sigma ) \frac { \partial N } { \partial x } ( \rho \sigma t , \tilde x ( \rho \sigma t ) ) F ( \rho \sigma t , \tilde x ( \rho \sigma t ) , \tilde y ( \rho \sigma t ) )  \tilde y ( \sigma t ) \, d \rho \, d \sigma , \\
  t \int _ 0 ^ 1 \int _ 0 ^ 1 \tilde P ( \sigma ) \left ( \frac { \partial N } { \partial x } - \frac { \partial \tilde N } { \partial x } \right ) ( \rho \sigma t , \tilde x ( \rho \sigma t ) ) F ( \rho \sigma t , \tilde x ( \rho \sigma t ) , \tilde y ( \rho \sigma t ) ) \tilde y ( \sigma t ) \, d \rho \, d \sigma
\end{gather}
and
\begin{gather}
  t \int _ 0 ^ 1 \int _ 0 ^ 1 \tilde P ( \sigma ) \frac { \partial \tilde N } { \partial x } ( \rho \sigma t , \tilde x ( \rho \sigma t ) ) \left ( F - \tilde F \right ) ( \rho \sigma t , \tilde x ( \rho \sigma t ) , \tilde y ( \rho \sigma t ) ) \tilde y ( \sigma t ) \, d \rho \, d \sigma .
\end{gather}
So
\begin{gather}
  \| \tilde y ( t ) - \hat y ( t ) \| \le L _ y t
\end{gather}
and hence
\begin{gather}
  d ( ( \tilde x , \tilde y ) , ( \hat x , \hat y ) ) \le L \tau
\end{gather}
Also, since $ \Phi $ is a contraction with contraction factor~$ 1 / 2 $,
\begin{gather}
  d ( \Phi ( x , y ) , \Phi ( \tilde x , \tilde y ) )
  \le \frac 1 2 d ( ( x , y ) , ( \tilde x , \tilde y ) ) .
\end{gather}
Since
$ \Phi ( x , y ) = ( x , y ) $ and
$ \Phi ( \tilde x , \tilde y ) = ( \hat x , \hat y ) $
we can rewrite the preceding equation as
\begin{gather}
  d ( ( x , y ) , ( \hat x , \hat y ) )
  \le \frac 1 2 d ( ( x , y ) , ( \tilde x , \tilde y ) ) .
\end{gather}
Thus
\begin{gather}
  d ( ( x , y ) , ( \tilde x , \tilde y ) )
  \le d ( ( x , y ) , ( \hat x , \hat y ) )
  + d ( ( \hat x , \hat y ) , ( \tilde x , \tilde y ) )
  \le \frac 1 2 d ( ( x , y ) , ( \tilde x , \tilde y ) ) 
  + L \tau
\end{gather}
and so
\begin{gather}
  d ( ( x , y ) , ( \tilde x , \tilde y ) ) \le 2 L \tau .
\end{gather}
In view of the definition of the metrix $ d $,
\begin{gather}
  \max \left (
    \left \| \tilde x ( \tau ) - x ( \tau ) \right \| ,
    \left \| \tilde y ( \tau ) - y ( \tau ) \right \|
  \right )
  \le 2 L \tau .
\end{gather}
This holds for all $ \tau \in ( 0 , \tmin ] $, which is our continuity result. This completes the proof of Theorem \ref{ODEthm}.
\end{proof}

\section*{Acknowledgements}
HL, JS and PT acknowledge the hospitality from ICMAT where this work was initiated and HL, EN and PT acknowledge the support from Trinity College Dublin where part of this work was done.

\bibliographystyle{abbrv}

\end{document}